\documentclass[12pt]{article}%
\usepackage{amssymb}
\usepackage{amsfonts}
\usepackage{amsmath}
\usepackage{amsthm}
\usepackage{bbm}
\usepackage[nohead]{geometry}
\usepackage[singlespacing]{setspace}
\usepackage[bottom]{footmisc}
\usepackage{indentfirst}
\usepackage{mathtools} % For \coloneqq
\usepackage{minitoc}  % for appendix-specific TOC
\usepackage{graphicx}%
\usepackage{rotating}
\usepackage{placeins}
\usepackage{siunitx}   % 【核心】用于数字对齐
\usepackage{tabularx}
\usepackage{enumerate}
\usepackage{booktabs}
\usepackage{subcaption}
\usepackage[flushleft]{threeparttable}
\usepackage{float}
\usepackage{verbatim}
\usepackage{epstopdf}
\usepackage{pdflscape}
\usepackage{multirow}
\usepackage{enumitem}
\usepackage{mathrsfs}
\usepackage{pdflscape}
\usepackage{bibunits} % <-- 1. 加载核心宏包
\usepackage{siunitx} 

\defaultbibliographystyle{plainnat}
\defaultbibliography{ref} % 假设您的 .bib 文件名为 ref.bib

\newtheorem{theorem}{Theorem}[section]
\newtheorem{lemma}{Lemma}

\newtheorem{proposition}{Proposition}

\newtheorem{assumption}{Assumption}[section]

\newcounter{remark}
\newenvironment{remark}[1][Remark]{\refstepcounter{remark}\begin{trivlist}
		\item[\hskip \labelsep {\bfseries #1 \theremark.\ }]}{\end{trivlist}}
 
\newcommand{\mP}{\mathbb{P}}
\newcommand{\E}{\mathbb{E}}
\newcommand{\Var}{\mathrm{Var}}
\newcommand{\Cov}{\mathrm{Cov}}

\newcommand{\Econd}[2]{\mathbb{E}\left[#1 \mid #2\right]}
\newcommand{\Prb}{\mathbb{P}}
\newcommand{\R}{\mathbb{R}}
\newcommand{\N}{\mathcal{N}} % Parameter space
\newcommand{\ind}[1]{\mathbf{1}_{\left\{#1\right\}}} % Indicator function
\newcommand{\norm}[1]{\left\|#1\right\|} % Generic norm
\newcommand{\Fnorm}[1]{\left\|#1\right\|} % Frobenius norm
\newcommand{\Eucnorm}[1]{\left|#1\right|} % Euclidean norm
\newcommand{\supT}{\sup_{\theta \in \N^*}} % Sup over theta
\newcommand{\maxi}{\max_{1 \le i \le n}} % Max over i
\newcommand{\Nb}{\N(\theta^*)} % Neighborhood over theta^*

\providecommand{\norm}[1]{\lVert#1\rVert}

\usepackage[usenames,dvipsnames]{color}
\usepackage[round]{natbib}
\usepackage[hyperfootnotes=false]{hyperref}
\hypersetup{
	colorlinks,
	citecolor=Blue,
	linkcolor=Blue,
	urlcolor=Blue}

\numberwithin{equation}{section}

\begin{document}
	\begin{bibunit}

	\begin{spacing}{1.2}
		
		\title{\vspace{-2.0cm}%Kernel Minimum Distance: Unified Estimation and Specification Testing for Conditional Moment Restrictions
        Kernel Minimum Distance Estimation and Testing with Conditional Moment Restrictions: A Unified Framework\thanks{The authors contributed equally to this work and are listed in alphabetical order.}}
        %Kernel minimum distance estimation and testing with conditional moment restrictions
        %Unified Kernel Minimum Distance Estimation and Specification Testing for Conditional Moment Restrictions
        %A Unified Kernel Framework for Conditional Moment Restrictions
        %Kernel Minimum Distance: A Unified Approach to Conditional Moment Restrictions
        \vspace{-0.5cm}
		\author{Yuhao Li\thanks{School of Mathematics and Physics, Xi'an Jiaotong--Liverpool University, Suzhou, China. Email: \texttt{yuhao.li@xjtlu.edu.cn}.}\\Xi'an Jiaotong--Liverpool University \and Haokun Lu\thanks{Department of Business Statistics and Econometrics,
				Guanghua School of Management, Peking University, Beijing, 100871, China. Email: \texttt{hklu25@stu.pku.edu.cn}.}\\Peking University \and Xiaojun Song\thanks{Corresponding author. Department of Business Statistics and Econometrics, Guanghua School of Management, %and Center for Statistical Science, 
                Peking	University, Beijing, 100871, China. Email: \texttt{sxj@gsm.pku.edu.cn}. This work was supported by the National Natural Science Foundation of China [Grant Numbers 72373007 and 72333001]. The author also gratefully acknowledges the research support from the Center for Statistical Science of Peking University, China, and the Key Laboratory of Mathematical Economics and Quantitative Finance (Peking University) of the Ministry of Education, China.}\\Peking University}
		\maketitle

        \vspace{-0.5cm}
		\begin{abstract}

        We propose a unified Kernel Minimum Distance (KMD) framework for estimating and testing models defined by conditional moment restrictions.  By embedding conditional moments into a Reproducing Kernel Hilbert Space (RKHS), we construct a closed-form $V$-statistic objective function that quantifies the distance from the restrictions. We establish the $\sqrt{n}$-consistency and asymptotic normality of the associated minimum distance estimator. Within this framework, the minimized objective function naturally yields a consistent omnibus specification test. Unlike projection-based methods that require auxiliary nonparametric estimation for Neyman orthogonalization, our test inherently captures the estimation effect via a projected kernel structure. We derive asymptotic properties of the test statistics under the null hypothesis, the alternative hypothesis, and a sequence of local alternatives converging to the null at the parametric rate $n^{-1/2}$. The validity of a computationally simple multiplier bootstrap is established to facilitate inference. Simulation results demonstrate robust finite-sample performance, and the framework is illustrated by analyzing Engel curves using UK Family Expenditure Survey data.
            \par\vspace{0.5em} 
		
        \noindent\textbf{Keywords:} Kernel Minimum Distance, Conditional Moment Restrictions, Reproducing Kernel Hilbert Space
		
		\noindent\textbf{JEL Classification Number:} C12; C13; C14; C26.
		\end{abstract}
	\end{spacing}
    
	\thispagestyle{empty}
	
	\newpage
	
	\normalsize
	%\onehalfspacing
	
	\section{Introduction}\label{sec:intro}

Conditional moment restrictions (CMRs) serve as a fundamental framework in econometric analysis, encompassing a wide range of empirical settings, including instrumental variable estimation, nonlinear regression, and structural modeling. In these applications, economic theory typically postulates a parametric structural relationship that implies moment conditions of the form:
\begin{equation}
    \label{H_0}
    \mathbb{E}\left[\rho\left(Z,\theta_0\right)|X\right]=0 \quad \text{almost surely (a.s.)},
\end{equation}
for some unknown true parameter $\theta_0 \in \Theta$. Here, $Z \in \mathcal{Z}$ and $X \in \mathcal{X}$ denote the observable variables, and $\rho: \mathcal{Z} \times \Theta \to \mathbb{R}^q$ is a pre-specified generalized residual function known up to the parameter $\theta \in \Theta \subset \mathbb{R}^p$, which encodes the structural restrictions imposed by the model. Consequently, empirical researchers face two closely related tasks: consistently estimating the structural parameter $\theta_0$ and assessing whether the assumed structural model is correctly specified. Without the latter, the economic interpretation of the estimated parameters remains questionable.

The standard approach to estimation under conditional moment restrictions is the Generalized Method of Moments (GMM; \citealp{hansen1982large}), which approximates the conditional restriction by a finite vector of unconditional moments. While computationally convenient and widely used, this finite reduction may lead to fundamental identification problems. When the conditioning variables have support with infinite cardinality, the conditional restriction implies an infinite collection of unconditional moment conditions, and satisfying only a finite subset does not ensure global identification in nonlinear models, even when optimal instruments are employed \citep[see][Examples~1--2]{dominguez2004consistent}. As a consequence, GMM estimators may be inconsistent even when the underlying conditional restriction is correctly specified. A closely related limitation concerns specification testing. The classical $J$-test \citep{sargan1958estimation,hansen1982large} is not omnibus and is consistent only against violations of the selected moment conditions \citep{newey1985generalized}.

To restore global identification, the literature has evolved to exploit the infinite information inherent in conditional restrictions. Prominent contributions in this vein include sieve-based methods that approximate conditional moments using an expanding sequence of basis functions \citep[e.g.,][]{donald2003empirical} and kernel-based approaches that rely on local smoothing \citep[e.g.,][]{kitamura2004empirical}. Closely related to the latter is \citet{lavergne2013smooth}, who develop a smooth minimum distance estimator and uniform-in-bandwidth inference for conditional estimating equations.
 While many of these methods are theoretically capable of achieving semiparametric efficiency, their implementation often involves choosing a dimension, bandwidth, or regularization parameter to manage the bias-variance trade-off. Consequently, implementation may entail carefully selecting such objects; for instance, determining the optimal number of basis functions in sieve estimation requires complex data-driven procedures \citep[see][]{donald2009choosing}. A distinct strategy leverages the Integrated Conditional Moment (ICM) approach, pioneered by \cite{bierens1982consistent}. This framework rests on the fundamental result that the original conditional restriction is equivalent to satisfying a continuum of unconditional moment restrictions, obtained by integrating the conditional moments with respect to a weighting function. This equivalence has been extensively developed for consistent estimation \citep[e.g.,][]{carrasco2000generalization, dominguez2004consistent} and omnibus specification testing \citep[e.g.,][]{bierens1990consistent, bierens1997asymptotic, delgado2006consistent, escanciano2006consistent, dominguez2015simple}. However, despite their theoretical appeal, the practical implementation of ICM methods remains challenging. The test statistics are typically constructed by integrating empirical processes over a continuum of indices or weighting functions, a step that may lack an analytic solution. Closed-form expressions exist only for restrictive choices of weighting functions \citep[e.g.,][]{bierens1982consistent, escanciano2006consistent}, otherwise necessitating computationally intensive numerical integration. Furthermore, because the asymptotic distributions of these omnibus statistics are often non-standard and data-dependent, obtaining valid critical values for many of these procedures requires complex adjustments or computationally intensive resampling methods, which may limit their routine use in applied econometric work.

In the spirit of the ICM approach, we propose a unified Kernel Minimum Distance (KMD) framework for both the estimation and specification testing of models defined by conditional moment restrictions. Rather than working with a pre-specified indexed family of transformations or weighting functions, KMD takes the relevant test functions to be the unit ball of a vector-valued Reproducing Kernel Hilbert Space (RKHS), so that the criterion measures the largest normalized violation of the conditional moment restriction over this RKHS class. The kernel's reproducing property then turns this infinite-dimensional criterion into a closed-form pairwise $V$-statistic objective function. Thus, the kernel plays two roles: it determines the class of moment violations against which the conditional restriction is assessed, and it delivers a tractable sample objective for estimation and testing. We establish the $\sqrt n$-consistency and asymptotic normality of the resulting estimator and discuss a one-step Newton--Raphson update toward semiparametric efficiency. Building directly on the same estimation framework, we construct a consistent omnibus specification test based on the minimized objective function. Under the null hypothesis, the limiting distribution of the test statistic automatically accounts for the first-order estimation effect via a projected-kernel structure, yielding an automatic Neyman-type orthogonalization without the need to construct a separate orthogonalized testing process. We further establish consistency against fixed alternatives, examine power against sequences of local alternatives converging to the null at the parametric rate $n^{-1/2}$, and establish the validity of a simple multiplier bootstrap procedure that is computationally efficient and avoids repeated parameter re-estimation or resampling observations.

Our framework is related to two strands of work: RKHS-based measures of distributional and moment discrepancies, and minimum-distance approaches to conditional moment restrictions in econometrics. In statistics and machine learning, the use of RKHS norms to measure discrepancies is closely related to the maximum mean discrepancy of \citet{gretton2006kernel,gretton2012kernel}; for a comprehensive review of kernel mean embeddings and their applications, see \citet{muandet2017kernel}. For conditional moment restrictions, on the estimation front, \citet{muandet2020kernel} formulate the Maximum Moment Restriction (MMR) principle, and \citet{zhang2023instrumental} recently apply a related kernel maximum-moment loss to instrumental variable regression and establish basic asymptotic properties. Their framework is primarily formulated as a learning problem: the emphasis is on constructing and optimizing a kernel-based empirical risk for IV regression, with particular attention to algorithmic implementation and predictive performance. We complement this literature by tailoring the estimator to structural inference and providing valid asymptotic inference based on consistent variance estimation.

The closest econometric precursor on the estimation side is the smooth minimum distance approach of \citet{lavergne2013smooth}. Their criterion is derived from local smoothing of conditional estimating equations, and one of their key insights is that consistent estimation of the structural parameter does not require consistent nonparametric estimation of the local conditional moment itself. Our criterion is closely related in continuous Euclidean settings with translation-invariant smoothing kernels, but it starts from a different primitive object: the RKHS norm of an embedded conditional moment. This formulation makes explicit the role of the kernel in measuring moment violations over the conditioning space and allows kernels to be specified directly for the variables being conditioned on. For example, when the conditioning variables contain both continuous covariates and discrete indicators, as is common in econometric practice, one may combine a Gaussian kernel for the former with a categorical kernel for the latter.

On the testing front, existing literature similarly harnesses the geometry of RKHS to aggregate conditional moment restrictions. \cite{muandet2020kernel} primarily focus on regression settings where the structural parameter $\theta_0$ is assumed known. Relatedly, \cite{sancetta2022testing} develops RKHS score-type tests for functional subspace restrictions in the presence of high- or infinite-dimensional nuisance parameters, using estimated projections of test functions to remove nuisance effects. \cite{escanciano2024gaussian} extends the RKHS/Gaussian-process approach to the composite hypothesis setting but adopts a decoupling strategy based on an exogenous estimator. This estimator-agnostic framework allows for estimators with slower convergence rates and dispenses with the requirement of an asymptotic linear Bahadur representation. While this flexibility is advantageous for high-dimensional nuisance parameters, it can require additional orthogonalization steps in standard structural applications. Specifically, achieving such robustness may require an auxiliary estimate of the relevant conditional score or nuisance projection whenever the gradient is not measurable with respect to the $\sigma$-algebra generated by the conditioning covariates -- a common feature in models with endogeneity, such as instrumental variable regressions. In contrast, our unified KMD framework defines the test statistic directly as the minimized value of the estimation objective function. In this setting, the parameter estimation uncertainty is captured by the statistic's geometry through a projected kernel derived from the estimator's first-order conditions, thereby enabling asymptotic inference without constructing a separate orthogonalized testing process.

Conceptually, our approach aligns with the unified paradigm established by \cite{dominguez2004consistent} and \cite{dominguez2015simple}. These works elegantly link estimation and testing by interpreting the minimized value of an integrated moment objective function as an omnibus specification test. However, their reliance on indicator-weighted empirical processes renders the methodology susceptible to the ``curse of dimensionality'' and leads to rapid power loss as the number of covariates increases. We advance this unified framework by substituting indicator functions with reproducing kernels. By defining the estimator $\widehat{\theta}_n$ as the minimizer of the KMD objective, we retain the computational simplicity of their framework, particularly in nonlinear and endogenous settings, thereby yielding a statistic that can be interpreted as a generalized Sargan test. Crucially, this kernel-based formulation leverages geometric adaptability to significantly enhance the robustness of both estimation and testing in multivariate settings.

The remainder of the paper is organized as follows. Section \ref{sec:estimation} presents the identification strategy via an RKHS embedding, defines the KMD estimator, establishes its asymptotic normality, develops asymptotic inference for individual parameters and for linear and nonlinear restrictions, and discusses a two-step update that attains semiparametric efficiency. Section \ref{sec:test} develops an omnibus specification test based on the minimized objective function, derives its limiting distribution under the null hypothesis, establishes consistency against fixed alternatives, and analyzes power against local alternatives. Section \ref{sec:mb} introduces a computationally efficient multiplier bootstrap procedure and establishes its asymptotic validity. Section \ref{sec:simu} investigates the finite-sample performance of the proposed framework through extensive Monte Carlo simulations. Section \ref{sec:real} illustrates the method's empirical relevance by revisiting the structural analysis of Engel curves. Section \ref{sec:con} concludes. Proofs of the main results and additional numerical and empirical analyses are provided in the Online Supplementary Material.

\textsc{Notation}. Throughout the paper, $\|\cdot\|$ denotes the Euclidean norm for vectors and the Frobenius norm for matrices. The superscript $\prime$ denotes the transpose of a vector or matrix. $\mathbf{I}_q$ represents the $q \times q$ identity matrix. We denote convergence in probability and in distribution by $\xrightarrow{p}$ and $\xrightarrow{d}$, respectively. The stochastic order symbols $O_p(\cdot)$ and $o_p(\cdot)$ are defined in the standard sense.

\section{Estimation}
\label{sec:estimation}

\subsection{Identification Strategy via RKHS Embedding}
\label{subsec:identification}

In this section, we consider the estimation of the structural parameter $\theta_0$ defined by the conditional moment restrictions in \eqref{H_0}. In the spirit of the ICM approach, we embed the conditional moment restrictions into a vector-valued RKHS. By the Moore--Aronszajn theorem, every symmetric positive definite kernel $k$ on $\mathcal X$ induces a unique Hilbert space $\mathcal H_k$ of real-valued functions on $\mathcal X$. Its reproducing property, $h(x)=\langle h,k(\cdot,x)\rangle_{\mathcal H_k}$ turns evaluation at $x$ into an inner product. This property allows function-indexed moment restrictions to be represented through kernel evaluations. We do not require the unknown conditional mean $x\mapsto \mathbb E[\rho(Z,\theta)| X=x]$ to belong to the RKHS. Rather, the RKHS provides the class of test functions and the geometry used to measure violations of the conditional moment restriction. Identification rests on the equivalence between the conditional restriction \eqref{H_0} and the vanishing norm of the corresponding moment embedding within the RKHS. To formalize this framework, we introduce the following assumptions.
\begin{assumption}
\label{ass:cons}
\quad
\begin{itemize}
    \setlength{\itemsep}{0pt} 
    \setlength{\parskip}{0pt} 
    \setlength{\parsep}{0pt}
    \item[(i)] The observed data $\{W_i\}_{i=1}^n$ is an i.i.d. sample of a random vector $W$. The vector $W_i$ contains subvectors $Z_i$ and $X_i$, which are not necessarily disjoint.

    \item[(ii)] The function $\rho(z, \theta): \mathcal{Z} \times \Theta \to \mathbb{R}^q$ is continuous at each $\theta \in \Theta$ with probability one, and satisfies $\mathbb{E}[\sup_{\theta \in \Theta}\|\rho(Z, \theta)\|^2]<\infty$.
    
    \item[(iii)] The parameter space $\Theta \subset \mathbb{R}^p$ is compact. There exists a unique $\theta_0 \in \Theta$ such that $\mathbb{E}[\rho(Z, \theta_0)| X]=0$ a.s.
    
    \item[(iv)] The conditioning space $\mathcal X$ is a separable metric space equipped with its Borel $\sigma$-field. The kernel $k:\mathcal X\times\mathcal X\to\mathbb R$ is symmetric, continuous, bounded, and positive definite. Moreover, $k$ is strictly positive definite over finite signed Borel measures: for every nonzero finite signed Borel measure $\nu$ on $\mathcal X$,
    $$
    \iint_{\mathcal X\times\mathcal X}
    k(x,\widetilde x)\,d\nu(x)\,d\nu(\widetilde x)>0 .
    $$
    We define the $q\times q$ matrix kernel as $K(x_1,x_2)\coloneqq k(x_1,x_2)\cdot \mathbf I_q$.
\end{itemize}
\end{assumption}

Assumptions \ref{ass:cons}(i) and (ii) are standard regularity conditions. Assumption \ref{ass:cons}(iii) imposes global identification of the structural parameter. Assumption \ref{ass:cons}(iv) imposes injectivity of the RKHS embedding at the level of finite signed measures,\footnote{This requirement is stronger than the usual characteristic-kernel property, which concerns injectivity of the kernel mean embedding over probability measures. See \citet{sriperumbudur2011universality} for detailed characterizations.} which is the relevant requirement for conditional moment restrictions. Indeed, for each component of the generalized residual, the map $A\mapsto E[\rho_l(Z,\theta)\mathbf 1\{X\in A\}]$ defines a finite signed measure on $\mathcal X$. Thus, the vanishing of the RKHS embedding recovers the conditional moment restriction only if the kernel embedding is injective for such signed measures. This formulation is not tied to a Euclidean conditioning space or to the existence of a Lebesgue density for $X$. In the standard
continuous-covariate setting, the condition is satisfied by many familiar bounded translation-invariant kernels whose spectral measure has full support, including Gaussian, Laplace, and Mat\'ern kernels. In settings with discrete or mixed conditioning variables, one may instead use kernels defined directly on the relevant support, for instance, products of a continuous kernel and a strictly positive-definite categorical kernel. Hence Assumption \ref{ass:cons}(iv) accommodates the types of continuous, discrete, and mixed conditioning variables commonly encountered in econometric applications.

Let $\mathcal H_K$ denote the vector-valued RKHS associated with the matrix kernel $K$ defined in Assumption \ref{ass:cons}(iv).\footnote{Since the matrix kernel has a diagonal structure
$K(x,\widetilde x)=k(x,\widetilde x)\mathbf I_q$, the vector-valued RKHS
$\mathcal H_K$ can be identified with the Hilbert direct sum of $q$ copies
of the scalar RKHS $\mathcal H_k$ generated by $k$. Equivalently,
$\mathcal H_K
=
\left\{
h=(h_1,\ldots,h_q)^\prime:
h_l\in\mathcal H_k,\ l=1,\ldots,q
\right\}$, with inner product
$\langle h,g\rangle_{\mathcal H_K}
=
\sum_{l=1}^q \langle h_l,g_l\rangle_{\mathcal H_k}$.
Thus each element $h\in\mathcal H_K$ is a vector-valued function
$h:\mathcal X\to\mathbb R^q$.} We next relate the KMD construction to the traditional ICM approach. The ICM approach converts a conditional moment restriction into a collection of unconditional moment restrictions indexed by test functions of the conditioning variables. Specifically, if $\mathbb E[\rho(Z,\theta)\mid X]=0$ a.s., then for every admissible vector-valued test function $h$,
$$
\mathbb E[\rho(Z,\theta)^\prime h(X)]
=
\mathbb E\!\left[
\mathbb E[\rho(Z,\theta)\mid X]^\prime h(X)
\right]
=0.
$$
The converse implication requires the test-function class to be sufficiently rich. Our KMD formulation implements this idea by taking the admissible test-function class to be the unit ball of $\mathcal H_K$. For each fixed $\theta\in\Theta$, define the linear functional $\mathcal L_\theta:\mathcal H_K\to\mathbb R$ by $\mathcal L_\theta h\coloneqq \mathbb E[\rho(Z,\theta)^\prime h(X)]$. Assumptions \ref{ass:cons}(ii) and (iv) guarantee that $\mathcal L_\theta$ is bounded. Its operator norm is
$$
\|\mathcal L_\theta\|_{\mathrm{op}}
\coloneqq
\sup_{\|h\|_{\mathcal H_K}\le 1}
|\mathcal L_\theta h|
=
\sup_{\|h\|_{\mathcal H_K}\le 1}
\left|
\mathbb E[\rho(Z,\theta)^\prime h(X)]
\right|.
$$
Thus, $\|\mathcal L_\theta\|_{\mathrm{op}}$ measures the largest normalized unconditional moment violation over the RKHS class. When the conditional moment restriction holds at $\theta$, $\mathcal L_\theta$ is the zero functional and $\|\mathcal L_\theta\|_{\mathrm{op}}=0$. Conversely, the implication from $\|\mathcal L_\theta\|_{\mathrm{op}}=0$ back to the conditional moment restriction relies on the richness of the RKHS class, ensured here by the signed-measure injectivity condition in Assumption \ref{ass:cons}(iv).

By the Riesz Representation Theorem, there exists a unique element $\mu(\theta)\in\mathcal H_K$, referred to as the \textit{conditional moment embedding} (CME) \citep{muandet2017kernel,muandet2020kernel}, such that $\mathcal L_\theta h=\langle h,\mu(\theta)\rangle_{\mathcal H_K}$ for all $h\in\mathcal H_K$. Consequently, $\|\mathcal L_\theta\|_{\mathrm{op}}=\|\mu(\theta)\|_{\mathcal H_K}$. The following lemma formalizes the identification content of this RKHS representation.

\begin{lemma}
\label{lemma:identification}
Suppose Assumption \ref{ass:cons} holds. Then, for any $\theta\in\Theta$,
$$
\mathbb{E}[\rho(Z,\theta)\mid X]=0 \;\text{a.s.}
\iff
\|\mu(\theta)\|_{\mathcal H_K}=0.
$$
\end{lemma}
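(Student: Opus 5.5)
The plan is to prove both directions through the explicit closed form of $\|\mu(\theta)\|_{\mathcal H_K}^2$ as a double integral of the kernel against signed measures. First I identify $\mu(\theta)$. By the reproducing property for the diagonal matrix kernel, $h(x)^\prime v=\langle h,K(\cdot,x)v\rangle_{\mathcal H_K}$ for every $v\in\mathbb R^q$. Hence $\mathcal L_\theta h=\mathbb E[\langle h,K(\cdot,X)\rho(Z,\theta)\rangle_{\mathcal H_K}]$.

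Boundedness of $k$ together with $\mathbb E\|\rho(Z,\theta)\|^2<\infty$ (Assumption \ref{ass:cons}(ii)) gives $\mathbb E\|K(\cdot,X)\rho(Z,\theta)\|_{\mathcal H_K}=\mathbb E[k(X,X)^{1/2}\|\rho(Z,\theta)\|]<\infty$. So the Bochner integral $\mathbb E[K(\cdot,X)\rho(Z,\theta)]$ exists in $\mathcal H_K$; separability of $\mathcal X$ and continuity of $k$ make $\mathcal H_k$ separable, which settles measurability. The expectation commutes with the inner product, so $\mu(\theta)=\mathbb E[K(\cdot,X)\rho(Z,\theta)]$. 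Applying the same interchange twice with an independent copy $(Z^\ast,X^\ast)$, I get
\[
\|\mu(\theta)\|_{\mathcal H_K}^2=\mathbb E\big[\rho(Z,\theta)^\prime\rho(Z^\ast,\theta)\,k(X,X^\ast)\big]=\sum_{l=1}^q\iint k(x,\widetilde x)\,d\nu_l(x)\,d\nu_l(\widetilde x).
\]
Here $\nu_l(A)\coloneqq\mathbb E[\rho_l(Z,\theta)\mathbf 1\{X\in A\}]$, and the last step uses $\mathbb E[\rho_l\mathbf 1\{X\in\cdot\}]$ disintegration, i.e.\ Fubini with $|k|$ bounded. Each $\nu_l$ is a finite signed Borel measure, since its total variation is at most $\mathbb E|\rho_l(Z,\theta)|<\infty$.

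($\Rightarrow$) If $\mathbb E[\rho(Z,\theta)\mid X]=0$ a.s., then each $\nu_l(A)=\mathbb E[\mathbb E[\rho_l\mid X]\mathbf 1\{X\in A\}]=0$, so every $\nu_l$ is the zero measure and the norm vanishes. Alternatively, $\mathcal L_\theta h=0$ for all $h$ by iterated expectations, as the paper already notes.

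($\Leftarrow$) Suppose $\|\mu(\theta)\|_{\mathcal H_K}=0$. Positive definiteness of $k$ makes each summand nonnegative: this holds for discrete measures by definition, and extends to general $\nu_l$ because each summand equals $\|\int k(\cdot,x)d\nu_l(x)\|_{\mathcal H_k}^2\ge0$ by the same Bochner argument. So each summand is zero. The strict positivity in Assumption \ref{ass:cons}(iv) then forces $\nu_l=0$ for each $l$. Therefore $\mathbb E[g_l(X)\mathbf 1\{X\in A\}]=0$ for all Borel $A$, where $g_l(x)=\mathbb E[\rho_l(Z,\theta)\mid X=x]$. Taking $A=\{g_l>0\}$ and $A=\{g_l<0\}$ yields $g_l(X)=0$ a.s.

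The main obstacle is the rigorous justification of $\mu(\theta)=\mathbb E[K(\cdot,X)\rho(Z,\theta)]$ and of the double-integral formula: Bochner integrability, measurability of $x\mapsto k(\cdot,x)$ into $\mathcal H_k$, and the Fubini interchange. I will handle these with boundedness and continuity of $k$ and the second-moment bound in Assumption \ref{ass:cons}(ii). Once this representation is in place, both directions are short.
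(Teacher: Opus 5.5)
Your proof is correct and follows essentially the same route as the paper. Both write $\|\mu(\theta)\|_{\mathcal H_K}^2=\sum_{l=1}^q\iint k(x,\widetilde x)\,d\nu_{\theta,l}(x)\,d\nu_{\theta,l}(\widetilde x)$ with $\nu_{\theta,l}(A)=\mathbb E[\rho_l(Z,\theta)\mathbf 1\{X\in A\}]$, then use the signed-measure injectivity in Assumption \ref{ass:cons}(iv) and the fact that $\nu_{\theta,l}$ has $P_X$-density $\mathbb E[\rho_l(Z,\theta)\mid X]$. Your extra care over Bochner integrability, measurability and nonnegativity fills in steps the paper treats informally; note, though, that nonnegativity of each summand already follows directly from Assumption \ref{ass:cons}(iv), since the double integral is $0$ for the zero measure and strictly positive otherwise.
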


Based on Lemma \ref{lemma:identification}, we define the population objective function as the squared RKHS norm $Q(\theta)\coloneqq\|\mu(\theta)\|_{\mathcal H_K}^2$. To obtain a computable expression for this criterion, note that the reproducing property implies, for every $h\in\mathcal H_K$,
\begin{equation}
\mathcal L_\theta h
=
\mathbb E[\rho(Z,\theta)^\prime h(X)]
=
\mathbb E\!\left[
\left\langle h,K(\cdot,X)\rho(Z,\theta)
\right\rangle_{\mathcal H_K}
\right]
=
\left\langle
h,\mathbb E[K(\cdot,X)\rho(Z,\theta)]
\right\rangle_{\mathcal H_K}.
\label{eq:riesz_derivation}
\end{equation}
By the uniqueness of the Riesz representer, this identifies $\mu(\theta)=\mathbb E[K(\cdot,X)\rho(Z,\theta)]$. Hence, the RKHS norm admits the closed-form double expectation
\begin{align}
Q(\theta)
&=
\left\langle
\mathbb E[K(\cdot,X)\rho(Z,\theta)],
\mathbb E[K(\cdot,\widetilde X)\rho(\widetilde Z,\theta)]
\right\rangle_{\mathcal H_K}
\nonumber \\
&=
\mathbb E\!\left[
\rho(Z,\theta)^\prime K(X,\widetilde X)\rho(\widetilde Z,\theta)
\right],
\label{eq:reproducing_trick}
\end{align}
where $(Z,X)$ and $(\widetilde Z,\widetilde X)$ are independent copies from the population distribution.

Specifically, for a continuous translation-invariant scalar kernel on
\(\mathbb R^d\), \(k(x,\widetilde x)=\varphi(x-\widetilde x)\), the
connection with characteristic-function-based ICM criteria can be made
explicit. By Bochner's theorem, there exists a finite nonnegative Borel
measure \(\Lambda\) on \(\mathbb R^d\) such that
\[
    k(x,\widetilde x)
    =
    \int_{\mathbb R^d}
    \exp\!\left\{\mathrm{i}\omega^\prime(x-\widetilde x)\right\}
    d\Lambda(\omega).
\]
Substituting this representation into \eqref{eq:reproducing_trick} and
applying Fubini's theorem and the independence of the two copies yields
\[
    Q(\theta)
    =
    \int_{\mathbb R^d}
    \left\|
    \mathbb E\!\left[
        \rho(Z,\theta)\exp(\mathrm{i}\omega^\prime X)
    \right]
    \right\|^{2}
    d\Lambda(\omega).
\]
Thus, in this special case, KMD aggregates a continuum of
characteristic-function moments, with the kernel determining how different
frequencies are weighted through the spectral measure \(\Lambda\). This
provides a direct link to Bierens-type ICM criteria and to the Fourier-moment
representation underlying the SMD construction of
\citet{lavergne2013smooth}; see also \citet{muandet2020kernel}. The
representation is interpretive rather than computational:
\(Q(\theta)\) is evaluated directly through the closed-form pairwise
expression in \eqref{eq:reproducing_trick}, without numerical integration
over \(\omega\).

Regardless of this special-case representation,
\(Q(\theta)=\|\mu(\theta)\|_{\mathcal H_K}^{2}\geq 0\), and
Lemma \ref{lemma:identification} establishes that equality holds if and
only if
\(\mathbb E[\rho(Z,\theta)\mid X]=0\) a.s. By the global identification
condition in Assumption \ref{ass:cons}(iii), this occurs if and only if
\(\theta=\theta_0\). Hence, the true parameter is uniquely identified as
the global minimizer of the population objective function. This directly
motivates our estimation strategy, which minimizes the sample analog of
\eqref{eq:reproducing_trick}.
\subsection{The KMD Estimator}
\label{subsec:estimator}

To implement the estimation, we construct the sample counterpart of the population objective $Q(\theta)$ using a $V$-statistic formulation:
\begin{equation*}
    \widehat{Q}_n(\theta) = \frac{1}{n^2}\sum_{i=1}^{n}\sum_{j=1}^{n}\rho(Z_i,\theta)^{\prime} K(X_i,X_j)\rho(Z_j,\theta).
    \label{eq:sample_obj}
\end{equation*}
The Kernel Minimum Distance (KMD) estimator is then defined as:
\begin{equation*}
    \widehat{\theta}_n = \arg\min_{\theta \in \Theta} \widehat{Q}_n(\theta).
    \label{eq:estimator_def}
\end{equation*}

\begin{remark}
We use the \(V\)-statistic criterion \(\widehat Q_n(\theta)\) rather than the off-diagonal \(U\)-statistic because it preserves the minimum-distance geometry. Specifically, \(\widehat Q_n(\theta)=\|\widehat\mu_n(\theta)\|_{\mathcal H_K}^2\), where \(\widehat\mu_n(\theta):=n^{-1}\sum_{i=1}^n K(\cdot,X_i)\rho(Z_i,\theta)\), so \(\widehat Q_n(\theta)\) is the squared RKHS norm of the empirical embedding. Since the kernel matrix is positive semidefinite, \(\widehat Q_n(\theta)\ge0\) globally, preserving the metric interpretation and improving the finite-sample optimization geometry. Off-diagonal \(U\)-statistic criteria are natural in smooth minimum-distance and local-smoothing approaches, such as \citet{lavergne2013smooth}; our use of the \(V\)-statistic reflects a finite-sample geometric consideration rather than an asymptotic distinction. In particular, because \(K-\operatorname{diag}(K)\) need not be positive semidefinite, the off-diagonal criterion need not define a squared norm and may exhibit negative regions or boundary minima in finite samples. The diagonal bias is only \(O(n^{-1})\) and vanishes asymptotically, so consistency of \(\widehat\theta_n\) is unaffected; see Section~\ref{subsec:UorV-estimation} of the Online Supplementary Material for further discussion.
\end{remark}

The following lemma formally establishes this consistency result.
\begin{lemma}
\label{lemma:theta_consist}
Under Assumptions \ref{ass:cons}, $\widehat{\theta}_n \xrightarrow{p} \theta_0$.
\end{lemma}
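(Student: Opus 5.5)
We will apply the standard consistency theorem for extremum estimators (e.g., Newey and McFadden, 1994, Theorem 2.1). It requires four ingredients: (a) $\Theta$ is compact; (b) $Q(\theta)$ is continuous on $\Theta$; (c) $Q$ is uniquely minimized at $\theta_0$; and (d) $\sup_{\theta\in\Theta}|\widehat Q_n(\theta)-Q(\theta)|\xrightarrow{p}0$. Ingredient (a) is Assumption \ref{ass:cons}(iii). Ingredient (c) follows from Lemma \ref{lemma:identification} combined with global identification: $Q(\theta)=\|\mu(\theta)\|_{\mathcal H_K}^2\ge 0$, with equality if and only if $\theta=\theta_0$.

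For continuity, we work with the embedding rather than the double expectation. Let $\kappa:=\sup_x k(x,x)<\infty$. Then
$$\|\mu(\theta)-\mu(\theta')\|_{\mathcal H_K}\le \mathbb E\|K(\cdot,X)(\rho(Z,\theta)-\rho(Z,\theta'))\|_{\mathcal H_K}\le \sqrt{\kappa}\,\mathbb E\|\rho(Z,\theta)-\rho(Z,\theta')\|.$$
The right-hand side tends to zero as $\theta'\to\theta$ by dominated convergence, using continuity of $\rho$ in $\theta$ a.s. and the envelope $\sup_\theta\|\rho\|$, which is integrable by Assumption \ref{ass:cons}(ii). Hence $\mu$ is continuous into $\mathcal H_K$, and $Q=\|\mu\|^2$ is continuous.

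For uniform convergence, we use the identity $\widehat Q_n(\theta)=\|\widehat\mu_n(\theta)\|_{\mathcal H_K}^2$ and the reverse triangle inequality:
$$|\widehat Q_n(\theta)-Q(\theta)|\le \|\widehat\mu_n(\theta)-\mu(\theta)\|_{\mathcal H_K}\,\big(\|\widehat\mu_n(\theta)\|_{\mathcal H_K}+\|\mu(\theta)\|_{\mathcal H_K}\big).$$
Both norms in the second factor are bounded uniformly by $\sqrt{\kappa}\,n^{-1}\sum_i\sup_\theta\|\rho(Z_i,\theta)\|$ and $\sqrt\kappa\,\mathbb E\sup_\theta\|\rho\|$, respectively. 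These are $O_p(1)$ by the law of large numbers. It therefore suffices to show $\sup_\theta\|\widehat\mu_n(\theta)-\mu(\theta)\|_{\mathcal H_K}\xrightarrow{p}0$, which is a uniform law of large numbers for the Hilbert-space-valued i.i.d. summands $\xi_i(\theta):=K(\cdot,X_i)\rho(Z_i,\theta)$.

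This ULLN is the main step, and we plan a bracketing/compactness argument. Fix $\theta$ and $\delta>0$, and define $\Delta_i(\theta,\delta):=\sup_{\|\theta'-\theta\|<\delta}\|\rho(Z_i,\theta')-\rho(Z_i,\theta)\|$. By a.s. continuity and dominated convergence, $\mathbb E\Delta(\theta,\delta)\to0$ as $\delta\to0$. Cover $\Theta$ by finitely many balls $B(\theta_m,\delta_m)$ with $\mathbb E\Delta(\theta_m,\delta_m)<\varepsilon$. Then
$$\sup_\theta\|\widehat\mu_n-\mu\|_{\mathcal H_K}\le \max_m\|\widehat\mu_n(\theta_m)-\mu(\theta_m)\|_{\mathcal H_K}+\sqrt\kappa\,\max_m\Big(n^{-1}\textstyle\sum_i\Delta_i(\theta_m,\delta_m)+\mathbb E\Delta(\theta_m,\delta_m)\Big).$$
The first term is $o_p(1)$ by the pointwise Hilbert-space weak LLN: its variance is $\mathbb E\|\xi(\theta_m)-\mu(\theta_m)\|_{\mathcal H_K}^2/n\le \kappa\,\mathbb E\|\rho\|^2/n$. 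The second term is at most $2\sqrt\kappa\,\varepsilon+o_p(1)$ by the scalar LLN. Letting $\varepsilon\downarrow0$ gives the ULLN.

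A measurability caveat about the supremum over $\theta$ may arise; separability of $\mathcal X$ and continuity of $\rho$ allow restriction to a countable dense subset of $\Theta$. With (a)–(d) established, $\widehat\theta_n\xrightarrow{p}\theta_0$.
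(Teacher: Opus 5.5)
Your proof is correct. It keeps the paper's outer structure: Theorem 2.1 of Newey and McFadden, with identification from Lemma~\ref{lemma:identification} and Assumption~\ref{ass:cons}(iii). Where it differs is in how you obtain continuity of $Q$ and uniform convergence.

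The paper treats $\widehat Q_n$ as a generic $V$-statistic with kernel $f(S_i,S_j,\theta)=\rho(Z_i,\theta)'K(X_i,X_j)\rho(Z_j,\theta)$. It checks the envelope conditions $\mathbb E\sup_\theta|f(S_i,S_j,\theta)|<\infty$ and $\mathbb E\sup_\theta|f(S_i,S_i,\theta)|<\infty$. It then cites the $V$-statistic uniform law (Lemma 8.5 of Newey and McFadden), which delivers continuity of $Q$ and $\sup_\theta|\widehat Q_n(\theta)-Q(\theta)|\xrightarrow{p}0$ in one step.

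You instead use $\widehat Q_n=\|\widehat\mu_n\|_{\mathcal H_K}^2$ and $\|K(\cdot,x)v\|_{\mathcal H_K}=\sqrt{k(x,x)}\,\|v\|$. This reduces the second-order problem to a uniform law for an i.i.d.\ Hilbert-valued average, which you prove by the classical bracketing argument. What your route buys:
\begin{itemize}
\item It is self-contained and gives an explicit modulus, $\sup_\theta|\widehat Q_n-Q|\le O_p(1)\cdot\sup_\theta\|\widehat\mu_n-\mu\|_{\mathcal H_K}$.
\item Second moments enter only through your Chebyshev step at the grid points. With a first-moment law of large numbers in the separable space $\mathcal H_K$, your argument would go through under $\mathbb E\sup_\theta\|\rho(Z,\theta)\|<\infty$ alone. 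The paper's diagonal envelope condition needs $\mathbb E\sup_\theta\|\rho(Z,\theta)\|^2$.
\end{itemize}
What the paper's citation buys is brevity and consistency with its later arguments, where the same Lemma 8.5 is invoked for $\widehat H_n$ and $\widehat\Delta_n$.

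One small correction to your measurability aside. As Assumption~\ref{ass:cons}(ii) is phrased, following Newey and McFadden, it gives continuity at each fixed $\theta$ with probability one, not almost-sure continuity of the whole sample path. So restricting the supremum to a countable dense subset of $\Theta$ is not automatically justified. It is also unnecessary: you dominate the supremum by measurable quantities that vanish in probability, which suffices when probabilities are read as outer probabilities, as in Newey and McFadden.
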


To establish the asymptotic normality of $\widehat{\theta}_n$, we impose the following regularity conditions necessary to validate the Taylor expansion of the first-order condition and to ensure the validity of the resulting asymptotic linear representation.
\begin{assumption}
\label{ass:an}
\quad
\begin{itemize}
    \setlength{\itemsep}{0pt} 
    \setlength{\parskip}{0pt} 
    \setlength{\parsep}{0pt}
    \item[(i)] $\theta_0\in\operatorname{int}\left(\Theta\right)$.
    
    \item[(ii)] $\rho(z, \theta)$ is twice continuously differentiable with respect to $\theta$ in an open neighborhood $\mathcal{N}\left(\theta_0\right)$ of $\theta_0$, with probability one. Let $g(z, \theta) = \partial \rho(z, \theta)/\partial \theta^{\prime}$ denote the score function with dimension $q \times p$. $\mathbb{E}\left(\sup _{\theta \in \mathcal{N}\left(\theta_0\right)}\|g(Z, \theta)\|^2\right)<\infty$,  $\mathbb{E}\left[\left\| \rho\left(Z, \theta_0\right)\right\|^4\right]<\infty$, $\mathbb{E}\left[\left\| g\left(Z, \theta_0\right)\right\|^4\right]<\infty$ and the Hessian matrix $\mathbb{E}\left(\sup _{\theta \in \mathcal{N}\left(\theta_0\right)} \|\nabla_{\theta \theta^{\prime}}^2 \rho_l\left(Z, \theta\right) \|^2\right)<\infty$, for $l=1, \cdots q$, where $\rho_l\left(\cdot\right)$ denotes the $l$-th element in $\rho\left(\cdot\right)$.
    
    \item[(iii)] The $p \times p$ matrix $\Delta\left(\theta_0\right) = \mathbb{E}\left[g(Z, \theta_0)' K(X, \tilde{X}) g(\tilde{Z}, \theta_0)\right]$ is non-singular.
\end{itemize}
\end{assumption}

Assumption \ref{ass:an} outlines the regularity conditions for the asymptotic representation. Condition (i) requires $\theta_0$ to be an interior point to validate the expansion, while condition (iii) guarantees the invertibility of the population Hessian. In condition (ii), the moment bounds involving the supremum are specifically imposed to establish the uniform convergence of the sample Hessian matrix $\widehat{H}_n(\theta)$ via the Uniform Weak Law of Large Numbers (UWLLN) for $V$-statistics. The finite fourth moments for $\rho(Z, \theta_0)$ and $g(Z, \theta_0)$ serve as sufficient conditions to ensure $\mathbb{E}[\|g(Z, \theta_0)'\rho(Z, \theta_0)\|^2] < \infty$, which enables the H\'{a}jek projection of the $V$-statistic score vector $\nabla_\theta \widehat{Q}_n(\theta_0)$. We explicitly retain these separate fourth-moment conditions to maintain a unified framework with the specification testing procedure, which relies on an analogous $V$-statistic formulation (see Section \ref{subsec:UorV-testing} of the Online Supplementary Material for a detailed discussion of $U$- versus $V$-statistics for testing). Notably, these moment requirements are slightly stronger than standard second-moment conditions; this is a direct consequence of employing the $V$-statistic estimator, which necessitates control over the stochastic behavior of the diagonal terms ($i=j$). With these regularity conditions in place, the following lemma establishes the asymptotic linear representation of $\widehat{\theta}_n$.

\begin{lemma}
\label{lemma:theta_lr}
Under Assumptions \ref{ass:cons} and \ref{ass:an}, the kernel minimum distance estimator $\widehat\theta$ admits the asymptotic linear representation
$$
\sqrt{n}\left(\widehat{\theta}_n-\theta_0\right)=-\Delta\left(\theta_0\right)^{-1} \frac{1}{\sqrt{n}} \sum_{i=1}^n G\left(X_i, \theta_0\right)^\prime \rho\left(Z_i, \theta_0\right)+o_p(1),
$$
where $G\left(X_i, \theta_0\right)=\mathbb{E}\left[ K\left(X_i, X_j\right)g\left(Z_j, \theta_0\right)|X_i\right]_{q \times p}$.
\end{lemma}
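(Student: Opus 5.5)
The argument will be the standard M-estimation route: a mean-value expansion of the first-order condition, combined with a Hájek projection of the $V$-statistic score. By Lemma \ref{lemma:theta_consist} and Assumption \ref{ass:an}(i), with probability approaching one $\widehat\theta_n$ is an interior point lying in $\mathcal N(\theta_0)$. There $\widehat Q_n$ is twice continuously differentiable, so $\nabla_\theta\widehat Q_n(\widehat\theta_n)=0$. Using the symmetry of $K$, the gradient is
$$
\nabla_\theta \widehat Q_n(\theta)=\frac{2}{n^2}\sum_{i=1}^n\sum_{j=1}^n g(Z_i,\theta)'K(X_i,X_j)\rho(Z_j,\theta),
$$
and the Hessian $\widehat H_n(\theta)$ consists of a term $2n^{-2}\sum_{i,j} g(Z_i,\theta)'K(X_i,X_j)g(Z_j,\theta)$ plus a term involving the second derivatives $\nabla^2_{\theta\theta'}\rho_l(Z_i,\theta)$ contracted against $K(X_i,X_j)\rho(Z_j,\theta)$. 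Expanding the gradient row by row with the mean value theorem gives
$$
0=\nabla_\theta\widehat Q_n(\theta_0)+\widehat H_n(\bar\theta)(\widehat\theta_n-\theta_0),
$$
where $\bar\theta$ lies between $\widehat\theta_n$ and $\theta_0$.

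\textbf{Step 1 (Hessian).} I will show that $\sup_{\theta\in\mathcal N(\theta_0)}\|\widehat H_n(\theta)-H(\theta)\|\xrightarrow{p}0$. This follows from a uniform weak law of large numbers for $V$-statistics. The kernel $k$ is bounded, and the envelopes $\sup\|g\|^2$, $\sup\|\nabla^2\rho_l\|^2$ and $\sup\|\rho\|^2$ are integrable; by Cauchy--Schwarz this makes the products integrable. The diagonal terms contribute $O_p(n^{-1})$ by the ordinary law of large numbers. Continuity of $H(\theta)$ and $\bar\theta\xrightarrow{p}\theta_0$ then yield $\widehat H_n(\bar\theta)\xrightarrow{p}H(\theta_0)$. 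At $\theta_0$ the second-derivative part of $H$ vanishes: its population version has the form $\mathbb E[\nabla^2\rho_l(Z,\theta_0)\,k(X,\tilde X)\,\rho_l(\tilde Z,\theta_0)]$, and independence of the two copies together with $\mathbb E[\rho(\tilde Z,\theta_0)\mid\tilde X]=0$ kills it. Hence $H(\theta_0)=2\Delta(\theta_0)$, which is nonsingular by Assumption \ref{ass:an}(iii).

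\textbf{Step 2 (score), the main obstacle.} Write $\nabla_\theta\widehat Q_n(\theta_0)=2n^{-2}\sum_{i,j}\phi(W_i,W_j)$ with $\phi(W_i,W_j)=g(Z_i,\theta_0)'K(X_i,X_j)\rho(Z_j,\theta_0)$, and symmetrize the kernel $\phi$. First, the mean is zero, $\mathbb E\phi(W_1,W_2)=0$, by conditioning on $X_2$. Second, in the Hájek projection the contribution $\mathbb E[\phi(W_1,w)]$, with the $\rho$-argument fixed at $w$, equals $G(x,\theta_0)'\rho(z,\theta_0)$. The contribution $\mathbb E[\phi(w,W_2)]$, with the $g$-argument fixed, equals $g(z,\theta_0)'\,\mathbb E[K(x,X_2)\rho(Z_2,\theta_0)]=0$ by the conditional moment restriction. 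Hence the first-order projection is $n^{-1}\sum_i G(X_i,\theta_0)'\rho(Z_i,\theta_0)$.

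The degenerate remainder of the off-diagonal $U$-part has second moment $O(n^{-2})$, which requires $\mathbb E\|\phi(W_1,W_2)\|^2<\infty$. This holds because $k$ is bounded, $W_1$ and $W_2$ are independent, and $\mathbb E\|g\|^2$ and $\mathbb E\|\rho\|^2$ are finite. The diagonal part is $n^{-2}\sum_i g(Z_i,\theta_0)'k(X_i,X_i)\rho(Z_i,\theta_0)=O_p(n^{-1})$, since the fourth moments make $g'\rho$ square integrable. Therefore
$$
\sqrt n\,\nabla_\theta\widehat Q_n(\theta_0)=\frac{2}{\sqrt n}\sum_i G(X_i,\theta_0)'\rho(Z_i,\theta_0)+o_p(1)=O_p(1),
$$
by the central limit theorem, since the summands have mean zero and finite variance.

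\textbf{Step 3 (conclusion).} With probability approaching one $\widehat H_n(\bar\theta)$ is invertible, so
$$
\sqrt n(\widehat\theta_n-\theta_0)=-\widehat H_n(\bar\theta)^{-1}\sqrt n\,\nabla_\theta\widehat Q_n(\theta_0).
$$
Substituting $\widehat H_n(\bar\theta)^{-1}=(2\Delta(\theta_0))^{-1}+o_p(1)$ and the $O_p(1)$ score representation from Step 2, the factors of $2$ cancel and the stated linear representation follows. The delicate points are bookkeeping ones: handling the mean-value point coordinate by coordinate, and checking that the degenerate-remainder variance uses only the stated moment conditions.
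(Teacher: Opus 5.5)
Your proposal is correct and follows essentially the same route as the paper. Both arguments rest on three ingredients:
\begin{itemize}
\item a mean-value expansion of the first-order condition;
\item a uniform law of large numbers for the $V$-statistic Hessian, whose second-derivative term vanishes at $\theta_0$ by the conditional moment restriction, leaving $2\Delta(\theta_0)$;
\item a H\'ajek projection of the $V$-statistic score, whose first-order term is $n^{-1}\sum_i G(X_i,\theta_0)'\rho(Z_i,\theta_0)$.
\end{itemize}
Your extra bookkeeping fills in details that the paper delegates to Serfling's $V$-statistic projection result. This includes separating the diagonal terms, bounding the degenerate remainder, and treating the mean-value point row by row.
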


Applying the Central Limit Theorem to Lemma \ref{lemma:theta_lr} yields:
\begin{theorem}
\label{thm:theta_an}
    Under Assumptions \ref{ass:cons} and \ref{ass:an}, 
    $$
    \sqrt{n}\left(\widehat{\theta}_n-\theta_0\right) \xrightarrow{d} \mathcal{N}\left(0, \Delta \left(\theta_0\right)^{-1}\Omega\left(\theta_0\right) \Delta\left(\theta_0\right)^{-1}\right),
    $$
    where $\Omega\left(\theta_0\right)=\mathbb{E}\left[G\left(X, \theta_0\right)^\prime \rho\left(Z, \theta_0\right) \rho\left(Z, \theta_0\right)^{\prime} G\left(X, \theta_0\right)\right]_{p\times p}$.
\end{theorem}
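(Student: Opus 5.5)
The plan is to combine the asymptotic linear representation of Lemma \ref{lemma:theta_lr} with a multivariate central limit theorem and Slutsky's lemma. Write
$$
\sqrt{n}\left(\widehat{\theta}_n-\theta_0\right)=-\Delta(\theta_0)^{-1} S_n+o_p(1),
\qquad
S_n\coloneqq \frac{1}{\sqrt n}\sum_{i=1}^n \xi_i,
\qquad
\xi_i\coloneqq G(X_i,\theta_0)^\prime\rho(Z_i,\theta_0).
$$
By Assumption \ref{ass:cons}(i), the $\xi_i$ are i.i.d. $p$-vectors, since each is a fixed measurable function of $W_i$. Here $G(x,\theta_0)=\mathbb E[k(x,X)g(Z,\theta_0)]$ is a deterministic function of $x$ obtained by integrating out an independent copy.

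First I show that $\xi_i$ has mean zero. Because $G(X_i,\theta_0)$ is $\sigma(X_i)$-measurable, the law of iterated expectations and the restriction \eqref{H_0} at $\theta_0$ (Assumption \ref{ass:cons}(iii)) give
$$
\mathbb E[\xi_i]=\mathbb E\left[G(X_i,\theta_0)^\prime\,\mathbb E[\rho(Z_i,\theta_0)\mid X_i]\right]=0 .
$$
For this step to be legitimate, $\xi_i$ must be integrable, which follows from the bound in the next step.

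Second I verify finite second moments, so that $\Omega(\theta_0)=\mathbb E[\xi\xi^\prime]$ is finite. Since $k$ is bounded by some $\bar k$ (Assumption \ref{ass:cons}(iv)), Jensen's inequality gives the uniform bound
$$
\|G(x,\theta_0)\|\le \bar k\,\mathbb E\|g(Z,\theta_0)\|<\infty ,
$$
which is finite by Assumption \ref{ass:an}(ii). Hence $\mathbb E\|\xi\|^2\le \bar k^2(\mathbb E\|g\|)^2\,\mathbb E\|\rho(Z,\theta_0)\|^2<\infty$ by Assumption \ref{ass:cons}(ii). The Lindeberg--Lévy CLT, applied via the Cramér--Wold device, then yields $S_n\xrightarrow{d}\mathcal N(0,\Omega(\theta_0))$.

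Finally, Assumption \ref{ass:an}(iii) makes $\Delta(\theta_0)^{-1}$ a well-defined constant matrix. It is symmetric because $k$ is symmetric and the two copies are exchangeable: $\Delta^\prime=\mathbb E[g(\tilde Z)^\prime k(\tilde X,X)g(Z)]=\Delta$. The continuous mapping theorem and Slutsky's lemma, which absorb the $o_p(1)$ remainder, then give the limit $\mathcal N(0,\Delta^{-1}\Omega\Delta^{-1\prime})=\mathcal N(0,\Delta^{-1}\Omega\Delta^{-1})$.

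There is no real obstacle once Lemma \ref{lemma:theta_lr} is granted. The only points needing care are the uniform boundedness of $G$, which supplies the moment condition, and the symmetry of $\Delta$, which justifies writing the sandwich without a transpose.
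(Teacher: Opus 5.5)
Your proposal is correct and follows the paper's proof: it plugs the linear representation of Lemma \ref{lemma:theta_lr} into a multivariate CLT for the i.i.d., mean-zero summands $G(X_i,\theta_0)'\rho(Z_i,\theta_0)$ and finishes with Slutsky. Your second-moment bound via $\sup_x\|G(x,\theta_0)\|\le \bar k\,\mathbb E\|g(Z,\theta_0)\|$ is slightly cleaner than the paper's, which goes through $\mathbb E\|g(Z,\theta_0)'\rho(Z,\theta_0)\|^2$ and the fourth-moment conditions; yours needs only $\mathbb E\|\rho(Z,\theta_0)\|^2<\infty$ and $\mathbb E\|g(Z,\theta_0)\|<\infty$.
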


For statistical inference, a consistent estimator of the asymptotic covariance matrix $\Sigma \coloneqq \Delta(\theta_0)^{-1}\Omega(\theta_0) \Delta(\theta_0)^{-1}$ is obtained via the plug-in sample analog $\widehat{\Sigma}_n = \widehat{\Delta}_n^{-1} \widehat{\Omega}_n \widehat{\Delta}_n^{-1}$, with components
\begin{equation}
    \label{eq:hat}
    \widehat{\Delta}_n = \frac{1}{n^2} \sum_{i=1}^n\sum_{j=1}^n \widehat{g}_i' K(X_i, X_j) \widehat{g}_j,\quad \widehat{\Omega}_n = \frac{1}{n} \sum_{i=1}^n \widehat{G}_n(X_i)^\prime\widehat{\rho}_i \widehat{\rho}_i^\prime \widehat{G}_n(X_i),
\end{equation}
where $\widehat{\rho}_i \coloneqq \rho(Z_i, \widehat{\theta}_n)$, $\widehat{g}_i \coloneqq g(Z_i, \widehat{\theta}_n)$, and $\widehat{G}_n(X_i) \coloneqq n^{-1} \sum_{j=1}^n K(X_i, X_j)\widehat{g}_j$. 

\begin{proposition} 
\label{prop:variance_consistency}
Under Assumptions \ref{ass:cons} and \ref{ass:an}, $\widehat{\Sigma}_n \xrightarrow{p} \Sigma$ as $n \to \infty$.
\end{proposition}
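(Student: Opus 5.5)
By continuity of inversion and of the map $(\Delta,\Omega)\mapsto\Delta^{-1}\Omega\Delta^{-1}$ at the nonsingular $\Delta(\theta_0)$ (Assumption \ref{ass:an}(iii)), it suffices to show $\widehat\Delta_n\xrightarrow{p}\Delta(\theta_0)$ and $\widehat\Omega_n\xrightarrow{p}\Omega(\theta_0)$. Both are handled by the same two-step scheme. First, use Lemma \ref{lemma:theta_consist} (or the $\sqrt n$-rate from Lemma \ref{lemma:theta_lr}) to localize: with probability approaching one, $\widehat\theta_n\in\mathcal N(\theta_0)$. Second, combine uniform convergence over $\mathcal N(\theta_0)$ of the relevant sample functional with continuity of the population functional at $\theta_0$.

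\textbf{Hessian part.} Write $\widehat\Delta_n(\theta)=n^{-2}\sum_{i,j}g(Z_i,\theta)'K(X_i,X_j)g(Z_j,\theta)$. Since $|k|\le \bar k$, the kernel $h_\theta(w_i,w_j)=g_i'K_{ij}g_j$ is bounded in norm by $\bar k\sup_{\mathcal N}\|g(Z_i,\cdot)\|\sup_{\mathcal N}\|g(Z_j,\cdot)\|$, whose expectation is finite for $i\ne j$ by independence and Assumption \ref{ass:an}(ii). It is also finite for $i=j$ because $\mathbb E\sup\|g\|^2<\infty$. Hence the diagonal terms contribute $O_p(n^{-1})$. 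The off-diagonal $U$-statistic then converges uniformly by the UWLLN for $U$-statistics, whose hypotheses are continuity in $\theta$ (given by the $C^2$ property), compactness of a closed subneighborhood, and the integrable envelope. Alternatively, I would apply a mean-value expansion with the Hessian envelope $\mathbb E\sup\|\nabla^2\rho_l\|^2<\infty$ to get $\|\widehat\Delta_n(\widehat\theta_n)-\widehat\Delta_n(\theta_0)\|=O_p(\|\widehat\theta_n-\theta_0\|)$, and then apply the ordinary $V$-statistic LLN at $\theta_0$. I plan to use this second route because it is more concrete.

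\textbf{Middle matrix.} Decompose $\widehat\Omega_n-\Omega(\theta_0)$ into three pieces.
\begin{enumerate}[label=(\alph*)]
\item Replace $\widehat G_n(X_i)$ by $G_n^0(X_i):=n^{-1}\sum_j K(X_i,X_j)g(Z_j,\theta_0)$. Then $\max_i\|\widehat G_n(X_i)-G_n^0(X_i)\|\le \bar k\, n^{-1}\sum_j\sup_{\theta\in\mathcal N}\|\nabla_\theta g_j\|\,\|\widehat\theta_n-\theta_0\|=O_p(n^{-1/2})$.
\item Replace $G_n^0(X_i)$ by $G(X_i,\theta_0)$, using $\max_i\|G_n^0(X_i)-G(X_i,\theta_0)\|\xrightarrow{p}0$. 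For this I would note that $G_n^0(X_i)-G(X_i,\theta_0)=n^{-1}\sum_{j}\xi_{ij}$, where the $\xi_{ij}$ ($j\neq i$) are conditionally mean zero given $X_i$. Instead of a sup bound, I would bound the average $n^{-1}\sum_i\|\rho_i\|^2\|G^0_n(X_i)-G(X_i)\|^2$ in expectation: conditioning on $W_i$ gives a factor $\mathbb E[\|\rho_i\|^2]\cdot O(n^{-1})\mathbb E\|g\|^2$ plus diagonal $O(n^{-2})\mathbb E\|\rho\|^2\|g\|^2$, which is finite by the fourth moments.
\item Replace $\widehat\rho_i$ by $\rho_i:=\rho(Z_i,\theta_0)$ via $\|\widehat\rho_i-\rho_i\|\le\sup_{\mathcal N}\|g(Z_i,\cdot)\|\,\|\widehat\theta_n-\theta_0\|$.
\end{enumerate}
Each replacement error is controlled by Cauchy--Schwarz on averages such as $n^{-1}\sum_i\|\widehat G_n(X_i)\|^2\|\widehat\rho_i\|\,\|\widehat\rho_i-\rho_i\|$, using boundedness $\|\widehat G_n(X_i)\|\le\bar k\,n^{-1}\sum_j\sup\|g_j\|=O_p(1)$ uniformly in $i$ and the moment bounds. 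Finally, $n^{-1}\sum_iG(X_i)'\rho_i\rho_i'G(X_i)\xrightarrow{p}\Omega(\theta_0)$ by the WLLN, since $\|G(X,\theta_0)\|\le\bar k\,\mathbb E\|g\|$ is bounded and $\mathbb E\|\rho\|^2<\infty$.

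\textbf{Main obstacle.} The step I expect to be hardest is (b), controlling the estimated inner projection $\widehat G_n$ inside a quadratic form weighted by $\rho_i\rho_i'$, whose entries need not be bounded. I plan to avoid uniform-in-$i$ sup bounds and work with the averaged second-moment computation described in (b), where the fourth-moment conditions in Assumption \ref{ass:an}(ii) are used exactly to handle the diagonal $j=i$ term.
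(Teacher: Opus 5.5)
Your proposal is correct, but it handles the key step by a different route than the paper.

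\textbf{How the paper proceeds.} It obtains $\widehat\Delta_n^{-1}\xrightarrow{p}\Delta(\theta_0)^{-1}$ from Lemma \ref{lemma:unif_delta}. That lemma combines a uniform law for $\widehat\Delta_n(\theta)$ over a compact neighborhood (Lemma 8.5 of Newey--McFadden) with a uniform bound on the inverses. The paper controls $\widehat G_n(X_i)-G(X_i,\theta_0)$ through Lemma \ref{lemma:unif_G}, which gives $\max_i\sup_{\theta}\|\widehat G_n(X_i,\theta)-G(X_i,\theta)\|=o_p(1)$. That result is proved by truncation at $M_n=\log n$, a $\delta_n$-net, Bernstein's inequality and a union bound. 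Combined with Lipschitz continuity of $G(x,\cdot)$, this yields $\max_i\|\Delta G_i\|=o_p(1)$. The $\Delta U_i$ decomposition and Cauchy--Schwarz then finish, much as in your last step.

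\textbf{How you proceed instead.} First, you move $\widehat\theta_n$ to $\theta_0$ inside the sample objects. The bounded kernel makes the Lipschitz bound factor as $\bar k\,\bigl(n^{-1}\sum_j L(Z_j)\bigr)\|\widehat\theta_n-\theta_0\|$ uniformly in $i$, and similarly for $\widehat\Delta_n$. So only fixed-$\theta_0$ laws of large numbers are needed, and no uniform law over $\theta$. Second, you replace the uniform-in-$i$ statement by control of the weighted average $n^{-1}\sum_i\|\rho_i\|^2\|G_n^0(X_i)-G(X_i,\theta_0)\|^2$. Its expectation is $O(n^{-1})$: the off-diagonal summands are conditionally independent and mean zero given $W_i$, and the $j=i$ term is handled by $\mathbb{E}[\|\rho\|^2\|g\|^2]<\infty$.

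\textbf{Tidying point.} Since $\widehat\Omega_n$ is itself an average, this averaged control is all that is needed. The leftover cross term involving $\widehat\rho_i-\rho_i$ is absorbed by the crude bound $\max_i\|G_n^0(X_i)-G(X_i,\theta_0)\|=O_p(1)$, which follows from your uniform bound on $\|\widehat G_n(X_i)\|$. You should therefore drop your passing claim that $\max_i\|G^0_n(X_i)-G(X_i,\theta_0)\|\xrightarrow{p}0$. If you relied on it, you would need something like the paper's Bernstein argument to prove it.

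\textbf{Comparison.} Your route is self-contained and more elementary, and it even delivers an $O_p(n^{-1/2})$ rate for the averaged error. The paper's route is heavier for this proposition alone, but it produces the $\max_{i}$ uniformity it needs anyway for $\widehat K_n^p$ in the bootstrap (Lemma \ref{lemma:unif_Kp}). The proposition simply reuses that machinery.
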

Proposition \ref{prop:variance_consistency} provides a basis for standard asymptotic inference on smooth functionals of $\theta_0$. Pointwise confidence intervals for individual components follow immediately from the asymptotic normality of $\widehat{\theta}_n$ and the consistency of $\widehat{\Sigma}_n$. For a full-row-rank matrix $R\in\mathbb{R}^{q\times p}$ and vector $r\in\mathbb{R}^q$, the linear restriction $\mathbb{H}_0:R\theta_0=r$ may be tested using the feasible Wald statistic $W_n^{L}=(R\widehat{\theta}_n-r)^\prime (R\widehat{\Sigma}_nR^\prime/n)^{-1}(R\widehat{\theta}_n-r)$. Under $\mathbb{H}_0$, we have $W_n^{L}\xrightarrow{d}\chi_q^2$. More generally, let $h:\mathbb{R}^p\to\mathbb{R}^q$ be continuously differentiable, with Jacobian $\nabla h(\theta)$ of full row rank at $\theta_0$. For the nonlinear restriction $\mathbb{H}_0:h(\theta_0)=0$, the feasible Wald statistic is $W_n^{NL}=h(\widehat{\theta}_n)^\prime (\nabla h(\widehat{\theta}_n)\widehat{\Sigma}_n\nabla h(\widehat{\theta}_n)^\prime/n)^{-1} h(\widehat{\theta}_n)$, and $W_n^{NL}\xrightarrow{d}\chi_q^2$ under $\mathbb{H}_0$. These standard Wald statistics concern restrictions on \(\theta_0\), which should be distinguished from the omnibus specification test of the conditional moment restrictions developed in Section~\ref{sec:test}; see, e.g., \citet{lavergne2013smooth} for SMD-based tests of parameter restrictions.

Finally, although the baseline KMD estimator is generally not semiparametrically efficient, it admits a two-step efficient refinement in the spirit of \citet{dominguez2004consistent}. In particular, starting from the $\sqrt{n}$-consistent estimator $\widehat{\theta}_n$, one may take a single Newton--Raphson step toward the feasible efficient GMM objective constructed from nonparametric estimators of the relevant conditional moments; see \citet{newey1990efficient} and \citet{robinson1991best}. However, there is no free lunch in achieving semiparametric efficiency: this refinement requires additional nonparametric estimation, together with the attendant smoothing choices, and thus sacrifices some of the simplicity and robustness of the KMD procedure. We therefore defer the formal discussion to Section \ref{app:efficiency} of the Online Supplementary Material.
\section{Specification Testing}
\label{sec:test}
\subsection{Test Statistic}
Following the identification and estimation of the structural parameters discussed in the preceding section, we now turn to the specification test of the conditional moment restrictions. Specifically, we evaluate the validity of the model by testing the null hypothesis of correct specification against the omnibus alternative:
\begin{equation}
    \label{H_1}
    \mathbb{H}_0: \eqref{H_0} \text{ holds} \quad \text{v.s.}\quad\mathbb{H}_1: \mathbb{P}\left(\E\left[\rho(Z, \theta) \mid X\right]=0\right)<1 \text{ for all }\theta \in \Theta.
\end{equation}
To construct the test statistic, we utilize the minimized value of the KMD objective function. The test statistic $\widehat{T}_n$ is defined as the scaled $V$-statistic evaluated at $\widehat{\theta}_n$:
$$
\widehat{T}_n=n\widehat{Q}_n(\widehat{\theta}_n)=\frac{1}{n}\sum_{i=1}^n\sum_{j=1}^n\rho(Z_i,\widehat\theta_n)^{\prime} K(X_i,X_j)\rho(Z_j,\widehat\theta_n).
$$
In what follows, we derive the asymptotic distribution of $\widehat{T}_n$ under the null hypothesis $\mathbb{H}_0$, establish its consistency under the fixed alternative $\mathbb{H}_1$, and analyze its local power properties under the sequence of local alternatives $\mathbb{H}_{1n}$.

\subsection{Behavior Under the Null $\mathbb{H}_0$}
We first establish an asymptotic representation of $\widehat{T}_n$ under the null hypothesis, showing that the estimation effect is asymptotically equivalent to replacing the reproducing kernel with its projection.
\begin{lemma}
\label{lemma:repre_H0}
Suppose Assumptions \ref{ass:cons} and \ref{ass:an} hold. Under $\mathbb{H}_0$, the test statistic $\widehat{T}_n$ has the following asymptotic representation:
$$
\widehat{T}_n = \frac{1}{n}\sum_{i=1}^n\sum_{j=1}^n\rho\left(Z_i,\theta_0\right)^{\prime} K^p(X_i,X_j)\rho\left(Z_j,\theta_0\right)+o_p(1),
$$
where $K^p(X_i,X_j)=K(X_i,X_j)-G(X_i,\theta_0)\Delta(\theta_0)^{-1} G(X_j,\theta_0)^{\prime}$.   
\end{lemma}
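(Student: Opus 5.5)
Our approach is to expand $\widehat T_n = n\|\widehat\mu_n(\widehat\theta_n)\|_{\mathcal H_K}^2$ in $\widehat\theta_n$ around $\theta_0$ inside the RKHS, and then substitute the asymptotic linear representation of Lemma~\ref{lemma:theta_lr}. Write $\rho_i=\rho(Z_i,\theta_0)$, $g_i=g(Z_i,\theta_0)$, and $S_n\coloneqq n^{-1/2}\sum_i K(\cdot,X_i)\rho_i\in\mathcal H_K$.

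\textbf{Step 1: expansion of the embedding.} A mean-value expansion of $\rho(Z_i,\widehat\theta_n)$ gives
\[
\sqrt n\,\widehat\mu_n(\widehat\theta_n)=S_n+\Big(\frac1n\sum_{i}K(\cdot,X_i)g_i\Big)\sqrt n(\widehat\theta_n-\theta_0)+R_n .
\]
The remainder $R_n$ is controlled in $\mathcal H_K$-norm using three ingredients:
\begin{itemize}
\item the bound $\|K(\cdot,x)v\|_{\mathcal H_K}\le \sup k^{1/2}\|v\|$, from boundedness of $k$;
\item the Hessian envelope in Assumption~\ref{ass:an}(ii) together with continuity of $g$;
\item $\sqrt n$-consistency, from Theorem~\ref{thm:theta_an}.
\end{itemize}
Together these give $\|R_n\|_{\mathcal H_K}=o_p(1)$. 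Next, define $\Gamma\coloneqq \mathbb E[K(\cdot,X)g(Z,\theta_0)]$, a $p$-tuple of elements of $\mathcal H_K$. A Hilbert-space LLN, using $\mathbb E\|g\|^2<\infty$, gives $\|n^{-1}\sum_i K(\cdot,X_i)g_i-\Gamma\|=o_p(1)$. Hence
\[
\sqrt n\,\widehat\mu_n(\widehat\theta_n)=S_n+\Gamma b_n+o_p(1), \qquad b_n=\sqrt n(\widehat\theta_n-\theta_0).
\]

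\textbf{Step 2: identify the projection.} By the reproducing property,
\[
\langle \Gamma_a,\Gamma_b\rangle=\Delta(\theta_0)_{ab}, \qquad \langle \Gamma, K(\cdot,X_i)\rho_i\rangle = G(X_i,\theta_0)'\rho_i .
\]
The second identity uses the symmetry of $k$, so that $\mathbb E[g(\tilde Z)'K(\tilde X,X_i)]=G(X_i,\theta_0)'$. Lemma~\ref{lemma:theta_lr} then reads
\[
b_n=-\Delta^{-1}\langle\Gamma,S_n\rangle+o_p(1).
\]
It follows that $S_n+\Gamma b_n=(I-P_\Gamma)S_n+o_p(1)$, where $P_\Gamma h=\Gamma\Delta^{-1}\langle\Gamma,h\rangle$ is the orthogonal projection onto $\mathrm{span}(\Gamma)$. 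Since $\|S_n\|=O_p(1)$ (its squared norm has expectation $\mathbb E[k(X,X)\|\rho\|^2]<\infty$ under $\mathbb H_0$, because cross terms vanish by the conditional moment restriction), the $o_p(1)$ terms stay negligible after squaring. Therefore
\[
\widehat T_n=\|(I-P_\Gamma)S_n\|^2+o_p(1)=\|S_n\|^2-\langle\Gamma,S_n\rangle'\Delta^{-1}\langle\Gamma,S_n\rangle+o_p(1).
\]

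\textbf{Step 3: expand into the kernel form.} Expanding the last display gives
\[
\frac1n\sum_{i,j}\rho_i'\big[K(X_i,X_j)-G(X_i)\Delta^{-1}G(X_j)'\big]\rho_j ,
\]
which is exactly the $K^p$ form in the statement.

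\textbf{Main obstacle.} We expect the main difficulty to be Step~1's uniform control of the remainder and of the random Jacobian embedding. The key is to work with norms in $\mathcal H_K$ rather than with double sums, which avoids any $V$-statistic diagonal issues: each piece is an average of $\mathcal H_K$-valued i.i.d.\ terms bounded by $\sup k^{1/2}$ times an integrable envelope, so that Markov's inequality and the Hilbert-space LLN/CLT suffice.
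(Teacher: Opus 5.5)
Your proof is correct, but it takes a different route from the paper's. The paper works with the scalar criterion. It takes a second-order mean-value expansion of $n\widehat Q_n(\widehat\theta_n)$ around $\theta_0$ and substitutes three facts from the proof of Lemma~\ref{lemma:theta_lr}:
\begin{itemize}
\item the projected score $\sqrt n\nabla_\theta\widehat Q_n(\theta_0)=2S_n(\theta_0)+o_p(1)$, where $S_n(\theta_0)=n^{-1/2}\sum_i G(X_i,\theta_0)'\rho_i$ is your $\langle\Gamma,S_n\rangle$;
\item the Hessian limit $\widehat H_n(\bar\theta_n)\xrightarrow{p}2\Delta(\theta_0)$, which uses that the $\rho_l\nabla^2_{\theta\theta'}\rho_l$ term has mean zero under $\mathbb H_0$;
\item the linear representation of $\sqrt n(\widehat\theta_n-\theta_0)$.
\end{itemize}
The cross and quadratic terms then combine as $-2S_n'\Delta^{-1}S_n+S_n'\Delta^{-1}S_n=-S_n'\Delta^{-1}S_n$.

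You instead linearize the $\mathcal H_K$-valued embedding $\sqrt n\,\widehat\mu_n(\widehat\theta_n)$ and apply $I-P_\Gamma$ before squaring. This is the Hilbert-space analogue of the classical derivation of Hansen's $J$ statistic, where one writes the sample moment at the estimator as a projection of the sample moment at the truth.

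The paper's route is shorter, because it reuses the score and Hessian limits already built for Lemma~\ref{lemma:theta_lr}. Beyond that lemma itself, your route needs only three things:
\begin{itemize}
\item a first-order expansion of $\rho$, with second derivatives entering only through the $O_p(n^{-1/2})$ remainder bound;
\item the bound $\|S_n\|_{\mathcal H_K}=O_p(1)$, from a one-line second-moment computation;
\item an elementary $L^2$ law of large numbers for $n^{-1}\sum_i K(\cdot,X_i)g_i$, since its mean-square distance to $\Gamma$ is at most $n^{-1}\sup_x k(x,x)\,\mathbb E\|g(Z,\theta_0)\|^2$.
\end{itemize}
In return, your route yields the projection geometry of Remark~\ref{rem:kp_geometry}, and hence nonnegativity of the leading term, as a byproduct rather than a separate verification.

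Two small technical points. For vector-valued $\rho$, use the integral form of the mean-value theorem (or row-wise intermediate points). Also, carry out the expansion on the event $\widehat\theta_n\in\mathcal N(\theta_0)$, which has probability tending to one.
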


\begin{remark}
\label{rem:kp_geometry}
The projected kernel $K^p$ induces a projection geometry within the vector-valued RKHS $\mathcal{H}_K$ onto the orthogonal complement of the score subspace $\mathcal{S} = \operatorname{span}\{\psi_l\}_{l=1}^p$, where $\psi_l(\cdot) = \mathbb{E}[K(\cdot, X) g_l(Z, \theta_0)]$. As detailed in the Online Supplementary Material, for any function $h(\cdot) = \sum_{i=1}^n K(\cdot, x_i)c_i$ with arbitrary coefficients $\{c_i\}_{i=1}^n \subset \mathbb{R}^q$, the quadratic form of $K^p$ satisfies the isometry:
\begin{equation}
    \sum_{i=1}^n \sum_{j=1}^n c_i^{\prime} K^p(x_i, x_j) c_j = \left\| (I - P_{\mathcal{S}}) h \right\|_K^2 \ge 0,
\end{equation}
where $P_{\mathcal{S}}$ denotes the orthogonal projection operator onto $\mathcal{S}$. This identity establishes that $K^p$ is positive semidefinite.

In particular, the asymptotic representation of $\widehat{T}_n$ in Lemma \ref{lemma:repre_H0} corresponds to the specific case where the coefficients are the normalized residuals, $c_i = n^{-1/2}\rho(Z_i, \theta_0)$. By capturing the component of the moment conditions orthogonal to the score functions $g_l$, this asymptotic representation effectively eliminates the first-order estimation effect.
\end{remark}

Building on this representation, the following theorem derives the limiting distribution of $\widehat{T}_n$ under the null, which exhibits the standard spectral form for degenerate $V$-statistics.
\begin{theorem}
\label{thm:H0}
Suppose Assumptions \ref{ass:cons} and \ref{ass:an} hold. Then, under $\mathbb{H}_0$:
    $$
 \widehat{T}_n \xrightarrow{d} \sum_{k=1}^{\infty} \lambda_kW_k^2,
$$
where $\left\{W_k\right\}_{k=1}^\infty$ are i.i.d. $\mathcal{N}(0,1)$ and $\left\{\lambda_k\right\}_{k=1}^{\infty}$ are eigenvalues of the operator $\mathcal{A}$ defined on the function space $L_2\left(\mathcal{Z}\times\mathcal{X},\mathbb{P}_{XZ}\right)$ as
\begin{equation}
    \label{eq_A}
    \mathcal{A} \psi(s)=\int_{\mathcal{Z}\times\mathcal{X}} f_p\left(s, \tilde s, \theta_0\right) \psi\left(\tilde s\right) d \mathbb{P}_{XZ}\left(\tilde s\right),\quad s\in\mathcal{Z}\times\mathcal{X},\psi\in L_2\left(\mathcal{Z}\times\mathcal{X},\mathbb{P}_{XZ}\right),
\end{equation}
where $f_p\left(s, s^{\prime}, \theta_0\right)=\rho\left(z, \theta_0\right)^{\prime}K^p\left(x, x^{\prime}\right)\rho\left(z^{\prime}, \theta_0\right)$.
\end{theorem}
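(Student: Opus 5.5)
The argument starts from the asymptotic representation in Lemma \ref{lemma:repre_H0}. It reduces $\widehat{T}_n$, up to $o_p(1)$, to the $V$-statistic
$$
T_n^p \coloneqq \frac{1}{n}\sum_{i=1}^n\sum_{j=1}^n f_p(S_i,S_j,\theta_0), \qquad S_i=(Z_i,X_i),
$$
so by Slutsky's lemma it suffices to derive the limit of $T_n^p$. I would split $T_n^p$ into a diagonal part $n^{-1}\sum_i f_p(S_i,S_i,\theta_0)$ and an off-diagonal part, which is $(n-1)$ times a $U$-statistic with kernel $f_p$. I would then apply the classical theory of degenerate $U$-statistics (Serfling, Ch.~5.5; Gregory 1977).

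The first step is to verify the hypotheses of that theory. \emph{Symmetry} holds because $K^p(x,x')'=K^p(x',x)$, since $K$ is symmetric in its arguments and $\Delta(\theta_0)$ is symmetric. \emph{Square integrability} of $f_p$ follows from boundedness of $k$, from $\|G(x,\theta_0)\|\le \sup|k|\,\mathbb E\|g(Z,\theta_0)\|$, and from $\mathbb E\|\rho(Z,\theta_0)\|^2<\infty$ (Assumption \ref{ass:an}(ii)). This also gives $\mathbb E|f_p(S,S,\theta_0)|\le C\,\mathbb E\|\rho(Z,\theta_0)\|^2<\infty$. \emph{Degeneracy} follows from the null: for fixed $s$,
$$
\mathbb E[f_p(s,\tilde S,\theta_0)] = \rho(z,\theta_0)'\,\mathbb E\!\left[K^p(x,\tilde X)\,\mathbb E[\rho(\tilde Z,\theta_0)\mid \tilde X]\right]=0 .
$$
Here I use that $K^p(x,\tilde X)$ depends on $\tilde Z$ only through $\tilde X$, because $G(\tilde X,\theta_0)$ is a function of $\tilde X$ alone.

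The second step is the spectral representation. The operator $\mathcal A$ in \eqref{eq_A} is Hilbert--Schmidt, since its kernel is in $L_2(\mathbb P_{XZ}\otimes\mathbb P_{XZ})$. It is self-adjoint, so $f_p(s,\tilde s)=\sum_k\lambda_k\phi_k(s)\phi_k(\tilde s)$ in $L_2$, with orthonormal eigenfunctions $\phi_k$. Degeneracy implies $\mathbb E\phi_k(S)=0$ whenever $\lambda_k\neq0$. Remark \ref{rem:kp_geometry} shows $K^p$ is positive semidefinite, so $f_p$ is a positive semidefinite kernel and $\lambda_k\ge 0$.

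The third step combines the two parts. The degenerate $U$-statistic theorem gives
$$
\frac1n\sum_{i\ne j}f_p(S_i,S_j,\theta_0)\xrightarrow{d}\sum_k\lambda_k(W_k^2-1).
$$
By the strong law, the diagonal term converges to $\mathbb E f_p(S,S,\theta_0)$. The \textbf{main obstacle} is identifying this limit as $\sum_k\lambda_k$, i.e.\ showing $\mathcal A$ is trace class with trace equal to the diagonal expectation. Mere $L_2$ expansion of $f_p$ does not determine its values on the diagonal. I would handle this with a Mercer-type argument exploiting the RKHS structure. Write $f_p(s,\tilde s)=\langle \Phi(s),\Phi(\tilde s)\rangle_{\mathcal H_K}$ with feature map $\Phi(s)=(I-P_{\mathcal S})K(\cdot,x)\rho(z,\theta_0)$, which is justified by the isometry in Remark \ref{rem:kp_geometry}. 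Then $\mathcal A$ is unitarily equivalent (up to zero eigenvalues) to the covariance operator $C=\mathbb E[\Phi(S)\otimes\Phi(S)]$ on $\mathcal H_K$. This operator is positive and trace class, with $\operatorname{tr}C=\mathbb E\|\Phi(S)\|^2_{\mathcal H_K}=\mathbb E f_p(S,S,\theta_0)$. Hence $\sum_k\lambda_k=\mathbb E f_p(S,S,\theta_0)<\infty$.

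Adding the two parts then yields $T_n^p\xrightarrow{d}\sum_k\lambda_kW_k^2$, and Lemma \ref{lemma:repre_H0} transfers this limit to $\widehat{T}_n$. As an alternative route that avoids the $U$/$V$ split, I could write $T_n^p=\|n^{-1/2}\sum_i\Phi(S_i)\|^2_{\mathcal H_K}$ and apply a Hilbert-space CLT. This is valid because $\mathbb E\|\Phi(S)\|^2<\infty$ and $\mathbb E\Phi(S)=0$ under $\mathbb H_0$, with the latter following from the same degeneracy computation. The continuous mapping theorem then gives the norm-squared of a Gaussian element with covariance $C$, whose law is exactly $\sum_k\lambda_kW_k^2$. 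I would likely present this cleaner version.
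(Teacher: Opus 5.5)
Your proposal is correct, and it differs from the paper mainly in how the $V$-statistic limit is obtained.

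The paper's proof is a one-step citation. After the reduction via Lemma \ref{lemma:repre_H0}, it checks square integrability of $f_p$ and the degeneracy $\mathbb E[f_p(S_i,S_j,\theta_0)\mid S_i]=0$. It then applies the degenerate $V$-statistic limit theorem (Serfling, \S6.4.1, Theorem B) directly. That leaves to the cited theorem the fact that the diagonal mean $\mathbb E f_p(S,S,\theta_0)$ equals $\sum_k\lambda_k$, which is exactly what converts the $U$-statistic limit $\sum_k\lambda_k(W_k^2-1)$ into $\sum_k\lambda_kW_k^2$. The identity $\sum_k\lambda_k=\mathbb E[\rho(Z,\theta_0)'K^p(X,X)\rho(Z,\theta_0)]$ is only asserted later, in the proofs of Theorems \ref{thm:H_1n} and \ref{thm:mb}.

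You instead split off the diagonal and prove this identity through the factorization $f_p(s,s')=\langle\Phi(s),\Phi(s')\rangle_{\mathcal H_K}$, with $\Phi(s)=(I-P_{\mathcal S})K(\cdot,x)\rho(z,\theta_0)$. Writing $\mathcal A=T^*T$ and $C=TT^*$ for $T\psi=\mathbb E[\Phi(S)\psi(S)]$ gives trace class and $\operatorname{tr}\mathcal A=\mathbb E\|\Phi(S)\|_{\mathcal H_K}^2=\mathbb E f_p(S,S,\theta_0)$. This needs no continuity of $f_p$ in $z$ and no compactness of the support, both of which a Mercer-type argument would require.

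Your preferred Hilbert-space CLT version is the most self-contained option. It needs only three ingredients:
\begin{itemize}
\item $\mathbb E\|\rho(Z,\theta_0)\|^2<\infty$;
\item $\mathbb E\Phi(S)=(I-P_{\mathcal S})\mu(\theta_0)=0$ under $\mathbb H_0$;
\item separability of $\mathcal H_K$, which holds because $k$ is continuous on the separable space $\mathcal X$. You should state this explicitly.
\end{itemize}
It delivers $\lambda_k\ge0$ and the identification of the limit at the same time.

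The paper's route is shorter and matches the $V$-statistic machinery it reuses for the alternatives, but it rests on a black-box theorem for the diagonal term. Your factorization would also justify the spectral substitution the paper makes in the proof of Theorem \ref{thm:H_1n}.
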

\begin{remark}
\label{rem:generalized_sargan}
The spectral representation $\sum_{k=1}^{\infty} \lambda_k W_k^2$ derived in Theorem \ref{thm:H0} characterizes the test statistic $\widehat{T}_n$ as a \textit{Generalized Sargan Test} for a continuum of moment conditions. In the classical GMM framework with a fixed number of moments $m$ and parameters $p$, Hansen's $J$-statistic converges to a $\chi^2_{m-p}$ distribution, assessing the validity of the $m-p$ overidentifying restrictions. Our framework corresponds to the limiting case $m \to \infty$. Specifically, $\widehat{T}_n$ asymptotically isolates the component of the embedded moment conditions lying in the orthogonal complement of the score subspace, thereby assessing the structural validity of the continuum of overidentifying restrictions beyond the $p$ degrees of freedom consumed by parameter estimation. This interpretation parallels the unified framework of \cite{dominguez2015simple}. While they address the estimation effect by projecting the empirical process, our framework relies on the projection kernel $K^p$.
\end{remark}

The test rejects $\mathbb{H}_0$ at significance level $\alpha$ if $\widehat{T}_n > c_{1-\alpha}$, where $c_{1-\alpha}$ is the $(1-\alpha)$-quantile of the limiting distribution defined in Theorem \ref{thm:H0}. However, since the eigenvalues $\{\lambda_k\}_{k=1}^\infty$ depend on the kernel and the unknown data-generating process, this distribution is non-pivotal. In Section \ref{sec:mb}, we introduce a computationally simple multiplier bootstrap to obtain feasible critical values.

\subsection{Behavior Under the Fixed Alternative $\mathbb{H}_1$}
We now establish the test's consistency against fixed alternatives by showing that $\widehat{T}_n$ diverges under global misspecification. This requires the following modified regularity conditions, under which a pseudo-true parameter $\theta_1$ is defined as the unique minimizer of the population objective function.
\begin{assumption}
    \label{ass:theta1_con}
Suppose Assumption \ref{ass:cons} (i), (ii), and (iv) hold, and (iii) is replaced by: The parameter space $\Theta\subset \mathbb{R}^p$ is compact. $Q\left(\theta\right)$ is uniquely minimized at a pseudo true value $\theta_1\in\Theta$ such that $Q(\theta_1) > 0$.
\end{assumption}

\begin{assumption}
    \label{ass:theta1_an}
    Suppose Assumption \ref{ass:an} holds with $\theta_0$ replaced by $\theta_1$. Furthermore, the generalized Hessian matrix under misspecification,
    $$
    \widetilde{\Delta}(\theta_1) = \mathbb{E}\left\{k(X_i,X_j)\left[g(Z_i,\theta_1)^{\prime}g(Z_j,\theta_1) + \sum_{l=1}^q\rho_l(Z_j,\theta_1)\nabla_{\theta\theta^{\prime}}^2\rho_l(Z_i,\theta_1)\right]\right\},
    $$
    is non-singular.
\end{assumption}

Analogous to Lemma \ref{lemma:theta_lr}, the following result establishes the asymptotic representation of $\widehat{\theta}_n$ around the pseudo-true value $\theta_1$.
\begin{lemma}
\label{lemma:theta1_asy}
    Under $\mathbb{H}_1$, suppose Assumptions \ref{ass:theta1_con} and \ref{ass:theta1_an} hold, then 
    $$
    \sqrt{n}\left(\widehat{\theta}_n-\theta_1\right)=-\widetilde\Delta\left(\theta_1\right)^{-1} \frac{1}{\sqrt{n}} \sum_{i=1}^n \left[G\left(X_i, \theta_1\right)^\prime \rho\left(Z_i, \theta_1\right)+g(Z_i,\theta_1)'M(X_i,\theta_1)\right]+o_p(1),
    $$
where $G\left(X_i, \theta\right)$ is defined in Lemma \ref{lemma:theta_lr}, and $M\left(X_i, \theta\right)=\mathbb{E}\left[K\left(X_i, X_j\right) \rho\left(Z_j, \theta_0\right)|X_i\right]_{q \times 1}$.
\end{lemma}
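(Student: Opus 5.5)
We follow the template of Lemma \ref{lemma:theta_lr}, with two changes. Under misspecification the score no longer has mean zero at a point where the moment restriction holds; instead it is centered at the pseudo-true value $\theta_1$. In addition, the Hessian keeps the second-derivative terms, because $\rho(Z,\theta_1)$ does not have zero conditional mean. (We read $M(X_i,\theta)$ in the statement as evaluated at $\theta=\theta_1$.)

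\emph{Step 1: consistency.} We first show $\widehat\theta_n\xrightarrow{p}\theta_1$ exactly as in Lemma \ref{lemma:theta_consist}. Under Assumption \ref{ass:cons}(ii),(iv) we have $\sup_\theta|\widehat Q_n(\theta)-Q(\theta)|=o_p(1)$ by a uniform law of large numbers for $V$-statistics; the diagonal terms are $O_p(n^{-1})$ because $k$ is bounded and $\mathbb E\sup_\theta\|\rho\|^2<\infty$. Since Assumption \ref{ass:theta1_con} makes $\theta_1$ the unique minimizer of the continuous function $Q$, the standard argmin argument applies.

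\emph{Step 2: first-order condition and mean-value expansion.} Because $\theta_1$ is interior, with probability approaching one we have
$0=\nabla_\theta\widehat Q_n(\widehat\theta_n)=\nabla_\theta\widehat Q_n(\theta_1)+\widehat H_n(\bar\theta)(\widehat\theta_n-\theta_1)$, where
$$\nabla_\theta\widehat Q_n(\theta)=\frac{2}{n^2}\sum_{i,j}g(Z_i,\theta)'K(X_i,X_j)\rho(Z_j,\theta).$$
The Hessian is $\widehat H_n(\theta)=\frac{2}{n^2}\sum_{i,j}k(X_i,X_j)\big[g_i'g_j+\sum_l\rho_l(Z_j,\theta)\nabla^2\rho_l(Z_i,\theta)\big]$. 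The moment conditions of Assumption \ref{ass:theta1_an} and the boundedness of $k$ give uniform convergence of $\widehat H_n$ over $\mathcal N(\theta_1)$; here the Hessian term needs $\mathbb E\sup\|\nabla^2\rho_l\|^2$ and $\mathbb E\sup\|\rho\|^2$ together with Cauchy--Schwarz. Combining this with Step 1, $\widehat H_n(\bar\theta)\xrightarrow{p}2\widetilde\Delta(\theta_1)$, which is nonsingular.

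\emph{Step 3: Hájek projection of the score (main step).} The population first-order condition gives $\nabla_\theta Q(\theta_1)=2\mathbb E[g(Z_i,\theta_1)'K(X_i,X_j)\rho(Z_j,\theta_1)]=0$. Write the score as a $V$-statistic $n^{-2}\sum_{i,j}h(W_i,W_j)$ with the nonsymmetric kernel $h(w_i,w_j)=g(z_i,\theta_1)'K(x_i,x_j)\rho(z_j,\theta_1)$, which has mean zero. Its Hájek projection is $n^{-1}\sum_i[h_1(W_i)+h_2(W_i)]$, where
$h_1(w)=\mathbb E[h(w,W)]=g(z,\theta_1)'M(x,\theta_1)$ and $h_2(w)=\mathbb E[h(W,w)]=G(x,\theta_1)'\rho(z,\theta_1)$, using the symmetry of $k$.

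The remainder has three parts. The degenerate part is $O_p(n^{-1})$ in $L_2$, because $\mathbb E\|h(W_1,W_2)\|^2\le C\,\mathbb E\|g\|^2\,\mathbb E\|\rho\|^2<\infty$ by independence and boundedness of $k$. The diagonal term $n^{-2}\sum_i h(W_i,W_i)$ is $O_p(n^{-1})$ because $\mathbb E\|g_i'\rho_i\|\le(\mathbb E\|g\|^2\mathbb E\|\rho\|^2)^{1/2}$. Both are therefore $o_p(n^{-1/2})$, and $\sqrt n\,\nabla_\theta\widehat Q_n(\theta_1)/2=n^{-1/2}\sum_i[G_i'\rho_i+g_i'M_i]+o_p(1)$.

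The main care needed is in showing that the projection terms have finite second moments, so that the central limit theorem gives $O_p(1)$. Note that, unlike under the null, neither summand has mean zero on its own; only their sum does. The fourth-moment conditions in Assumption \ref{ass:an}(ii), imposed at $\theta_1$, supply this.

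\emph{Step 4: conclusion.} Solving the expansion gives $\sqrt n(\widehat\theta_n-\theta_1)=-[\widehat H_n(\bar\theta)/2]^{-1}\sqrt n\,\nabla_\theta\widehat Q_n(\theta_1)/2$. Substituting the limit from Step 2 and the projection from Step 3, and applying Slutsky, yields the stated representation.
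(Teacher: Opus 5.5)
Your proposal is correct and follows essentially the same route as the paper's proof. Both arguments use a mean-value expansion of the first-order condition around $\theta_1$, convergence of the Hessian (including the $\rho_l\nabla^2_{\theta\theta^{\prime}}\rho_l$ terms) to $2\widetilde\Delta(\theta_1)$, and a H\'{a}jek projection of the $V$-statistic score; your explicit use of the population first-order condition also supplies the centering that the paper leaves implicit.

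One side remark needs correcting. Each projection summand is in fact mean zero on its own, since $\mathbb{E}[g(Z,\theta_1)'M(X,\theta_1)]$ and $\mathbb{E}[G(X,\theta_1)'\rho(Z,\theta_1)]$ both equal $\mathbb{E}[g(Z_i,\theta_1)'K(X_i,X_j)\rho(Z_j,\theta_1)]=0$. Moreover, their finite second moments already follow from the boundedness of $k$ together with second moments of $g$ and $\rho$, so the fourth-moment conditions are not needed at that step.
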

In contrast to the degenerate behavior under $\mathbb{H}_0$, the objective function $\widehat{Q}_n(\widehat{\theta}_n)$ under $\mathbb{H}_1$ behaves asymptotically as a non-degenerate $V$-statistic, yielding $\sqrt{n}-$asymptotic normality centered at the positive constant $Q(\theta_1)$.
\begin{theorem}
    \label{thm:H1}
    Suppose Assumptions \ref{ass:theta1_con} and \ref{ass:theta1_an} hold. Under the fixed alternative $\mathbb{H}_1$:
    \begin{equation*}
        \sqrt{n}\left(\widehat{Q}_n(\widehat{\theta}_n) - Q(\theta_1)\right) \xrightarrow{d} \mathcal{N}\left(0, \sigma^2(\theta_1)\right),
    \end{equation*}
    where $\sigma^2(\theta_1) = 4\operatorname{Var}\left[\rho(Z,\theta_1)^{\prime}M(X,\theta_1)\right]$.
\end{theorem}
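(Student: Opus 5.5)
The proof splits $\widehat Q_n(\widehat\theta_n)-Q(\theta_1)$ into two pieces. The first is the sampling fluctuation of the $V$-statistic at the fixed point $\theta_1$. The second is the effect of replacing $\theta_1$ by $\widehat\theta_n$. The key observation is that the second piece is $o_p(n^{-1/2})$: $\theta_1$ is an interior minimizer of $Q$, so $\nabla_\theta Q(\theta_1)=0$, and the estimation effect enters only at second order. Concretely, I write
\[
\widehat Q_n(\widehat\theta_n)-Q(\theta_1)=\bigl[\widehat Q_n(\theta_1)-Q(\theta_1)\bigr]+\bigl[\widehat Q_n(\widehat\theta_n)-\widehat Q_n(\theta_1)\bigr]
\]
and treat the two brackets separately.

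\textbf{First bracket.} Let $h(s,\tilde s)=\rho(z,\theta_1)'K(x,\tilde x)\rho(\tilde z,\theta_1)$, which is symmetric. Then $\widehat Q_n(\theta_1)=n^{-2}\sum_{i,j}h(S_i,S_j)$.
\begin{itemize}
\item The diagonal part is $n^{-2}\sum_i k(X_i,X_i)\|\rho(Z_i,\theta_1)\|^2=O_p(n^{-1})$, by boundedness of $k$ and $\mathbb E\|\rho\|^2<\infty$.
\item The off-diagonal part is a $U$-statistic (up to the factor $(n-1)/n$). Its Hoeffding decomposition has linear term $\tfrac{2}{n}\sum_i[h_1(S_i)-Q(\theta_1)]$, where $h_1(s)=\mathbb E[h(s,\tilde S)]=\rho(z,\theta_1)'M(x,\theta_1)$, with $M$ evaluated at $\theta_1$.
\item The degenerate remainder has variance $O(n^{-2}\,\mathbb E h^2)$. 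Since $k$ is bounded, $\mathbb E h^2\le C(\mathbb E\|\rho(Z,\theta_1)\|^2)^2<\infty$ by independence of the two copies, so the remainder is $O_p(n^{-1})$.
\end{itemize}
Hence $\sqrt n[\widehat Q_n(\theta_1)-Q(\theta_1)]=\tfrac{2}{\sqrt n}\sum_i[\rho(Z_i,\theta_1)'M(X_i,\theta_1)-Q(\theta_1)]+o_p(1)$. The CLT then gives the limit $\mathcal N(0,4\operatorname{Var}[\rho(Z,\theta_1)'M(X,\theta_1)])$; the variance is finite because $\|M\|\le C\,\mathbb E\|\rho\|$ and $\rho$ has a second moment.

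\textbf{Second bracket.} By Lemma~\ref{lemma:theta1_asy}, $\widehat\theta_n-\theta_1=O_p(n^{-1/2})$, and $\theta_1$ is interior by Assumption~\ref{ass:theta1_an}. A second-order Taylor expansion gives
\[
\widehat Q_n(\widehat\theta_n)-\widehat Q_n(\theta_1)=\nabla_\theta\widehat Q_n(\theta_1)'(\widehat\theta_n-\theta_1)+\tfrac12(\widehat\theta_n-\theta_1)'\nabla^2_{\theta\theta'}\widehat Q_n(\bar\theta)(\widehat\theta_n-\theta_1).
\]
\begin{itemize}
\item For the Hessian, the UWLLN for $V$-statistics on $\mathcal N(\theta_1)$ (the same argument as behind Lemma~\ref{lemma:theta1_asy}, using the sup-moment bounds) gives $\nabla^2\widehat Q_n(\bar\theta)=O_p(1)$. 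The quadratic term is therefore $O_p(n^{-1})$.
\item For the gradient, $\nabla_\theta\widehat Q_n(\theta_1)=2n^{-2}\sum_{i,j}g(Z_i,\theta_1)'K(X_i,X_j)\rho(Z_j,\theta_1)$. Its population counterpart is $\nabla Q(\theta_1)=0$, the first-order condition at the interior minimizer; differentiation under the expectation is justified by the dominance conditions. The same $V$-statistic decomposition, using the fourth-moment conditions, shows that $\nabla_\theta\widehat Q_n(\theta_1)=O_p(n^{-1/2})$.
\end{itemize}
The linear term is thus $O_p(n^{-1/2})\cdot O_p(n^{-1/2})=O_p(n^{-1})$, and the whole second bracket is $o_p(n^{-1/2})$. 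Alternatively, one can compare $\widehat Q_n(\widehat\theta_n)\le\widehat Q_n(\theta_1)$ with the same expansion to reach this conclusion. Slutsky's lemma then finishes the proof.

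The main obstacle is the gradient step. It requires a uniform Hessian bound on a shrinking neighborhood, and it requires showing that the gradient $V$-statistic is centered at exactly zero. This uses both interiority of $\theta_1$ and the interchange of derivative and double expectation, which needs the domination in Assumption~\ref{ass:theta1_an}. Note also that, unlike under $\mathbb H_0$, the estimation effect vanishes at first order here, so no $\widetilde\Delta$ term appears in $\sigma^2(\theta_1)$.
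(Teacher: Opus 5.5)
Your proposal is correct and follows essentially the same route as the paper. The paper also Taylor-expands $\widehat Q_n$ about $\theta_1$ and uses $\nabla_\theta\widehat Q_n(\theta_1)=O_p(n^{-1/2})$, $\widehat\theta_n-\theta_1=O_p(n^{-1/2})$ and a bounded Hessian to make the estimation effect $o_p(n^{-1/2})$, then applies the non-degenerate $V$-statistic CLT with first-order projection $\rho(Z,\theta_1)'M(X,\theta_1)$. Your explicit argument that the gradient $V$-statistic is centered at zero, because $\nabla Q(\theta_1)=0$ at the interior minimizer, fills in a step the paper leaves implicit.
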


Since $Q(\theta_1)>0$, Theorem \ref{thm:H1} implies that $\widehat{T}_n=n\widehat{Q}_n(\widehat{\theta}_n)\to\infty$. Hence, the test is consistent against any fixed alternative.
\subsection{Behavior Under the Local Alternative $\mathbb{H}_{1n}$}
We further examine the asymptotic power under a sequence of Pitman local alternatives converging at the parametric rate $n^{-1/2}$. Let $\mathbb{P}$ denote the null measure satisfying $\mathbb{E}_{\mathbb{P}}[\rho(Z, \theta_0) \mid X] = 0$ a.s., and $\mathbb{P}^{(n)}$ denote the sequence of alternative measures. We specify the form of these local perturbations as follows.

\begin{assumption}
\label{ass:local_dgp}
Under $\mathbb{H}_{1n}$, the observations $(X,Z)$ follow a sequence of probability measures $\mP^{(n)}$ such that $d\mP^{(n)}/d\mP(x,z)=1+n^{-1/2}h(x,z)$, where $\E_{\mP}[h(X,Z)^4]<\infty$ and $\E_{\mP}[h(X,Z)\mid X]=0$ a.s..
\end{assumption}

Applying the change of measure formula under Assumption \ref{ass:local_dgp}, we obtain the specific Pitman alternative:
\begin{equation}
    \label{H_1n}
    \mathbb{H}_{1n}: \quad \mathbb{E}_{\mathbb{P}^{(n)}}\left[\rho(Z, \theta_0) \mid X\right] = \frac{\delta(X)}{\sqrt{n}}.
\end{equation}
Here, the drift function is defined as $\delta(X) \coloneqq \mathbb{E}_{\mathbb{P}}[\rho(Z, \theta_0)h(X, Z)|X]$, which is square-integrable due to the fourth finite moment conditions on $\rho$ and $h$. This local drift induces an asymptotic bias shift in the estimator, as formalized in the following linear representation.

\begin{comment}
\begin{equation*}
    \E_{\mP^{(n)}}[\rho(Z, \theta_0) \mid X] 
    = \frac{\E_{\mP}\left[\rho(Z, \theta_0) \frac{d\mP^{(n)}}{d\mP} \bigg| X\right]}{\E_{\mP}\left[\frac{d\mP^{(n)}}{d\mP} \bigg| X\right]}
    = \frac{\E_{\mP}\left[\rho(Z, \theta_0)\left(1 + \frac{1}{\sqrt{n}}h(X, Z)\right) \bigg| X\right]}{\E_{\mP}\left[1 + \frac{1}{\sqrt{n}}h(X, Z) \bigg| X\right]}.
\end{equation*}
Define $\delta(X)=\E_{\mP}\left[\rho(Z, \theta_0)h(X, Z) \mid X\right]$, we obtain the specific Pitman alternative:
\begin{equation}
    \label{H_1n}
    \mathbb{H}_{1n}:\quad\E_{\mP^{(n)}}[\rho(Z, \theta_0) \mid X] = \frac{\delta(X)}{\sqrt{n}}.
\end{equation}
Note that the drift function $\delta(X)$ is square-integrable with respect to the marginal probability measure of $X$ derived from $\mP$. This follows from the fourth-moment conditions on $\rho$ and $s$ via Hölder's inequality.
\end{comment}
\begin{lemma}
\label{lemma:H1n_lp}
    Under the sequence of local alternatives $\mathbb{H}_{1n}$ satisfying Assumption \ref{ass:local_dgp}, and assuming Assumptions \ref{ass:cons}(ii)--(iv) and \ref{ass:an} hold under the null measure $\mP$, the estimator satisfies the asymptotic linear representation:
    $$
    \sqrt{n}\left(\widehat{\theta}_n-\theta_0\right) = -\Delta\left(\theta_0\right)^{-1} \frac{1}{\sqrt{n}} \sum_{i=1}^n G\left(X_i, \theta_0\right)^\prime \epsilon_i - \Delta\left(\theta_0\right)^{-1} B_\delta(\theta_0) + o_p(1),
    $$
    where $\epsilon_i \coloneqq \rho(Z_i, \theta_0) - n^{-1/2}\delta(X_i)$ represents the centered error term under $\mathbb{H}_{1n}$, and $B_\delta(\theta_0) = \E_{\mP}\left[G(X_i, \theta_0)^\prime\delta(X_i)\right]$ denotes the asymptotic bias shift.
\end{lemma}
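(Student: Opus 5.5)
The plan mirrors the proof of Lemma \ref{lemma:theta_lr}, carried out under the triangular-array measure $\mP^{(n)}$. Two preliminary facts are needed first.

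\textbf{Step 1 (consistency).} I will show $\widehat\theta_n\xrightarrow{p}\theta_0$ under $\mP^{(n)}$ by contiguity. Assumption \ref{ass:local_dgp} gives $\E_{\mP}[h]=0$ and $\E_{\mP}[h^2]<\infty$. Hence the likelihood ratio of the $n$-sample, $\prod_{i}(1+n^{-1/2}h(W_i))$, satisfies
$$
\log \prod_{i}(1+n^{-1/2}h(W_i))=n^{-1/2}\sum_i h(W_i)-\tfrac12\E_{\mP}[h^2]+o_{p}(1),
$$
using $\max_i |h(W_i)|=o_p(n^{1/2})$. So $\mP^{(n)}$ is contiguous to $\mP$ (Le Cam's first lemma). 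Every $o_p(1)$ statement proved under $\mP$ for statistics not depending on the measure then transfers. In particular, Lemma \ref{lemma:theta_lr} gives $\widehat\theta_n-\theta_0=O_{p}(n^{-1/2})$ under $\mP$, and this transfers. The same holds for uniform convergence of the sample Hessian $\widehat H_n(\theta)$ to its $\mP$-limit over $\Nb$: it is shown under $\mP$ via the UWLLN, transfers by contiguity, and gives $\widehat H_n(\bar\theta_n)\xrightarrow{p}2\Delta(\theta_0)$ at any intermediate point. The second-derivative term multiplies $\rho(Z_j,\theta_0)$, whose kernel-weighted mean is zero under $\mP$ by $\mathbb H_0$, so only the $g'Kg$ part survives.

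\textbf{Step 2 (expansion of the first-order condition).} Expand the first-order condition: $0=\nabla_\theta\widehat Q_n(\theta_0)+\widehat H_n(\bar\theta_n)(\widehat\theta_n-\theta_0)$. Combined with Step 1, this reduces the claim to showing
$$
\tfrac{\sqrt n}{2}\nabla_\theta\widehat Q_n(\theta_0)=\frac{1}{\sqrt n}\sum_{i=1}^n G(X_i,\theta_0)'\epsilon_i+B_\delta(\theta_0)+o_p(1)\quad\text{under }\mP^{(n)}.
$$

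\textbf{Step 3 (the score).} Write $\frac{\sqrt n}{2}\nabla_\theta\widehat Q_n(\theta_0)=n^{-3/2}\sum_{i,j}g_j'K_{ij}\rho_i$ with $\rho_i=\epsilon_i+n^{-1/2}\delta(X_i)$. This splits the score into two pieces.

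The drift piece is $n^{-2}\sum_{i,j}g_j'K_{ij}\delta(X_i)$. It is a $V$-statistic with square-integrable kernel, using the moment bounds in Assumption \ref{ass:an}(ii) and the fact that $\delta\in L_2$. By the law of large numbers for $V$-statistics it converges to $\E[G(X,\theta_0)'\delta(X)]=B_\delta(\theta_0)$, first under $\mP$ and then under $\mP^{(n)}$ by contiguity.

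The noise piece $n^{-3/2}\sum_{i,j}g_j'K_{ij}\epsilon_i$ is handled by a H\'ajek projection. Under $\mP^{(n)}$ we have $\E_{\mP^{(n)}}[\epsilon_i\mid X_i]=0$. The projection onto the $i$-index gives $n^{-1/2}\sum_i G^{(n)}(X_i)'\epsilon_i$, where $G^{(n)}$ is $G$ computed under $\mP^{(n)}$. The projection onto the $j$-index is $\E[g_j'K(\cdot,X_j)\,\E(\epsilon_i\mid X_i)]=0$. The diagonal terms are $O_p(n^{-1/2})$ by the fourth moments. The degenerate remainder has second moment $O(n^{-1})$ under $\mP^{(n)}$, since $d\mP^{(n)}/d\mP$ is bounded in $L_2(\mP)$ and moments under $\mP^{(n)}$ are controlled by those under $\mP$ via Cauchy--Schwarz with $\E_{\mP}[h^4]<\infty$.

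Finally, replace $G^{(n)}$ by $G$. The difference is $n^{-1/2}\E_{\mP}[K(x,X)g(Z,\theta_0)h]$, so the replacement error is $n^{-1}\sum_i(\cdot)'\epsilon_i=O_p(n^{-1/2})$.

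\textbf{Main obstacle.} I expect the hard part to be the moment control under the drifting measure: both the degenerate-remainder bound and the Hessian convergence must hold uniformly along $\mP^{(n)}$. My first approach is to bound every $\mP^{(n)}$-moment of a product $f(W_i,W_j)$ by $(1+n^{-1/2}\|h\|_{L_4})^2$ times $\mP$-moments, using Hölder's inequality and the fourth-moment assumptions. Where only $o_p$ statements are needed, I would fall back on contiguity.
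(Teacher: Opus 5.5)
Your proposal is correct and follows essentially the paper's route. Contiguity via Le Cam's first lemma carries consistency and the Hessian limit $2\Delta(\theta_0)$ over to $\mP^{(n)}$, and the split $\rho_j=\epsilon_j+n^{-1/2}\delta(X_j)$ then divides the score into a H\'ajek-projected noise part and a drift part converging to $B_\delta(\theta_0)$. Your explicit projection under $\mP^{(n)}$ (via $G^{(n)}$ and the $O_p(n^{-1/2})$ replacement error) tightens the paper's one-line remark that the projection may be taken under the limit measure $\mP$. In fact, since the remainder in Lemma~\ref{lemma:theta_lr} is a measure-free statistic, contiguity alone would transfer it, leaving only the decomposition of $\rho_i$ and the law of large numbers for the drift.
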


Building on this linear representation, the following theorem establishes the asymptotic distribution of $\widehat{T}_n$ under local alternatives.
\begin{theorem}
\label{thm:H_1n}
Suppose Assumptions for Lemma \ref{lemma:H1n_lp} hold. Then, under $\mathbb{H}_{1n}$, the test statistic converges in distribution as:
$$
\widehat{T}_n \xrightarrow{d} \sum_{k=1}^{\infty}\lambda_k\left(W_k+c_k\right)^2,
$$
where $\{W_k\}_{k=1}^{\infty}$ are i.i.d. $\mathcal{N}(0,1)$, and $\{\lambda_k\}_{k=1}^{\infty}$ and $\{\varphi_k(\cdot)\}_{k=1}^{\infty}$ are the eigenvalues and orthonormal eigenfunctions, respectively, of the integral operator $\mathcal{A}$ defined in Theorem \ref{thm:H0}. The non-centrality parameters are given by $c_k = \mathbb{E}_{\mathbb{P}}[\varphi_k(Z, X) h(Z, X)]$.
\end{theorem}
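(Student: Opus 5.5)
The plan is to mirror the proof of Theorem \ref{thm:H0}, carrying the drift and the bias shift from Lemma \ref{lemma:H1n_lp} through the projected-kernel representation. I first establish an analogue of Lemma \ref{lemma:repre_H0} under $\mathbb{P}^{(n)}$. Expand $\rho(Z_i,\widehat\theta_n)=\rho(Z_i,\theta_0)+g(Z_i,\theta_0)(\widehat\theta_n-\theta_0)+O_p(n^{-1})$ (remainder controlled by the Hessian moment bound and contiguity) inside $\widehat T_n$. Then substitute the representation of Lemma \ref{lemma:H1n_lp}, $\sqrt n(\widehat\theta_n-\theta_0)=-\Delta^{-1}n^{-1/2}\sum_i G(X_i)'\rho(Z_i,\theta_0)+o_p(1)$. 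Note that $\epsilon_i$ plus the drift term recombines: $n^{-1/2}\sum_i G(X_i)'n^{-1/2}\delta(X_i)\to B_\delta$ by the LLN under $\mathbb{P}^{(n)}$. Also use $n^{-1}\sum_j K(X_i,X_j)g(Z_j)\to G(X_i)$ uniformly (UWLLN) and $n^{-2}\sum g_i'Kg_j\to\Delta$. With these, the same algebra as under the null gives
\[
\widehat T_n=\frac1n\sum_{i,j}\rho(Z_i,\theta_0)'K^p(X_i,X_j)\rho(Z_j,\theta_0)+o_{p}(1)
\]
under $\mathbb{P}^{(n)}$. By Le Cam's first lemma, $\mathbb{P}^{(n)}$ is contiguous to $\mathbb{P}$: the log-likelihood ratio $\sum_i\log(1+n^{-1/2}h_i)$ is asymptotically $N(-\tfrac12\E h^2,\E h^2)$ because $\E_{\mathbb{P}}h=0$ and $\E h^4<\infty$. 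Hence every $o_p(1)$ term in the null representation transfers to $\mathbb{P}^{(n)}$. This gives a cleaner route: prove the representation under $\mathbb{P}$ (already Lemma \ref{lemma:repre_H0}) and transfer it by contiguity, avoiding recomputation of the bias.

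Next, I spectrally decompose the degenerate kernel. By Remark \ref{rem:kp_geometry}, $f_p$ is positive semidefinite and $\E|f_p(S,S)|<\infty$ by the fourth-moment conditions. $\mathcal A$ is then trace class, and Mercer-type expansion gives $f_p(s,\tilde s)=\sum_k\lambda_k\varphi_k(s)\varphi_k(\tilde s)$ in $L_2$, with $\E_{\mathbb{P}}\varphi_k=0$. This holds because $\E[\rho(Z,\theta_0)\mid X]=0$ under $\mathbb{P}$, and $K^p$ is a function of $X$ only, so $\E_{\mathbb{P}}f_p(S,\tilde s)=0$. Write $\widehat T_n=\sum_k\lambda_k(n^{-1/2}\sum_i\varphi_k(S_i))^2+o_p(1)$. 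Truncate at $K$ terms, with the tail bounded in $L_1$ by $\sum_{k>K}\lambda_k\to0$ uniformly in $n$; this tail bound must be checked under $\mathbb{P}^{(n)}$, where the means are $O(n^{-1/2})$ and the second moments are bounded by $(1+\sup|\cdot|)$-type arguments. To avoid that, I would again argue the truncation under $\mathbb{P}$ and transfer by contiguity.

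Finally, I identify the limit by Le Cam's third lemma. Under $\mathbb{P}$, the vector $(n^{-1/2}\sum_i\varphi_1(S_i),\dots,n^{-1/2}\sum_i\varphi_K(S_i),\log L_n)$ is jointly asymptotically normal. The $\varphi$-components are i.i.d. $N(0,1)$ by orthonormality, and their covariance with $\log L_n$ is $\E_{\mathbb{P}}[\varphi_k h]=c_k$. Under $\mathbb{P}^{(n)}$ the $\varphi$-block therefore converges to $N((c_k)_{k\le K},I_K)$. The continuous mapping theorem gives $\sum_{k\le K}\lambda_k(W_k+c_k)^2$, and letting $K\to\infty$ (using $\sum\lambda_k<\infty$ and $\sum\lambda_kc_k^2\le\sup\lambda_k\,\E h^2<\infty$ by Bessel) completes the argument via a standard approximation lemma (Billingsley, Thm.~3.2).

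The main obstacle is the uniform-in-$n$ control of the truncation remainder together with the validity of the representation under the drifting measure. I expect contiguity to dispose of both: all $o_p(1)$ statements and tail bounds proven under the fixed null $\mathbb{P}$ carry over, so the only genuinely new input is the third-lemma covariance computation.
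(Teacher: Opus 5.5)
Your proposal is correct and follows essentially the same route as the paper: transfer the projected-kernel representation of Lemma~\ref{lemma:repre_H0} to $\mathbb{P}^{(n)}$ by contiguity, expand $f_p$ spectrally with mean-zero eigenfunctions, apply Le Cam's third lemma to $(\xi_{1,n},\dots,\xi_{M,n},\Lambda_n)$, and conclude with Billingsley's Theorem~3.2. The only real difference is the tail step: the paper applies Markov's inequality directly under $\mathbb{P}^{(n)}$ using $\lim_n\mathbb{E}_{\mathbb{P}^{(n)}}[\xi_{k,n}^2]=1+c_k^2$, whereas you transfer the exact null identity $\mathbb{E}_{\mathbb{P}}[\sum_{k>M}\lambda_k\xi_{k,n}^2]=\sum_{k>M}\lambda_k$ by contiguity; this works, but because it is a double limit in $(M,n)$ it needs the uniform form of contiguity (e.g.\ $\mathbb{P}^{(n)}(A)\le C\,\mathbb{P}(A)+\mathbb{P}^{(n)}(e^{\Lambda_n}>C)$ with $e^{\Lambda_n}$ tight under $\mathbb{P}^{(n)}$), not just the plain transfer of $o_p(1)$ statements.
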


\begin{remark}
\label{rem:local_power}
    Theorem \ref{thm:H_1n} characterizes the limiting distribution as an infinite sum of weighted non-central $\chi^2_1$ variables. This result situates the proposed statistic within the established framework of ICM tests, exhibiting a spectral structure analogous to that derived by \cite{bierens1997asymptotic} for local power analysis.

    Its asymptotic local power is completely determined by the non-centrality parameters $\{c_k\}_{k=1}^\infty$. In particular, local power is trivial, in the sense that it converges to size $\alpha$, if and only if $c_k=0$ for all $k$ with positive eigenvalues. As shown in the proof of Theorem \ref{thm:H_1n}, this is equivalent to
$$
\sum_{k=1}^\infty \lambda_k c_k^2
= \E_{\mP}\!\left[\delta(X)^\prime K^p(X,\tilde X;\theta_0)\delta(\tilde X)\right]
=0.
$$
Since the projected kernel operator is positive semi-definite, this holds if and only if $\delta(X)$ lies in its null space $\mathbb{P}$-a.s., that is, $\E_{\mP}[K^p(X,\tilde X;\theta_0)\delta(\tilde X)\mid X]=0$. By Remark \ref{rem:kp_geometry}, this means that the local drift $\delta(X)$ belongs to the score subspace $\mathcal S$. Provided $K$ is ISPD, trivial power obtains if and only if $\delta(X)=\E_{\mathbb P}[g(Z,\theta_0)\mid X]\mathbf v$ $(\mathbb P\text{-a.s.})$ for some $\mathbf v\in\mathbb R^p$.

Intuitively, trivial power arises precisely when the local drift mimics the score function. In that case, it is asymptotically indistinguishable from a local shift in $\theta_0$ along the direction $\mathbf v$, and is therefore absorbed by the estimator $\widehat{\theta}_n$. This is a consequence of the preliminary estimation step rather than a deficiency of the test itself.
\end{remark}

\section{Multiplier Bootstrap}
\label{sec:mb}
The asymptotic null distribution of the test statistic $\widehat{T}_n$ is generally non-pivotal, as it depends on the underlying data-generating process via the operator $\mathcal{A}$. To circumvent this issue, we introduce an easy-to-implement multiplier bootstrap procedure to approximate the critical values. Motivated by the asymptotic representation established in Lemma \ref{lemma:repre_H0}, the bootstrap procedure is implemented as follows:
\begin{enumerate}
  \item Generate a sequence of i.i.d.  random variables $\left\{V_i\right\}_{i=1}^n$ with zero mean, unit variance and finite fourth moment, independent of the original samples $\left\{X_i,Z_i\right\}_{i=1}^n$; e.g., Rademacher random variable, standard normal random variable, or Bernoulli random variable with $ P(v=1-\kappa)=\kappa/\sqrt{5} $ and $ P(v=\kappa) = 1-\kappa/\sqrt{5} $, where $ \kappa = (\sqrt{5}+1)/2 $ \citep{mammen1993bootstrap}. 
  \item Compute the multiplier bootstrap statistic 
    \begin{equation*}
    \widehat{T}^{*}_n = \frac{1}{n}\sum_{i=1}^n\sum_{j=1}^nV_iV_j\rho(Z_i,\widehat{\theta}_n)^{\prime}\widehat{K}_n^p(X_i,X_j,\widehat{\theta}_n)\rho(Z_j,\widehat{\theta}_n),
    \end{equation*}
    where $\widehat{K}^{p}_n(X_i,X_j,\widehat{\theta}_n) \coloneq K(X_i,X_j) - \widehat{G}_n(X_i)\widehat{\Delta}_n^{-1}\widehat{G}_n(X_j)^{\prime}$, with $\widehat{G}_n$ and $\widehat{\Delta}_n$ defined as in \eqref{eq:hat}.    
  \item Repeat Steps 1 and 2 $B$ times to obtain a sequence of bootstrap statistics $\left\{\widehat{T}_{n,b}^*\right\}_{b=1}^B$.
  \item For a given significance level $\alpha \in (0,1)$, compute the $(1-\alpha)$-th quantile of $\left\{\widehat{T}_{n,b}^*\right\}_{b=1}^B$, denoted by $c_{n,\alpha}^*$. The null hypothesis is rejected if $\widehat{T}_n > c_{n,\alpha}^*$.    
\end{enumerate}

To establish the asymptotic validity of this procedure, the following lemma first provides an asymptotic representation of the bootstrap statistic $\widehat{T}^{*}_n$, replacing the estimated quantities with their population counterparts.
\begin{lemma}
\label{lemma:mb_asy}
Let $\theta^*$ be the (pseudo) true value under different hypotheses, i.e., $\theta^*=\theta_0$ under $\mathbb{H}_0$ or $\mathbb{H}_{1n}$ and $\theta^*=\theta_1$ under $\mathbb{H}_1$. Under Assumptions \ref{ass:cons} and \ref{ass:an} ( or Assumptions \ref{ass:theta1_con} and \ref{ass:theta1_an} for $\theta^*=\theta_1$), the multiplier bootstrap statistic has the following asymptotic representation:
    \begin{equation*}
\widehat{T}^{*}_n =\frac{1}{n}\sum_{i=1}^n\sum_{j=1}^nV_iV_j\rho(Z_i,{\theta}^*)^{\prime}{K}^p(X_i,X_j,{\theta}^*)\rho(Z_j,{\theta}^*)+o_p(1).
\end{equation*}
\end{lemma}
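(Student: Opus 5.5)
The plan is to decompose the difference between $\widehat T^*_n$ and its population analogue $T^*_n$ (the right-hand side without the $o_p(1)$) into two parts. The first comes from replacing $\rho(Z_i,\widehat\theta_n)$ by $\rho(Z_i,\theta^*)$. The second comes from replacing $\widehat K^p_n(\cdot,\cdot,\widehat\theta_n)$ by $K^p(\cdot,\cdot,\theta^*)$. Throughout, conditional on the data, the multipliers $V_i$ are i.i.d.\ with mean zero and unit variance. It is therefore convenient to work with the random vectors
$$
S_n(\theta)\coloneqq n^{-1/2}\sum_{i=1}^n V_i K(\cdot,X_i)\rho(Z_i,\theta)\in\mathcal H_K,
\qquad
R_n(\theta)\coloneqq n^{-1/2}\sum_{i=1}^n V_i\, \widehat G_n(X_i)^\prime\rho(Z_i,\theta)\in\mathbb R^p ,
$$
since $\widehat T^*_n=\|S_n(\widehat\theta_n)\|_{\mathcal H_K}^2-R_n(\widehat\theta_n)^\prime\widehat\Delta_n^{-1}R_n(\widehat\theta_n)$. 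The target $T^*_n$ has the same form, with $G(\cdot,\theta^*)$, $\Delta$ (or $\widetilde\Delta$-free $\Delta(\theta_1)$ as appropriate) and $\theta^*$ in place of the estimated quantities. The statement then reduces to three claims: $\|S_n(\widehat\theta_n)-S_n(\theta^*)\|_{\mathcal H_K}=o_p(1)$, $\|R_n(\widehat\theta_n)-R^0_n(\theta^*)\|=o_p(1)$, and $\widehat\Delta_n^{-1}\xrightarrow{p}\Delta(\theta^*)^{-1}$. Here $R^0_n$ uses the population $G$. Tightness of $S_n(\theta^*)$ and $R^0_n(\theta^*)$ follows from $E\|S_n(\theta^*)\|^2_{\mathcal H_K}=E[k(X,X)\|\rho(Z,\theta^*)\|^2]<\infty$ and a similar second-moment bound, so the quadratic forms converge by the continuous mapping theorem.

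For the first claim, I will Taylor-expand $\rho(Z_i,\widehat\theta_n)=\rho(Z_i,\theta^*)+\bar g_i(\widehat\theta_n-\theta^*)$ using Assumption \ref{ass:an}(ii) (or \ref{ass:theta1_an}). This gives $S_n(\widehat\theta_n)-S_n(\theta^*)=n^{-1}\sum_i V_iK(\cdot,X_i)\bar g_i\,\sqrt n(\widehat\theta_n-\theta^*)$. By Lemma \ref{lemma:theta_lr} (respectively Lemmas \ref{lemma:theta1_asy} and \ref{lemma:H1n_lp}), $\sqrt n(\widehat\theta_n-\theta^*)=O_p(1)$. The term $n^{-1}\sum_iV_iK(\cdot,X_i)g(Z_i,\theta^*)$ has mean zero and second moment $O(n^{-1})$ because the $V_i$ are independent of the data and centered. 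The remainder from $\bar g_i-g(Z_i,\theta^*)$ is bounded by $n^{-1}\sum_i|V_i|\sup_{\mathcal N}\|\nabla^2\rho\|\,\|\widehat\theta_n-\theta^*\|$, which is $O_p(n^{-1/2})$. Together these give $o_p(1)$. The key point is that, unlike in Lemma \ref{lemma:repre_H0}, no estimation effect survives here. The multipliers annihilate the drift term that would otherwise equal $E[K g]\sqrt n(\widehat\theta_n-\theta^*)$, so no projection correction is generated by the bootstrap itself. The projection instead enters through $\widehat K^p_n$.

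For the remaining claims I will use uniform laws of large numbers for $V$-statistics over $\mathcal N(\theta^*)$, the same tool used for Proposition \ref{prop:variance_consistency}. These give $\widehat\Delta_n\xrightarrow{p}\Delta(\theta^*)$ and $\max_i\|\widehat G_n(X_i)-G(X_i,\theta^*)\|\to 0$ in the averaged sense $n^{-1}\sum_i\|\widehat G_n(X_i)-G(X_i,\theta^*)\|^2=o_p(1)$. The $\theta$-perturbation is handled by the bounded kernel and the envelope $\sup_{\mathcal N}\|g\|$. The sampling part is handled by writing $n^{-1}\sum_j K(X_i,X_j)g(Z_j,\theta^*)-G(X_i,\theta^*)$ as a degenerate-in-$j$ average with variance $O(n^{-1})$, plus a diagonal $O(n^{-1})$ term. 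Then, conditionally on the data, $E_V\|R_n(\theta^*)-R^0_n(\theta^*)\|^2=n^{-1}\sum_i\|(\widehat G_n(X_i)-G(X_i,\theta^*))^\prime\rho(Z_i,\theta^*)\|^2$. I would bound this by Cauchy--Schwarz using the fourth moments of $\rho$ in Assumption \ref{ass:an}(ii). The change from $\rho(Z_i,\theta^*)$ to $\rho(Z_i,\widehat\theta_n)$ inside $R_n$ is treated exactly as for $S_n$.

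I expect the main obstacle to be this last step: controlling $\widehat G_n-G$ jointly with the estimated parameter inside a product with $\rho_i$. The clean route is to bound $\max$-free averaged errors. The fourth-moment conditions are needed there, since a naive sup bound over $i$ fails without boundedness of $g$. Under $\mathbb H_{1n}$, the data form a triangular array. There, I would transfer the $O_p$ bounds from $\mathbb P$ to $\mathbb P^{(n)}$ by contiguity implied by Assumption \ref{ass:local_dgp}.
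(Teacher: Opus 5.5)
Your proposal is correct, but it takes a different route from the paper.

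\textbf{The paper's route.} The paper expands $\rho(Z_i,\widehat\theta_n)$ to second order and writes $\widehat K^p_n=K^p+(\widehat K^p_n-K^p)$. It multiplies everything out into eighteen terms $Q_{uvw}$ and disposes of them one by one:
\begin{itemize}
\item the linear terms, by multiplier centering plus $\sqrt n$-consistency;
\item the Hessian remainders, by a uniform law of large numbers for $V$-statistics;
\item the kernel-estimation terms, by the entrywise bound $\max_{i,j}\|\widehat K^p_n(X_i,X_j,\widehat\theta_n)-K^p(X_i,X_j,\theta^*)\|=o_p(1)$ of Lemma \ref{lemma:unif_Kp}.
\end{itemize}
That entrywise bound rests on the max-over-$i$ uniform convergence of $\widehat G_n$ in Lemma \ref{lemma:unif_G}, which uses truncation at $M_n=\log n$, a $\delta_n$-net, Bernstein's inequality and a union bound.

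\textbf{Your route.} You factor the multiplier quadratic form through the $\mathcal H_K$-valued sum $S_n$ and the $p$-vector $R_n$, exploiting the rank-$p$ structure of the projection correction. This shortens the argument in three ways. A first-order expansion suffices. The estimation effect visibly disappears because of multiplier centering. You only need averaged ($L^2$) consistency of $\widehat G_n$, which follows from an elementary conditional-variance calculation rather than a maximal inequality. In exchange, the paper's route yields entrywise uniform control of $\widehat K^p_n$ and $\widehat G_n$, which it reuses in the proof of Proposition \ref{prop:variance_consistency}.

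\textbf{Two small corrections.}

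First, your remark that a sup-over-$i$ bound on $\widehat G_n-G$ fails without bounded $g$ is not right. Lemma \ref{lemma:unif_G} gets $\max_i\sup_\theta\|\widehat G_n(X_i,\theta)-G(X_i,\theta)\|=o_p(1)$ from an integrable envelope alone, via truncation. With that bound, $n^{-1}\sum_i\|(\widehat G_n(X_i)-G(X_i,\theta^*))'\rho(Z_i,\theta^*)\|^2=o_p(1)$ using only second moments of $\rho$. So your fourth-moment Cauchy--Schwarz step is unnecessary, though harmless.

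Second, if you keep the averaged route, set $a_i=\|\widehat G_n(X_i)-G(X_i,\theta^*)\|$. Cauchy--Schwarz needs $n^{-1}\sum_i a_i^4=o_p(1)$. This follows from $n^{-1}\sum_i a_i^2=o_p(1)$ only because $\max_i a_i=O_p(1)$ (bounded kernel plus the envelope of $g$), and you should say so. Also, the $\theta$-perturbation $\widehat G_n(X_i,\widehat\theta_n)-\widehat G_n(X_i,\theta^*)$ needs continuity of $g$ in $\theta$, for example the Lipschitz modulus supplied by the Hessian envelope in Assumption \ref{ass:an}(ii). The envelope $\sup\|g\|$ alone only gives boundedness, not smallness.
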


Denote by `` $\xrightarrow{d,*}$ in probability'' the weak convergence in probability under the bootstrap law, that is, conditional on the original sample $\left\{X_i,Z_i\right\}_{i=1}^n$ and let $\mP^*$ be the conditional probability related to $\left\{V_i\right\}_{i=1}^n$ given $\left\{X_i,Z_i\right\}_{i=1}^n$. The next theorem establishes the asymptotic validity of the proposed multiplier bootstrap procedure. 

\begin{theorem}
    \label{thm:mb}
    Suppose Assumptions for Lemma \ref{lemma:mb_asy} hold. Then,
    \begin{itemize}
    \setlength{\itemsep}{0pt} 
    \setlength{\parskip}{0pt} 
    \setlength{\parsep}{0pt}
        \item Under $\mathbb{H}_0$ in \eqref{H_0} or $\mathbb{H}_{1n}$ in \eqref{H_1n}, we have
        $$
        \widehat{T}_n^*\xrightarrow{d,*}\sum_{k=1}^{\infty}\lambda_kW_k^2 \text{ in probability},
        $$
        where $\sum_{k=1}^{\infty}\lambda_kW_k^2$ is defined in Theorem \ref{thm:H0}.
        \item Under $\mathbb{H}_1$ in \eqref{H_1}, for every $\varepsilon>0$, there exists a positive $A$ such that
        $$
        \mathbb{P}^*\left(\widehat{T}_n^*\geq A\right)\leq \varepsilon +o_{p}(1).
        $$
    \end{itemize}
\end{theorem}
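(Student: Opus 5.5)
By Lemma \ref{lemma:mb_asy}, it suffices to study the conditional law of
$$
T_n^{*,0}=\frac{1}{n}\sum_{i=1}^n\sum_{j=1}^nV_iV_j\,f_p(S_i,S_j,\theta^*),\qquad S_i=(Z_i,X_i),
$$
given the data. Here $f_p$ is built from $K^p(\cdot,\cdot,\theta^*)$ as in Theorem \ref{thm:H0}. The $o_p(1)$ remainder in the lemma is unconditional, so it is also $o_{\mP^*}(1)$ in probability by Markov's inequality applied conditionally.

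For the first part ($\theta^*=\theta_0$), I will use the spectral decomposition $f_p(s,\tilde s,\theta_0)=\sum_k\lambda_k\varphi_k(s)\varphi_k(\tilde s)$ in $L_2(\mP_{XZ}\otimes\mP_{XZ})$. It exists because $K^p$ is positive semidefinite (Remark \ref{rem:kp_geometry}) and $\E f_p(S,S)<\infty$. Under Assumption \ref{ass:an} this gives a trace-class operator with $\sum_k\lambda_k=\E[f_p(S,S,\theta_0)]<\infty$. Then
$$
T_n^{*,0}=\sum_k\lambda_k\Big(n^{-1/2}\sum_i V_i\varphi_k(S_i)\Big)^2 .
$$
I will truncate at $k\le L$. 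For the finite vector $(n^{-1/2}\sum_iV_i\varphi_k(S_i))_{k\le L}$, the conditional covariance is $n^{-1}\sum_i\varphi_k(S_i)\varphi_{k'}(S_i)\to\delta_{kk'}$ a.s. by the SLLN. A conditional Lindeberg condition holds because $\max_i|\varphi_k(S_i)|/\sqrt n\to0$ a.s. and $V_i$ has finite fourth moment. The multivariate CLT (Cram\'er--Wold) then gives convergence of the truncated sum to $\sum_{k\le L}\lambda_kW_k^2$. The tail is controlled by
$$
\E^*\Big[\sum_{k>L}\lambda_k\Big(n^{-1/2}\textstyle\sum_iV_i\varphi_k(S_i)\Big)^2\Big]=\sum_{k>L}\lambda_k\,n^{-1}\textstyle\sum_i\varphi_k(S_i)^2 .
$$
Its expectation is $\sum_{k>L}\lambda_k\to0$, uniformly in $n$. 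A standard approximation argument (Billingsley Thm 3.2, in probability) then yields $\widehat T_n^*\xrightarrow{d,*}\sum_k\lambda_kW_k^2$ in probability.

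Under $\mathbb{H}_{1n}$ the data come from $\mP^{(n)}$, but the bootstrap weights are still centered. The same argument applies once the sample averages $n^{-1}\sum_i\varphi_k(S_i)\varphi_{k'}(S_i)$ still converge to $\delta_{kk'}$ and the tail bound still holds. Both follow because $d\mP^{(n)}/d\mP=1+n^{-1/2}h$ with $\E h^4<\infty$: the change of measure perturbs expectations by $O(n^{-1/2})$, via Cauchy--Schwarz and the moment bounds. Using $\theta_0$ as the centering of Lemma \ref{lemma:mb_asy} under $\mathbb{H}_{1n}$ is already granted. Note that the drift $\delta(X)/\sqrt n$ inside $\rho(Z_i,\theta_0)$ does not produce a noncentrality term here, since it is multiplied by the mean-zero $V_i$.

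For the second part ($\theta^*=\theta_1$), I will simply bound the conditional mean. Since $\E^*[V_iV_j]=\mathbbm{1}\{i=j\}$,
$$
\E^*[T_n^{*,0}]=\frac1n\sum_i\rho(Z_i,\theta_1)'K^p(X_i,X_i,\theta_1)\rho(Z_i,\theta_1)\xrightarrow{p}\E[f_p(S,S,\theta_1)]<\infty .
$$
This uses the LLN with the moment conditions of Assumption \ref{ass:theta1_an} and boundedness of $k$. Markov's inequality gives $\mP^*(T_n^{*,0}\ge A)\le \E^*[T_n^{*,0}]/A$, so choosing $A$ large yields the bound. Adding the $o_p(1)$ remainder from Lemma \ref{lemma:mb_asy} (split $A$ in half) completes the claim.

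I expect the main obstacle to be the uniform tail control in the first part. It needs $\sum\lambda_k<\infty$, i.e. $\E[\rho'K^p\rho]<\infty$ on the diagonal. This requires $G$ to be bounded in the appropriate $L_4$ sense together with the fourth-moment conditions, and under $\mathbb{H}_{1n}$ these bounds must be transferred to $\mP^{(n)}$.
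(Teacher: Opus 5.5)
Your proof is correct, but for the first bullet it takes a different route from the paper.

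\emph{The paper's route.} The paper diagonalizes the empirical Gram matrix $\mathbf H_n=[f_p(S_i,S_j,\theta_0)]_{i,j}$ and writes $T_n^*=\sum_k\widehat\lambda_{k,n}Z_{k,n}^2$, where $Z_{k,n}=\mathbf u_k'\boldsymbol V$ are conditionally uncorrelated with unit variance. It controls the tail by the trace identity together with eigenvalue convergence (Hilbert--Schmidt convergence of $\widehat{\mathcal A}_n$ plus Weyl's inequality). It obtains the conditional CLT from eigenvector delocalization via Lemma \ref{lemma:max_row_bound}.

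\emph{Your route.} You expand in the fixed population eigenfunctions instead. The conditional covariance is then only asymptotically the identity, but the Lindeberg condition and the uniform tail bound follow directly from $\varphi_k\in L_2$ and $\sum_k\lambda_k<\infty$. This is the classical degenerate $V$-statistic argument carried over to multipliers, and it avoids operator perturbation theory and the delocalization lemma altogether.

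\emph{The one step you must justify.} You need $f_p(S_i,S_j,\theta_0)=\sum_k\lambda_k\varphi_k(S_i)\varphi_k(S_j)$ at the sample points, including $i=j$. An $L_2(\mathbb P\otimes\mathbb P)$ expansion does not give this by itself. It does hold almost surely: write $f_p(s,s')=\langle\Phi(s),\Phi(s')\rangle_{\mathcal H_K}$ with $\Phi(s)=(I-P_{\mathcal S})K(\cdot,x)\rho(z,\theta_0)$ in the separable space $\mathcal H_K$. The part of $\Phi(S)$ lying in the kernel of the trace-class operator $\mathbb E[\Phi(S)\otimes\Phi(S)]$ has zero second moment. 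The paper relies on the same fact when it sets $\sum_k\lambda_k=\mathbb E[f_p(S,S,\theta_0)]$.

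\emph{Local alternatives.} Under $\mathbb H_{1n}$ you are more explicit than the paper. Your $O(n^{-1/2})$ perturbation bounds work by Cauchy--Schwarz once you use $\lambda_k\varphi_k(s)^2\le f_p(s,s,\theta_0)\le C\|\rho(z,\theta_0)\|^2$. This puts $\varphi_k\varphi_{k'}$ and $\sum_{k>L}\lambda_k\varphi_k^2$ under an $L_2(\mathbb P)$ envelope; $\varphi_k\in L_2$ alone would not suffice. Alternatively, contiguity as in Lemma \ref{lemma:consistency_local} transfers the $\mathbb H_0$ conclusion at once, since the distance between the conditional law of $T_n^{*,0}$ and the limit is a function of the data only.

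\emph{Second bullet.} Your first-moment Markov bound is simpler and needs fewer moments than the paper's conditional second-moment and Chebyshev computation. State explicitly that it uses $T_n^{*,0}\ge 0$. This holds because $\Delta(\theta_1)$ is the Gram matrix of $\mathbb E[K(\cdot,X)g_l(Z,\theta_1)]$, $l=1,\ldots,p$, so $K^p(\cdot,\cdot,\theta_1)$ is positive semidefinite by the argument of Remark \ref{rem:kp_geometry}.
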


On one hand, under $\mathbb{H}_0$, both $\widehat{T}_n^*$ and $\widehat{T}_n$ converge in distribution to $\sum_{k=1}^{\infty}\lambda_kW_k^2$, as defined in Theorem \ref{thm:H0}. Hence, a test based on the bootstrapped critical value would yield an asymptotically correct level for $\widehat{T}_n$. On the other hand, under $\mathbb{H}_1$, by Theorem \ref{thm:H1}, $\widehat{T}_n(\widehat{\theta}_n)$ diverges to infinity with probability approaching one, whereas $\widehat{T}_n^*$ is stochastically bounded conditional on the original sample. This implies that the test based on the bootstrapped critical value is consistent against all such fixed alternatives. Moreover, Theorem \ref{thm:mb} combined with Theorem \ref{thm:H_1n} also implies that our tests implemented through the multiplier bootstrap preserve the asymptotic local power properties of $\widehat{T}_n$ under $\mathbb{H}_{1n}$. Therefore, the multiplier bootstrap procedure is asymptotically valid.

\section{Simulation Study}  
\label{sec:simu} 

This section evaluates the finite-sample properties of the proposed KMD framework across three data-generating processes of increasing complexity: the Linear Regression Model, the Box--Cox Transformation Model, and Instrumental Variable Models. We assess estimation accuracy using bias and RMSE, with additional coverage rate results available in Section \ref{subsec:coverage} of the Online Supplementary Material. For the specification test, we examine empirical rejection rates to validate size and power; we also provide a benchmark comparison with \cite{dominguez2015simple} in Section \ref{subsec:dl-test} of the Online Supplementary Material.

Across all designs, we consider sample sizes $n \in \{100,200\}$ for estimation and extend the analysis to $n=400$ for specification testing. We vary the covariate dimension $p \in \{3,5,10\}$ to investigate performance in higher-dimensional settings. Except for the mixed IV design with discrete instruments, where the kernel is specified separately, KMD uses a Gaussian kernel with the median-heuristic bandwidth. All results are based on 1,000 Monte Carlo replications, with test critical values obtained from 999 bootstrap repetitions.

\subsection{The Linear Regression Model}
\label{subsec:lr_models}

We begin with a linear regression benchmark under homoskedastic and heteroskedastic disturbances. We compare the KMD estimator with OLS, feasible weighted least squares, and infeasible optimal GMM, and study the corresponding KMD specification test under several basic alternatives. This benchmark mainly serves as a sanity check: the KMD estimator is essentially unbiased, with moderate efficiency loss relative to the parametric benchmarks, and the KMD test shows good size control and strong finite-sample power. To conserve space, the design and full results are reported in Section \ref{app:lr_simulation} of the Online Supplementary Material. We now turn to nonlinear designs that more directly highlight the comparative advantages of the proposed method.

\subsection{The Box--Cox Transformation Model}
\label{subsec:bc_models}

We next extend the simulation to the generalized Box--Cox transformation model. Our design draws upon the setups in \cite{shin2008semiparametric} and \cite{dominguez2015simple}, governed by the structural equation:
\begin{equation*}
\label{bc_structural}
    y_i^{(\lambda_0)} = \alpha_0 + X_i^{\prime} \beta_0 + \epsilon_i, \quad i = 1, \dots, n.
\end{equation*}
To accommodate outcomes with non-positive support without error truncation, we employ the generalized transformation family of \cite{bickel1981analysis}:
\begin{equation*}
\label{bc_bickel}
    y_i^{(\lambda)} =
    \begin{cases} 
    \dfrac{|y_i|^\lambda \operatorname{sgn}(y_i) - 1}{\lambda} & \text{if } \lambda \neq 0, \\
    \ln(y_i) & \text{if } \lambda = 0.
    \end{cases}
\end{equation*}
We fix the intercept at $\alpha_0 = 0.5$ and evaluate performance across $\lambda_0 \in \{0, 0.5, 1\}$. The slope parameter is set to $\beta_0=(1,\dots,1,-1,\dots,-1)^{\prime}$, with the first $\lceil p/2\rceil$ components equal to $1$, and the regressors $X_i$ are generated from a multivariate normal distribution $\mathcal{N}(0,\Sigma_X)$ with Toeplitz covariance structure $\Sigma_{X,jk}=0.5^{|j-k|}$. We consider both homoskedastic and heteroskedastic disturbances, with $\epsilon_i \sim \mathcal{N}(0,1)$ in the former case and $\epsilon_i=\exp(0.25X_{i,1})\nu_i$, $\nu_i \sim \mathcal{N}(0,1)$, in the latter.

To assess estimation performance, we compare the KMD estimator with two feasible consistent alternatives: the nonlinear two-stage least squares (NL2S) estimator of \citet{amemiya1981comparison} and the minimum distance estimator of \citet{shin2008semiparametric}. For the log-linear case ($\lambda_0=0$), we also report the infeasible Oracle GMM (GMM$^*$), based on the optimal instrument $A^*(X_i)=\sigma^{-2}(X_i)\bigl(-1,\,-X_i^\prime,\,[(\alpha_0+X_i^\prime\beta_0)^2+\sigma^2(X_i)]/2\bigr)^\prime$.

\begin{table}[htbp]
  \centering
  \caption{Estimation Performance of Box--Cox Model Parameters ($\lambda=0$)}
  \label{tab:boxcox_estimation_1}
  
  \footnotesize 
  \renewcommand{\arraystretch}{1.15}
  
  % --- 表格定义 ---
  % 1. 移除了 \setlength{\tabcolsep}{0pt}，恢复默认间距防止过挤
  % 2. 在列定义中，开头和结尾加了 @{\hspace{1em}}，
  %    强制在最左和最右留出 1em 的空白，解决“顶着右边”的问题
  \begin{tabular*}{\textwidth}{
    @{\hspace{1em}\extracolsep{\fill}} % 左侧留空 + 开启自动填充
    c c c 
    *{4}{S[table-format=-1.3]} % Bias 四列
    *{4}{S[table-format=1.3]}  % RMSE 四列
    @{\hspace{1em}}            % 右侧留空
  }
    \toprule
    & & & \multicolumn{4}{c}{Bias} & \multicolumn{4}{c}{RMSE} \\
    \cmidrule(lr){4-7} \cmidrule(lr){8-11}
    
    {$n$} & {$p$} & {Param.} & {NL2S} & {Shin-MD} & {GMM$^*$} & {KMD} & {NL2S} & {Shin-MD} & {GMM$^*$} & {KMD} \\
    \midrule

    \multicolumn{11}{l}{\textit{Panel A: Homoskedastic Errors}} \\
    \addlinespace[0.2em]
    \multirow{9}{*}{$100$} & \multirow{3}{*}{$3$} 
         & $\alpha$  &  0.003 & -0.003 &  0.009 & -0.013 & 0.206 & 0.163 & 0.162 & 0.150 \\
       & & $\beta$   &  0.018 &  0.007 &  0.005 &  0.008 & 0.145 & 0.160 & 0.133 & 0.134 \\
       & & $\lambda$ & -0.004 & -0.007 &  0.002 & -0.013 & 0.100 & 0.072 & 0.071 & 0.066 \\
    \addlinespace[0.2em]
                           & \multirow{3}{*}{$5$} 
         & $\alpha$  & -0.007 & -0.006 &  0.005 & -0.007 & 0.189 & 0.187 & 0.141 & 0.143 \\
       & & $\beta$   &  0.009 &  0.019 &  0.011 &  0.013 & 0.134 & 0.188 & 0.130 & 0.134 \\
       & & $\lambda$ & -0.005 & -0.003 & -0.000 & -0.005 & 0.047 & 0.037 & 0.028 & 0.028 \\
    \addlinespace[0.2em]
                           & \multirow{3}{*}{$10$} 
         & $\alpha$  & -0.001 & -0.019 &  0.001 &  0.000 & 0.168 & 0.240 & 0.140 & 0.138 \\
       & & $\beta$   &  0.015 &  0.015 &  0.017 &  0.015 & 0.139 & 0.206 & 0.138 & 0.139 \\
       & & $\lambda$ & -0.000 & -0.003 &  0.000 & -0.000 & 0.014 & 0.012 & 0.009 & 0.009 \\
    \addlinespace[0.5em]
    \multirow{9}{*}{$200$} & \multirow{3}{*}{$3$} 
         & $\alpha$  & -0.004 & -0.002 &  0.002 & -0.010 & 0.141 & 0.138 & 0.107 & 0.115 \\
       & & $\beta$   &  0.008 &  0.007 &  0.008 &  0.012 & 0.095 & 0.113 & 0.090 & 0.094 \\
       & & $\lambda$ & -0.004 & -0.003 &  0.000 & -0.007 & 0.066 & 0.061 & 0.043 & 0.048 \\
    \addlinespace[0.2em]
                           & \multirow{3}{*}{$5$} 
         & $\alpha$  & -0.004 & -0.000 &  0.001 & -0.006 & 0.125 & 0.127 & 0.097 & 0.097 \\
       & & $\beta$   &  0.005 &  0.010 &  0.004 &  0.004 & 0.092 & 0.131 & 0.090 & 0.093 \\
       & & $\lambda$ & -0.002 & -0.000 & -0.000 & -0.002 & 0.030 & 0.027 & 0.019 & 0.019 \\
    \addlinespace[0.2em]
                           & \multirow{3}{*}{$10$} 
         & $\alpha$  &  0.001 & -0.016 & -0.004 & -0.000 & 0.114 & 0.238 & 0.088 & 0.090 \\
       & & $\beta$   &  0.012 &  0.013 &  0.010 &  0.012 & 0.093 & 0.174 & 0.092 & 0.095 \\
       & & $\lambda$ &  0.000 & -0.002 & -0.000 & -0.000 & 0.010 & 0.010 & 0.006 & 0.006 \\
    
    \midrule
    
    \multicolumn{11}{l}{\textit{Panel B: Heteroskedastic Errors}} \\
    \addlinespace[0.2em]
    \multirow{9}{*}{$100$} & \multirow{3}{*}{$3$} 
         & $\alpha$  & -0.020 & -0.061 &  0.004 & -0.051 & 0.210 & 0.166 & 0.152 & 0.160 \\
       & & $\beta$   &  0.012 &  0.018 &  0.006 &  0.036 & 0.148 & 0.149 & 0.122 & 0.138 \\
       & & $\lambda$ & -0.014 & -0.039 &  0.002 & -0.034 & 0.103 & 0.078 & 0.065 & 0.074 \\
    \addlinespace[0.2em]
                           & \multirow{3}{*}{$5$} 
         & $\alpha$  & -0.028 & -0.049 & -0.000 & -0.033 & 0.191 & 0.183 & 0.140 & 0.149 \\
       & & $\beta$   &  0.013 &  0.017 &  0.014 &  0.016 & 0.141 & 0.175 & 0.126 & 0.141 \\
       & & $\lambda$ & -0.008 & -0.016 &  0.001 & -0.011 & 0.048 & 0.038 & 0.027 & 0.031 \\
    \addlinespace[0.2em]
                           & \multirow{3}{*}{$10$} 
         & $\alpha$  & -0.024 & -0.060 &  0.009 & -0.013 & 0.181 & 0.250 & 0.135 & 0.143 \\
       & & $\beta$   &  0.017 &  0.028 &  0.016 &  0.018 & 0.147 & 0.197 & 0.131 & 0.148 \\
       & & $\lambda$ & -0.003 & -0.006 &  0.001 & -0.002 & 0.015 & 0.013 & 0.009 & 0.010 \\
    \addlinespace[0.5em]
    \multirow{9}{*}{$200$} & \multirow{3}{*}{$3$} 
         & $\alpha$  & -0.003 & -0.037 & -0.003 & -0.032 & 0.158 & 0.135 & 0.104 & 0.118 \\
       & & $\beta$   &  0.011 &  0.010 &  0.006 &  0.018 & 0.100 & 0.104 & 0.082 & 0.096 \\
       & & $\lambda$ & -0.003 & -0.023 & -0.001 & -0.020 & 0.075 & 0.063 & 0.042 & 0.053 \\
    \addlinespace[0.2em]
                           & \multirow{3}{*}{$5$} 
         & $\alpha$  & -0.014 & -0.030 & -0.000 & -0.022 & 0.131 & 0.124 & 0.092 & 0.100 \\
       & & $\beta$   &  0.008 &  0.016 &  0.009 &  0.015 & 0.098 & 0.119 & 0.084 & 0.098 \\
       & & $\lambda$ & -0.005 & -0.010 & -0.001 & -0.008 & 0.034 & 0.027 & 0.017 & 0.022 \\
    \addlinespace[0.2em]
                           & \multirow{3}{*}{$10$} 
         & $\alpha$  & -0.012 & -0.051 &  0.004 & -0.010 & 0.122 & 0.223 & 0.091 & 0.096 \\
       & & $\beta$   &  0.008 &  0.018 &  0.007 &  0.009 & 0.099 & 0.159 & 0.088 & 0.100 \\
       & & $\lambda$ & -0.001 & -0.005 &  0.000 & -0.001 & 0.011 & 0.010 & 0.006 & 0.006 \\
    \bottomrule
  \end{tabular*}
  
  \par
  \vspace{0.5em}
  \begin{minipage}{\textwidth}
    \footnotesize
    \textit{Note:} The bias for $\alpha$ and $\lambda$ is the raw bias, while the bias for $\beta$ is the $L_2$ norm of the bias vector.
  \end{minipage}
\end{table}

\begin{table}[htbp]
  \centering
  \caption{Estimation Performance of Box--Cox Model Parameters ($\lambda=0.5$)}
  \label{tab:boxcox_estimation_2}
  
  \footnotesize 
  \renewcommand{\arraystretch}{1.15}
  
  % --- 表格定义 ---
  % 1. \setlength{\tabcolsep}{0pt} 让 extrolsep 完全控制内部间距
  % 2. 在列定义首尾添加 @{\hspace{1em}} 为左右两侧留出呼吸空间
  \setlength{\tabcolsep}{0pt} 
  \begin{tabular*}{\textwidth}{
    @{\hspace{1em}\extracolsep{\fill}} % 左侧留白 + 自动填充
    c c c 
    *{3}{S[table-format=-1.3]} % Bias 三列
    *{3}{S[table-format=1.3]}  % RMSE 三列
    @{\hspace{1em}}            % 右侧留白
  }
    \toprule
    & & & \multicolumn{3}{c}{Bias} & \multicolumn{3}{c}{RMSE} \\
    \cmidrule(lr){4-6} \cmidrule(lr){7-9}
    
    % S列表头需用 {} 包裹
    {$n$} & {$p$} & {Param.} & {NL2S} & {Shin-MD} & {KMD} & {NL2S} & {Shin-MD} & {KMD} \\
    \midrule

    \multicolumn{9}{l}{\textit{Panel A: Homoskedastic Errors}} \\
    \addlinespace[0.2em]
    \multirow{9}{*}{$100$} & \multirow{3}{*}{$3$} 
         & $\alpha$  &  0.019 &  0.065 &  0.009 & 0.215 & 0.232 & 0.153 \\
       & & $\beta$   &  0.026 &  0.176 &  0.010 & 0.133 & 0.475 & 0.130 \\
       & & $\lambda$ &  0.014 & -0.018 &  0.005 & 0.109 & 0.210 & 0.066 \\
    \addlinespace[0.2em]
                           & \multirow{3}{*}{$5$} 
         & $\alpha$  & -0.007 &  0.052 & -0.002 & 0.224 & 0.273 & 0.162 \\
       & & $\beta$   &  0.028 &  0.074 &  0.014 & 0.134 & 0.323 & 0.133 \\
       & & $\lambda$ &  0.001 & -0.001 &  0.001 & 0.068 & 0.124 & 0.040 \\
    \addlinespace[0.2em]
                           & \multirow{3}{*}{$10$} 
         & $\alpha$  & -0.077 & -0.074 & -0.049 & 0.246 & 0.271 & 0.168 \\
       & & $\beta$   &  0.022 &  0.038 &  0.021 & 0.140 & 0.214 & 0.139 \\
       & & $\lambda$ & -0.015 & -0.016 & -0.011 & 0.044 & 0.040 & 0.026 \\
    \addlinespace[0.5em]
    \multirow{9}{*}{$200$} & \multirow{3}{*}{$3$} 
         & $\alpha$  &  0.008 &  0.066 &  0.005 & 0.154 & 0.219 & 0.118 \\
       & & $\beta$   &  0.019 &  0.198 &  0.008 & 0.093 & 0.470 & 0.095 \\
       & & $\lambda$ &  0.007 & -0.025 &  0.003 & 0.077 & 0.204 & 0.051 \\
    \addlinespace[0.2em]
                           & \multirow{3}{*}{$5$} 
         & $\alpha$  & -0.008 &  0.081 & -0.004 & 0.162 & 0.318 & 0.113 \\
       & & $\beta$   &  0.013 &  0.173 &  0.006 & 0.093 & 0.422 & 0.095 \\
       & & $\lambda$ & -0.000 & -0.019 & -0.000 & 0.050 & 0.161 & 0.029 \\
    \addlinespace[0.2em]
                           & \multirow{3}{*}{$10$} 
         & $\alpha$  & -0.056 & -0.063 & -0.037 & 0.197 & 0.250 & 0.129 \\
       & & $\beta$   &  0.007 &  0.023 &  0.007 & 0.095 & 0.187 & 0.095 \\
       & & $\lambda$ & -0.010 & -0.014 & -0.007 & 0.036 & 0.044 & 0.020 \\

    \addlinespace[0.5em]
    
    \multicolumn{9}{l}{\textit{Panel B: Heteroskedastic Errors}} \\
    \addlinespace[0.2em]
    \multirow{9}{*}{$100$} & \multirow{3}{*}{$3$} 
         & $\alpha$  & -0.001 &  0.025 & -0.021 & 0.225 & 0.227 & 0.156 \\
       & & $\beta$   &  0.037 &  0.154 &  0.010 & 0.145 & 0.459 & 0.139 \\
       & & $\lambda$ &  0.009 & -0.038 & -0.009 & 0.125 & 0.191 & 0.069 \\
    \addlinespace[0.2em]
                           & \multirow{3}{*}{$5$} 
         & $\alpha$  & -0.028 &  0.023 & -0.029 & 0.232 & 0.302 & 0.167 \\
       & & $\beta$   &  0.020 &  0.099 &  0.005 & 0.143 & 0.360 & 0.142 \\
       & & $\lambda$ & -0.006 & -0.024 & -0.010 & 0.072 & 0.135 & 0.044 \\
    \addlinespace[0.2em]
                           & \multirow{3}{*}{$10$} 
         & $\alpha$  & -0.098 & -0.090 & -0.066 & 0.281 & 0.373 & 0.187 \\
       & & $\beta$   &  0.020 &  0.061 &  0.021 & 0.145 & 0.238 & 0.145 \\
       & & $\lambda$ & -0.018 & -0.026 & -0.014 & 0.050 & 0.068 & 0.030 \\
    \addlinespace[0.5em]
    \multirow{9}{*}{$200$} & \multirow{3}{*}{$3$} 
         & $\alpha$  &  0.013 &  0.033 & -0.007 & 0.171 & 0.189 & 0.117 \\
       & & $\beta$   &  0.027 &  0.186 &  0.010 & 0.099 & 0.463 & 0.097 \\
       & & $\lambda$ &  0.011 & -0.046 & -0.005 & 0.093 & 0.205 & 0.053 \\
    \addlinespace[0.2em]
                           & \multirow{3}{*}{$5$} 
         & $\alpha$  & -0.006 &  0.057 & -0.008 & 0.188 & 0.313 & 0.117 \\
       & & $\beta$   &  0.015 &  0.155 &  0.009 & 0.098 & 0.397 & 0.098 \\
       & & $\lambda$ &  0.002 & -0.025 & -0.002 & 0.061 & 0.151 & 0.031 \\
    \addlinespace[0.2em]
                           & \multirow{3}{*}{$10$} 
         & $\alpha$  & -0.084 & -0.097 & -0.049 & 0.221 & 0.235 & 0.133 \\
       & & $\beta$   &  0.017 &  0.033 &  0.014 & 0.101 & 0.161 & 0.100 \\
       & & $\lambda$ & -0.015 & -0.018 & -0.009 & 0.041 & 0.028 & 0.022 \\
    \bottomrule
  \end{tabular*}
  
  \par
  \vspace{0.5em}
  \begin{minipage}{\textwidth}
    \footnotesize
    \textit{Note:} The bias for $\alpha$ and $\lambda$ is the raw bias, while the bias for $\beta$ is the $L_2$ norm of the bias vector.
  \end{minipage}
\end{table}

Tables \ref{tab:boxcox_estimation_1} and \ref{tab:boxcox_estimation_2} summarize the finite-sample estimation results for the Box--Cox transformation model under $\lambda_0=0$ and $\lambda_0=0.5$, respectively.\footnote{To conserve space, estimation results for $\lambda_0=1$ are relegated to Section \ref{subsec:boxcox-lambda1} of the Online Supplementary Material. These results exhibit patterns similar to the $\lambda_0=0.5$ case.} For the log-linear specification ($\lambda_0=0$, Table \ref{tab:boxcox_estimation_1}), the proposed KMD estimator demonstrates robust performance across all considered designs. In terms of consistency, KMD exhibits negligible bias, comparable to the infeasible Oracle GMM. Regarding efficiency, the KMD estimator consistently yields lower RMSEs than the feasible consistent alternatives, NL2S and Shin-MD, in most scenarios. Under homoskedasticity (Panel A), the KMD estimator performs remarkably well, exhibiting precision levels that closely approach the semiparametric efficiency bound. Furthermore, under heteroskedasticity (Panel B), where GMM$^*$ explicitly utilizes the true conditional variance, the unweighted KMD estimator remains highly competitive, delivering precision levels comparable to the theoretical efficiency bound.

The advantages of the KMD framework are most pronounced in the specification with $\lambda_0=0.5$ (Table \ref{tab:boxcox_estimation_2}). Across the considered designs, the KMD estimator consistently yields the lowest finite-sample bias and RMSE. Notably, KMD significantly outperforms the Shin-MD estimator, with the performance gap likely attributable to the covariate dimensionality. Whereas \cite{shin2008semiparametric} relies on indicator-weighted cumulative moments (e.g., $\mathbb{I}(X_i \leq X_j)$) that are vulnerable to data sparsity in higher-dimensional spaces, the KMD approach employs a continuous kernel metric that adapts effectively to the multivariate geometry. Consequently, the KMD estimator maintains high precision without substantial deterioration as the dimension $p$ increases, offering a robust alternative for generalized regression models in nonlinear settings.

We next test the validity of the Box--Cox specification:
\begin{equation*}
    \mathbb{H}_0: \mathbb{E}[Y^{(\lambda_0)} - (\alpha_0 + X^{\prime}\beta_0) \mid X] = 0 
    \quad \text{a.s. for some } \theta_0 \in \Theta.
\end{equation*}
Focusing on the log-linear case ($\lambda_0=0$), we define the residual as $\rho(Z, \theta) = \ln Y - (\alpha + X^{\prime}\beta)$.

To evaluate power, we generate data under the alternatives by introducing an additive drift  $f(X_i)$ into the latent index: $Y_i = \exp(\alpha_0 + X_i^{\prime}\beta_0 + f(X_i) + \epsilon_i)$. We consider three representative forms of misspecification:
\begin{itemize}
    \setlength{\itemsep}{0pt}
    \setlength{\parskip}{0pt}
    \setlength{\parsep}{0pt}
    \item $\mathbb{H}_{1a}$ (Quadratic): $f(X_i)=\delta X_{i,1}^{2}$, representing smooth global nonlinearity.
    \item $\mathbb{H}_{1b}$ (Discontinuous): $f(X_i)=\delta \cdot \mathbb{I}(|X_{i,1}|\le 0.5)$, representing a localized structural break or bump that is often harder to detect.
    \item $\mathbb{H}_{1c}$ (Interaction): $f(X_i)=\delta X_{i,1}X_{i,2}$, capturing omitted interaction effects.
\end{itemize}
The perturbation magnitude is set to $\delta=0.5$ for $\mathbb{H}_{1a}$ and $\mathbb{H}_{1c}$, and to $\delta=1.5$ for $\mathbb{H}_{1b}$.

\begin{table}[htbp]
  \centering
  \caption{Empirical Rejection Rates for Box--Cox Specification Tests ($\alpha=0.05$)}
  \label{tab:boxcox_test}
  
  \small % 使用小号字体
  \renewcommand{\arraystretch}{1.15} 
  
  % 定义等宽居中列 Y
  \newcolumntype{Y}{>{\centering\arraybackslash}X}
  
  % 使用 tabularx，前两列 c (自然宽度居中)，后六列 Y (等宽居中)
  \begin{tabularx}{\textwidth}{cc YYY YYY}
    \toprule
    & & \multicolumn{3}{c}{Homoskedastic Errors} & \multicolumn{3}{c}{Heteroskedastic Errors} \\
    \cmidrule(lr){3-5} \cmidrule(lr){6-8}
    $n$ & DGP & $p=3$ & $p=5$ & $p=10$ & $p=3$ & $p=5$ & $p=10$ \\
    \midrule
    
    % --- n=100 ---
    \multirow{4}{*}{$100$} 
      & Size ($\mathbb{H}_0$)             & $0.046$ & $0.058$ & $0.039$ & $0.060$ & $0.054$ & $0.047$ \\
      & Power ($\mathbb{H}_{1a}$: Quad.)  & $0.619$ & $0.651$ & $0.549$ & $0.674$ & $0.671$ & $0.489$ \\
      & Power ($\mathbb{H}_{1b}$: Disc.)  & $0.821$ & $0.545$ & $0.227$ & $0.745$ & $0.492$ & $0.231$ \\
      & Power ($\mathbb{H}_{1c}$: Inter.) & $0.640$ & $0.587$ & $0.540$ & $0.677$ & $0.587$ & $0.495$ \\
      
    \addlinespace[0.5em]
    
    % --- n=200 ---
    \multirow{4}{*}{$200$} 
      & Size ($\mathbb{H}_0$)             & $0.056$ & $0.059$ & $0.035$ & $0.052$ & $0.046$ & $0.045$ \\
      & Power ($\mathbb{H}_{1a}$: Quad.)  & $0.953$ & $0.984$ & $0.935$ & $0.968$ & $0.979$ & $0.905$ \\
      & Power ($\mathbb{H}_{1b}$: Disc.)  & $0.986$ & $0.908$ & $0.574$ & $0.928$ & $0.798$ & $0.482$ \\
      & Power ($\mathbb{H}_{1c}$: Inter.) & $0.930$ & $0.949$ & $0.931$ & $0.973$ & $0.940$ & $0.883$ \\

    \addlinespace[0.5em]
    
    % --- n=400 ---
    \multirow{4}{*}{$400$} 
      & Size ($\mathbb{H}_0$)             & $0.046$ & $0.047$ & $0.043$ & $0.064$ & $0.053$ & $0.050$ \\
      & Power ($\mathbb{H}_{1a}$: Quad.)  & $1.000$ & $1.000$ & $1.000$ & $1.000$ & $1.000$ & $1.000$ \\
      & Power ($\mathbb{H}_{1b}$: Disc.)  & $1.000$ & $0.999$ & $0.954$ & $0.996$ & $0.995$ & $0.928$ \\
      & Power ($\mathbb{H}_{1c}$: Inter.) & $1.000$ & $1.000$ & $0.999$ & $1.000$ & $0.998$ & $0.999$ \\
      
    \bottomrule
  \end{tabularx}
\end{table}

Table \ref{tab:boxcox_test} summarizes the empirical rejection rates for the Box--Cox specification test. Under the null hypothesis, the test maintains accurate size control across all considered dimensions and error structures, with rejection rates close to the nominal level. Regarding power, we observe varying sensitivity to covariate dimensionality. For the smooth alternatives (Quadratic $\mathbb{H}_{1a}$ and Interaction $\mathbb{H}_{1c}$), the empirical power remains relatively robust as the dimension increases from $p=3$ to $p=10$. In contrast, the discontinuous alternative ($\mathbb{H}_{1b}$) exhibits a more substantial reduction in power in higher-dimensional settings, reflecting the challenge of detecting local irregularities in sparse spaces. However, these distortions vanish with larger sample sizes; rejection rates for all alternatives converge to 1 as $n$ increases to 400, confirming the test's consistency.

\subsection{Linear and Nonlinear Instrumental Variable Models}
\label{subsec:iv_models}

Finally, we evaluate the finite-sample performance of the KMD estimator in the presence of endogeneity. We consider three DGPs with nonlinear reduced forms, including two continuous-instrument designs and one mixed-instrument design.

For Designs 1--2, we generate a $p$-dimensional continuous instrument vector
$Z_i=(Z_{i1},\ldots,Z_{ip})'$, where the components $Z_{ij}\sim \mathcal{U}[0,3]$ are mutually independent. For Design 3, the conditioning vector is $W_i=(C_i',D_i')'$, where $C_i=(C_{i1},\ldots,C_{ip_c})'$ contains independent $\mathcal{U}[0,3]$ continuous components and $D_i=(D_{i1},\ldots,D_{ip_d})'$ is binary, as is common in empirical applications. Let $\Lambda(a)=1/(1+\exp(-a))$. Conditional on $C_i$, we generate
$$
    D_{im}\sim \operatorname{Bernoulli}(\pi_i),
    \qquad
    \pi_i=\Lambda\{0.75(C_{i1}-1.5)\},
    \qquad m=1,\ldots,p_d,
$$
which induces dependence between the continuous and discrete components. We set $(p,p_c,p_d)=(3,2,1),(5,3,2),(10,5,5)$, with $p=p_c+p_d$. Endogeneity is induced through correlated latent errors:
$$
    \begin{pmatrix}\eta_i \\ v_i\end{pmatrix}
    \sim
    \mathcal{N}\left(
    \begin{pmatrix}0\\0\end{pmatrix},
    \begin{pmatrix}1&0.8\\0.8&1\end{pmatrix}
    \right),
    \qquad
    u_i=v_i,\qquad
    \epsilon_i=\sigma_i\eta_i .
$$
We report both homoskedastic and heteroskedastic designs. In the homoskedastic case, $\sigma_i=1$. In the heteroskedastic case, $\sigma_i=\exp\{0.25(Z_{i1}-1.5)\}$ for Designs 1--2, while for Design 3, $\sigma_i=\exp\{0.25(C_{i1}-1.5)+0.15A_i\}$, with $A_i$ introduced below. We set $\theta_0=0.5$ and specify the three designs as follows:
\begin{itemize}
    \setlength{\itemsep}{0pt}
    \setlength{\parskip}{0pt}
    \setlength{\parsep}{0pt} 
    \item \textit{Design 1: Linear Structural Equation.}
    $$
        Y_i = \theta_0 X_i + \epsilon_i, 
        \qquad
        X_i = Z_{i1} + 0.5 Z_{i2}^{2} + u_i .
    $$
    \item \textit{Design 2: Nonlinear Structural Equation.}
    $$
        Y_i = \exp(1+\theta_0 X_i) + \epsilon_i, 
        \qquad
        X_i = 0.2 Z_{i1} + \cos(Z_{i2}Z_{i3}) + u_i .
    $$
    \item \textit{Design 3: Nonlinear Structural Equation with Mixed Instruments.}
    $$
        Y_i = \exp(1+\theta_0 X_i) + \epsilon_i,
        \qquad
        X_i = 0.2C_{i1}+0.5\cos(C_{i1}C_{i2})+A_i+B_i\cos(C_{i1}C_{i2})+u_i,
    $$
    where $A_i=p_d^{-1/2}\sum_{m=1}^{p_d}(D_{im}-\pi_i)$ and $B_i=p_d^{-1/2}\sum_{m=1}^{p_d}(-1)^{m+1}(D_{im}-\pi_i)$.
\end{itemize}

For estimation, we benchmark KMD against alternative methods categorized by their handling of endogeneity and nonlinearity. A useful feature of our reproducing kernel formulation is that the kernel can be tailored to the support of the conditioning variables. We use a Gaussian kernel with the median-heuristic bandwidth in Designs 1--2. In the mixed-data Design 3, KMD uses a product kernel $K(W_i,W_j)=K_C(C_i,C_j)K_D(D_i,D_j)$, where $K_C$ is Gaussian and $K_D$ is a coordinatewise soft categorical kernel with $k(d,d')=\mathbf{1}\{d=d'\}+0.5\,\mathbf{1}\{d\neq d'\}$.

\begin{itemize}
    \setlength{\itemsep}{0pt}
    \setlength{\parskip}{0pt}
    \setlength{\parsep}{0pt}
    
    \item \textit{Plug-in 2SLS:} We consider an oracle plug-in estimator based on the true conditional mean $\mathbb{E}[X| Z]$. While valid in linear designs, plug-in two-stage procedures are generally inconsistent in nonlinear structural models because of the forbidden-regression problem \citep{wooldridge2010econometric}.
    
    \item \textit{GMM Approaches:} We report \textit{Oracle GMM}, based on DGP-specific moment functions, and \textit{Sieve-GMM}, based on polynomial sieve moments with the efficient GMM weighting matrix.
    
    \item \textit{Smooth Minimum Distance (SMD):} Following \citet{lavergne2013smooth}, we include a smooth minimum distance benchmark based on the off-diagonal $U$-statistic criterion. We report two bandwidth choices: a fixed bandwidth $h=1$, denoted SMD-h1, and a dimension-dependent smoothing bandwidth $ h_{\dim}=n^{-1/(p+4)}$, denoted SMD-hdim.
    
    \item \textit{DML-OS:} To handle higher-dimensional nuisance components, we also report a cross-fitted orthogonal-score benchmark, motivated by \citet{chernozhukov2018double}, and implemented with random forests.\footnote{Implementation details for the DML-OS are provided in Section \ref{subsec:dml-details} of the Online Supplementary Material.}
\end{itemize}

\begin{table}[htbp]
  \centering
  \vspace{-2em}
  \caption{Estimation Performance of IV Estimators (Bias and RMSE)}
  \label{tab:iv_estimation}
  
  \footnotesize 
  \renewcommand{\arraystretch}{1.15} 
  
  % --- 表格定义 ---
  % 10列：3个标签列 + 7个数据列
  % S[table-format=-2.3]: 预留两位整数位，适应 Panel B 可能出现的较大数值
  \setlength{\tabcolsep}{0pt}
  \begin{tabular*}{\textwidth}{
    @{\hspace{1em}\extracolsep{\fill}}
    c c l                  
    *{7}{S[table-format=-2.3]} 
    @{\hspace{1em}}
  }
    \toprule
    & & & \multicolumn{7}{c}{Estimators} \\
    \cmidrule(lr){4-10} 
    
    % 表头
    {$n$} & {$p$} & {Metric$^{\dagger}$} & {2SLS$^*$} & {GMM$^*$} & {Sieve-GMM} & {SMD-h1} & {SMD-hdim} & {DML-OS} & {KMD} \\
    \midrule
    
    % =======================================================
    % Panel A: Linear Structural Equation (已填充数据)
    % =======================================================
    \multicolumn{10}{l}{\textit{Panel A: Linear Structural Equation}} \\
    \addlinespace[0.2em]
    \multirow{6}{*}{$100$}
  & \multirow{2}{*}{$3$} & Bias & 0.092 & 0.053 & 0.266 & -0.081 & -0.077 & -0.105 & 0.120 \\
  &                      & RMSE & 2.976 & 3.065 & 3.221 & 3.295 & 3.363 & 3.064 & 3.272 \\
  \addlinespace[0.3em]
  & \multirow{2}{*}{$5$} & Bias & -0.064 & -0.086 & 0.443 & -0.318 & -0.377 & -0.282 & -0.079 \\
  &                      & RMSE & 2.928 & 3.083 & 3.524 & 3.381 & 3.824 & 3.057 & 3.237 \\
  \addlinespace[0.3em]
  & \multirow{2}{*}{$10$} & Bias & 0.105 & 0.148 & 1.144 & -0.124 & -0.470 & -0.077 & 0.198 \\
  &                      & RMSE & 2.932 & 3.060 & 4.121 & 3.997 & 6.149 & 3.071 & 3.317 \\
\addlinespace[0.5em]
\multirow{6}{*}{$200$}
  & \multirow{2}{*}{$3$} & Bias & 0.040 & 0.037 & 0.136 & -0.125 & -0.118 & -0.072 & -0.025 \\
  &                      & RMSE & 2.106 & 2.125 & 2.188 & 2.295 & 2.318 & 2.123 & 2.266 \\
  \addlinespace[0.3em]
  & \multirow{2}{*}{$5$} & Bias & 0.062 & 0.069 & 0.335 & -0.072 & -0.039 & -0.058 & 0.005 \\
  &                      & RMSE & 2.167 & 2.194 & 2.366 & 2.389 & 2.524 & 2.230 & 2.389 \\
  \addlinespace[0.3em]
  & \multirow{2}{*}{$10$} & Bias & -0.014 & -0.007 & 0.552 & -0.166 & -0.220 & -0.133 & -0.064 \\
  &                      & RMSE & 2.102 & 2.126 & 2.458 & 2.747 & 3.966 & 2.147 & 2.408 \\

    \midrule
    
    % =======================================================
\multicolumn{10}{l}{\textit{Panel B: Nonlinear (Exponential) Structural Equation}} \\
\addlinespace[0.2em]
\multirow{6}{*}{$100$}
  & \multirow{2}{*}{$3$} & Bias & 9.537 & -0.091 & 0.503 & -0.585 & -0.562 & -0.530 & 0.001 \\
      &                      & RMSE & 10.248 & 1.656 & 1.850 & 2.289 & 2.395 & 2.310 & 1.930 \\
      \addlinespace[0.3em]
      & \multirow{2}{*}{$5$} & Bias & 9.382 & -0.130 & 0.892 & -0.743 & -1.174 & -0.612 & -0.022 \\
      &                      & RMSE & 10.036 & 1.621 & 1.903 & 2.572 & 4.805 & 2.238 & 1.946 \\
      \addlinespace[0.3em]
      & \multirow{2}{*}{$10$} & Bias & 9.766 & -0.005 & 1.766 & -1.986 & -19.627 & -0.819 & 0.083 \\
      &                      & RMSE & 10.429 & 1.643 & 2.550 & 12.529 & 69.540 & 6.210 & 2.030 \\
\addlinespace[0.5em]
\multirow{6}{*}{$200$}
  & \multirow{2}{*}{$3$} & Bias & 9.654 & -0.034 & 0.266 & -0.271 & -0.275 & -0.248 & -0.010 \\
      &                      & RMSE & 9.990 & 1.153 & 1.297 & 1.486 & 1.482 & 1.445 & 1.395 \\
      \addlinespace[0.3em]
      & \multirow{2}{*}{$5$} & Bias & 9.689 & 0.002 & 0.514 & -0.260 & -0.349 & -0.246 & 0.044 \\
      &                      & RMSE & 10.024 & 1.078 & 1.268 & 1.557 & 1.918 & 1.389 & 1.383 \\
      \addlinespace[0.3em]
      & \multirow{2}{*}{$10$} & Bias & 9.930 & 0.056 & 1.123 & -0.249 & -3.873 & -0.189 & 0.102 \\
      &                      & RMSE & 10.239 & 1.068 & 1.649 & 1.923 & 23.944 & 1.385 & 1.348 \\
  \midrule
      \multicolumn{10}{l}{\textit{Panel C: Nonlinear (Exponential) Structural Equation with Mixed Covariates}} \\
\addlinespace[0.2em]
\multirow{6}{*}{$100$}
 & \multirow{2}{*}{$3$} & Bias & 7.281 & 0.048 & 0.261 & -0.629 & -0.621 & -0.509 & 0.094 \\
      &                      & RMSE & 8.064 & 1.516 & 1.713 & 2.293 & 2.312 & 2.069 & 1.852 \\
      \addlinespace[0.3em]
      & \multirow{2}{*}{$5$} & Bias & 9.343 & 0.221 & 0.818 & -0.759 & -1.414 & -0.943 & 0.359 \\
      &                      & RMSE & 10.405 & 1.738 & 1.968 & 2.835 & 8.190 & 6.091 & 1.926 \\
      \addlinespace[0.3em]
      & \multirow{2}{*}{$10$} & Bias & 8.372 & 0.218 & 1.501 & -3.600 & -27.061 & -1.254 & 1.196 \\
      &                      & RMSE & 10.172 & 1.773 & 2.372 & 18.851 & 85.758 & 8.697 & 1.936 \\
\addlinespace[0.5em]
\multirow{6}{*}{$200$}
  & \multirow{2}{*}{$3$} & Bias & 7.389 & -0.019 & 0.133 & -0.288 & -0.302 & -0.256 & 0.046 \\
      &                      & RMSE & 7.790 & 1.033 & 1.143 & 1.380 & 1.340 & 1.243 & 1.292 \\
      \addlinespace[0.3em]
      & \multirow{2}{*}{$5$} & Bias & 9.469 & 0.133 & 0.410 & -0.349 & -0.478 & -0.319 & 0.209 \\
      &                      & RMSE & 10.017 & 1.232 & 1.357 & 1.682 & 2.076 & 1.587 & 1.431 \\
      \addlinespace[0.3em]
      & \multirow{2}{*}{$10$} & Bias & 8.263 & 0.118 & 0.896 & -0.422 & -4.658 & -0.229 & 0.764 \\
      &                      & RMSE & 9.122 & 1.141 & 1.467 & 2.273 & 28.903 & 1.576 & 1.367 \\
\bottomrule
  \end{tabular*}
  
  \par
  \vspace{0.5em}
  \begin{minipage}{\textwidth}
    \footnotesize
    \textit{Note:} $^{\dagger}$ All values are multiplied by $100$ for readability. 
  \end{minipage}
\end{table}

Table \ref{tab:iv_estimation} reports the estimation results for the IV designs, with all values multiplied by 100 for readability. In the linear structural setting (Panel A), KMD performs comparably to the oracle IV benchmarks. Its bias is small across all sample sizes and dimensions, and its RMSE is of the same order as oracle 2SLS, oracle GMM, and DML-OS. As the dimension increases, KMD remains stable, whereas the sieve- and smoothing-based alternatives exhibit somewhat larger RMSEs across several designs.

Panels B and C show that the advantages of KMD become more visible in nonlinear IV settings. The oracle plug-in 2SLS estimator exhibits large biases in both panels, reflecting the forbidden-regression problem in nonlinear structural models, while oracle GMM serves as an infeasible benchmark based on correctly specified moment functions. Among the feasible alternatives, KMD exhibits low bias and a stable RMSE across sample sizes and dimensions. In Panel B, KMD is broadly comparable to the strongest feasible competitors at low dimensions and becomes more stable as $p$ increases. The SMD estimator with a fixed bandwidth performs reasonably well in several cases; however, the dimension-dependent smoothing bandwidth deteriorates markedly in higher-dimensional designs, illustrating the finite-sample curse of dimensionality faced by local-smoothing-based criteria. In Panel C, the mixed-data design further highlights this stability: KMD remains well behaved when the conditioning variables include both continuous and discrete instruments, whereas DML-OS and the SMD estimators exhibit larger finite-sample variability in several higher-dimensional mixed designs.

We conclude by evaluating the specification test in instrumental variable models. Let $W_i=Z_i$ in Designs 1--2 and $W_i=(C_i',D_i')'$ in Design 3. The null hypothesis asserts the validity of the conditional moment restriction:
\begin{equation*}
    \mathbb{H}_0: \exists \theta_0 \in \Theta \text{ such that } \mathbb{E}[\rho(Y_i,X_i,\theta_0) \mid W_i] = 0 \text{ a.s.,}
\end{equation*}
where $\rho(Y_i,X_i,\theta)=Y_i-\theta X_i$ in Design I and $\rho(Y_i,X_i,\theta)=Y_i-\exp(1+\theta X_i)$ in Designs II--III. To evaluate power, we introduce perturbations of magnitude $\delta=0.1$ that target distinct misspecifications across the three designs. For the linear framework, we generate data under a quadratic alternative to introduce functional-form misspecification: $Y_i=\theta_0X_i+\delta X_i^2+\epsilon_i$. For the continuous nonlinear framework, we simulate an exclusion-type violation by allowing the instrument $Z_{i1}$ to enter the structural equation: $Y_i=\exp(1+\theta_0X_i+\delta Z_{i1})+\epsilon_i$. For the mixed-data nonlinear framework, we perturb the structural equation through the mixed component $A_i$: $Y_i=\exp(1+\theta_0X_i+\delta A_i)+\epsilon_i$.

\begin{table}[htbp]
  \centering
  %\vspace{-4em}
  \caption{Empirical Rejection Rates for IV Specification Tests ($\alpha=0.05$)}
  \label{tab:iv_test}
  
  \small
  \renewcommand{\arraystretch}{1.15}
  \newcolumntype{Y}{>{\centering\arraybackslash}X}
  
  \begin{tabularx}{\textwidth}{cc YYY YYY}
    \toprule
    & & \multicolumn{3}{c}{Homoskedastic Errors} 
      & \multicolumn{3}{c}{Heteroskedastic Errors} \\
    \cmidrule(lr){3-5} \cmidrule(lr){6-8}
    $n$ & DGP 
      & $d=3$ & $d=5$ & $d=10$
      & $d=3$ & $d=5$ & $d=10$ \\
    \midrule
    
    \multicolumn{8}{l}{\textit{Panel A: Linear structural model with continuous instruments}} \\
    \multirow{2}{*}{$100$}
      & Size ($\mathbb H_0$)  
        & $0.043$ & $0.040$ & $0.037$
        & $0.040$ & $0.052$ & $0.031$ \\
      & Power ($\mathbb H_1$) 
        & $0.653$ & $0.518$ & $0.313$
        & $0.620$ & $0.456$ & $0.253$ \\
    \addlinespace[0.3em]
    
    \multirow{2}{*}{$200$}
      & Size ($\mathbb H_0$)  
        & $0.052$ & $0.047$ & $0.040$
        & $0.057$ & $0.040$ & $0.043$ \\
      & Power ($\mathbb H_1$) 
        & $0.931$ & $0.885$ & $0.723$
        & $0.901$ & $0.845$ & $0.686$ \\
    \addlinespace[0.3em]
    
    \multirow{2}{*}{$400$}
      & Size ($\mathbb H_0$)  
        & $0.056$ & $0.039$ & $0.045$
        & $0.048$ & $0.045$ & $0.044$ \\
      & Power ($\mathbb H_1$) 
        & $0.999$ & $0.997$ & $0.974$
        & $0.999$ & $0.997$ & $0.978$ \\
    
    \addlinespace[0.7em]
    \multicolumn{8}{l}{\textit{Panel B: Nonlinear structural model with continuous instruments}} \\
    \multirow{2}{*}{$100$}
      & Size ($\mathbb H_0$)  
        & $0.050$ & $0.046$ & $0.032$
        & $0.048$ & $0.031$ & $0.033$ \\
      & Power ($\mathbb H_1$) 
        & $0.817$ & $0.713$ & $0.508$
        & $0.756$ & $0.639$ & $0.429$ \\
    \addlinespace[0.3em]
    
    \multirow{2}{*}{$200$}
      & Size ($\mathbb H_0$)  
        & $0.056$ & $0.057$ & $0.040$
        & $0.054$ & $0.047$ & $0.035$ \\
      & Power ($\mathbb H_1$) 
        & $0.982$ & $0.960$ & $0.890$
        & $0.979$ & $0.951$ & $0.858$ \\
    \addlinespace[0.3em]
    
    \multirow{2}{*}{$400$}
      & Size ($\mathbb H_0$)  
        & $0.051$ & $0.058$ & $0.042$
        & $0.048$ & $0.055$ & $0.037$ \\
      & Power ($\mathbb H_1$) 
        & $0.998$ & $1.000$ & $0.997$
        & $1.000$ & $1.000$ & $0.994$ \\
    
    \addlinespace[0.7em]
    \multicolumn{8}{l}{\textit{Panel C: Mixed nonlinear structural model with continuous and discrete instruments}} \\
    \multirow{2}{*}{$100$}
      & Size ($\mathbb H_0$)  
        & $0.053$ & $0.048$ & $0.032$
        & $0.044$ & $0.041$ & $0.034$ \\
      & Power ($\mathbb H_1$) 
        & $0.486$ & $0.559$ & $0.355$
        & $0.409$ & $0.482$ & $0.296$ \\
    \addlinespace[0.3em]
    
    \multirow{2}{*}{$200$}
      & Size ($\mathbb H_0$)  
        & $0.053$ & $0.048$ & $0.042$
        & $0.055$ & $0.054$ & $0.032$ \\
      & Power ($\mathbb H_1$) 
        & $0.856$ & $0.914$ & $0.761$
        & $0.863$ & $0.883$ & $0.730$ \\
    \addlinespace[0.3em]
    
    \multirow{2}{*}{$400$}
      & Size ($\mathbb H_0$)  
        & $0.046$ & $0.054$ & $0.042$
        & $0.048$ & $0.052$ & $0.033$ \\
      & Power ($\mathbb H_1$) 
        & $0.994$ & $0.999$ & $0.991$
        & $0.999$ & $0.997$ & $0.992$ \\
    
    \bottomrule
  \end{tabularx}
\end{table}

\FloatBarrier

Table \ref{tab:iv_test} reports the empirical rejection rates for the IV specification tests under both homoskedastic and heteroskedastic errors. Under the null, the test controls size well across all three designs, with rejection rates generally close to the nominal 5\% level; the mild undersizing observed in some high-dimensional cases is small and does not worsen under heteroskedasticity. Under the alternatives, the test shows strong power against the three types of misspecification considered: functional-form misspecification in Panel A, exclusion-type violations in Panel B, and mixed-data nonlinear misspecification in Panel C. Power is lower in smaller samples and tends to decrease with the dimension of the conditioning variables, but it increases rapidly with $n$. By $n=400$, rejection rates are close to one across all panels and error designs, indicating that the test remains effective in detecting violations of the IV conditional moment restrictions.

In summary, the simulation results underscore the KMD framework's universality and robustness across both estimation and testing tasks. In terms of estimation, the proposed estimator exhibits consistent stability across diverse settings, demonstrating distinct advantages particularly in scenarios characterized by nonlinearity and endogeneity. Regarding specification testing, the KMD statistic maintains accurate size control and satisfactory power, whereas the alternative test of \cite{dominguez2004consistent} suffers significantly more from dimensionality; see Section \ref{subsec:dl-test} of the Online Supplementary Material for a detailed comparison.

\section{Empirical Application}
\label{sec:real}

To illustrate the practical relevance of the proposed KMD framework, we revisit the Engel curve application of \citet{blundell2007semi} using their pre-processed 1995 UK Family Expenditure Survey (FES) sample. The data consist of 1,655 households, restricted to married couples with an employed head aged 25 to 55, and are treated as i.i.d. following \citet{blundell2007semi}. The sample contains budget shares for seven commodity groups ($w_{il}$ for $l=1,\dots,7$: food, catering, alcohol, fuel, motor, fares, and leisure), together with log total expenditure ($\ln x_i$), log gross earnings ($\ln v_i$), and the number of children ($k_i$). The data are publicly available as the \texttt{Engel95} dataset in the R package \texttt{np}.

Following \citet{blundell2007semi}, we consider the Engel curve specification
\begin{equation}
    w_{il} = h_l(u_i) + \gamma_l k_i + \epsilon_{il}, \quad l = 1, \dots, 7,
\end{equation}
where $u_i=\ln x_i-\theta k_i$ denotes log effective total expenditure, $\theta$ is a common equivalence-scale parameter, and $\gamma_l$ captures commodity-specific demographic effects. Whereas \citet{blundell2007semi} leave $h_l(\cdot)$ unrestricted and estimate it nonparametrically, our interest lies in testing economically motivated parametric restrictions on Engel curves. This is relevant because empirical demand analysis often relies on parsimonious functional forms, even though their adequacy is ultimately an empirical question. We therefore consider two leading specifications for $h_l(\cdot)$: the linear Working--Leser form \citep{working1943statistical,leser1963forms}, $h_l(u_i)=\alpha_l+\beta_l u_i$, and the QUAIDS form \citep{banks1997quadratic}, $h_l(u_i)=\alpha_l+\beta_l u_i+\lambda_l u_i^2$.\footnote{In this cross-sectional setting, households face common prices, so the price-index terms in the original QUAIDS specification are absorbed into the intercept $\alpha_l$.}

To address the endogeneity of total expenditure, we use the log of gross earnings as an instrument. Consequently, the null hypothesis of correct model specification is formulated as the following CMR:
\begin{equation}
    \mathbb{H}_0: \mathbb{E}[{\epsilon}_{i} \mid {Z}_i] = \mathbb{E}[{w}_{i} - {h}(u_i; \theta) - {\Gamma} k_i \mid {Z}_i] = {0},
\end{equation}
where ${Z}_i$ collects the instrument ($\ln v_i$) and demographic controls ($k_i$). The vector ${w}_i$ contains budget shares, ${h}(\cdot)$ denotes the vector of shape functions, and ${\Gamma} = (\gamma_1, \dots, \gamma_7)^\top$ represents the vector of demographic coefficients. Based on this framework, we test two canonical specifications: the linear \textit{Working--Leser} form\footnote{Note that in the linear specification $h_l(u_i) = \alpha_l + \beta_l(\ln x_i - \theta k_i)$, the equivalence scale parameter $\theta$ is not separately identified as it is collinear with the direct demographic effect $\gamma_l$. In this case, $\theta$ is absorbed into the reduced-form coefficient on $k_i$.}, and the \textit{Quadratic Almost Ideal Demand System (QUAIDS)}.

We implement the KMD test using a Gaussian kernel, with the bandwidth selected via the median heuristic and critical values obtained from $B=2999$ multiplier bootstrap replications. The results decisively favor the quadratic specification: the linear Working--Leser form is rejected at the 5\% level ($p=0.046$), indicating a misspecified functional form. In contrast, the quadratic QUAIDS specification fails to reject the null hypothesis ($p=0.703$), confirming that the inclusion of the quadratic term is sufficient to capture the structural nonlinearities in the demand system.

Given that the QUAIDS specification is not rejected by the data, we report the parameter estimates from this preferred model. We first identify the common equivalence scale parameter shared across the demand system, which we estimate as $\hat{\theta}_1 = 0.049$ (standard error $0.103$). While the positive sign is consistent with theoretical expectations regarding household composition costs, the estimate is not statistically significant in this sample. In light of this, we perform a robustness check by re-estimating the model and repeating the specification tests under the restriction $\theta=0$. The results, detailed in Section \ref{sec:add-empirical} of the Online Supplementary Material, indicate that our primary findings regarding the functional form specification remain robust to the exclusion of the equivalence scale parameter. Table \ref{tab:quaids_estimates} summarizes the remaining commodity-specific coefficients and their corresponding standard errors. 

\begin{table}[htbp]
  \centering
  \caption{Parameter Estimates for the Quadratic (QUAIDS) Specification}
  \label{tab:quaids_estimates}
  
  \small
  \renewcommand{\arraystretch}{1.25} % 适当增加行高
  
  % --- 表格设置 ---
  \setlength{\tabcolsep}{0pt} 
  
  \begin{tabular*}{\textwidth}{
    @{\extracolsep{\fill}} 
    l 
    c 
    c 
    c 
    c 
    }
    \toprule
    Commodity & $\alpha_l$ & $\beta_l$ & $\lambda_l$ & $\gamma_l$ \\
    \midrule
    Food      & $-0.319$ $(0.640)$ & $\phantom{-}0.259$ $(0.232)$ & $-0.031$ $(0.021)$ & $\phantom{-}0.048$ $(0.009)$ \\
    Catering  & $-0.312$ $(0.437)$ & $\phantom{-}0.138$ $(0.160)$ & $-0.012$ $(0.015)$ & $-0.005$ $(0.003)$ \\
    Alcohol   & $-0.524$ $(0.453)$ & $\phantom{-}0.234$ $(0.164)$ & $-0.023$ $(0.015)$ & $-0.023$ $(0.004)$ \\
    Fuel      & $\phantom{-}1.055$ $(0.267)$ & $-0.330$ $(0.097)$ & $\phantom{-}0.027$ $(0.009)$ & $\phantom{-}0.009$ $(0.005)$ \\
    Motor     & $-2.041$ $(0.728)$ & $\phantom{-}0.842$ $(0.264)$ & $-0.080$ $(0.024)$ & $-0.020$ $(0.006)$ \\
    Fares     & $\phantom{-}0.822$ $(0.376)$ & $-0.310$ $(0.138)$ & $\phantom{-}0.030$ $(0.013)$ & $-0.007$ $(0.003)$ \\
    Leisure   & $\phantom{-}1.265$ $(1.030)$ & $-0.564$ $(0.380)$ & $\phantom{-}0.065$ $(0.035)$ & $-0.010$ $(0.015)$ \\
    \bottomrule
  \end{tabular*}
  
  \par
  \vspace{0.5em}
  \begin{minipage}{\textwidth}
    \footnotesize
    \textit{Notes:} The table reports the commodity-specific coefficients with standard errors in parentheses. The parameters correspond to the specification $w_{il} = \alpha_l + \beta_l u_i + \lambda_l u_i^2 + \gamma_l k_i + \epsilon_{il}$.
  \end{minipage}
\end{table}

Inspection of Table \ref{tab:quaids_estimates} indicates that the standard errors for the individual shape parameters $\beta_l$ and $\lambda_l$ are relatively large in some equations. This pattern likely reflects the structural multicollinearity between the linear regressor $u_i$ and its quadratic counterpart $u_i^2$. To assess the influence of total expenditure, we further conduct equation-specific Wald tests for the joint null hypothesis $\mathbb{H}_0: \beta_l = \lambda_l = 0$, utilizing the established asymptotic normality of the KMD estimator. The results, summarized in Table \ref{tab:wald_tests}, show a decisive rejection of the null hypothesis at the 1\% significance level for six out of seven commodities. This finding confirms that, despite inflated standard errors for individual coefficients, the expenditure terms are jointly significant in characterizing the demand system. The sole exception is catering, where the expenditure elasticity does not appear statistically distinguishable from zero.

% --- Table: Wald Test Results (Wide Format) ---
\begin{table}[htbp]
  \centering
  \caption{Wald Tests for Joint Significance of Expenditure Terms ($\mathbb{H}_0: \beta_l = \lambda_l = 0$)}
  \label{tab:wald_tests}
  \small
  \renewcommand{\arraystretch}{1.2}
  \setlength{\tabcolsep}{0pt}
  
  \begin{tabular*}{\textwidth}{@{\extracolsep{\fill}}l ccccccc}
    \toprule
    & Food & {Catering} & {Alcohol} & {Fuel} & {Motor} & {Fares} & {Leisure} \\
    \midrule
    $p$-value & $0.000^{***}$ & $0.435$ & $0.006^{**}$ & $0.000^{***}$ & $0.000^{***}$ & $0.008^{**}$ & $0.000^{***}$ \\
    \bottomrule
  \end{tabular*}
  \par
  \vspace{0.5em}
  \begin{minipage}{\textwidth}
    \footnotesize
    \textit{Note:} Significance levels: $^{*} p<0.05$, $^{**} p<0.01$, $^{***} p<0.001$.
  \end{minipage}
\end{table}

\begin{figure*}[htbp]
    \centering
    \includegraphics[width=0.9\textwidth]{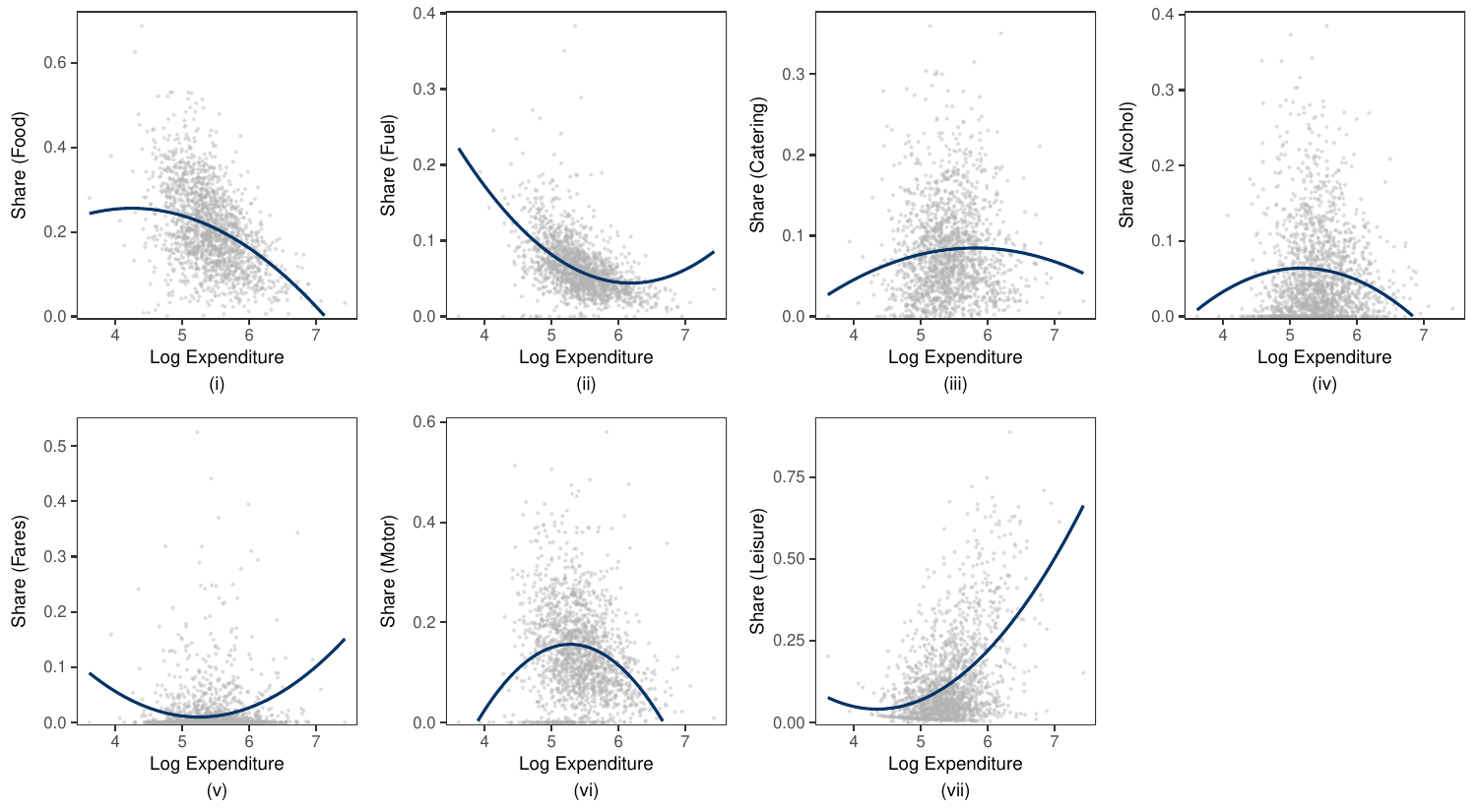}
    \caption{Estimated QUAIDS Engel curves for seven commodity groups. Grey dots represent observed household shares, while solid blue-black lines indicate the fitted quadratic specification. Sub-labels (i) through (vii) refer to food, fuel, catering, alcohol, fares, motor, and leisure, respectively.}
    \label{fig:quaid_results}
\end{figure*}
\FloatBarrier

Finally, Figure \ref{fig:quaid_results} illustrates the economic implications of the estimated model by plotting the predicted Engel curves over the support of the data. The recovered shapes are generally consistent with established demand theory. The curves for food and fuel exhibit a predominant downward trend consistent with Engel's Law, although the fuel curve suggests a mild reversal at the upper tail of the expenditure distribution. In contrast, the curves for motor and alcohol exhibit a distinct inverted-U-shaped trajectory, capturing the ``rank-3'' flexibility inherent in the QUAIDS specification. Notably, this non-monotonic pattern for alcohol is consistent with the empirical findings of \cite{banks1997quadratic}. Furthermore, leisure exhibits a convex, accelerating upward trend, indicative of a luxury good. Collectively, these heterogeneous patterns highlight the limitations of the linear form and support adopting the quadratic specification.

\section{Conclusion}
\label{sec:con}
This paper establishes a unified Kernel Minimum Distance (KMD) framework for estimating and testing models defined by conditional moment restrictions. By embedding conditional moments into a Reproducing Kernel Hilbert Space, we construct a tractable, closed-form $V$-statistic objective function. The proposed framework delivers a $\sqrt{n}$-consistent, asymptotically normal estimator and yields a naturally associated omnibus specification test based directly on the minimized objective value. We establish the asymptotic properties of our proposed tests under the null hypothesis, the fixed alternative, and a sequence of local alternatives converging to the null at the parametric rate $n^{-1/2}$. A distinguishing feature of our approach is that the estimation effect is inherently captured through a projected kernel structure, thereby obviating the need for auxiliary orthogonalization. Supported by a computationally efficient multiplier bootstrap, the proposed framework is practically implementable; extensive simulation results demonstrate robust performance in both estimation and testing across diverse settings. %including nonlinear and endogenous models, and are applicable to multivariate scenarios.

%Several extensions are possible. First, the framework can be extended to settings with dependent data, including time series models, in which asymptotic analysis requires establishing limit theorems for $V$-statistics under dependence. Second, the choice of the reproducing kernel is expected to play an important role in finite-sample performance. Future research may explore data-driven kernel selection procedures or adaptive weighting schemes to improve both the efficiency of parameter estimation and the power of the associated specification test. We leave these interesting topics for future research.

%\clearpage

%\newpage
%\bibliographystyle{plainnat}
%\bibliography{ref}
\putbib % <-- 2. 放置当前单元的参考文献

\end{bibunit}
% 附录开始
\clearpage

\begin{bibunit}
\appendix
\renewcommand{\thesection}{ \Alph{section}}
\renewcommand{\thesubsection}{\thesection.\arabic{subsection}}
\counterwithin{equation}{section}
\counterwithin{figure}{section}
\counterwithin{table}{section}
\counterwithin{lemma}{section}

% 页码从1重新开始
\setcounter{page}{1}
\pagenumbering{arabic}

\begin{center}
    {\LARGE Online Supplementary Material for}\\[1ex]
    {\Large ``Kernel Minimum Distance Estimation and Testing with
Conditional Moment Restrictions: A Unified Framework''}\\[3ex]
    Yuhao Li \quad Haokun Lu \quad Xiaojun Song\\[1ex]
    %\textsuperscript{1}School of Statistics, Renmin University of China\\
    %\textsuperscript{2}Guanghua School of Management, Peking University\$$2ex]
\end{center}

\vspace{1em}

The Online Supplementary Material is organized into four sections. Section \ref{sec:proofs} provides the proofs of all main results. Section \ref{sec:discussion} contains additional methodological discussions, including the two-step efficient refinement and the choice between $U$- and $V$-test statistics. Section \ref{sec:add-simu} presents additional simulation results, and Section \ref{sec:add-empirical} reports additional empirical results.
\section{Proofs}
\label{sec:proofs}
\subsection{Proofs for Section \ref{sec:estimation}}
\paragraph{Proof of Lemma \ref{lemma:identification}}
\begin{proof}
For each component $l=1,\ldots,q$, define the finite signed Borel measure
$$
\nu_{\theta,l}(A)
\coloneqq
\mathbb E\!\left[\rho_l(Z,\theta)\mathbf 1\{X\in A\}\right],
\qquad A\in\mathcal B(\mathcal X).
$$
This measure is finite because Assumption \ref{ass:cons}(ii) implies
$\mathbb E|\rho_l(Z,\theta)|<\infty$. Moreover, by the law of iterated
expectations, $\nu_{\theta,l}$ is absolutely continuous with respect to
$P_X$, with Radon--Nikodym derivative
$m_l(X,\theta)\coloneqq \mathbb E[\rho_l(Z,\theta)\mid X]$.

By the diagonal structure $K(x,\widetilde x)=k(x,\widetilde x)\mathbf I_q$
and the reproducing property,
$$
\|\mu(\theta)\|_{\mathcal H_K}^2
=
\sum_{l=1}^q
\iint_{\mathcal X\times\mathcal X}
k(x,\widetilde x)\,
d\nu_{\theta,l}(x)d\nu_{\theta,l}(\widetilde x).
$$
Equivalently,
$$
\|\mu(\theta)\|_{\mathcal H_K}^2
=
\sum_{l=1}^q
\mathbb E\!\left[
m_l(X,\theta)k(X,\widetilde X)m_l(\widetilde X,\theta)
\right],
$$
where $\widetilde X$ is an independent copy of $X$.

If $\mathbb E[\rho(Z,\theta)\mid X]=0$ a.s., then
$m_l(X,\theta)=0$ a.s. for all $l$, and hence
$\nu_{\theta,l}=0$ for all $l$. Therefore
$\|\mu(\theta)\|_{\mathcal H_K}=0$.

Conversely, suppose $\|\mu(\theta)\|_{\mathcal H_K}=0$. Since $k$ is positive
definite, each term in the preceding sum is nonnegative. Hence every component
satisfies
$$
\iint_{\mathcal X\times\mathcal X}
k(x,\widetilde x)\,
d\nu_{\theta,l}(x)d\nu_{\theta,l}(\widetilde x)
=0,
\qquad l=1,\ldots,q.
$$
By the signed-measure injectivity condition in Assumption \ref{ass:cons}(iv),
this implies $\nu_{\theta,l}=0$ for every $l$. Since
$\nu_{\theta,l}\ll P_X$ with Radon--Nikodym derivative
$m_l(\cdot,\theta)$, it follows that
$m_l(X,\theta)=0$ $P_X$-a.s. for all $l$. Therefore
$\mathbb E[\rho(Z,\theta)\mid X]=0$ a.s., completing the proof.
\end{proof}
\paragraph{Proof of Lemma \ref{lemma:theta_consist}}
\begin{proof}
Let $S_i=(X_i,Z_i)$ and define 
$$
f(S_i,S_j,\theta)
=\rho(Z_i,\theta)'K(X_i,X_j)\rho(Z_j,\theta).
$$
By Assumption \ref{ass:cons}(ii) and (iv), $f(S_i,S_j,\theta)$ is almost surely continuous in $\theta\in\Theta$. Moreover,
\begin{align*}
\mathbb E\!\left[\sup_{\theta\in\Theta}\|f(S_i,S_j,\theta)\|\right]
&=\mathbb E\!\left[\sup_{\theta\in\Theta}\|\rho(Z_i,\theta)'K(X_i,X_j)\rho(Z_j,\theta)\|\right] \\
&\le \sup_{x\in\mathcal X}k(x,x)\,\Big(\mathbb E[\sup_{\theta\in\Theta}\|\rho(Z,\theta)\|]\Big)^2 <\infty, \\
\mathbb E\!\left[\sup_{\theta\in\Theta}\|f(S_i,S_i,\theta)\|\right]
&\le \sup_{x\in\mathcal X}k(x,x)\,\mathbb E[\sup_{\theta\in\Theta}\|\rho(Z,\theta)\|^2] <\infty.
\end{align*}
Hence, by Lemma 8.5 of \cite{newey1994large}, the population criterion
$$
Q(\theta)\coloneqq\mathbb E[f(S_i,S_j,\theta)]
$$
is continuous on $\Theta$ and satisfies
$$
\sup_{\theta\in\Theta}\|\widehat Q_n(\theta)-Q(\theta)\|\xrightarrow{p}0.
$$

By Assumption \ref{ass:cons}, the kernel $K$ is ISPD, implying $Q(\theta)\ge 0$ for all $\theta\in\Theta$. By the identification condition in Assumption \ref{ass:cons}(iii) and the Law of Iterated Expectations, $-Q(\theta)$ attains a unique maximizer at $\theta_0$. Since $Q(\theta)$ is continuous and the uniform convergence above holds, Theorem 2.1 of \cite{newey1994large} yields
$$
\widehat\theta_n\xrightarrow{p}\theta_0.
$$
This completes the proof.
\end{proof}

\paragraph{Proof of Lemma \ref{lemma:theta_lr}}
\begin{proof}

By Assumption \ref{ass:an}, $\nabla_\theta \widehat Q_n(\widehat\theta_n)$ admits a mean‐value expansion around $\theta_0$:
$$
\nabla_\theta\widehat Q_n(\widehat\theta_n)
=\nabla_\theta\widehat Q_n(\theta_0)
+\widehat H_n(\bar\theta_n)(\widehat\theta_n-\theta_0),
$$
where $\bar\theta_n$ lies between $\theta_0$ and $\widehat\theta_n$.  

Since
$$
\nabla_\theta f(S_i,S_j,\theta)
=k(X_i,X_j)\big[g(Z_i,\theta)'\rho(Z_j,\theta)
+g(Z_j,\theta)'\rho(Z_i,\theta)\big],
$$
we obtain
$$
\nabla_\theta\widehat Q_n(\theta)
=\frac{2}{n^2}\sum_{i,j} k(X_i,X_j) g(Z_i,\theta)'\rho(Z_j,\theta),
$$
and the sample Hessian is
$$
\widehat H_n(\theta)
=\frac{2}{n^2}\sum_{i,j}k(X_i,X_j)
\!\left[
\sum_{l=1}^q \rho_l(Z_j,\theta)\nabla^2_{\theta\theta'}\rho_l(Z_i,\theta)
+g(Z_i,\theta)'g(Z_j,\theta)
\right].
$$
First, we show that $\widehat H_n(\bar\theta_n)\xrightarrow{p}2\Delta(\theta)$. To see this, note that $H_n(\theta)$ is a $V-$statistic defined by the kernel
$$\phi(S_i,S_j,\theta)=2\cdot k\left(X_i, X_j\right)\left[\sum_{l=1}^q \rho_l\left(Z_j, \theta\right) \nabla_{\theta \theta^{\prime}}^2 \rho_l\left(Z_i, \theta\right)+g\left(Z_i, \theta\right)' g\left(Z_j, \theta\right)\right],$$
with $\phi(S_i,S_j,\theta)$ almost surely continuous in $\theta\in\mathcal{N}(\theta_0)$. Moreover, direct bound by triangular inequalities yields that $\mathbb E\!\left[\sup_{\theta\in\mathcal{N}(\theta_0)}\|\phi(S_i,S_j,\theta)\|\right]<\infty$ and $\mathbb E\!\left[\sup_{\theta\in\mathcal{N}(\theta_0)}\|\phi(S_i,S_i,\theta)\|\right]<\infty$. By Lemma 8.5 of \cite{newey1994large} again, the population criterion
$$
H(\theta)\coloneqq\mathbb E\left[\phi(S_i,S_j,\theta)\right]
$$
is continuous on $\Theta$ and satisfies
$$
\sup_{\theta\in\mathcal{N}(\theta_0)}\|\widehat H_n(\theta)-H(\theta)\|\xrightarrow{p}0.
$$
Furthermore, note that when $\theta=\theta_0$, $\E\left[\rho_l(Z,\theta_0)|X\right]\equiv0$ for $l=1,\cdots,q$. Thus, 
\begin{align*}
    \E\left[\phi(S_i,S_j,\theta_0)\right]&=2\Delta(\theta_0)+2\E\left\{k\left(X_i, X_j\right)\left[\sum_{l=1}^q \rho_l\left(Z_j, \theta\right) \nabla_{\theta \theta^{\prime}}^2 \rho_l\left(Z_i, \theta\right)\right]\right\}\\
    &=2\Delta(\theta_0)+2\E\left\{k(X_i,X_j)\sum_{l=1}^q\E\left[\rho_l\left(Z_j,\theta_0\right)|X_j\right]\nabla_{\theta \theta^{\prime}}^2 \rho_l\left(Z_i, \theta\right)\right\}=2\Delta(\theta_0).
\end{align*}
Since $\widehat{\theta}_n\xrightarrow{p}\theta_0$ and $\bar\theta$ is between $\theta$ and $\widehat{\theta}_n$, by triangular inequality,
\begin{align*}
    \norm{\widehat H_n(\bar\theta_n)-2\Delta(\theta_0)}&\le\norm{\widehat{H}_n(\bar\theta_n)-H(\bar\theta_n)}+\norm{H(\bar\theta_n)-H(\theta_0)}\\
    &\le \sup_{\theta\in\mathcal{N}(\theta_0)}\norm{\widehat{H}_n(\theta)-H(\theta)}+\norm{H(\bar\theta_n)-H(\theta_0)}\xrightarrow{p}0,
\end{align*}
where the second term follows from the Continuous Mapping Theorem. Thus, $\widehat H_n(\bar\theta_n)\xrightarrow{p}2\Delta(\theta)$.

Next, note that the score $\nabla_\theta\widehat{Q}_n\left(\theta_0\right)$. is also a $V$-statistic with symmetrized kernel
$$
\varrho(S_i,S_j,\theta)=k(X_i,X_j)\left[g(Z_i,\theta)'\rho(Z_j,\theta)+g(Z_j,\theta)'\rho(Z_i,\theta)\right].
$$
Assumption \ref{ass:an}(ii) ensures
$$
\mathbb E[\varrho(S_i,S_i,\theta_0)^2]
\le \left(\sup_{x\in\mathcal{X}}k(x,x)\right)^2\,
\E\left[\|g(Z_i,\theta_0)'\rho(Z_i,\theta_0)\|^2\right]\le\E\left[\|\rho(Z_i,\theta_0)\|^4\right]\E\left[\|g(Z_i,\theta_0)\|^4\right]<\infty.
$$
Thus, by the Hájek projection result for $V$–statistics (e.g. \cite{serfling1980approximation}, §5.7.3),
$$
    \label{eq:nabla_Q}
    \nabla_\theta\widehat Q_n(\theta_0)
=\frac{2}{n}\sum_{i=1}^n G(X_i,\theta_0)^\prime\rho(Z_i,\theta_0)+o_p(n^{-1/2}),
$$
because
\begin{align*}
    \E\left[\varrho(S_i,S_j,\theta_0)|S_i\right]&=k(X_i,X_j)g(Z_i,\theta_0)\E\left[\rho(Z_j,\theta_0)|X_j\right]+\E\left[g(Z_j,\theta_0)'K(X_i,X_j)|X_i\right]\rho(Z_i,\theta_0)\\
    &=G(X_i,\theta_0)^\prime\rho(Z_i,\theta_0).
\end{align*}

The first-order condition implies  $\nabla_\theta\left(\widehat{Q}_n(\widehat{\theta}_n)\right)=\mathbf{0}$. Thus,
$$
\frac{2}{n}\sum_{i=1}^n G(X_i,\theta_0)^\prime\rho(Z_i,\theta_0)+o_p\left(\frac{1}{\sqrt{n}}\right)+\left(2\Delta(\theta_0)+o_p(1)\right)\left(\widehat{\theta}_n-\theta_0\right)=\mathbf{0},
$$
Multiplying both side by $\sqrt{n}$, by Slutsky's Theorem, Continuous Mapping Theorem and Assumption \ref{ass:an} (iii),
$$
    \sqrt{n}\left(\widehat{\theta}_n-\theta_0\right)=-\Delta\left(\theta_0\right)^{-1} \frac{1}{\sqrt{n}} \sum_{i=1}^n G\left(X_i, \theta_0\right)^\prime \rho\left(Z_i, \theta_0\right)+o_p(1).
$$
This completes the proof.

\end{proof}

\paragraph{Proof of Theorem \ref{thm:theta_an}}

\begin{proof}
By Lemma \ref{lemma:theta_lr} we have
$$
\sqrt n(\widehat\theta_n-\theta_0)
=-\Delta(\theta_0)^{-1}\frac{1}{\sqrt n}\sum_{i=1}^n\psi_i+o_p(1),
\qquad
\psi_i\coloneq G(X_i,\theta_0)^\prime\rho(Z_i,\theta_0).
$$
Under Assumption \ref{ass:cons}(iii), $\E[\rho(Z,\theta_0)\mid X]=0$, hence $\E[\psi_i]=0$. By Assumption \ref{ass:an}(ii), the components of $\psi_i$ have finite second moments and
\begin{align*}
    \Omega(\theta_0)&=\E[\psi_i\psi_i']=\E\big[G(X,\theta_0)'\rho(Z,\theta_0)\rho(Z,\theta_0)'G(X,\theta_0)\big]\\
    &=\E_i\left\{\E_j\left[g(Z_j,\theta_0)'K(X_i,X_j)|X_i\right]\rho(Z_i,\theta_0)\rho(Z_i,\theta_0)'\E_j\left[K(X_i,X_j)g(Z_j,\theta_0)|X_i\right]\right\}\\
    &\le \left(\sup_{x\in\mathcal{X}}k(x,x)\right)^2\E\left[\|g(Z_i,\theta_0)'\rho(Z_i,\theta_0)\|^2\right]<\infty,
\end{align*}
is well defined and finite. Thus, the multivariate CLT therefore yields
$$
\frac{1}{\sqrt n}\sum_{i=1}^n\psi_i \ \xrightarrow{d}\ \mathcal N(0,\Omega(\theta_0)).
$$
Combining this with the linear representation and applying Slutsky's theorem gives
$$
\sqrt n(\widehat\theta_n-\theta_0)
\ \xrightarrow{d}\ -\Delta(\theta_0)^{-1}\mathcal N(0,\Omega(\theta_0))
=\mathcal N\big(0,\Delta(\theta_0)^{-1}\Omega(\theta_0)\Delta(\theta_0)^{-1}\big),
$$
as required.
\end{proof}
\paragraph{Proof of Proposition \ref{prop:variance_consistency}}
\begin{proof}
First, by Lemma \ref{lemma:unif_delta}\footnote{Lemma \ref{lemma:unif_G} and \ref{lemma:unif_delta} are established in Section 4 for the multiplier bootstrap. They provide uniform convergence over a neighborhood covering $\theta^*$, which implies the consistency at $\theta_0$ required here.}, we have $\sup_{\theta \in \mathcal{N}^*} \|\widehat{\Delta}_n(\theta)^{-1} - \Delta(\theta)^{-1}\| = o_p(1)$ for some compact neighborhood of $\theta_0$. Combined with the consistency of $\widehat{\theta}_n$ and the continuity of $\Delta(\theta)^{-1}$, it follows directly that $\widehat{\Delta}_n^{-1} \xrightarrow{p} \Delta(\theta_0)^{-1}$.

Next, we establish the consistency of $\widehat{\Omega}_n$. Define the infeasible sample analog 
$$\widetilde{\Omega}_n \coloneqq n^{-1} \sum_{i=1}^n G(X_i, \theta_0)' \rho(Z_i, \theta_0) \rho(Z_i, \theta_0)' G(X_i, \theta_0).$$
By the Law of Large Numbers and the moment conditions in Assumptions \ref{ass:cons} and \ref{ass:an}, $\widetilde{\Omega}_n \xrightarrow{p} \Omega(\theta_0)$. It remains to show $\|\widehat{\Omega}_n - \widetilde{\Omega}_n\| = o_p(1)$. For notation simplicity, let $G_i\coloneq G(X_i,\theta_0)$, $\rho_i\coloneq\rho(Z_i,\theta_0)$, $\Delta G_i \coloneqq \widehat{G}_n(X_i, \widehat{\theta}_n)-G_i$, $\Delta \rho_i \coloneqq \rho(Z_i, \widehat{\theta}_n)-\rho_i$.

For $\Delta G_i$, the triangle inequality yields $\|\Delta G_i\| \le \|\widehat{G}_n(X_i, \widehat{\theta}_n) - G(X_i, \widehat{\theta}_n)\| + \|G(X_i, \widehat{\theta}_n) - G(X_i, \theta_0)\|$. Invoking Lemma \ref{lemma:unif_G} and the Lipschitz continuity of $G(\cdot, \theta)$, we obtain a uniform bound:
\begin{equation*}
    \max_{1 \le i \le n} \|\Delta G_i\| \le \max_{i}\sup_{ \theta}\|\widehat{G}_n(X_i, \theta) - G(X_i, \theta)\| + L_G \|\widehat{\theta}_n - \theta_0\| = o_p(1).
    \label{eq:bound_G}
\end{equation*}
For $\Delta \rho_i$, the Mean Value Theorem implies $\|\Delta \rho_i\| \le \sup_{\theta}\|g(Z_i, \theta)\| \|\widehat{\theta}_n - \theta_0\|$. We obtain:
\begin{equation*}
    \frac{1}{n} \sum_{i=1}^n \|\Delta \rho_i\|^2 \le \|\widehat{\theta}_n - \theta_0\|^2\left(\frac{1}{n}\sum_{i=1}^n \sup_{\theta\in\mathcal{N}(\theta_0)}\|g(Z_i,\theta)\|^2\right) = o_p(1)\cdot O_p(1)=o_p(1).
    \label{eq:bound_rho}
\end{equation*}
Furthermore, $\max_i \|G_i\| \le \sup_{x,x'}k(x,x')\mathbb{E}[\|g\|]<\infty$.

Let $U_i \coloneqq G_i' \rho_i$ and $\widehat{U}_i \coloneqq (G_i + \Delta G_i)' (\rho_i + \Delta \rho_i)$. The difference is $\Delta U_i = \Delta G_i' \rho_i + G_i' \Delta \rho_i + \Delta G_i' \Delta \rho_i$. We bound the mean squared norm $n^{-1} \sum \|\Delta U_i\|^2$ by analyzing each term using the inequality $(a+b+c)^2 \le 3(a^2+b^2+c^2)$:
\begin{enumerate}
    \item $\frac{1}{n} \sum \|\Delta G_i' \rho_i\|^2 \le (\max_i \|\Delta G_i\|)^2 (\frac{1}{n} \sum \|\rho_i\|^2) = o_p(1) \cdot O_p(1) = o_p(1)$.
    \item $\frac{1}{n} \sum \|G_i' \Delta \rho_i\|^2 \le (\max_i \|G_i\|)^2 (\frac{1}{n} \sum \|\Delta \rho_i\|^2) = O(1) \cdot o_p(1) = o_p(1)$.
    \item $\frac{1}{n} \sum \|\Delta G_i' \Delta \rho_i\|^2 \le (\max_i \|\Delta G_i\|)^2 (\frac{1}{n} \sum \|\Delta \rho_i\|^2) = o_p(1) \cdot o_p(1) = o_p(1)$.
\end{enumerate}
Thus, $n^{-1} \sum \|\Delta U_i\|^2 = o_p(1)$. 
\begin{align*}
    \norm{\widehat{\Omega}_n-\widetilde\Omega_n}&=\left\|\frac{1}{n} \sum_{i=1}^n\left(\widehat{U}_i \widehat{U}_i^{\prime}-U_i U_i^{\prime}\right)\right\|\\
    &\le \frac{1}{n} \sum_{i=1}^n\left\|\widehat{U}_i \widehat{U}_i^{\prime}-U_i U_i^{\prime}\right\|\\
    &\le \frac{1}{n} \sum_{i=1}^n\left\|\Delta U_i\right\|^2+2 \frac{1}{n} \sum_{i=1}^n\left\|U_i\right\|\left\|\Delta U_i\right\|\\
    &\le \frac{1}{n} \sum_{i=1}^n\left\|\Delta U_i\right\|^2+2\left(\frac{1}{n} \sum_{i=1}^n\left\|U_i\right\|^2\right)^{1 / 2}\left(\frac{1}{n} \sum_{i=1}^n\left\|\Delta U_i\right\|^2\right)^{1 / 2},
\end{align*}
where the inequalities follow, respectively, from the triangle inequality, the matrix norm bound $\|aa' - bb'\| \le \|a-b\|^2 + 2\|b\|\|a-b\|$, and the Cauchy-Schwarz inequality. Noting $n^{-1} \sum \|U_i\|^2 = O_p(1)$, $\norm{\widehat{\Omega}_n-\widetilde\Omega_n}=o_p(1)$. Thus, $\widehat{\Omega}_n\xrightarrow{p}\Omega(\theta_0)$. Consequently, by Slutsky's theorem, $\widehat{\Sigma}_n = \widehat{\Delta}_n^{-1} \widehat{\Omega}_n \widehat{\Delta}_n^{-1} \xrightarrow{p} \Sigma$.
\end{proof}
\subsection{Proofs for Section \ref{sec:test}}
\label{subsec:proof-for-test}
\paragraph{Proof of Lemma \ref{lemma:repre_H0}}
\begin{proof}
    By Assumption \ref{ass:an}, $ \widehat T_n=n\widehat Q_n(\widehat\theta_n)$ admits a mean‐value expansion around $\theta_0$:
    \begin{equation}
        \label{eq:mv-T}
        \widehat T_n=n\widehat{Q}_n(\theta_0)+\left(\sqrt{n}\nabla_\theta\widehat{Q}_n(\theta_0)\right)'\left[\sqrt{n}\left(\widehat{\theta}_n-\theta_0\right)\right]+\frac{1}{2}\left[\sqrt{n}\left(\widehat{\theta}_n-\theta_0\right)\right]'\widehat{H}_n(\bar{\theta}_n)\left[\sqrt{n}\left(\widehat{\theta}_n-\theta_0\right)\right],
    \end{equation} 
    where $\bar\theta_n$ lies between $\widehat\theta_n$ and $\theta_0$. From the proof of Lemma \ref{lemma:theta_lr}:
    \begin{align*}
        \widehat{H}_n(\bar{\theta}_n)&=2\Delta(\theta_0)+o_p(1),\\
        \sqrt{n}\nabla_\theta\widehat{Q}_n(\theta_0)&=2S_n(\theta_0)+o_p(1),\\
        \sqrt{n}\left(\widehat{\theta}_n-\theta_0\right)&=-\Delta(\theta_0)^{-1}S_n(\theta_0)+o_p(1),
    \end{align*}
    where $S_n(\theta_0)\coloneq 1/\sqrt{n}\sum_{i=1}^nG(X_i,\theta_0)^\prime\rho(Z_i,\theta_0)$.
    Substituting these expansions and applying Slutsky’s theorem yields
    \begin{align*}
        \widehat T_n&=n\widehat{Q}_n(\theta_0)-S_n(\theta_0)'\Delta(\theta_0)^{-1}S_n(\theta_0)+o_p(1)\\
        &=\frac{1}{n}\sum_{i=1}^n\sum_{j=1}^n\rho\left(Z_i,\theta_0\right)^{\prime} \left[K(X_i,X_j)-G(X_i,\theta_0)\Delta(\theta_0)^{-1} G(X_j,\theta_0)^{\prime}\right]\rho\left(Z_j,\theta_0\right)+o_p(1)\\
        &\coloneq \frac{1}{n}\sum_{i=1}^n\sum_{j=1}^n\rho\left(Z_i,\theta_0\right)^{\prime} K^p(X_i,X_j)\rho\left(Z_j,\theta_0\right)+o_p(1).
    \end{align*}
This completes the proof.
\end{proof}

\paragraph{Proof of the PSD of $K^p$}
\begin{proof}
    Let $\mathcal{H}_k$ be the scalar Reproducing Kernel Hilbert Space (RKHS) associated with the scalar kernel $k(\cdot, \cdot)$. Since the matrix kernel is defined as a diagonal matrix $K(x, y)=k(x, y) \mathbf{I}_q$, the associated vector-valued RKHS, denoted by $\mathcal{H}_K$, is isomorphic to the direct sum of $q$ copies of $\mathcal{H}_k$. Specifically, $\mathcal{H}_K$ consists of vectorvalued functions $f=\left(f_1, \ldots, f_q\right)^{\prime}: \mathcal{X} \rightarrow \mathbb{R}^q$ such that each component $f_l \in \mathcal{H}_k$ . This space is equipped with the inner product defined by the sum of the component-wise inner products:
    $$
    \langle f, h\rangle_K=\sum_{l=1}^q\left\langle f_l, h_l\right\rangle_{\mathcal{H}_k}, \quad \forall f, h \in \mathcal{H}_K .
    $$
Recall that $G(x, \theta_0) = \mathbb{E}[K(x, X) g(Z, \theta_0)]$ is a $q \times p$ matrix. Let $g_l(Z, \theta_0)$ denote the $l$-th column of the Jacobian $g(Z, \theta_0)$. We define the component functions $\psi_l \in \mathcal{H}_K$ via the Bochner integrals:
\begin{equation*}
    \psi_l(\cdot) = \mathbb{E}[K(\cdot, X) g_l(Z, \theta_0)], \quad l = 1, \dots, p.
\end{equation*}
By the properties of Bochner integrals, point evaluation yields $\psi_l(x) = G_l(x, \theta_0)$, where $G_l(x, \theta_0)$ is the $l$-th column of $G(x, \theta_0)$. Furthermore, the entries of the matrix $\Delta(\theta_0)$ correspond precisely to the inner products of these functions in $\mathcal{H}_K$. Specifically, applying the reproducing property $\langle f, k(\cdot, X)v \rangle_K = v^{\prime}f(X)$ for $v \in \mathbb{R}^q$ and the independence of $Z$ and its copy $Z'$, we obtain:
\begin{align*}
    \langle \psi_l, \psi_m \rangle_K &= \mathbb{E}_{Z} \left[ \langle k(\cdot, X) g_l(Z, \theta_0), \psi_m \rangle_K \right] \\
    &= \mathbb{E}_{Z} \left[ g_l(Z, \theta_0)^{\prime} \psi_m(X) \right] \\
    &= \mathbb{E}_{Z, Z'} \left[ g_l(Z, \theta_0)^{\prime} k(X, X') g_m(Z', \theta_0) \right].
\end{align*}
This equality confirms that $\Delta(\theta_0)$ is the $p \times p$ Gram matrix of the functions $\{\psi_1, \dots, \psi_p\}$. Let $\mathcal{S} = \text{span}\{\psi_1, \dots, \psi_p\} \subset \mathcal{H}_K$ denote the $p$-dimensional subspace spanned by these functions. Since $\Delta(\theta_0)$ is assumed to be non-singular (Assumption \ref{ass:an}), the set $\{\psi_l\}_{l=1}^p$ is linearly independent.

    To establish the positive semi-definiteness of the $q \times q$ matrix kernel $K^p$, consider an arbitrary set of points $\{x_i\}_{i=1}^n \subset \mathcal{X}$ and arbitrary vectors $\{c_i\}_{i=1}^n \subset \mathbb{R}^q$. Define the function $h = \sum_{i=1}^n K(\cdot, x_i)c_i \in \mathcal{H}_K$. We analyze the quadratic form associated with $K^p$:
    \begin{equation*}
        Q_n(c) \equiv \sum_{i=1}^n \sum_{j=1}^n c_i^{\prime} K^p(x_i, x_j) c_j = \sum_{i,j} c_i^{\prime} K(x_i, x_j) c_j - \sum_{i,j} c_i^{\prime} G(x_i, \theta_0) \Delta(\theta_0)^{-1} G(x_j, \theta_0)^{\prime} c_j.
    \end{equation*}
    The first term is exactly the squared norm $\|h\|_K^2$. For the second term, let $v \in \mathbb{R}^p$ be a vector whose $l$-th component is given by
    \begin{equation*}
        v_l = \sum_{i=1}^n c_i^{\prime} \psi_l(x_i) = \sum_{i=1}^n \langle K(\cdot, x_i)c_i, \psi_l \rangle_K = \langle h, \psi_l \rangle_K.
    \end{equation*}
    In matrix notation, this can be written as $v = \sum_{i=1}^n G(x_i, \theta_0)^\prime c_i$, and the second term can be rewritten as $v^{\prime} \Delta(\theta_0)^{-1} v$. We now show that this term corresponds exactly to the squared norm of the orthogonal projection of $h$ onto $\mathcal{S}$.

    Let $P_{\mathcal{S}} h$ denote the orthogonal projection of $h$ onto $\mathcal{S}$. Since $P_{\mathcal{S}} h \in \mathcal{S}$, it can be uniquely expressed as a linear combination of the basis functions: $P_{\mathcal{S}} h = \sum_{k=1}^p \alpha_k \psi_k$. The coefficient vector $\alpha = (\alpha_1, \dots, \alpha_p)^{\prime}$ is determined by the orthogonality condition, which requires that the residual $h - P_{\mathcal{S}} h$ be orthogonal to each basis function $\psi_l$:
    \begin{equation*}
        \langle h - P_{\mathcal{S}} h, \psi_l \rangle_K = 0 \quad \implies \quad \langle h, \psi_l \rangle_K = \langle P_{\mathcal{S}} h, \psi_l \rangle_K, \quad \forall l=1,\dots,p.
    \end{equation*}
    Substituting the expansion of $P_{\mathcal{S}} h$ into the right-hand side yields
    \begin{equation*}
        v_l = \left\langle \sum_{k=1}^p \alpha_k \psi_k, \psi_l \right\rangle_K = \sum_{k=1}^p \alpha_k \langle \psi_k, \psi_l \rangle_K = \sum_{k=1}^p \Delta(\theta_0)_{lk} \alpha_k.
    \end{equation*}
    In matrix notation, this system implies $v = \Delta(\theta_0) \alpha$. Since $\Delta(\theta_0)$ is non-singular, the unique coefficients are given by $\alpha = \Delta(\theta_0)^{-1} v$.

    Finally, we compute the squared norm of the projection:
    \begin{equation*}
        \begin{aligned}
            \|P_{\mathcal{S}} h\|_K^2 &= \langle P_{\mathcal{S}} h, P_{\mathcal{S}} h \rangle_K = \left\langle \sum_{k=1}^p \alpha_k \psi_k, \sum_{m=1}^p \alpha_m \psi_m \right\rangle_K \\
            &= \sum_{k=1}^p \sum_{m=1}^p \alpha_k \alpha_m \Delta(\theta_0)_{km} = \alpha^{\prime} \Delta(\theta_0) \alpha.
        \end{aligned}
    \end{equation*}
    Substituting $\alpha = \Delta(\theta_0)^{-1} v$, we obtain $\|P_{\mathcal{S}} h\|_K^2 = v^{\prime} \Delta(\theta_0)^{-1} v$.
   
    Therefore, the quadratic form becomes
    \begin{equation*}
        Q_n(c) = \|h\|_K^2 - \|P_{\mathcal{S}} h\|_K^2 = \| (I - P_{\mathcal{S}}) h \|_K^2 \ge 0,
    \end{equation*}
    which proves that $K^p$ is positive semi-definite.
\end{proof}

\paragraph{Proof of Theorem \ref{thm:H0}}
\begin{proof}
By Lemma \ref{lemma:repre_H0}, it suffices to analyze the asymptotic distribution of
$$
V_n \coloneq \frac{1}{n}\sum_{i=1}^n\sum_{j=1}^n 
f^p(S_i,S_j,\theta_0),
\qquad
f^p(s,s',\theta_0)
= \rho(z,\theta_0)'K^p(x,x')\rho(z',\theta_0).
$$
Assumption~\ref{ass:an}(ii) implies
\begin{equation}
\label{L2_fp}
    \E\!\left[ f^p(S_i,S_j,\theta_0)^2 \right] < \infty,
\qquad
\E\!\left[ f^p(S_i,S_i,\theta_0)^2 \right] < \infty,
\end{equation}
so the kernel is square–integrable. Under $\mathbb{H}_0$,
$$
\E\!\left[ f^p(S_i,S_j,\theta_0)\mid S_i \right] = 0,
$$
hence $V_n$ is a degenerate $V$–statistic. Therefore, by Theorem~B in §6.4.1 of \citet{serfling1980approximation} (see also Theorem~3.16 of \citet{shao1999mathematical}),
$$
V_n \xrightarrow{d} \sum_{k=1}^{\infty} \lambda_k W_k^2,
$$
where $\{W_k\}$ are i.i.d.\ $\mathcal N(0,1)$ and $\{\lambda_k\}$ are the eigenvalues of the Hilbert–Schmidt operator 
$$
\mathcal{A}\psi(s)
=\int_{\mathcal{Z}\times\mathcal{X}} f^p(s,s',\theta_0)\psi(s')\,d\mathbb{P}_{XZ}(s').
$$
Slutsky's theorem then yields the stated claim for $\widehat T_n(\widehat\theta_n)$.
\end{proof}
\begin{lemma}
    Under $\mathbb{H}_1$, suppose Assumptions \ref{ass:theta1_con} hold, then $\widehat{\theta}_n\xrightarrow{p}\theta_1$.
\end{lemma}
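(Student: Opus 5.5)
The plan is to mirror the proof of Lemma \ref{lemma:theta_consist}, replacing the identification argument at $\theta_0$ by the pseudo-true uniqueness condition in Assumption \ref{ass:theta1_con}. Write $S_i=(X_i,Z_i)$ and $f(S_i,S_j,\theta)=\rho(Z_i,\theta)'K(X_i,X_j)\rho(Z_j,\theta)$. Then $\widehat Q_n(\theta)=n^{-2}\sum_{i,j}f(S_i,S_j,\theta)$ and $Q(\theta)=\mathbb E[f(S_i,S_j,\theta)]$ with $(S_i,S_j)$ independent. I will then verify the two hypotheses of Theorem 2.1 of \cite{newey1994large}: identification (unique minimizer on a compact set with continuous limit) and uniform convergence of the sample criterion.

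First, uniform convergence. Assumption \ref{ass:theta1_con} keeps Assumption \ref{ass:cons}(i), (ii) and (iv), and the argument for uniform convergence in Lemma \ref{lemma:theta_consist} never used the null. I will therefore repeat that step verbatim. Almost-sure continuity of $f$ in $\theta$ follows from (ii) and the continuity of $k$. The dominance bounds are
\[
\mathbb E\sup_\Theta|f(S_i,S_j,\theta)|\le \sup_x k(x,x)\,(\mathbb E\sup_\Theta\|\rho\|)^2
\]
for the off-diagonal terms, and
\[
\mathbb E\sup_\Theta|f(S_i,S_i,\theta)|\le\sup_x k(x,x)\,\mathbb E\sup_\Theta\|\rho\|^2
\]
for the diagonal terms. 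Here $|k(x,\tilde x)|\le\sqrt{k(x,x)k(\tilde x,\tilde x)}$ is bounded because $k$ is bounded. Lemma 8.5 of \cite{newey1994large}, the uniform law for $V$-statistics, then gives continuity of $Q$ on $\Theta$ and $\sup_{\Theta}|\widehat Q_n-Q|\xrightarrow{p}0$.

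Second, identification. Under $\mathbb H_1$ we no longer have $Q(\theta)=0$ anywhere, but Assumption \ref{ass:theta1_con} directly posits that $Q$ is uniquely minimized at $\theta_1$ over the compact set $\Theta$. Together with continuity of $Q$, this gives the well-separated minimum condition: for every open neighborhood $\mathcal N$ of $\theta_1$, $\inf_{\Theta\setminus\mathcal N}Q>Q(\theta_1)$. The standard argument then closes the proof. With probability approaching one,
\[
Q(\widehat\theta_n)\le \widehat Q_n(\widehat\theta_n)+o_p(1)\le \widehat Q_n(\theta_1)+o_p(1)\le Q(\theta_1)+o_p(1),
\]
so $\widehat\theta_n\in\mathcal N$ with probability approaching one.

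There is essentially no hard step, since the lemma is a direct corollary of the consistency machinery already built. The only points needing care are the following. The $V$-statistic diagonal terms must be controlled, which requires the $\mathbb E\sup\|\rho\|^2$ bound and not just a first moment; Assumption \ref{ass:cons}(ii) provides this. One should also note that positivity $Q(\theta_1)>0$ plays no role in consistency itself; it matters only later for divergence of $\widehat T_n$.
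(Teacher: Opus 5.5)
Your proof is correct and follows the paper's route exactly. The paper simply states that the argument is identical to that of Lemma~\ref{lemma:theta_consist} with $\theta_0$ replaced by $\theta_1$. That is precisely what you do: uniform convergence of the $V$-statistic criterion via Lemma 8.5 of \cite{newey1994large}, with diagonal terms controlled by $\mathbb E\sup_\Theta\|\rho\|^2$, combined with the unique-minimizer condition of Assumption~\ref{ass:theta1_con} and Theorem 2.1 of \cite{newey1994large}.
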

\begin{proof}
    The proof is identical to that of Lemma~\ref{lemma:theta_consist}, with $\theta_0$ replaced by $\theta_1$, and is therefore omitted.
\end{proof}

\paragraph{Proof of Lemma \ref{lemma:theta1_asy}}
\begin{proof}
The proof proceeds as in the null case. By Assumption~\ref{ass:theta1_an} the map
$\theta\mapsto\nabla_\theta\widehat Q_n(\theta)$ admits a mean-value expansion about $\theta_1$:
$$
\nabla_\theta\widehat Q_n(\widehat\theta_n)
=\nabla_\theta\widehat Q_n(\theta_1)
+\widehat H_n(\bar\theta_n)\,(\widehat\theta_n-\theta_1),
$$
with $\bar\theta_n$ is between $\theta_1$ and $\widehat\theta_n$. As in the null case one verifies that $\widehat H_n(\bar\theta_n)\xrightarrow{p}2\widetilde\Delta(\theta_1)$.
Next, we analyze the score at $\theta_1$. As before, the score is a $V$–statistic with symmetrized kernel
$$
\varrho(S_i,S_j,\theta)
=k(X_i,X_j)\big[g(Z_i,\theta)'\rho(Z_j,\theta)+g(Z_j,\theta)'\rho(Z_i,\theta)\big].
$$
Applying the Hájek projection and Assumption~\ref{ass:theta1_an} yields
$$
\nabla_\theta\widehat Q_n(\theta_1)
=\frac{2}{n}\sum_{i=1}^n\widetilde\psi_i(\theta_1)+o_p\left(\frac{1}{\sqrt{n}}\right),
$$
where the projected summand is
$$
\widetilde\psi_i(\theta_1)
=\E[\varrho(S_i,S_j,\theta_1)\mid S_i]
=G(X_i,\theta_1)^\prime\rho(Z_i,\theta_1)+g(Z_i,\theta_1)'M(X_i,\theta_1).
$$
(Indeed, the second term appears because under $\mathbb H_1$ the conditional moment $\E[\rho(Z,\theta_1)\mid X]$ need not vanish.)

The first-order condition $\nabla_\theta\widehat Q_n(\widehat\theta_n)=\mathbf{0}$ thus gives
$$
\frac{2}{n}\sum_{i=1}^n\widetilde\psi_i(\theta_1)
+\big(2\widetilde\Delta(\theta_1)+o_p(1)\big)(\widehat\theta_n-\theta_1)
+o_p\left(\frac{1}{\sqrt{n}}\right)
=0.
$$
Multiplying by $\sqrt n$ and rearranging yields
$$
\sqrt n(\widehat\theta_n-\theta_1)
=-\widetilde\Delta(\theta_1)^{-1}\frac{1}{\sqrt n}\sum_{i=1}^n\widetilde\psi_i(\theta_1)+o_p(1).
$$
Substituting the expression for $\widetilde\psi_i(\theta_1)$ gives the asserted representation
$$
\sqrt{n}\big(\widehat{\theta}_n-\theta_1\big)
=-\widetilde\Delta(\theta_1)^{-1}\frac{1}{\sqrt{n}}\sum_{i=1}^n\big[
G(X_i,\theta_1)^\prime\rho(Z_i,\theta_1)+g(Z_i,\theta_1)'M(X_i,\theta_1)
\big]+o_p(1).
$$

This completes the proof.
\end{proof}
\paragraph{Proof of Theorem \ref{thm:H1}}
\begin{proof}
    Consider a Taylor expansion of the objective function around the pseudo-true value $\theta_1$:
    $$
    \widehat{Q}_n(\widehat{\theta}_n) = \widehat{Q}_n(\theta_1) + \nabla_\theta \widehat{Q}_n(\theta_1)'(\widehat{\theta}_n - \theta_1) + \frac{1}{2}(\widehat{\theta}_n - \theta_1)' \nabla^2_{\theta\theta} \widehat{Q}_n(\bar{\theta}_n) (\widehat{\theta}_n - \theta_1),
    $$
    where $\bar{\theta}_n$ lies between $\widehat{\theta}_n$ and $\theta_1$.
    
    From the proof of Lemma \ref{lemma:theta1_asy}, we have $\|\nabla_\theta \widehat{Q}_n(\theta_1)\| = O_p(n^{-1/2})$. $\|\widehat{\theta}_n - \theta_1\| = O_p(n^{-1/2})$. Furthermore, the Hessian term is bounded in probability. Rearranging terms and scaling by $\sqrt{n}$ yields that
    \begin{align*}
        \sqrt{n}\left( \widehat{Q}_n(\widehat{\theta}_n) - \widehat{Q}_n(\theta_1) \right) &= \sqrt{n} \nabla_\theta \widehat{Q}_n(\theta_1)' (\widehat{\theta}_n - \theta_1) + \sqrt{n} O_p(\|\widehat{\theta}_n - \theta_1\|^2) \\
        &= \sqrt{n} \cdot O_p(n^{-1/2}) \cdot O_p(n^{-1/2}) + \sqrt{n} \cdot O_p(n^{-1}) \\
        &= O_p(n^{-1/2}) + O_p(n^{-1/2}) = o_p(1).
    \end{align*}
    Therefore,
    $$
    \sqrt{n}\left[ \widehat{Q}_n(\widehat{\theta}_n) - Q(\theta_1) \right] = \sqrt{n}\left[ \widehat{Q}_n(\theta_1) - Q(\theta_1) \right] + o_p(1).
    $$
    
    Under $\mathbb{H}_1$, note that $\widehat{Q}_n(\theta_1)$ is a standard non-degenerate V-statistic with kernel $h(S_i, S_j, \theta_1) = \rho(Z_i, \theta_1)' K(X_i, X_j) \rho(Z_j, \theta_1)$. The first-order projection is $h_1(S_i) = \mathbb{E}[h(S_i, S_j, \theta_1) | S_i] = \rho(Z_i, \theta_1)' M(X_i, \theta_1)$. Therefore, by Theorem~A in §6.4.1 of \citet{serfling1980approximation} (see also Theorem~3.16 of \citet{shao1999mathematical}),
    $$
    \sqrt{n}\left[ \widehat{Q}_n(\theta_1) - Q(\theta_1) \right] \xrightarrow{d} \mathcal{N}\left(0, \sigma^2(\theta_1)\right),
    $$
    where $\sigma^2(\theta_1) = 4 \operatorname{Var}(h_1(S_i))$.
    The result follows from Slutsky's Theorem.
\end{proof}

\begin{lemma}
\label{lemma:consistency_local}
Under $\mathbb{H}_{1n}$, suppose Assumption \ref{ass:local_dgp} holds. Further, assume that Assumptions \ref{ass:cons}(ii)--(iv) and \ref{ass:an} are satisfied with respect to the null measure $\mP$. Then $\widehat{\theta}_n \xrightarrow{p} \theta_0$.
\end{lemma}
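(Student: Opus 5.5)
The plan is to reuse the argument of Lemma \ref{lemma:theta_consist}: uniform convergence of $\widehat Q_n$ to $Q$ on $\Theta$, plus the identification of $\theta_0$ as the unique minimizer of $Q$ under the null measure $\mP$. The only new ingredient is that the data now form a triangular array drawn from $\mP^{(n)}$ rather than $\mP$. So the uniform law of large numbers for the $V$-statistic must be re-derived under the drifting measure.

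\textbf{Step 1: centering under $\mP^{(n)}$.} Write $f(s,\tilde s,\theta)=\rho(z,\theta)'K(x,\tilde x)\rho(\tilde z,\theta)$ and $Q_n(\theta)\coloneqq \E_{\mP^{(n)}}[f(S_i,S_j,\theta)]$ for $i\neq j$. Since $d\mP^{(n)}/d\mP=1+n^{-1/2}h$, the product measure has density $(1+n^{-1/2}h(s))(1+n^{-1/2}h(\tilde s))$. Hence
\[
Q_n(\theta)-Q(\theta)=n^{-1/2}\E_{\mP}\!\left[f(S,\tilde S,\theta)\{h(S)+h(\tilde S)\}\right]+n^{-1}\E_{\mP}\!\left[f(S,\tilde S,\theta)h(S)h(\tilde S)\right].
\]
By Cauchy--Schwarz, boundedness of $k$, $\E_{\mP}[\sup_\theta\|\rho\|^2]<\infty$ and $\E_{\mP}[h^4]<\infty$, both expectations are bounded uniformly in $\theta$. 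Therefore $\sup_{\theta}|Q_n(\theta)-Q(\theta)|=O(n^{-1/2})$.

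\textbf{Step 2: uniform LLN for the triangular array.} Split $\widehat Q_n(\theta)$ into its off-diagonal $U$-statistic part and its diagonal part $n^{-2}\sum_i f(S_i,S_i,\theta)$.

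For the diagonal part, observe that
\[
\E_{\mP^{(n)}}\Big[\sup_\theta|f(S,S,\theta)|\Big]\le \sup_x k(x,x)\,\E_{\mP}\Big[\sup_\theta\|\rho\|^2\,(1+n^{-1/2}|h|)\Big].
\]
This is $O(1)$ under the stated moments; if the product moment is not directly available, Hölder with $h\in L^4$ combined with a truncation of $\sup_\theta\|\rho\|$ handles it. The diagonal part is therefore $O_p(n^{-1})$ uniformly in $\theta$.

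For the $U$-statistic part, I would rather not invoke Lemma 8.5 of \cite{newey1994large}, which assumes a fixed distribution. Instead I would argue in two pieces:
\begin{itemize}
\item \emph{Pointwise convergence.} At each fixed $\theta$, the variance of the $U$-statistic under $\mP^{(n)}$ is bounded by $C\,n^{-1}\E_{\mP^{(n)}}[f^2]$. Since $\E_{\mP}[f^2]$ may not be finite, I would truncate $f$ at a level $M$ and control the remainder in $L^1(\mP^{(n)})$. That remainder is bounded by $2\E_{\mP}[|f|\mathbf 1\{|f|>M\}(1+|h|)^2]$, which is small for large $M$ by uniform integrability.
\item \emph{Stochastic equicontinuity.} Cover the compact set $\Theta$ by finitely many balls. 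On each ball, bound the oscillation of $f$ by the envelope $\sup_{\|\theta-\theta'\|<\eta}|f(\cdot,\theta)-f(\cdot,\theta')|$. By continuity and dominated convergence, its $\mP$-expectation, and hence its $\mP^{(n)}$-expectation (again up to the factor $(1+|h|)^2$ handled via Hölder), tends to $0$ as $\eta\to0$. This is the bracketing argument behind Lemma 2.4 of \cite{newey1994large}, applied to the drifting measure.
\end{itemize}
Combined with Step 1, this yields $\sup_\theta|\widehat Q_n(\theta)-Q(\theta)|\xrightarrow{p}0$ under $\mP^{(n)}$.

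\textbf{Step 3: conclusion, and the main obstacle.} By Lemma \ref{lemma:identification} and Assumption \ref{ass:cons}(iii) under $\mP$, $Q$ is continuous and uniquely minimized at $\theta_0$ on the compact $\Theta$. Theorem 2.1 of \cite{newey1994large} then gives $\widehat\theta_n\xrightarrow{p}\theta_0$. The step I expect to be the main obstacle is the integrability bookkeeping in Step 2. The change of measure introduces the weight $(1+n^{-1/2}h)$, whose product with the envelope $\sup_\theta\|\rho\|^2$ need not be integrable under only second moments on $\rho$. A cleaner route avoids this entirely: contiguity. Because $\E_{\mP}[h^2]<\infty$, Le Cam's first lemma makes $\mP^{(n)\otimes n}$ contiguous to $\mP^{\otimes n}$. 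Since $\sup_\theta|\widehat Q_n-Q|=o_p(1)$ under $\mP^{\otimes n}$ by the proof of Lemma \ref{lemma:theta_consist}, contiguity transfers this $o_p(1)$ directly to $\mP^{(n)}$. I would try this contiguity argument first and keep the triangular-array argument above as a fallback.
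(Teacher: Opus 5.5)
Your preferred route, contiguity of $\mP^{(n)\otimes n}$ to $\mP^{\otimes n}$ via Le Cam's first lemma so that an $o_p(1)$ statement under the null carries over, is exactly the paper's proof. The paper obtains contiguity from the expansion $\Lambda_n=\sum_i\log(1+n^{-1/2}h(S_i))=n^{-1/2}\sum_i h(S_i)-(2n)^{-1}\sum_i h(S_i)^2+o_p(1)\xrightarrow{d}\mathcal N(-\sigma_h^2/2,\sigma_h^2)$. That expansion uses $\E_{\mP}h=0$ and $\max_i|h(S_i)|=o_p(\sqrt n)$, and you should spell this step out rather than attribute contiguity to $\E_{\mP}h^2<\infty$ alone. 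The paper also transfers the event $\{\|\widehat\theta_n-\theta_0\|>\epsilon\}$ directly instead of the uniform convergence of $\widehat Q_n$. This makes re-invoking Theorem 2.1 of \cite{newey1994large} unnecessary, and likewise your triangular-array fallback, whose diagonal-term integrability you rightly flag as delicate.
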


\begin{proof}
The result is established by invoking Le Cam's First Lemma to extend the consistency property from the null hypothesis to the sequence of local alternatives. First, observe that under the null measure $\mP$, the estimator $\widehat{\theta}_n$ is consistent for $\theta_0$. To show that this convergence in probability holds under the local alternatives $\mathbb{H}_{1n}$, it suffices to prove that the sequence of measures $\mP^{(n)}$ is contiguous with respect to $\mP$. Consider the log-likelihood ratio $\Lambda_n = \log(d\mP^{(n)}/d\mP)$. Under Assumption \ref{ass:local_dgp}, a second-order Taylor expansion yields
\begin{equation}
\label{eq:loglikelihood}
    \Lambda_n = \sum_{i=1}^n \log \left( 1 + \frac{1}{\sqrt{n}}h(X_i, Z_i) \right) = \frac{1}{\sqrt{n}}\sum_{i=1}^n h(X_i, Z_i) - \frac{1}{2n}\sum_{i=1}^n h(X_i, Z_i)^2 + o_p(1).
\end{equation}
By Assumption \ref{ass:local_dgp}, the perturbation function $s(X, Z)$ has zero mean and finite variance $\sigma_h^2 \equiv \E_{\mP}[h(X, Z)^2]$ under the null. Applying the Central Limit Theorem and the Weak Law of Large Numbers under $\mP$, we obtain
\begin{equation*}
    \Lambda_n \xrightarrow{d} \mathcal{N}\left(-\frac{1}{2}\sigma_h^2, \sigma_h^2\right).
\end{equation*}
According to Le Cam's First Lemma (e.g., Lemma 6.4 in \cite{van2000asymptotic}), the asymptotic normality of the log-likelihood ratio with mean $-\sigma_h^2/2$ implies that $\mP^{(n)}$ is contiguous to $\mP$ (denoted $\mP^{(n)} \triangleleft \mP$). By the definition of contiguity, convergence in probability to zero under $\mP$ implies convergence in probability to zero under $\mP^{(n)}$. Consequently, $\mP^{(n)}(\| \widehat{\theta}_n - \theta_0 \| > \epsilon) \to 0$, establishing the consistency of $\widehat{\theta}_n$ under the local alternatives.
\end{proof}
    
\paragraph{Proof of Lemma \ref{lemma:H1n_lp}}
\begin{proof}
The derivation proceeds by expanding the first-order condition for the estimator around the true parameter $\theta_0$. Since $\widehat{\theta}_n$ minimizes the objective function, it satisfies $\nabla_\theta \widehat{Q}_n(\widehat{\theta}_n) = \mathbf{0}$. By a mean-value expansion, we obtain
\begin{equation}
    \label{eq:mean_value_expansion}
    \mathbf{0} = \nabla_\theta \widehat{Q}_n(\theta_0) + \widehat{H}_n(\bar{\theta}_n)(\widehat{\theta}_n - \theta_0),
\end{equation}
where $\widehat{H}_n(\theta) = \nabla_{\theta\theta'}^2 \widehat{Q}_n(\theta)$ is the Hessian matrix and $\bar{\theta}_n$ lies on the line segment joining $\widehat{\theta}_n$ and $\theta_0$.

We first establish the probability limit of the Hessian matrix under the sequence of local alternatives. The sample Hessian is given by a V-statistic:
\begin{equation*}
    \widehat{H}_n(\theta) = \frac{2}{n^2} \sum_{i=1}^n \sum_{j=1}^n k(X_i, X_j) \left[ \sum_{l=1}^q \rho_l(Z_j, \theta) \nabla_{\theta\theta'}^2 \rho_l(Z_i, \theta) + g(Z_i, \theta)' g(Z_j, \theta) \right].
\end{equation*}
Under the null measure $\mP$, Lemma \ref{lemma:theta_lr} establishes that $\widehat{H}_n(\theta_0) \xrightarrow{p} 2\Delta(\theta_0)$. Under the local alternative $\mP^{(n)}$, recall that $\rho(Z, \theta_0) = \epsilon + n^{-1/2}\delta(X)$. The term involving $\rho$ in the Hessian is thus linear in the perturbation $n^{-1/2}\delta(X)$ plus a centered error term. Consequently, its contribution is asymptotically negligible ($O_p(n^{-1/2})$). The dominant term $g(Z_i, \theta_0)' g(Z_j, \theta_0)$ remains unchanged in the limit because the second moments of $s(X,Z)$ are finite. Since $\mP^{(n)}$ is contiguous to $\mP$, the probability limit derived under $\mP$ is preserved. Combined with the consistency $\widehat{\theta}_n \xrightarrow{p} \theta_0$ (Lemma \ref{lemma:consistency_local}) and the uniform convergence of the Hessian, we conclude that $\widehat{H}_n(\bar{\theta}_n) \xrightarrow{p} 2\Delta(\theta_0)$ under $\mP^{(n)}$.

Next, we analyze the asymptotic behavior of the score vector $\nabla_\theta \widehat{Q}_n(\theta_0)$. Recall that
\begin{equation*}
    \nabla_\theta \widehat{Q}_n(\theta_0) = \frac{2}{n^2} \sum_{i=1}^n \sum_{j=1}^n k(X_i, X_j) g(Z_i, \theta_0)' \rho(Z_j, \theta_0).
\end{equation*}
To explicitly characterize the asymptotic bias, we decompose the residual vector using the specific form of the local alternative from \eqref{H_1n}: $\rho(Z_j, \theta_0) = \epsilon_j + n^{-1/2}\delta(X_j)$, where $\epsilon_j$ is the centered error satisfying $\mathbb{E}_{\mP^{(n)}}[\epsilon_j \mid X_j] = 0$ almost surely. Substituting this decomposition into the gradient expression yields
\begin{equation*}
    \nabla_\theta \widehat{Q}_n(\theta_0) = \underbrace{\frac{2}{n^2} \sum_{i,j} k(X_i, X_j) g(Z_i, \theta_0)' \epsilon_j}_{V_{1n}} + \underbrace{\frac{1}{\sqrt{n}} \frac{2}{n^2} \sum_{i,j} k(X_i, X_j) g(Z_i, \theta_0)' \delta(X_j)}_{V_{2n}}.
\end{equation*}
The first term, $V_{1n}$, is a $V$-statistic with a kernel that is degenerate in the $j$-argument under $\mP^{(n)}$ (since $\mathbb{E}_{\mP^{(n)}}[\epsilon_j \mid X_j] = 0$). By the Hájek projection, we get:
\begin{equation*}
    \sqrt{n} V_{1n} = \frac{2}{\sqrt{n}} \sum_{i=1}^n G(X_i, \theta_0)^\prime \epsilon_i + o_p(1),
\end{equation*}
where $G(X_i, \theta_0) = \mathbb{E}_{\mP}[k(X_i, X_j) g(Z_j, \theta_0)' \mid X_i]$. Note that the projection is calculated with respect to the limit measure $\mP$ due to the convergence of $\mP^{(n)}$ to $\mP$.

The second term, $V_{2n}$, represents the drift-induced bias. $\frac{1}{n^2} \sum_{i,j} k(X_i, X_j) g(Z_i, \theta_0)' \delta(X_j)$ converges in probability (under $\mP^{(n)}$) to the expectation $\mathbb{E}_{\mP}[k(X_i, X_j) g(Z_i, \theta_0)' \delta(X_j)]= B_\delta(\theta_0)$. Consequently, the scaled term behaves as
\begin{equation*}
    \sqrt{n} V_{2n} = 2 B_\delta(\theta_0) + o_p(1).
\end{equation*}

Substituting the asymptotic expressions for $\widehat{H}_n(\bar{\theta}_n)$ and $\sqrt{n}\nabla_\theta \widehat{Q}_n(\theta_0)$ back into \eqref{eq:mean_value_expansion} and rearranging terms, we arrive at
\begin{equation*}
    \sqrt{n}(\widehat{\theta}_n - \theta_0) = -(2\Delta(\theta_0))^{-1} \left( \frac{2}{\sqrt{n}} \sum_{i=1}^n G(X_i, \theta_0)^\prime \epsilon_i + 2 B_\delta(\theta_0) \right) + o_p(1).
\end{equation*}
Simplifying the factor of 2 yields the stated asymptotic linear representation
\begin{equation*}
    \sqrt{n}(\widehat{\theta}_n - \theta_0) = -\Delta(\theta_0)^{-1} \frac{1}{\sqrt{n}} \sum_{i=1}^n G(X_i, \theta_0)^\prime \epsilon_i - \Delta(\theta_0)^{-1} B_\delta(\theta_0) + o_p(1).
\end{equation*}
\end{proof}

\paragraph{Proof of Theorem \ref{thm:H_1n}}
\begin{proof}
Since the sequence of local alternatives $\mP^{(n)}$ is contiguous to the null measure $\mP$ (denoted $\mP^{(n)} \triangleleft \mP$), the asymptotic representation derived in Lemma \ref{lemma:repre_H0} remains valid under $\mathbb{H}_{1n}$ by Le Cam's First Lemma (e.g., Lemma 6.4 in \cite{van2000asymptotic}). Specifically,
$$
\widehat{T}_n = \frac{1}{n}\sum_{i=1}^n\sum_{j=1}^n f_p(S_i, S_j, \theta_0) + o_p(1),
$$
where $f_p(s, s', \theta_0) = \rho(z, \theta_0)' K^p(x, x') \rho(z', \theta_0)$.

Since the kernel $f_p$ is square-integrable with respect to $\mathbb{P} \otimes \mathbb{P}$, $\mathcal{A}$ is a Hilbert-Schmidt operator and thus compact. Moreover, the construction of $K^p$ as a projected kernel ensures that $\mathcal{A}$ is self-adjoint and positive semi-definite. By the Spectral Theorem for compact self-adjoint operators, the kernel admits the following expansion (in the $L^2$ sense):
$$
f_p(s, s', \theta_0) = \sum_{k=1}^{\infty} \lambda_k \varphi_k(s) \varphi_k(s'),
$$
where $\{\lambda_k\}_{k=1}^{\infty}$ is the sequence of non-negative eigenvalues and $\{\varphi_k\}_{k=1}^{\infty}$ are the corresponding orthonormal eigenfunctions satisfying $\E_{\mP}[\varphi_k(S)\varphi_j(S)] = \delta_{kj}$. 

Crucially, under the null hypothesis $\mathbb{H}_0$, the conditional moment restriction $\E_{\mP}[\rho(Z, \theta_0)|X] = 0$ implies that the kernel is degenerate, i.e., $\E_{\mP}[f_p(S, S', \theta_0) | S] = 0$ a. This degeneracy implies that all eigenfunctions corresponding to non-zero eigenvalues have zero mean under the null measure:
$$
\E_{\mP}[\varphi_k(S)] = 0, \quad \forall k \text{ such that } \lambda_k > 0.
$$

Substituting this spectral decomposition into the V-statistic yields the spectral representation:
$$
\begin{aligned}
\widehat{T}_n &= \sum_{k=1}^{\infty} \lambda_k \left( \frac{1}{\sqrt{n}} \sum_{i=1}^n \varphi_k(S_i) \right)^2 + o_p(1).
\end{aligned}
$$
Defining the projection component for the $k$-th eigenfunction as $\xi_{k,n} \coloneqq n^{-1/2} \sum_{i=1}^n \varphi_k(S_i)$, the statistic can be rewritten as $\widehat{T}_n = \sum_{k=1}^{\infty} \lambda_k \xi_{k,n}^2 + o_p(1)$.

Next, fix an arbitrary integer $M \ge 1$. We analyze the joint asymptotic distribution of the first $M$ projection components $\boldsymbol{\xi}_n^{(M)} = (\xi_{1,n}, \dots, \xi_{M,n})'$ and the log-likelihood ratio $\Lambda_n$ defined in \eqref{eq:loglikelihood}.

Construct the random vector $\mathbf{Y}_i \in \mathbb{R}^{M+1}$ for the $i$-th observation:
$$
\mathbf{Y}_i = \left( \varphi_1(S_i), \dots, \varphi_M(S_i), h(S_i) \right)'.
$$
Under the null hypothesis $\mathbb{H}_0$, $\{\mathbf{Y}_i\}_{i=1}^n$ constitutes a sequence of independent and identically distributed random vectors. We verify the first two moments of $\mathbf{Y}_i$ under the null measure $\mathbb{P}$. The mean vector is zero, $\E_{\mP}[\mathbf{Y}_i] = \mathbf{0}$, since $\E_{\mP}[\varphi_k(S_i)] = 0$ and $\E_{\mP}[h(S_i)] = 0$. Regarding the covariance structure, the orthonormality of the eigenfunctions implies $\Var_{\mP}(\varphi_k) = 1$ and $\Cov_{\mP}(\varphi_k, \varphi_l) = 0$ for $k \ne l$, while $\Var_{\mP}(h) = \sigma_h^2 < \infty$ holds by Assumption \ref{ass:local_dgp}. The covariance between the $k$-th eigenfunction and the score is given by:
$$
\text{Cov}_{\mP}(\varphi_k(S_i), h(S_i)) = \E_{\mP}[\varphi_k(S_i) h(S_i)] \coloneqq c_k.
$$
Let $\Sigma_{M+1}$ be the covariance matrix of $\mathbf{Y}_i$, which is structured as:
$$
\Sigma_{M+1} = \begin{pmatrix} I_M & \mathbf{c} \\ \mathbf{c}' & \sigma_h^2 \end{pmatrix}, \quad \text{where } \mathbf{c} = (c_1, \dots, c_M)'.
$$
By the Multivariate Central Limit Theorem, the scaled partial sum converges in distribution under $\mathbb{P}$:
$$
\frac{1}{\sqrt{n}} \sum_{i=1}^n \mathbf{Y}_i = \begin{pmatrix} \boldsymbol{\xi}_n^{(M)} \\ \frac{1}{\sqrt{n}} \sum_{i=1}^n h(S_i) \end{pmatrix} \xrightarrow{d} \mathcal{N}\left( \mathbf{0}, \Sigma_{M+1} \right).
$$
Recall the expansion of the log-likelihood ratio derived in \eqref{eq:loglikelihood}:
$$
\Lambda_n = \frac{1}{\sqrt{n}} \sum_{i=1}^n h(S_i) - \frac{1}{2}\sigma_h^2 + o_p(1).
$$
By Slutsky's Theorem, the joint vector of projection components and the log-likelihood ratio converges as:
$$
\begin{pmatrix} \boldsymbol{\xi}_n^{(M)} \\ \Lambda_n \end{pmatrix} \xrightarrow{d} \mathcal{N}\left( \begin{pmatrix} \mathbf{0} \\ -\frac{1}{2}\sigma_h^2 \end{pmatrix}, \begin{pmatrix} I_M & \mathbf{c} \\ \mathbf{c}' & \sigma_h^2 \end{pmatrix} \right) \quad \text{under } \mathbb{P}.
$$
Thus, by Le Cam's Third Lemma (e.g., Example 6.7 in \cite{van2000asymptotic}),  under the contiguous alternative sequence $\mathbb{P}^{(n)}$, 
$$
\boldsymbol{\xi}_n^{(M)} \xrightarrow{d} \mathcal{N}(\mathbf{c}, I_M),$$
which implies that, for any finite $M$, the components $\xi_{k,n}$ are asymptotically independent under $\mathbb{H}_{1n}$ and converge to $Z_k \sim \mathcal{N}(c_k, 1)$.

Let $T = \sum_{k=1}^{\infty} \lambda_k (W_k + c_k)^2$ denote the target limit variable. From above, for any fixed integer $M \ge 1$, the truncated statistic converges in distribution:
$$
\sum_{k=1}^M \lambda_k \xi_{k,n}^2 \xrightarrow{d} \sum_{k=1}^M \lambda_k (W_k + c_k)^2 \quad \text{under } \mathbb{P}^{(n)}.
$$
To establish the convergence of the full statistic $\widehat{T}_n(\widehat{\theta}_n)$ to $T$, by Theorem 3.2 in \cite{billingsley1999convergence}, it suffices to verify the uniform asymptotic negligibility of the tail remainder:
\begin{equation}
\label{eq:tail_cond}
\lim_{M \to \infty} \limsup_{n \to \infty} \mP^{(n)} \left( \sum_{k=M+1}^{\infty} \lambda_k \xi_{k,n}^2 > \epsilon \right) = 0, \quad \forall \epsilon > 0.
\end{equation}
By Markov's inequality, the probability in \eqref{eq:tail_cond} is bounded by $\epsilon^{-1} \sum_{k=M+1}^{\infty} \lambda_k \E_{\mP^{(n)}}[\xi_{k,n}^2]$. We analyze the limit of the expectation of the total energy. Under the contiguous alternative $\mathbb{P}^{(n)}$, a direct calculation of the $V$-statistic's expectation yields:
$$
\lim_{n \to \infty} \E_{\mP^{(n)}}[\widehat{T}_n] = \sum_{k=1}^{\infty} \lambda_k + \sum_{k=1}^{\infty} \lambda_k c_k^2.
$$
By Assumptions \ref{ass:cons} and \ref{ass:an},  $\sum_{k=1}^{\infty} \lambda_k = \E_{\mP}[\rho(Z)'K^p(X,X)\rho(Z)]<\infty$, implying that the integral operator $\mathcal{A}$ is of trace class. Furthermore, by the definition $c_k = \E_{\mP}[h(S) \varphi_k(S)]$ and the spectral decomposition $\mathcal{A} = \sum_{k=1}^{\infty} \lambda_k \varphi_k \otimes \varphi_k$, we verify that:
\begin{align*}
    \sum_{k=1}^{\infty} \lambda_k c_k^2 &= \sum_{k=1}^{\infty} \lambda_k \langle h, \varphi_k \rangle_{L^2(\mP)}^2 = \langle h, \mathcal{A}h \rangle_{L^2(\mP)}\\
    &=\int_{\mathcal{Z} \times \mathcal{X}} h(z, x) \left( \int_{\mathcal{Z} \times \mathcal{X}} f_p((z, x), (z', x'), \theta_0) h(z', x') d\mP(z', x') \right) d\mP(z, x) \\
    &=\E_{S, S'} \left[ h(S) \rho(Z, \theta_0)' K^p(X, X') \rho(Z', \theta_0) h(S') \right]\\
    &=\E_{X, X'} \left[ \delta(X)' K^p(X, X') \delta(X') \right]<\infty.
\end{align*}

Since the total expectation converges to the sum of the limits of individual components (i.e., $\lim_{n} \E_{\mP^{(n)}}[\xi_{k,n}^2] = 1 + c_k^2$), the expectation of the tail sum must vanish asymptotically:
$$
\lim_{n \to \infty} \sum_{k=M+1}^{\infty} \lambda_k \E_{\mP^{(n)}}[\xi_{k,n}^2] = \sum_{k=M+1}^{\infty} \lambda_k (1 + c_k^2).
$$
Since the series $\sum \lambda_k$ and $\sum \lambda_k c_k^2$ are convergent, the right-hand side tends to zero as $M \to \infty$. This verifies condition \eqref{eq:tail_cond}. Thus, we conclude that
$$
\widehat{T}_n \xrightarrow{d} \sum_{k=1}^{\infty} \lambda_k (W_k + c_k)^2 \quad \text{under } \mathbb{H}_{1n}.
$$
\end{proof}

\subsection{Proofs for Section \ref{sec:mb}}
\begin{lemma}
\label{lemma:unif_G}
Let $\theta^*$ be the (pseudo) true value under different hypotheses, i.e., $\theta^*=\theta_0$ under $\mathbb{H}_0$ or $\mathbb{H}_{1n}$ and $\theta^*=\theta_1$ under $\mathbb{H}_1$. Under Assumptions \ref{ass:cons} and \ref{ass:an} ( or Assumptions \ref{ass:theta1_con} and \ref{ass:theta1_an} for $\theta^*=\theta_1$), we have
$$
    T_n\coloneqq \maxi \supT \norm{ \widehat{G}_n(X_i, \theta) - G(X_i, \theta) }=o_p(1),
$$
where $\mathcal{N}^*$ is any compact neighborhood of $\theta^*$ such that $\mathcal{N}^* \subset \mathcal{N}(\theta^*)$.
\end{lemma}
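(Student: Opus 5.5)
The plan is to split $\widehat G_n(X_i,\theta)-G(X_i,\theta)$ into a diagonal term and a centered leave-one-out average, and to control the latter by a uniform law of large numbers over the class $\{(x,z)\mapsto k(x_i,x)g(z,\theta)\}$ indexed by $(x_i,\theta)\in\mathcal X\times\mathcal N^*$. The maximum over $i$ is handled by making the bound uniform in the first kernel argument rather than by a union bound.

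Write
$$
\widehat G_n(X_i,\theta)-G(X_i,\theta)=\frac{1}{n}k(X_i,X_i)g(Z_i,\theta)+\frac{1}{n}\sum_{j\neq i}\bigl[k(X_i,X_j)g(Z_j,\theta)-G(X_i,\theta)\bigr]-\frac{1}{n}G(X_i,\theta).
$$
The last term is bounded by $n^{-1}\sup_x k(x,x)\,\mathbb E[\sup_{\mathcal N^*}\|g\|]=O(n^{-1})$ uniformly, since $|k(x,\tilde x)|\le\sup_x k(x,x)<\infty$. For the diagonal term, $\max_i\sup_\theta\|g(Z_i,\theta)\|/n\le n^{-1/2}\bigl(n^{-1}\sum_i\sup_\theta\|g(Z_i,\theta)\|^2\bigr)^{1/2}$. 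Because $\mathbb E[\sup_{\mathcal N(\theta^*)}\|g\|^2]<\infty$ by Assumption \ref{ass:an}(ii) (or \ref{ass:theta1_an}), this is $O_p(n^{-1/2})$.

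The main term is at most $\sup_{x\in\mathcal X,\theta\in\mathcal N^*}\|\frac1n\sum_{j}[k(x,X_j)g(Z_j,\theta)-\mathbb E k(x,X)g(Z,\theta)]\|$ plus $O_p(n^{-1/2})$, since dropping index $i$ from the sum costs only a diagonal term of the type already bounded. This is a supremum of an empirical process over $\mathcal F=\{k(x,\cdot)g_{lm}(\cdot,\theta)\}$. I will show $\mathcal F$ is Glivenko--Cantelli through the RKHS structure, which decouples $x$ from $\theta$. Write $k(x,X_j)=\langle k(\cdot,x),k(\cdot,X_j)\rangle_{\mathcal H_k}$, so that
$$
\frac1n\sum_j k(x,X_j)g(Z_j,\theta)-\mathbb E[\cdot]=\bigl\langle k(\cdot,x),\,\xi_n(\theta)\bigr\rangle_{\mathcal H_k},\qquad \xi_n(\theta)\coloneqq\frac1n\sum_j k(\cdot,X_j)g(Z_j,\theta)-\mathbb E[k(\cdot,X)g(Z,\theta)],
$$
taken entrywise. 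By Cauchy--Schwarz the supremum over $x$ is at most $\sup_x k(x,x)^{1/2}\,\sup_{\theta\in\mathcal N^*}\|\xi_n(\theta)\|_{\mathcal H_k}$. The problem therefore reduces to a uniform law of large numbers for the $\mathcal H_k$-valued (separable Hilbert, since $\mathcal X$ is separable and $k$ continuous) i.i.d. averages $\xi_n(\theta)$ over the compact set $\mathcal N^*$.

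For each fixed $\theta$, $\|\xi_n(\theta)\|_{\mathcal H_k}\to0$ a.s. by the strong LLN in separable Banach spaces, because $\mathbb E\|k(\cdot,X)g(Z,\theta)\|_{\mathcal H_k}\le\sup_x k(x,x)^{1/2}\mathbb E\|g\|<\infty$. For uniformity I use the standard bracketing argument of Lemma 2.4 in \citet{newey1994large}, adapted to Hilbert-valued summands. Almost-sure continuity of $\theta\mapsto g(Z,\theta)$ together with the dominating function $\sup_{x}k(x,x)^{1/2}\sup_{\mathcal N^*}\|g(Z,\theta)\|$ gives, by dominated convergence, $\mathbb E\sup_{\|\theta'-\theta\|<\eta}\|g(Z,\theta')-g(Z,\theta)\|\to0$ as $\eta\to0$. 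Cover $\mathcal N^*$ by finitely many such balls, apply the pointwise LLN at the centers, and apply the scalar LLN to the moduli of continuity; this yields $\sup_{\mathcal N^*}\|\xi_n(\theta)\|_{\mathcal H_k}=o_p(1)$.

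The main obstacle is this last step, which is the source of uniformity in $x$, and hence of the $\max_i$. If the Hilbert-space route became delicate, the fallback would be to invoke compactness of $\mathcal X$, but that is not assumed, so the reproducing-property factorization is the route to commit to. Combining the three pieces gives $T_n=o_p(1)$.
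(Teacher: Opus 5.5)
Your proof is correct, but it takes a genuinely different route from the paper's. The paper works only at the sample points. It truncates $g$ at level $M_n=\log n$ using the envelope $\sup_{\mathcal N^*}\|g(z,\theta)\|$ and disposes of the tail by Markov's inequality. It then places a $1/n$-net on $\mathcal N^*$, controlling the discretization error through a Lipschitz envelope built from the Hessian moment in Assumption~\ref{ass:an}(ii). Finally, it applies Bernstein's inequality conditionally on $X_i$, with a union bound over the $O(n^{p+1})$ pairs $(i,\theta_k)$.

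You instead bound everything uniformly in $x\in\mathcal X$ and let the reproducing property absorb that direction: $\sup_x|\langle k(\cdot,x),\xi_n(\theta)\rangle_{\mathcal H_k}|\le \sup_x k(x,x)^{1/2}\|\xi_n(\theta)\|_{\mathcal H_k}$. What remains is a uniform LLN over the compact, finite-dimensional set $\mathcal N^*$ for separable-Hilbert-valued i.i.d.\ averages, which the Newey--McFadden Lemma~2.4 bracketing argument delivers.

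Your route has several advantages:
\begin{itemize}
\item It needs no truncation, no union bound, and no Lipschitz/Hessian envelope; only a.s.\ continuity of $g$ and an integrable envelope are used.
\item It yields the stronger statement $\sup_{x\in\mathcal X}\sup_{\mathcal N^*}\|\widehat G_n(x,\theta)-G(x,\theta)\|=o_p(1)$.
\item It avoids the $j=i$ summand, whose conditional mean given $X_i$ is not the corresponding piece of $G(X_i,\theta)$. This is a negligible point that the paper passes over.
\end{itemize}

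In fact, because $G(\cdot,\theta)$ is a deterministic function evaluated at $X_i$, you have directly $\max_i\|\widehat G_n(X_i,\theta)-G(X_i,\theta)\|\le\sup_x\|\widehat G_n(x,\theta)-G(x,\theta)\|$. Your diagonal decomposition, which you effectively undo when you re-insert index $i$, is therefore harmless but unnecessary. Without it, even the square-integrable envelope is not needed.

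The paper's route, in exchange, uses only scalar exponential inequalities and no RKHS structure, so it would go through for any bounded measurable kernel. Yours relies on positive definiteness of $k$ and separability of $\mathcal H_k$, both of which are guaranteed here by Assumption~\ref{ass:cons}(iv).

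For $\theta^*=\theta_0$ under $\mathbb H_{1n}$, both arguments are written for a fixed sampling law. The conclusion transfers to $\mathbb P^{(n)}$ by contiguity, with $G$ computed under $\mathbb P$, as the paper does in the proof of Lemma~\ref{lemma:H1n_lp}.
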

\begin{proof}
    The proof employs a truncation strategy based on a uniform envelope, a grid-covering argument, and component-wise Bernstein inequalities. We begin by introducing some notations. Throughout the proof, let $S(z) \coloneqq \supT \norm{g(z,\theta)}$ be the uniform envelope. By Assumption \ref{ass:an} (ii) (or \ref{ass:theta1_an} (ii) for $\theta_1$), we have $\E\left[S(Z)^2\right] < \infty$ and $\E\left[S(Z)\right] < \infty$. Furthermore, by the boundedness of the Hessian tensor of $\rho$, i.e., $\mathbb{E}\left(\sup _{\theta \in \mathcal{N}\left(\theta^*\right)} \Fnorm{\nabla_{\theta \theta^{\prime}}^2 \rho_l\left(Z, \theta\right)}^2\right)<\infty$, for $l=1, \cdots q$. Define 
$L(z) \coloneqq\sqrt{\sum_{l=1}^q \left( \sup_{\theta \in \N^*} \norm{\nabla_{\theta\theta'}^2 \rho_l(z, \theta)}_{\mathrm{op}}\right)^2}$. By the Mean Value Theorem, $g(z,\theta)$ is Lipschitz continuous in $\mathcal{N}(\theta^*)$ with $L(z)$, i.e.
    $$
        \forall \theta_1, \theta_2 \in \N^*, \quad \norm{g(z, \theta_1) - g(z, \theta_2)} \le L(z) \norm{\theta_1 - \theta_2}.
    $$
    Furthermore, $\E\left[L(Z)^2\right]<\infty$ and $\E\left[L(Z)\right]<\infty$.

    First, we use a $\theta$-independent truncation based on the envelope $S(z)$. Let $M_n$ be a truncation sequence such that $M_n \to \infty$ as $n \to \infty$. We decompose $g(z, \theta)$ into a truncated part $\tilde g_n$ and a tail part $\tilde r_n$:
    $$
    \tilde{g}_n(z, \theta)\coloneqq g(z, \theta) \cdot \ind{S(z) \le M_n},\quad \tilde{r}_n(z, \theta)\coloneqq g(z, \theta) \cdot \ind{S(z) > M_n}.
    $$
 This induces a decomposition of the $q \times p$ matrices $\widehat{G}_n$ and $G$:
\begin{align*}
    S_n(x, \theta) &\coloneqq \frac{1}{n} \sum_{j=1}^n k(x, X_j) \tilde{g}_n(Z_j, \theta) \\
    R_n(x, \theta) &\coloneqq \frac{1}{n} \sum_{j=1}^n k(x, X_j) \tilde{r}_n(Z_j, \theta)
\end{align*}
So, $\widehat{G}_n(x, \theta) = S_n(x, \theta) + R_n(x, \theta)$, and $G(x, \theta) = \Econd{S_n(x, \theta)}{X=x} + \Econd{R_n(x, \theta)}{X=x}$.

By the triangle inequality,
\begin{align*}
    T_n &= \maxi \supT \norm{ \left(S_n(X_i, \theta) - \Econd{S_n(X_i, \theta)}{X_i}\right) + \left(R_n(X_i, \theta) - \Econd{R_n(X_i, \theta)}{X_i}\right) } \\
    &\le \underbrace{\maxi \supT \norm{ S_n(X_i, \theta) - \Econd{S_n(X_i, \theta)}{X_i} }}_{\text{I}} \\
    &\quad + \underbrace{\maxi \supT \norm{ R_n(X_i, \theta) - \Econd{R_n(X_i, \theta)}{X_i} }}_{\text{II}}
\end{align*}
We will prove that $\text{I} = o_p(1)$ and $\text{II} = o_p(1)$. 

We first show the tail part $\text{II}=o_p(1)$. By the triangle inequality, we have
$$
    \text{II} \le \maxi \supT \norm{R_n(X_i, \theta)} + \maxi \supT \norm{\Econd{R_n(X_i, \theta)}{X_i}}.
$$
For any $i\in\left\{1,\cdots n\right\}$ and $\theta\in\mathcal{N}^*$:
\begin{align*}
    \norm{R_n(X_i, \theta)} &= \norm{ \frac{1}{n} \sum_{j=1}^n k(X_i, X_j) \tilde{r}_n(Z_j, \theta) } \\
    &\le K \cdot \frac{1}{n} \sum_{j=1}^n \norm{g(Z_j, \theta) \ind{S(Z_j) > M_n}}  \\
    &\le K \cdot \frac{1}{n} \sum_{j=1}^n \norm{g(Z_j, \theta)} \ind{S(Z_j) > M_n} \\
    &\le K \cdot \left( \frac{1}{n} \sum_{j=1}^n S(Z_j) \ind{S(Z_j) > M_n} \right),
\end{align*}
where the last inequality holds by the definition of $S\left(\cdot\right)$. This bound is uniform in $i$ and $\theta$. Similarly, for any $i$ and $\theta$:
\begin{align*}
    \norm{\Econd{R_n(X_i, \theta)}{X_i}} &= \norm{\E\left[k(X_i, X_j) \tilde{r}_n(Z_j, \theta) \mid X_i\right]} \\
    &\le \E\left[ |k(X_i, X_j)| \cdot \norm{\tilde{r}_n(Z_j, \theta)} \mid X_i \right] \\
    &\le K \cdot \E\left[ S(Z_j) \ind{S(Z_j) > M_n} \right].
\end{align*}
This bound is also uniform in $i$ and $\theta$. Since $\E[S(Z)] < \infty$, by the Dominated Convergence Theorem, 
$K \cdot \E\left[ S(Z) \ind{S(Z) > M_n} \right] = o(1)$ as $M_n \to \infty$. 
For the sample part, $K \cdot \left( \frac{1}{n} \sum_{j=1}^n S(Z_j) \ind{S(Z_j) > M_n} \right)$, by Markov's inequality. For any $\epsilon > 0$,
$$
\Prb\left(K \cdot \left[ \frac{1}{n} \sum_{j=1}^n S(Z_j) \ind{S(Z_j) > M_n} \right] > \epsilon\right) \le  \frac{K}{\epsilon} \E\left[ \frac{1}{n} \sum_{j=1}^n S(Z_j) \ind{S(Z_j) > M_n} \right]= o(1).
$$    
Therefore, $\text{II} \le o_p(1) + o(1) = o_p(1)$, for any $M_n \to \infty$.

We now prove that the truncated term I is also $o_p(1)$. We use a grid-covering argument. Let $\N_n$ be a $\delta_n$-net of the compact set $\N^* \subset \R^p$. The size of this net $N_n \coloneqq |\N_n|$ satisfies $N_n \le C \delta_n^{-p}$ for some constant $C$ and $\delta_n \to 0$. Using the triangle inequality, we add and subtract the value at the nearest grid point $\theta_k$:
\begin{align*}
    \text{I} &\le \underbrace{ \maxi \max_{\theta_k \in \N_n} \norm{ S_n(X_i, \theta_k) - \Econd{S_n(X_i, \theta_k)}{X_i} } }_{\text{I}_A} \\
    &\quad + \underbrace{ \maxi \max_{\theta_k \in \N_n} \sup_{\theta: \Eucnorm{\theta - \theta_k} \le \delta_n} \norm{(S_n(X_i, \theta) - S_n(X_i, \theta_k)) - \Econd{S_n(X_i, \theta) - S_n(X_i, \theta_k)}{X_i}} }_{\text{I}_B}
\end{align*}

Analogously to the proof of Term II, by the triangle inequality, $\text{I}_B$ is bounded by the sum of its sample and expectation parts. We bound the sample part (the expectation part is identical). For any $i\in\left\{1,\cdots n\right\}$, $\theta_k\in\mathcal{N}_n$ and $\theta\in\mathcal{N}^*$:
\begin{align*}
    &\norm{ S_n(X_i, \theta) - S_n(X_i, \theta_k) } \\
    &= \norm{ \frac{1}{n} \sum_{j=1}^n k(X_i, X_j) \left[ \tilde{g}_n(Z_j, \theta) - \tilde{g}_n(Z_j, \theta_k) \right] } \\
    &\le K \cdot \frac{1}{n} \sum_{j=1}^n \norm{ \left( g(Z_j, \theta) - g(Z_j, \theta_k) \right) \cdot \ind{S(Z_j) \le M_n} } \\
    &\le K \cdot \frac{1}{n} \sum_{j=1}^n \norm{ g(Z_j, \theta) - g(Z_j, \theta_k) } \\
    &\le K \cdot \frac{1}{n} \sum_{j=1}^n L(Z_j) \norm{\theta - \theta_k}   \\
    &\le K \delta_n \left( \frac{1}{n} \sum_{j=1}^n L(Z_j) \right), 
\end{align*}
where the second-to-last inequality holds from the Lipschitz condition of $g(\cdot)$. This bound is uniform in $i, \theta, \theta_k$.
Since $\E[L(Z)] < \infty$. By LLN, $\frac{1}{n} \sum L(Z_j) = O_p(1)$, and similarly, the expectation part $\norm{\Econd{S_n(X_i, \theta) - S_n(X_i, \theta_k)}{X_i}}$ has the uniform bound of $O(\delta_n)$.
Thus, as long as we choose $\delta_n \to 0$, we have $\text{I}_B = o_p(1)$.

For Term $\text{I}_A$, we apply Bernstein's inequality and finally use a union bound. Since $S_n(\cdot)$ is a $q\times p$ matrix, we bound the Frobenius norm by the $L_1$ norm of the components. Let $S_n^{lm}(\cdot)$ be the $(l,m)$-th component of the matrix $S_n(\cdot)$.
\begin{align*}
    \text{I}_A &\le \maxi \max_k \sum_{l=1}^q \sum_{m=1}^p \left| S_n^{lm}(X_i, \theta_k) - \Econd{S_n^{lm}(X_i, \theta_k)}{X_i} \right| \\
    &\le \sum_{l=1}^q \sum_{m=1}^p \left( \maxi \max_k \left| S_n^{lm}(X_i, \theta_k) - \Econd{S_n^{lm}(X_i, \theta_k)}{X_i} \right| \right),
\end{align*}
which is a summation over a fixed number ($q \times p$) of terms. To show that $\text{I}_A=o_p(1)$, it is sufficient to show that $\text{I}_A^{(l,m)} \coloneqq\maxi \max_k \left| S_n^{lm}(X_i, \theta_k) - \Econd{S_n^{lm}(X_i, \theta_k)}{X_i} \right|= o_p(1)$ for every component $(l,m)$.

Fix $(l,m)$. Let $W_{j, (i,k,l,m)} \coloneqq k(X_i, X_j) \left[\tilde{g}_n(Z_j, \theta_k)\right]_{lm}$. Given $X_i$, the scalar variables $\{W_{j, (i,k,l,m)}\}_{j=1}^n$ are independent.
We apply Bernstein's inequality. Let $Y_j = W_j - \Econd{W_j}{X_i}$. First, it is easy to verify that $Y_j$ is bounded, since
\begin{align*}
    |W_{j, (i,k,l,m)}| &\le |k(X_i, X_j)| \cdot \left| \left[\tilde{g}_n(Z_j, \theta_k)\right]_{lm} \right| \\
    &\le K \cdot \norm{\tilde{g}_n(Z_j, \theta_k)}  \\
    &= K \cdot \norm{g(Z_j, \theta_k) \ind{S(Z_j) \le M_n}} \\
    &\le K M_n.
\end{align*}
So, $|Y_j| \le |W_j| + |\E[W_j \mid X_i]| \le 2 K M_n$. Let $B_n \coloneqq 2 K M_n$. Furthermore, the conditional variance $V_n \coloneqq \sum_{j=1}^n \operatorname{Var}(Y_j \mid X_i) \le \sum_{j=1}^n \Econd{W_j^2}{X_i} \le  n (K M_n)^2$. For $\varepsilon>0$, by Bernstein's Inequality:
\begin{align*}
    \Prb\left(\left| \frac{1}{n} \sum_{j=1}^n Y_j \right| > \epsilon \mid X_i\right) &\le 2 \exp\left( - \frac{(n\epsilon)^2}{2V_n + \frac{2}{3} B_n (n\epsilon)} \right) \\
    &\le 2 \exp\left( - \frac{n \epsilon^2}{2 (K M_n)^2 + \frac{4}{3} K M_n \epsilon} \right).
\end{align*}
Thus, by the tower property, $\Prb\left(\left| \frac{1}{n} \sum_{j=1}^n Y_j \right| > \epsilon\right)\le 2 \exp\left( - \frac{n \epsilon^2}{2 (K M_n)^2 + \frac{4}{3} K M_n \epsilon} \right)$. The total number of terms in the $\max_{i,k}$ is $n \times N_n$. By a union bound over $i$ and $k$:
\begin{align*}
    \Prb\left( \text{I}_A^{(l,m)} > \epsilon \right) 
    &\le 2n \cdot N_n \cdot  \exp\left( - \frac{n \epsilon^2}{2 (K M_n)^2 + \frac{4}{3} K M_n \epsilon} \right).
\end{align*}
Finally, we choose our sequences $M_n$ and $\delta_n$. Let $M_n = \log n$, $\delta_n = 1/n$. Then, $N_n = O(n^p)$. For any given $\varepsilon>0$, $\Prb\left( \text{I}_A^{(l,m)} > \epsilon \right)\le O(n^{p+1}) \cdot \exp\left( - \frac{n \epsilon^2}{O((\log n)^2)} \right)\to 0$. Thus, $\text{I}_A^{(l,m)}=o_p(1)$ for any given $(l,m)$. Thus, $\text{I}_A$ is $o_p(1)$. This completes the proof.
\end{proof}

\begin{lemma}
\label{lemma:unif_delta}
Let $\theta^*$ be the (pseudo) true value under different hypotheses, i.e., $\theta^*=\theta_0$ under $\mathbb{H}_0$ or $\mathbb{H}_{1n}$ and $\theta^*=\theta_1$ under $\mathbb{H}_1$. Under Assumptions \ref{ass:cons} and \ref{ass:an} ( or Assumptions \ref{ass:theta1_con} and \ref{ass:theta1_an} for $\theta^*=\theta_1$), for some compact neighborhood $\mathcal{N}^* \subset \mathcal{N}(\theta^*)$ of $\theta^*$, we have
$$
    \supT \norm{ \widehat{\Delta}_n(\theta)^{-1} - \Delta(\theta)^{-1} }=o_p(1),
$$
\end{lemma}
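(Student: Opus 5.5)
The plan is to first prove uniform convergence of $\widehat{\Delta}_n(\theta)$ itself, and then pass to inverses through a uniform lower bound on the smallest eigenvalue. Here $\widehat{\Delta}_n(\theta)\coloneqq n^{-2}\sum_{i,j}g(Z_i,\theta)'K(X_i,X_j)g(Z_j,\theta)$ and $\Delta(\theta)\coloneqq\E[g(Z,\theta)'K(X,\tilde X)g(\tilde Z,\theta)]$; under $\mathbb{H}_1$ the relevant population limit is the same $\Delta(\theta)$, the Gram part of $\widetilde\Delta$.

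\textbf{Step 1: uniform convergence of $\widehat{\Delta}_n$.} Note that $\widehat{\Delta}_n(\theta)$ is a $V$-statistic with kernel
\[
\phi(S_i,S_j,\theta)=k(X_i,X_j)\,g(Z_i,\theta)'g(Z_j,\theta).
\]
This kernel is a.s.\ continuous in $\theta$ on $\mathcal{N}(\theta^*)$ by Assumption \ref{ass:an}(ii), or Assumption \ref{ass:theta1_an}. It is dominated, uniformly over $\mathcal{N}^*$, by $\bar k\,S(Z_i)S(Z_j)$, where $\bar k=\sup k$ and $S(z)=\sup_{\theta\in\mathcal{N}^*}\|g(z,\theta)\|$ is the envelope from Lemma \ref{lemma:unif_G}. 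Since $\E[S(Z)^2]<\infty$, both $\E\sup_\theta\|\phi(S_i,S_j,\theta)\|\le\bar k(\E S)^2$ and $\E\sup_\theta\|\phi(S_i,S_i,\theta)\|\le\bar k\,\E S^2$ are finite. Exactly as in the proofs of Lemmas \ref{lemma:theta_consist} and \ref{lemma:theta_lr}, Lemma 8.5 of \cite{newey1994large} then gives continuity of $\Delta(\theta)$ on $\mathcal{N}^*$ and
\[
\sup_{\theta\in\mathcal{N}^*}\|\widehat{\Delta}_n(\theta)-\Delta(\theta)\|=o_p(1).
\]
Under $\mathbb{H}_{1n}$ the same conclusion carries over by contiguity, as in Lemma \ref{lemma:consistency_local}.

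\textbf{Step 2: uniform invertibility.} By Assumption \ref{ass:an}(iii), $\Delta(\theta^*)$ is nonsingular. Since $\Delta$ is a Gram matrix it is positive semidefinite (the PSD argument in the proof of Remark \ref{rem:kp_geometry}), so its smallest eigenvalue $\lambda_{\min}(\theta^*)$ is strictly positive. The map $\lambda_{\min}(\Delta(\theta))$ is continuous in $\theta$, so we may shrink $\mathcal{N}^*$ to a compact neighborhood on which $\lambda_{\min}(\Delta(\theta))\ge c>0$. The statement only asks for \emph{some} compact neighborhood, so this shrinking is allowed. Consequently $\sup_{\theta\in\mathcal{N}^*}\|\Delta(\theta)^{-1}\|\le\sqrt p/c$. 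By Weyl's inequality, $\lambda_{\min}(\widehat{\Delta}_n(\theta))\ge c-\sup_\theta\|\widehat{\Delta}_n-\Delta\|$. Hence, with probability approaching one, $\widehat{\Delta}_n(\theta)$ is invertible uniformly on $\mathcal{N}^*$ with $\sup_\theta\|\widehat{\Delta}_n(\theta)^{-1}\|\le 2\sqrt p/c$.

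\textbf{Step 3: conclusion.} Use the identity $A^{-1}-B^{-1}=A^{-1}(B-A)B^{-1}$. On the event from Step 2, this gives
\[
\sup_{\theta}\|\widehat{\Delta}_n(\theta)^{-1}-\Delta(\theta)^{-1}\|\le\frac{2p}{c^2}\sup_\theta\|\Delta(\theta)-\widehat{\Delta}_n(\theta)\|=o_p(1),
\]
and the probability of the complementary event vanishes.

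The main obstacle is Step 2, which needs a uniform eigenvalue floor. Pointwise nonsingularity at $\theta^*$ does not by itself control $\Delta(\theta)^{-1}$ on a neighborhood. This is why the argument relies on the continuity of $\Delta$ from Step 1 and on the freedom to choose $\mathcal{N}^*$ small. The remaining steps are routine once the envelope bound is in place.
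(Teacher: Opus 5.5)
Your proposal is correct and follows essentially the same route as the paper: a uniform law of large numbers for the $V$-statistic $\widehat{\Delta}_n(\theta)$ via Lemma 8.5 of \cite{newey1994large}, shrinking to a compact neighborhood on which $\Delta(\theta)^{-1}$ is uniformly bounded, and then the identity $A^{-1}-B^{-1}=A^{-1}(B-A)B^{-1}$. The only difference is in the bound $\sup_\theta\|\widehat{\Delta}_n(\theta)^{-1}\|=O_p(1)$: the paper appeals to continuity of $\det\Delta(\theta)$ and ``standard arguments on matrix inversion,'' whereas you derive it explicitly from an eigenvalue floor and Weyl's inequality, which is a welcome sharpening of the same argument.
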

\begin{proof}
    Let $S_i=\left\{X_i, Z_i\right\}$, $g\left(S_i, S_j, \theta\right)=g\left(Z_i, \theta\right)^{\prime} K\left(X_i, X_j\right) g\left(Z_j, \theta\right)$. By Assumption \ref{ass:an}(ii)( or Assumption \ref{ass:theta1_an} (ii) for $\theta^*=\theta_1$), $g\left(S_i, S_j, \theta\right)$ is continuous at each $\theta \in \mathcal{N}(\theta^*)$ with probability one. Furthermore,
    \begin{align*}
& \mathbb{E}\left(\sup _{\theta \in \Nb}\left\|g\left(S_i, S_j, \theta\right)\right\|\right) \leqslant \sup _{x \in \mathcal{X}} k(x, x) \cdot\left[\mathbb{E}\left(\sup _{\theta \in \Nb}\|g(z, \theta)\|\right)\right]^2<\infty,\\
& \mathbb{E}\left(\sup _{\theta \in \Nb}\left\|g\left(S_i, S_i, \theta\right)\right\|\right) \leqslant \sup _{x \in \mathcal{X}} k(x, x) \cdot\mathbb{E}\left(\sup _{\theta \in \Nb}\|g(z, \theta)\|^2\right)<\infty .
\end{align*}
    By Lemma 8.5 of \cite{newey1994large}, $\Delta(\theta)$ is continuous in $\theta\in\Nb$, and
    \begin{equation}
        \label{eq:delta_unif}
        \sup_{\theta\in\Nb}\norm{\widehat{\Delta}(\theta)-\Delta(\theta)}=o_p(1).
    \end{equation}
    By Assumption \ref{ass:an}(iii) (or \ref{ass:theta1_an}(iii)), $\Delta(\theta^*)$ is non-singular. By the continuity of $\Delta(\theta)$ (and thus $\det(\Delta(\theta))$) on the open set $\Nb$, there exists a compact sub-neighborhood $\mathcal{N}^* \subset \Nb$ of $\theta^*$ on which $\Delta(\theta)$ is uniformly non-singular, i.e.:
    \begin{equation}
        \label{eq:inv_bound_concise}
        \sup_{\theta \in \mathcal{N}^*} \norm{\Delta(\theta)^{-1}} = O(1).
    \end{equation}
    Now, consider $\widehat{\Delta}(\theta)$ on $\mathcal{N}^*$. Combining \eqref{eq:delta_unif} and \eqref{eq:inv_bound_concise}, it follows by standard arguments on matrix inversion that 
    \begin{equation} \label{eq:inv_hat_bound}
        \sup_{\theta \in \mathcal{N}^*} \norm{\widehat{\Delta}(\theta)^{-1}} = O_p(1).
    \end{equation}
    Using the identity $A^{-1} - B^{-1} = A^{-1}(B - A)B^{-1}$, 
    \begin{align*}
        \sup_{\theta \in \mathcal{N}^*} \norm{ \widehat{\Delta}(\theta)^{-1} - \Delta(\theta)^{-1} } 
        &\le \sup_{\theta \in \mathcal{N}^*} \norm{ \widehat{\Delta}(\theta)^{-1} (\Delta(\theta) - \widehat{\Delta}(\theta)) \Delta(\theta)^{-1} } \\
        &\le \left( \sup_{\theta \in \mathcal{N}^*} \norm{\widehat{\Delta}(\theta)^{-1}} \right) \cdot \left( \sup_{\theta \in \mathcal{N}^*} \norm{\widehat{\Delta}(\theta) - \Delta(\theta)} \right) \cdot \left( \sup_{\theta \in \mathcal{N}^*} \norm{\Delta(\theta)^{-1}} \right) \\
        &= O_p(1) \cdot o_p(1) \cdot O(1) = o_p(1).
    \end{align*}
    This completes the proof.
\end{proof}

\begin{lemma}
\label{lemma:unif_Kp}
Let $\theta^*$ be the (pseudo) true value under different hypotheses, i.e., $\theta^*=\theta_0$ under $\mathbb{H}_0$ or $\mathbb{H}_{1n}$ and $\theta^*=\theta_1$ under $\mathbb{H}_1$. Under Assumptions \ref{ass:cons} and \ref{ass:an} ( or Assumptions \ref{ass:theta1_con} and \ref{ass:theta1_an} for $\theta^*=\theta_1$), 
$$
\max_{i,j}\norm{\widehat{K}_n^p(X_i,X_j,\widehat{\theta}_n)-K^p(X_i,X_j,\theta^*)}=o_p(1).
$$
\end{lemma}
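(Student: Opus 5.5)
Since $K(X_i,X_j)$ appears identically in both $\widehat K^p_n$ and $K^p$, it cancels, and the claim reduces to showing
$$
\max_{i,j}\norm{\widehat G_n(X_i,\widehat\theta_n)\widehat\Delta_n(\widehat\theta_n)^{-1}\widehat G_n(X_j,\widehat\theta_n)'-G(X_i,\theta^*)\Delta(\theta^*)^{-1}G(X_j,\theta^*)'}=o_p(1).
$$
The plan is to write $\widehat A_i\coloneqq \widehat G_n(X_i,\widehat\theta_n)$, $A_i\coloneqq G(X_i,\theta^*)$, $\widehat B\coloneqq\widehat\Delta_n(\widehat\theta_n)^{-1}$, $B\coloneqq\Delta(\theta^*)^{-1}$, and use the telescoping identity
$$
\widehat A_i\widehat B\widehat A_j'-A_iBA_j'=(\widehat A_i-A_i)\widehat B\widehat A_j'+A_i(\widehat B-B)\widehat A_j'+A_iB(\widehat A_j-A_j)'.
$$
By submultiplicativity of the Frobenius norm, each piece is bounded by a product of maxima over indices. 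It then suffices to show $\max_i\norm{\widehat A_i-A_i}=o_p(1)$, $\norm{\widehat B-B}=o_p(1)$, $\max_i\norm{A_i}=O(1)$, and hence $\max_i\norm{\widehat A_i}=O_p(1)$ and $\norm{\widehat B}=O_p(1)$.

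For the first ingredient, I split
$$
\norm{\widehat G_n(X_i,\widehat\theta_n)-G(X_i,\theta^*)}\le \sup_{\theta\in\N^*}\norm{\widehat G_n(X_i,\theta)-G(X_i,\theta)}+\norm{G(X_i,\widehat\theta_n)-G(X_i,\theta^*)}.
$$
The first term is $o_p(1)$ uniformly in $i$ by Lemma \ref{lemma:unif_G}, on the event $\widehat\theta_n\in\N^*$, which has probability tending to one because $\widehat\theta_n\xrightarrow{p}\theta^*$. That consistency comes from Lemma \ref{lemma:theta_consist}, from its $\mathbb H_1$ analogue, or from Lemma \ref{lemma:consistency_local} under $\mathbb H_{1n}$, where contiguity transfers the $o_p$ statements. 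For the second term, I use the Lipschitz envelope $L(z)$ from the proof of Lemma \ref{lemma:unif_G}. Since $\norm{G(x,\theta)-G(x,\theta')}\le \sup_{x,x'}|k(x,x')|\,\E[L(Z)]\,\norm{\theta-\theta'}$ uniformly in $x$, this term is bounded by $C\norm{\widehat\theta_n-\theta^*}=o_p(1)$ uniformly in $i$. Boundedness follows in the same way: $\max_i\norm{A_i}\le \sup|k|\,\E[\sup_{\theta}\norm{g(Z,\theta)}]<\infty$, using boundedness of $k$ in Assumption \ref{ass:cons}(iv) and the moment condition in Assumption \ref{ass:an}(ii).

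For $\widehat B$, I argue analogously. I bound $\norm{\widehat\Delta_n(\widehat\theta_n)^{-1}-\Delta(\theta^*)^{-1}}$ by $\sup_{\theta\in\N^*}\norm{\widehat\Delta_n(\theta)^{-1}-\Delta(\theta)^{-1}}$, which is $o_p(1)$ by Lemma \ref{lemma:unif_delta}, plus $\norm{\Delta(\widehat\theta_n)^{-1}-\Delta(\theta^*)^{-1}}$. The latter is $o_p(1)$ by the continuity of $\Delta(\cdot)$ established in the proof of Lemma \ref{lemma:unif_delta}, together with continuity of matrix inversion at the nonsingular $\Delta(\theta^*)$ and the continuous mapping theorem.

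Combining the pieces, the three telescoped terms are bounded by $o_p(1)\cdot O_p(1)\cdot O_p(1)$, $O(1)\cdot o_p(1)\cdot O_p(1)$, and $O(1)\cdot O(1)\cdot o_p(1)$, uniformly in $(i,j)$, which gives the claim. The main obstacle is the uniformity over $i$ of the $\widehat G_n$ approximation when evaluated at the random $\widehat\theta_n$; this is exactly why Lemma \ref{lemma:unif_G} is stated with $\max_i\sup_\theta$, and handling the event $\{\widehat\theta_n\in\N^*\}$ carefully is what makes the argument go through.
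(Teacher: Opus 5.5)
Your proposal is correct and follows essentially the same route as the paper. The $K$ terms cancel, a three-term telescoping of the product is controlled by Lemmas~\ref{lemma:unif_G} and~\ref{lemma:unif_delta} on the event $\widehat\theta_n\in\mathcal{N}^*$, and the shift from $\widehat\theta_n$ to $\theta^*$ is handled by smoothness in $\theta$. The only difference is bookkeeping: the paper first compares sample and population quantities at the common point $\widehat\theta_n$, then bounds the $\theta$-shift of the whole product $G\Delta^{-1}G'$ by a mean-value argument that uses the $\sqrt{n}$-rate, whereas you absorb the shift factor by factor via Lipschitz continuity of $G$ and continuity of $\Delta(\cdot)^{-1}$, which needs only consistency of $\widehat\theta_n$ and avoids the bound on $\nabla_\theta\Delta$.
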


\begin{proof}
    Applying the triangle inequality, we decompose the difference of interest into two components:
    \begin{align*}
        &\quad \max_{1 \le i,j \le n} \norm{\widehat{K}_n^p(X_i,X_j,\widehat{\theta}_n)-K^p(X_i,X_j,\theta^*)}\\
        &\le \max_{i,j} \norm{\widehat{G}_n(X_i,\widehat{\theta}_n)\widehat{\Delta}_n(\widehat{\theta}_n)^{-1}\widehat{G}_n(X_j,\widehat{\theta}_n)' - G(X_i,\widehat{\theta}_n)\Delta(\widehat{\theta}_n)^{-1}G(X_j,\widehat{\theta}_n)'} \\
        &\quad + \max_{i,j} \norm{G(X_i,\widehat{\theta}_n)\Delta(\widehat{\theta}_n)^{-1}G(X_j,\widehat{\theta}_n)' - G(X_i,{\theta}^*)\Delta({\theta}^*)^{-1}G(X_j,{\theta}^*)'},
    \end{align*}
    denoted as $\mathrm{I}$ and $\mathrm{II}$ respectively. 
    
    To establish the convergence of term $\mathrm{I}$, observe that consistency $\widehat{\theta}_n \xrightarrow{p} \theta^*$ implies that $\widehat{\theta}_n$ lies within a compact neighborhood $\mathcal{N}^*$ of $\theta^*$ with probability approaching one. Therefore, it suffices to bound the supremum of the difference over $\theta \in \mathcal{N}^*$. Adding and subtracting cross-terms yields
    \begin{align*}
        \mathrm{I} &\le \max_{i,j}\sup_{\theta\in\mathcal{N}^*}\norm{(\widehat{G}_n(X_i,\theta) - G(X_i,\theta)) \widehat{\Delta}_n(\theta)^{-1} \widehat{G}_n(X_j,\theta)'} \\
        &\quad + \max_{i,j}\sup_{\theta\in\mathcal{N}^*}\norm{G(X_i,\theta) (\widehat{\Delta}_n(\theta)^{-1} - \Delta(\theta)^{-1}) \widehat{G}_n(X_j,\theta)'} \\
        &\quad + \max_{i,j}\sup_{\theta\in\mathcal{N}^*}\norm{G(X_i,\theta) \Delta(\theta)^{-1} (\widehat{G}_n(X_j,\theta) - G(X_j,\theta))'}.
    \end{align*}
    By Lemmas \ref{lemma:unif_G} and \ref{lemma:unif_delta}, uniformly in $i,\theta$, we have $\norm{G(X_i,\theta)} = O(1)$, $\norm{\widehat{G}_n(X_i,\theta)-G(X_i,\theta)} = o_p(1)$, $\norm{\widehat G_n(X_i,\theta)} = O_p(1)$, $\norm{\Delta(\theta)^{-1}} = O(1)$, $\norm{\widehat\Delta_n(\theta)^{-1}-\Delta(\theta)^{-1}} = o_p(1)$,  $\norm{\widehat\Delta_n(\theta)^{-1}} = O_p(1)$. Consequently, each term in the expansion above is the product of an $o_p(1)$ factor and $O_p(1)$ factors, proving that $\mathrm{I} = o_p(1)$.

    For term $\mathrm{II}$, define $\Psi_{ij}(\theta) \coloneqq G(X_i,\theta)\Delta(\theta)^{-1}G(X_j,\theta)'$. By the Mean Value inequality, $\mathrm{II}$ is bounded by $\|\widehat{\theta}_n - \theta^*\| \cdot \max_{i,j} \sup_{\theta \in \mathcal{N}^*} \|\nabla_{\theta}\Psi_{ij}(\theta)\|$.
    
Invoking the Dominated Convergence Theorem under Assumptions \ref{ass:cons} and \ref{ass:an} (or Assumptions \ref{ass:theta1_con} and \ref{ass:an} for $\theta^*=\theta_1$), we establish the uniform bounds:
\begin{align*}
    \sup_{x \in \mathcal{X}} \sup_{\theta \in \mathcal{N}^*} \|G(x, \theta)\| 
    &\le \sup_{x, \tilde{x}} \|K(x, \tilde{x})\| \cdot \mathbb{E}\left[\sup_{\theta \in \mathcal{N}^*} \|g(Z, \theta)\|\right] \le C_1 < \infty, \\
    \sup_{x \in \mathcal{X}} \sup_{\theta \in \mathcal{N}^*} \|\nabla_\theta G(x, \theta)\| 
    &\le \sup_{x, \tilde{x}} \|K(x, \tilde{x})\| \cdot \mathbb{E}\left[\sup_{\theta \in \mathcal{N}^*} \|\nabla_\theta g(Z, \theta)\|\right] \le C_2 < \infty.
\end{align*}
Furthermore, $\sup_{\theta \in \mathcal{N}^*} \|\Delta(\theta)^{-1}\| \le C_\Delta < \infty$ and $\sup_{\theta \in \mathcal{N}^*} \|\nabla_\theta \Delta(\theta)\| \le C_{\nabla\Delta} < \infty$. By the matrix product rule, the derivative of $\Psi_{ij}(\theta)$ is:
\begin{equation*}
    \nabla_\theta \Psi_{ij}(\theta) = (\nabla_\theta G_i) \Delta^{-1} G_j' - G_i \Delta^{-1} (\nabla_\theta \Delta) \Delta^{-1} G_j' + G_i \Delta^{-1} (\nabla_\theta G_j)',
\end{equation*}
where dependence on $\theta$ and $X$ is suppressed for brevity. Applying the triangle inequality and the sub-multiplicative property of the matrix norm implies:
\begin{align*}
    \|\nabla_\theta \Psi_{ij}(\theta)\| &\le \|\nabla_\theta G_i\| \|\Delta^{-1}\| \|G_j\| + \|G_i\| \|\Delta^{-1}\|^2 \|\nabla_\theta \Delta\| \|G_j\| + \|G_i\| \|\Delta^{-1}\| \|\nabla_\theta G_j\| \\
    &\le C_2 C_\Delta C_1 + C_1^2 C_\Delta^2 C_{\nabla\Delta} + C_1 C_\Delta C_2.
\end{align*}
Since the upper bound depends only on finite constants which are independent of indices $i,j$, we conclude that $\max_{1 \le i,j \le n} \sup_{\theta \in \mathcal{N}^*} \left\| \nabla_\theta \Psi_{ij}(\theta) \right\| = O(1)$. Since $\|\widehat{\theta}_n - \theta^*\| = O_p(n^{-1/2})$, it follows that $\mathrm{II} = O_p(n^{-1/2}) = o_p(1)$. Combining the results for $\mathrm{I}$ and $\mathrm{II}$ completes the proof.
\end{proof}

\paragraph{Proof of Lemma \ref{lemma:mb_asy}}
\begin{proof}
    A second-order Taylor expansion of $\rho(Z_i, \widehat{\theta}_n)$ around $\theta^*$ yields:
    \begin{equation}
        \rho(Z_i, \widehat{\theta}_n) = \rho(Z_i, \theta^*) + g(Z_i, \theta^*) (\widehat{\theta}_n - \theta^*) + \frac{1}{2} \mathcal{R}_i(\bar{\theta}),
        \label{eq:rho_taylor_exp}
    \end{equation}
    where the remainder term $\mathcal{R}_i(\bar{\theta})$ is a $q \times 1$ vector with its $l$-th component given by:
    \begin{equation*}
        \mathcal{R}_{il}(\bar{\theta}) = (\widehat{\theta}_n - \theta^*)' \nabla_{\theta\theta}^2 \rho_l(Z_i, \bar{\theta}_{il}) (\widehat{\theta}_n - \theta^*).
    \end{equation*}
    Here, $\bar{\theta}_{il}$ denotes an intermediate point on the line segment joining $\widehat{\theta}_n$ and $\theta^*$, which may vary across components $l$.
    Additionally, Lemma \ref{lemma:unif_Kp} establishes the uniform convergence of the kernel estimator:
    \begin{equation}
        \label{eq:Kp_asy}
        \max_{1 \le i,j \le n} \| \widehat{K}_n^p(X_i,X_j,\widehat{\theta}_n) - K^p(X_i,X_j,\theta^*) \| = o_p(1).
    \end{equation}

    To analyze the asymptotic difference between $\widehat{T}^{*}_n$ and ${T}^{*}_n$, we introduce the following compact notation. For $u \in \{1, 2, 3\}$, define the vector expansion terms $A_{u,i}$:
    $$
    A_{1,i} \coloneqq \rho(Z_i, \theta^*),\quad A_{2,i} \coloneqq g(Z_i, \theta^*) (\widehat{\theta}_n - \theta^*),\quad A_{3,i} \coloneqq \frac{1}{2} \mathcal{R}_i(\bar{\theta}).
    $$
    Similarly, for $v \in \{1, 2\}$, define the kernel matrix components $M_{v,ij}$:
    $$
    M_{1,ij} \coloneqq K^p(X_i, X_j, \theta^*),\quad M_{2,ij} \coloneqq \widehat{K}_n^p(X_i, X_j, \widehat{\theta}_n) - K^p(X_i, X_j, \theta^*).
    $$
    Substituting these definitions into $\widehat{T}_n^*$, we obtain the decomposition:
    \begin{equation}
        \label{eq:expansion_18_terms}
        \widehat{T}_n^* = \frac{1}{n} \sum_{i=1}^n \sum_{j=1}^n V_i V_j \left( \sum_{u=1}^3 A_{u,i} \right)^\prime \left( \sum_{v=1}^2 M_{v,ij} \right) \left( \sum_{w=1}^3 A_{w,j} \right) = \sum_{u,v,w} Q_{uvw},
    \end{equation}
    where $Q_{uvw} \coloneqq \frac{1}{n} \sum_{i,j} V_i V_j A_{u,i}^\prime M_{v,ij} A_{w,j}$ for indices $u,w \in \{1,2,3\}$ and $v \in \{1,2\}$. We analyze representative terms from this sum:

    \begin{enumerate}
        \item Leading Term ($Q_{111}$): By definition, $Q_{111} = T_n^*$.

        \item Linear Estimation Error ($Q_{112}$): This term captures the interaction between the score function and the true kernel:
        \begin{equation*}
            Q_{112} = \left[\frac{1}{n}\sum_{i,j}\rho(Z_i,\theta^*)^\prime K^p(\theta^*)g(Z_j,\theta^*)V_iV_j\right] (\widehat{\theta}_n-\theta^*).
        \end{equation*}
        The bracketed term is a $n$-scaled degenerate $V$-statistic (due to centered multipliers $\mathbb{E}[V_i]=0$) and is thus $O_p(1)$. Since $(\widehat{\theta}_n-\theta^*) = O_p(n^{-1/2})$, it follows that $Q_{112} = O_p(1) \cdot O_p(n^{-1/2}) = o_p(1)$. 

        \item Second-Order Remainder ($Q_{113}$): This term involves the Hessian via $\mathcal{R}_i(\bar{\theta})$. Rearranging terms to isolate the convergence rate yields:
        \begin{equation}
            Q_{113} = \frac{1}{2} \sqrt{n}(\widehat{\theta}_n - \theta^*)' \left[ \frac{1}{n^2} \sum_{i,j=1}^n \mathbf{\Psi}_n(Z_i, Z_j, \bar{\theta}) \right] \sqrt{n}(\widehat{\theta}_n - \theta^*),
            \label{eq:Q113_form}
        \end{equation}
        where $\mathbf{\Psi}_n(Z_i, Z_j, \theta) \coloneqq V_i V_j \sum_{l=1}^q \left[ \rho(Z_i, \theta^*)' K^p(\theta^*) \right]_l \nabla_{\theta\theta}^2 \rho_l(Z_j, \theta)$.
        Since $\|\sqrt{n}(\widehat{\theta}_n - \theta^*)\| = O_p(1)$, the asymptotic behavior is determined by the bracketed term. Observe that $\frac{1}{n^2} \sum_{i,j} \mathbf{\Psi}_n(\cdot)$ constitutes a $V$-statistic with a zero-mean kernel function. Applying the Uniform Weak Law of Large Numbers (UWLLN) for $V$-statistics (e.g., Lemma 8.5 of \cite{newey1994large}), we obtain:
        \begin{equation*}
            \sup_{\theta \in \mathcal{N}^*} \left\| \frac{1}{n^2} \sum_{i,j=1}^n \mathbf{\Psi}_n(Z_i, Z_j, \theta) \right\| = o_p(1).
        \end{equation*}
        As $\bar{\theta} \xrightarrow{p} \theta^*$, the continuous mapping theorem implies the bracketed term is $o_p(1)$, and thus $Q_{113} = o_p(1)$.

        \item Kernel Estimation Error ($Q_{121}$): This term accounts for the error in $\widehat{K}_n^p$:
        \begin{equation*}
            Q_{121} = \frac{1}{n}\sum_{i,j} V_i V_j \rho(Z_i,\theta^*)^\prime M_{2,ij} \rho(Z_j,\theta^*).
        \end{equation*}
        Using the uniform bound $\max_{i,j} \|M_{2,ij}\| = o_p(1)$ from Lemma \ref{lemma:unif_Kp}, combined with the $O_p(1)$ boundedness of the degenerate $V$-statistic formed by the multipliers and $\rho$, we have $Q_{121} = o_p(1)$.
    \end{enumerate}

    The remaining 14 terms are of equal or smaller stochastic order and vanish asymptotically. We categorize them based on their dependence on the intermediate value $\bar{\theta}$. For terms involving the second-order remainder $A_3$ (where $u=3$ or $w=3$), the quadratic factor $(\widehat{\theta}_n - \theta^*)$ allows for the extraction of an $n^{-1}$ scaling factor. This effectively results in an $n^{-2}$-scaled sum, activating the UWLLN for $V$-statistics to control the dependence on $\bar{\theta}$ and rendering these terms $o_p(1)$. Conversely, terms independent of $\bar{\theta}$ (involving only $A_1, A_2$, and $M_{1,2}$) do not require uniform convergence laws; their negligibility follows directly from the boundedness of degenerate $V$-statistics (scaled by $n$), combined with the $\sqrt{n}$-consistency of $\widehat{\theta}_n$ and the uniform convergence of $\widehat{K}_n^p$. 
    
    Consequently, $\widehat{T}_n^* = T_n^* + o_p(1)$.
\end{proof}

\begin{lemma}
\label{lemma:max_row_bound}
    Suppose Assumptions \ref{ass:cons} and
     \ref{ass:an} hold. Let $f_p(s,s',\theta_0)=\rho(z,\theta_0)'K^p(x,x')\rho(z',\theta_0)$ as defined in Theorem \ref{thm:H0}. Then, we have:
    \begin{equation*}
        M_n \coloneqq \frac{1}{\sqrt{n}} \max_{1 \le i \le n} \left( \frac{1}{n} \sum_{j=1}^n \left(f_p(S_i,S_j,\theta_0) \right)^2 \right)^{1/2} = o_p(1).
    \end{equation*}
\end{lemma}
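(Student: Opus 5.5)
The plan is to bound $f_p$ pointwise by a product of envelopes, which separates the $i$ and $j$ indices. The max over $i$ then becomes a max of i.i.d.\ square-integrable variables, and that is $o_p(\sqrt n)$.

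\textbf{Step 1 (uniform bound on $K^p$).} Write $\bar k\coloneqq\sup_{x,\tilde x}|k(x,\tilde x)|<\infty$, which is finite by Assumption \ref{ass:cons}(iv). From the proof of Lemma \ref{lemma:unif_Kp} we have
\[
\sup_x\|G(x,\theta_0)\|\le C_1, \qquad \bar k\,\mathbb E\|g(Z,\theta_0)\|<\infty ,
\]
and $\|\Delta(\theta_0)^{-1}\|<\infty$ by Assumption \ref{ass:an}(iii). Hence
\[
\sup_{x,x'}\|K^p(x,x')\|\le \bar k\sqrt q + C_1^2\|\Delta(\theta_0)^{-1}\| \eqqcolon C_p<\infty .
\]

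\textbf{Step 2 (factorization).} By Cauchy--Schwarz,
\[
|f_p(S_i,S_j,\theta_0)|\le C_p\|\rho_i\|\,\|\rho_j\|, \qquad \rho_i\coloneqq\rho(Z_i,\theta_0).
\]
Therefore
\[
\frac1n\sum_{j=1}^n f_p(S_i,S_j,\theta_0)^2\le C_p^2\|\rho_i\|^2\,\frac1n\sum_{j=1}^n\|\rho_j\|^2 .
\]
Taking square roots and the maximum over $i$ gives
\[
M_n\le C_p\Bigl(\frac1n\sum_{j=1}^n\|\rho_j\|^2\Bigr)^{1/2}\cdot\frac{\max_{i\le n}\|\rho_i\|}{\sqrt n}.
\]
The first factor is $O_p(1)$ by the WLLN, since $\mathbb E\|\rho(Z,\theta_0)\|^2<\infty$ by Assumption \ref{ass:cons}(ii).

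\textbf{Step 3 (maximal inequality).} It remains to show $n^{-1/2}\max_i\|\rho_i\|=o_p(1)$. For any $\epsilon>0$, the union bound and Markov's inequality give
\[
\mathbb P\Bigl(\max_i\|\rho_i\|>\epsilon\sqrt n\Bigr)\le n\,\mathbb P(\|\rho_1\|^2>\epsilon^2 n)\le \epsilon^{-2}\,\mathbb E\bigl[\|\rho_1\|^2\mathbf 1\{\|\rho_1\|^2>\epsilon^2 n\}\bigr].
\]
The right-hand side tends to $0$ by dominated convergence, using only the finite second moment. The fourth moment in Assumption \ref{ass:an}(ii) would instead give a rate $o_p(n^{-1/4})$. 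Combining the three steps yields $M_n=O_p(1)\cdot o_p(1)=o_p(1)$.

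\textbf{Main obstacle.} The only substantive point is Step 1: the uniform boundedness of $K^p$. It hinges on $\sup_x\|G(x,\theta_0)\|<\infty$, which comes from the bounded kernel and the integrable score, together with the invertibility of $\Delta(\theta_0)$. Once $K^p$ is bounded, the kernel factorizes and the rest is the standard maximum-of-i.i.d.\ argument. Should one want the same statement with $\theta^*=\theta_1$, the identical argument applies under Assumptions \ref{ass:theta1_con}--\ref{ass:theta1_an}.
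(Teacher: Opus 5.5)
Your proof is correct and follows the paper's argument essentially line for line: you bound $K^p$ uniformly using $\sup_x\|G(x,\theta_0)\|<\infty$ and the invertibility of $\Delta(\theta_0)$, factor via Cauchy--Schwarz as $|f_p(S_i,S_j,\theta_0)|\le C_p\|\rho_i\|\,\|\rho_j\|$, apply the WLLN to $n^{-1}\sum_j\|\rho_j\|^2$, and use $\max_i\|\rho_i\|=o_p(\sqrt n)$. Your union-bound argument with a truncated second moment for that last step is a useful addition, since the paper asserts it without justification.
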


\begin{proof}
    We first establish the uniform boundedness of the projected kernel $K^p$. By Assumptions \ref{ass:cons}, \ref{ass:an} and Jensen's inequality, observe that $\sup_{x \in \mathcal{X}} \|G(x, \theta_0)\| \le C_K \mathbb{E}[\|g(Z, \theta_0)\|]<\infty$. Using the definition of $K^p$, the triangle inequality, and the sub-multiplicative property of matrix norms, we obtain
    $\sup_{x,y \in \mathcal{X}} |K^p(x, y)| \le C_K + C_G^2 \|\Delta(\theta_0)^{-1}\|$. Since $\Delta(\theta_0)$ is non-singular, $\|\Delta(\theta_0)^{-1}\| < \infty$. Thus, there exists a finite constant $C_{K^p}$ such that $\sup_{x,y} |K^p(x, y)| \le C_{K^p}$. 
    
    By the Cauchy--Schwarz inequality, we have $|f_p(S_i,S_j,\theta_0)| \le C_{K^p} \|\rho(Z_i,\theta_0)\| \|\rho(Z_j,\theta_0)\|$. Substituting this bound into the definition of $M_n$ yields
    \begin{equation*}
        M_n \le \frac{1}{\sqrt{n}} \max_{1 \le i \le n} \left( \frac{1}{n} \sum_{j=1}^n C_{K^p}^2 \|\rho(Z_i)\|^2 \|\rho(Z_j)\|^2 \right)^{1/2} = C_{K^p} \left( \frac{\max_{1 \le i \le n} \|\rho(Z_i)\|}{\sqrt{n}} \right) \left( \frac{1}{n} \sum_{j=1}^n \|\rho(Z_j)\|^2 \right)^{1/2}.
    \end{equation*}
     Weak Law of Large Numbers imply that $n^{-1} \sum_{j=1}^n \|\rho(Z_j)\|^2 \xrightarrow{p} \mathbb{E}[\|\rho(Z)\|^2]$, so the square root of this sum is $O_p(1)$. Furthermore, $\max_{1 \le i \le n} \|\rho(Z_i)\| = o_p(n^{1/2})$. Combining these, we conclude that $M_n \le C_{K^p} \cdot o_p(1) \cdot O_p(1) = o_p(1)$.
\end{proof}

\paragraph{Proof of Theorem \ref{thm:mb}}
\begin{proof}
    First, by Lemma \ref{lemma:mb_asy}, we have
    $$
    \widehat{T}_n^*=T_n^*+o_p(1),
    $$
    where $T_n^*=\frac{1}{n} \sum_{i=1}^n \sum_{j=1}^n V_i V_j \rho(Z_i, \theta^*)^{\prime} K^p(X_i, X_j, \theta^*) \rho(Z_j, \theta^*)$ is defined in the proof of Lemma \ref{lemma:mb_asy}, and $\theta^*$ is the (pseudo) true value under different hypotheses, i.e. $\theta^*=\theta_0$ under $\mathbb{H}_0$ or $\mathbb{H}_{1n}$ and $\theta^*=\theta_1$ under $\mathbb{H}_1$. Thus, by Slutsky's Theorem, it is sufficient to show that conditional on the original samples, $T_n^*$ is stochastically bounded and furthermore, under $\mathbb{H}_0$ or $\mathbb{H}_{1n}$ (i.e., when $\theta^*=\theta_0$), $T_n^*(\theta_0)\xrightarrow{d,*}\sum_{k=1}^{\infty}\lambda_kW_k^2$, as defined in Theorem \ref{thm:H0}.

Under $\mathbb{H}_0$ or $\mathbb{H}_{1n}$, we have $\theta^* = \theta_0$. Let $f_p(s, s^{\prime}, \theta_0)$ be defined as in Theorem \ref{thm:H0}, specifically $f_p(s, s^{\prime}, \theta_0)=\rho(z, \theta_0)^{\prime}K^p(x, x^{\prime})\rho(z^{\prime}, \theta_0)$. Consider the $n \times n$ symmetric Gram matrix $\mathbf{H}_n$ with entries $H_{ij}=f_p(S_i, S_j, \theta_0)$. By the spectral theorem, we decompose $\mathbf{H}_n$ as
\begin{equation*}
    \mathbf{H}_n = \sum_{k=1}^n \widehat{\gamma}_{k,n} \mathbf{u}_k \mathbf{u}_k^{\prime},
\end{equation*}
where $\widehat{\gamma}_{1,n} \ge \widehat{\gamma}_{2,n} \ge \dots$ are the real eigenvalues and $\mathbf{u}_1, \dots, \mathbf{u}_n$ are the orthonormal eigenvectors ($\mathbf{u}_k \in \mathbb{R}^n, \|\mathbf{u}_k\|^2=1$).

Define the normalized eigenvalues by $\widehat{\lambda}_{k,n} = \widehat{\gamma}_{k,n} / n$. These correspond to the non-zero eigenvalues of the empirical operator $\widehat{\mathcal{A}}_n$, which is the finite-sample counterpart to $\mathcal{A}$, defined by
\begin{equation}
\label{eq_An}
    \widehat{\mathcal{A}}_n \psi(s) = \int f_p(s, s^{\prime}, \theta_0) \psi(s^{\prime}) \, d \mathbb{P}_{n}(s^{\prime}) = \frac{1}{n} \sum_{i=1}^n f_p(s, S_i, \theta_0) \psi(S_i).
\end{equation}

It is a standard duality result that the non-zero eigenvalues of the empirical operator $\widehat{\mathcal{A}}_n$ are exactly $\widehat{\lambda}_{k,n}$. Substituting the decomposition into the expression for $T_n^*$ yields:
\begin{equation}
    T_n^* = \frac{1}{n} \boldsymbol{V}^{\prime} \mathbf{H}_n \boldsymbol{V} = \sum_{k=1}^n \widehat{\lambda}_{k,n} (\mathbf{u}_k^{\prime} \boldsymbol{V})^2 = \sum_{k=1}^n \widehat{\lambda}_{k,n} Z_{k,n}^2,
    \label{eq:spec_rep}
\end{equation}
where $Z_{k,n} \coloneqq \mathbf{u}_k^{\prime} \boldsymbol{V} = \sum_{i=1}^n u_{k,i} V_i$. 
Conditional on the original sample $\mathcal{S}_n$, the variables $\{Z_{k,n}\}_{k=1}^n$ are uncorrelated with zero mean and unit variance, since $\mathbb{E}^*[\boldsymbol{V}\boldsymbol{V}^{\prime}] = \mathbf{I}_n$ implies $\text{Cov}^*(Z_{k,n}, Z_{l,n}) = \mathbf{u}_k^{\prime} \mathbf{I}_n \mathbf{u}_l = \delta_{kl}$.

Let $\phi_{n}^*(t) = \mathbb{E}^*[\exp(i t T_n^*)]$ denote the conditional characteristic function of $T_n^*$. Let $Y = \sum_{k=1}^\infty \lambda_k W_k^2$ be the target random variable with characteristic function $\phi_Y(t)$, where $W_k \stackrel{\text{i.i.d.}}{\sim} \mathcal{N}(0,1)$.
Fix a truncation integer $M \ge 1$. We decompose the statistic and the target variable as:
$$T_n^* = T_{n,M}^* + R_{n,M}^*, \quad \text{where } T_{n,M}^* = \sum_{k=1}^M \widehat{\lambda}_{k,n} Z_{k,n}^2,$$
$$Y = Y_M + Y_{tail, M}, \quad \text{where } Y_M = \sum_{k=1}^M \lambda_k W_k^2.$$
Using the inequality $|e^{ix} - e^{iy}| \le |x-y|$, we bound the distance between the characteristic functions for any fixed $t$:
\begin{align}
    |\phi_{n}^*(t) - \phi_Y(t)| &\le |\mathbb{E}^*[e^{i t T_n^*}] - \mathbb{E}^*[e^{i t T_{n,M}^*}]| + |\mathbb{E}^*[e^{i t T_{n,M}^*}] - \mathbb{E}[e^{i t Y_M}]| + |\mathbb{E}[e^{i t Y_M}] - \mathbb{E}[e^{i t Y}]| \notag \\
    &\le \mathbb{E}^*|e^{i t T_n^*} - e^{i t T_{n,M}^*}| + |\phi_{T_{n,M}^*}(t) - \phi_{Y_M}(t)| + \mathbb{E}|e^{i t Y_M} - e^{i t Y}| \notag \\
    &\le |t| \underbrace{\mathbb{E}^*|R_{n,M}^*|}_{\Delta_1} + \underbrace{|\phi_{T_{n,M}^*}(t) - \phi_{Y_M}(t)|}_{\Delta_2} + |t| \underbrace{\mathbb{E}|Y_{tail, M}|}_{\Delta_3}. \label{eq:triangle}
\end{align}
We show that for any $\varepsilon > 0$ and fixed $t$, the RHS converges to zero in probability by first choosing $M$ large enough and then letting $n \to \infty$.

First, we analyze the empirical tail term $\Delta_1$. Since the projected kernel $K^p$ is positive semi-definite (PSD), the kernel matrix $\mathbf{H}_n$ is PSD, implying $\widehat{\lambda}_{k,n} \ge 0$. Consequently, the residual sum is non-negative, $R_{n,M}^* \ge 0$. Given that the multipliers satisfy $\mathbb{E}^*[Z_{k,n}^2] = 1$, the conditional expectation is:
$$
\Delta_1 = \mathbb{E}^* \left[ \sum_{k=M+1}^n \widehat{\lambda}_{k,n} Z_{k,n}^2 \right] = \sum_{k=M+1}^n \widehat{\lambda}_{k,n}.
$$
Using the trace identity, $\sum_{k=1}^n \widehat{\lambda}_{k,n} = \frac{1}{n} \text{tr}(\mathbf{H}_n)$. By the Law of Large Numbers (LLN), the empirical trace converges in probability to the population trace:
$$ 
\sum_{k=1}^n \widehat{\lambda}_{k,n} = \frac{1}{n} \sum_{i=1}^n \rho(Z_i)^{\prime} K^p(X_i, X_i) \rho(Z_i) \xrightarrow{p} \mathbb{E}\left[\rho(Z)^{\prime} K^p(X, X) \rho(Z)\right] = \sum_{k=1}^\infty \lambda_k. 
$$
The finiteness of this limit is guaranteed by Assumptions \ref{ass:cons} and \ref{ass:an}, which imply the operator is trace class.
Let $\mathcal{H}_{f_p}$ be the RKHS associated with the kernel $f_p(s, s', \theta_0) = \rho(z, \theta_0)^{\prime}K^p(x, x')\rho(z', \theta_0)$, which is valid by the Schur product theorem. We define the random element $\chi_i = f_p(\cdot, S_i, \theta_0) \in \mathcal{H}_{f_p}$.

Using the reproducing property $\langle f_p(\cdot, s), h \rangle_{\mathcal{H}_{f_p}} = h(s)$, we observe that the population operator $\mathcal{A}$ defined in \eqref{eq_A} is equivalent to the expected rank-one tensor product:
\begin{equation}
    \mathcal{A} h = \int f_p(\cdot, s) h(s) d\mP(s) = \mathbb{E}\left[ \chi_i \langle \chi_i, h \rangle_{\mathcal{H}_{f_p}} \right] = \mathbb{E}[\chi_i \otimes \chi_i] h.
\end{equation}
Similarly, the empirical counterpart $\widehat{\mathcal{A}}_n$ defined in \eqref{eq_An} admits the equivalent tensor product representation $\widehat{\mathcal{A}}_n = \frac{1}{n} \sum_{i=1}^n \chi_i \otimes \chi_i$. Since $\mathbb{E}[\|\chi_1\|_{f_p}^2]=\mathbb{E}[f_p(S_1, S_1, \theta_0)] < \infty$, by Theorem 8.1.2 of \cite{hsing2015theoretical}, $\|\widehat{\mathcal{A}}_n - \mathcal{A}\|_{HS} \xrightarrow{a.s.} 0$, and consequently $\|\widehat{\mathcal{A}}_n - \mathcal{A}\|_{op} \xrightarrow{a.s.} 0$. By Weyl's Inequality, the eigenvalues converge uniformly, so for any fixed $M$:
$$ 
\sum_{k=1}^M \widehat{\lambda}_{k,n} \xrightarrow{p} \sum_{k=1}^M \lambda_k. 
$$
Combining these results yields the convergence of the tail sum in probability:
$$ 
\Delta_1 = \sum_{k=1}^n \widehat{\lambda}_{k,n} - \sum_{k=1}^M \widehat{\lambda}_{k,n} \xrightarrow{p} \sum_{k=1}^\infty \lambda_k - \sum_{k=1}^M \lambda_k = \sum_{k=M+1}^\infty \lambda_k. 
$$
Since $\sum \lambda_k < \infty$, this limit vanishes as $M \to \infty$. Thus, for sufficiently large $M$, $\Delta_1$ is negligible with high probability.   

Second, we analyze the term $\Delta_2 = |\phi_{T_{n,M}^*}(t) - \phi_{Y_M}(t)|$.
    Let $\mathcal{K}_+ = \{k \in \{1, \dots, M\}: \lambda_k > 0\}$ be the set of indices corresponding to strictly positive population eigenvalues. For any $k \in \mathcal{K}_+$, we have $\widehat{\lambda}_{k,n} \xrightarrow{p} \lambda_k > 0$.
    
    We first establish the asymptotic normality of the vector $\mathbf{Z}_{\mathcal{K}_+} = (Z_{k,n})_{k \in \mathcal{K}_+}$. Recall that $Z_{k,n} = \sum_{i=1}^n u_{k,i} V_i$, where $\sum_i u_{k,i}^2 = 1$ and $V_i$ are i.i.d. multipliers. To apply the Lindeberg--Feller CLT, it suffices to verify Lyapunov's condition. Given that multipliers have finite fourth moments, this is satisfied if $\max_{1 \le i \le n} |u_{k,i}| \xrightarrow{p} 0$.
    From the eigenvalue equation $\mathbf{H}_n \mathbf{u}_k = n \widehat{\lambda}_{k,n} \mathbf{u}_k$, we express the $i$-th eigenvector component as:
    \begin{equation*}
        u_{k,i} = \frac{1}{n \widehat{\lambda}_{k,n}} \sum_{j=1}^n H_{ij} u_{k,j}.
    \end{equation*}
    By the Cauchy-Schwarz inequality and the normalization $\|\mathbf{u}_k\|=1$:
    \begin{equation*}
        \max_{1\le i\le n}|u_{k,i}| \le \frac{1}{n \widehat{\lambda}_{k,n}} \max_{1\le i \le n}\left( \sum_{j=1}^n H_{ij}^2 \right)^{1/2} \|\mathbf{u}_k\| = \frac{1}{\sqrt{n} \widehat{\lambda}_{k,n}} \max_{1\le i\le n}\left( \frac{1}{n} \sum_{j=1}^n \left(f_p(S_i,S_j,\theta_0)\right)^2 \right)^{1/2}.
    \end{equation*}
     Since $k \in \mathcal{K}_+$, the denominator satisfies $\widehat{\lambda}_{k,n} \xrightarrow{p} \lambda_k > 0$. Consequently, by Lemma \ref{lemma:max_row_bound}, $\max_{1 \le i \le n} |u_{k,i}|=o_p(1)$.
    
    Thus, the Cramér--Wold device implies that conditional on the data, the sub-vector $\mathbf{Z}_{\mathcal{K}_+}$ converges in distribution to a standard multivariate normal vector $\mathbf{W}_{\mathcal{K}_+} \sim \mathcal{N}(\mathbf{0}, \mathbf{I}_{|\mathcal{K}_+|})$.
    
    Now consider the statistic $T_{n,M}^*$. We decompose the sum based on the index sets:
    \begin{equation*}
        T_{n,M}^* = \sum_{k \in \mathcal{K}_+} \widehat{\lambda}_{k,n} Z_{k,n}^2 + \sum_{k \in \mathcal{K}_0} \widehat{\lambda}_{k,n} Z_{k,n}^2.
    \end{equation*}
    For the first term, by the Continuous Mapping Theorem and the convergence of eigenvalues and eigenvectors derived above:
    \begin{equation*}
        \sum_{k \in \mathcal{K}_+} \widehat{\lambda}_{k,n} Z_{k,n}^2 \xrightarrow{d^*} \sum_{k \in \mathcal{K}_+} \lambda_k W_k^2.
    \end{equation*}
    For the second term, where $k \in \mathcal{K}_0$ (i.e., $\lambda_k=0$), we observe that $\widehat{\lambda}_{k,n} \xrightarrow{p} 0$. Furthermore, $\mathbb{E}^*[Z_{k,n}^2] = 1$ implies $Z_{k,n}^2 = O_p^*(1)$. Therefore, by Slutsky's Theorem,
    \begin{equation*}
        \widehat{\lambda}_{k,n} Z_{k,n}^2 \xrightarrow{p^*} 0 \cdot O_p^*(1) = 0.
    \end{equation*}
    Combining these, we have $T_{n,M}^* \xrightarrow{d^*} \sum_{k \in \mathcal{K}_+} \lambda_k W_k^2 + 0 = Y_M$. By Lévy's Continuity Theorem, this implies $\Delta_2 \xrightarrow{p} 0$ as $n \to \infty$.

Finally, $\Delta_3 = \mathbb{E}[\sum_{k=M+1}^\infty \lambda_k W_k^2] = \sum_{k=M+1}^\infty \lambda_k$ is deterministic and vanishes as $M \to \infty$.

To conclude, for any $\varepsilon > 0$, we first choose $M$ sufficiently large such that $|t|(\sum_{k=M+1}^\infty \lambda_k) < \varepsilon/2$. With $M$ fixed, as $n \to \infty$, both $|t|\Delta_1$ (converging to the tail sum) and $\Delta_2$ (converging to 0) behave such that the probability of the RHS of \eqref{eq:triangle} exceeding $\varepsilon$ tends to zero. Thus, $|\phi_{n}^*(t) - \phi_Y(t)| \xrightarrow{p} 0$, which implies $T_n^* \xrightarrow{d^*} Y$ in probability.

Under the fixed alternative $\mathbb{H}_1$, the bootstrap statistic is $T_n^*(\theta_1) = n^{-1} \boldsymbol{V}^{\prime} \mathbf{H}_n(\theta_1) \boldsymbol{V}$. Crucially, although the kernel $H$ is non-degenerate, the zero mean of the multipliers renders $T_n^*$ a degenerate quadratic form conditional on the sample. Assuming $\mathbb{E}[V_i^4] < \infty$, the conditional second moment expands to:
\begin{equation}
    \mathbb{E}^*[(T_n^*(\theta_1))^2] = \frac{1}{n^2} \left( (\mathbb{E}[V^4] - 3) \sum_{i=1}^n H_{ii}^2 + (\text{tr}(\mathbf{H}_n))^2 + 2 \text{tr}(\mathbf{H}_n^2) \right).
\end{equation}
The first term is $O_p(n^{-1})$ and vanishes asymptotically. By the Law of Large Numbers, the remaining trace terms satisfy $n^{-1} \text{tr}(\mathbf{H}_n) \xrightarrow{p} \mathbb{E}[H(Z, Z)]$ and $n^{-2} \text{tr}(\mathbf{H}_n^2) \xrightarrow{p} \mathbb{E}[H(Z, Z')^2]$. Consequently,
\begin{equation}
    \mathbb{E}^*[(T_n^*(\theta_1))^2] \xrightarrow{p} (\mathbb{E}[H(Z, Z)])^2 + 2 \mathbb{E}[H(Z, Z')^2].
\end{equation}
Given this finite probability limit, Chebyshev's inequality implies $T_n^*(\theta_1) = O_{p^*}(1)$, completing the proof.
\end{proof}

%\newpage
\section{Additional Methodological Discussions}
\label{sec:discussion}
\subsection{Discussion on Two-Step Efficient Estimation}
\label{app:efficiency}
The proposed KMD estimator $\widehat{\theta}_{n}$ is $\sqrt{n}$-consistent and asymptotically normal, effectively exploiting the continuum of unconditional moment restrictions. However, it generally does not attain the semiparametric efficiency bound of \cite{chamberlain1987asymptotic}. Achieving efficiency directly within the minimum distance framework would require characterizing a kernel function whose associated RKHS embedding induces an optimal weighting measure over the continuum of moment conditions. A potential solution to this problem is to construct the regularized inverse of the covariance operator, as developed in \cite{carrasco2000generalization}. However, as they explicitly note, this inversion is ill-posed and requires spectral decomposition and the selection of data-dependent tuning parameters (e.g., for Tikhonov regularization).

In this subsection, we adopt the two-step strategy of \cite{dominguez2004consistent}, which offers a computationally efficient approach while preserving the operational simplicity of the KMD framework. Let $J_n(\theta)$ denote the feasible efficient GMM objective function, constructed using consistent nonparametric estimators of conditional moments (e.g., via kernel or series methods as in \cite{newey1990efficient} and \cite{robinson1991best}). Given that $\widehat{\theta}_n$ is $\sqrt{n}$-consistent, it serves as a robust starting value that allows us to construct an asymptotically efficient estimator $\widehat{\theta}_{\text{eff}}$ via a single Newton-Raphson update:
\begin{equation}
\label{eq:NR-update}
    \widehat{\theta}_{\text{eff}} = \widehat{\theta}_n - \left[ \ddot{J}_n(\widehat{\theta}_n) \right]^{-1} \dot{J}_n(\widehat{\theta}_n),
\end{equation}
where $\dot{J}_n$ and $\ddot{J}_n$ denote the gradient and Hessian of the feasible objective, respectively. Under regularity conditions sufficient for the consistency of the nonparametric estimators (see \cite{newey1990efficient}, \cite{robinson1991best}), it follows from \cite{robinson1988stochastic} that $\widehat{\theta}_{\text{eff}} - \widetilde{\theta} = o_p(n^{-1/2})$, where $\widetilde{\theta}$ is the theoretically efficient GMM estimator.

\begin{remark}
    It is important to acknowledge that there is ``no free lunch'' in the pursuit of semiparametric efficiency. Specifically, the evaluation of the derivatives $\dot{J}_n$ and $\ddot{J}_n$ required for the update in \eqref{eq:NR-update} necessitates the consistent nonparametric estimation of conditional moments. This requirement introduces nontrivial practical challenges. First, standard nonparametric techniques typically assume that the conditioning covariates have compact support. Second, as the dimension of the covariates increases, these estimators suffer from the ``curse of dimensionality'', leading to slower convergence rates. Third, the implementation requires selecting smoothing parameters, a step that can be highly sensitive to statistical inference. In sharp contrast, the proposed KMD estimator $\widehat{\theta}_{n}$ is simple to implement and robust; it circumvents the direct estimation of conditional moments and imposes no compactness restrictions on the support of the covariates. Notably, our simulation results suggest that the finite-sample efficiency loss of the simpler $\widehat{\theta}_{n}$ is often marginal; see Section \ref{sec:simu} for a detailed comparison.
\end{remark}

\subsection{Discussion on $U$ or $V$ Statistic for Estimation and Testing}
\label{subsec:UorV}
\subsubsection{On Estimation}
\label{subsec:UorV-estimation}

The off-diagonal $U$-statistic criterion and the corresponding $V$-statistic criterion are asymptotically close, differing only by diagonal terms. The distinction, however, can matter for the finite-sample geometry of the sample objective. Off-diagonal criteria arise naturally in smooth minimum-distance and local-smoothing approaches, including \citet{lavergne2013smooth}. Our use of the $V$-statistic is motivated by a different consideration: it preserves the empirical RKHS squared-norm representation
$$
    \widehat Q_V(\theta)
    =
    \|\widehat\mu_n(\theta)\|_{\mathcal H_K}^2
    =
    \frac{1}{n^2}\rho(\theta)'K\rho(\theta).
$$
Since the Gram matrix $K$ is positive semidefinite, $\widehat Q_V(\theta)\ge 0$ for all $\theta$. By contrast, the off-diagonal criterion
$$
    \widehat Q_U(\theta)
    =
    \frac{1}{n(n-1)}
    \rho(\theta)'\{K-\operatorname{diag}(K)\}\rho(\theta)
$$
does not generally define a squared norm, because $K-\operatorname{diag}(K)$ need not be positive semidefinite.

Figure \ref{fig:uv-objective-nliv} illustrates this finite-sample distinction in the continuous nonlinear IV design used in the simulation section. To isolate the effect of deleting the diagonal terms, both criteria are computed with the same Gaussian median kernel and evaluated over the grid $\theta\in[-1.5,0.8]$, with the true value $\theta_0=0.5$. The $V$-statistic objective remains nonnegative and is minimized near the true parameter in this draw. The off-diagonal $U$-statistic objective, while asymptotically close to the $V$-version, can take negative values and attain lower values near the boundary of the plotted search interval. This diagnostic is not intended as a performance comparison or as a critique of the asymptotic theory of off-diagonal SMD estimators. It simply illustrates that deleting the diagonal terms can change the finite-sample optimization landscape. In nonlinear models, both objectives may be nonconvex in $\theta$, but the $V$-criterion retains the lower-bound and metric interpretation implied by its squared-norm structure.

\begin{figure}[htbp]
    \centering
    \includegraphics[width=0.90\textwidth]{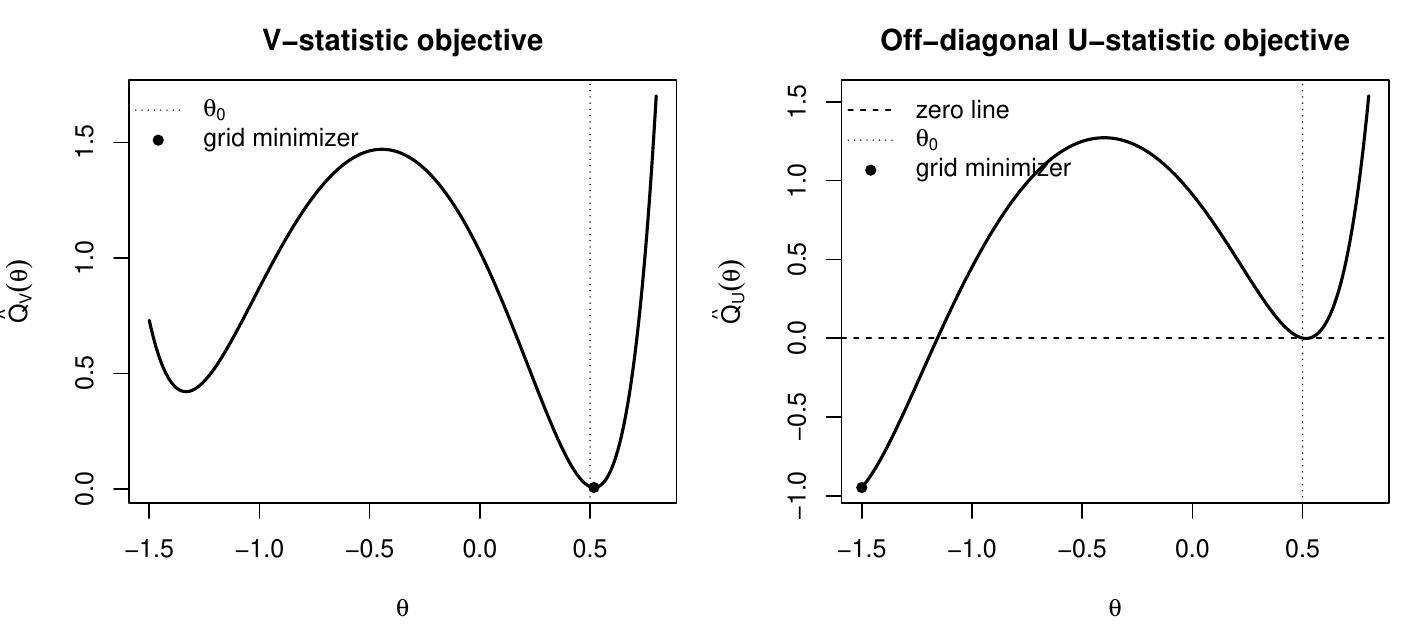}
    \caption{Finite-sample objective functions in the continuous NLIV design, evaluated over $\theta\in[-1.5,0.8]$. The left panel plots the $V$-statistic RKHS squared-norm objective. The right panel plots the off-diagonal $U$-statistic objective computed with the same kernel. The vertical dotted line marks the true value $\theta_0=0.5$.}
    \label{fig:uv-objective-nliv}
\end{figure}

The following stylized calculation explains why such behavior can occur. Consider two observations with $s_1=a$, $s_2=-a$, $a>0$, and kernel matrix
$$
K=
\begin{pmatrix}
1 & \kappa\\
\kappa & 1
\end{pmatrix},
\qquad 0<\kappa<1.
$$
In the correctly specified noiseless exponential model $y_i=\exp(1+\theta_0s_i)$, the $V$-statistic objective $4^{-1}\rho(\theta)'K\rho(\theta)$ is nonnegative and is uniquely minimized at $\theta_0$. By contrast, the off-diagonal objective equals
$$
Q_U(\theta)
=
\frac{\kappa}{2}\rho_1(\theta)\rho_2(\theta)
=
-\kappa e^2\{\cosh(a(\theta-\theta_0))-1\},
$$
which is zero at $\theta_0$ and strictly negative for every $\theta\neq\theta_0$. Hence, on a compact parameter space, the off-diagonal criterion may attain its finite-sample minimum away from $\theta_0$. This example does not contradict the asymptotic validity of off-diagonal criteria; rather, it isolates the finite-sample algebraic effect of removing the diagonal terms from a squared-norm criterion.

\subsubsection{On Testing}
\label{subsec:UorV-testing}

While the $V$-statistic formulation is essential for parameter estimation to ensure a valid, non-negative minimization objective, an alternative strategy for the testing stage involves the $U$-statistic, constructed by excluding the diagonal terms:
$$
\widetilde T_n\coloneq\frac{1}{n-1}\sum_{i\neq j}\rho(Z_i,\widehat{\theta}_n)^\prime K(X_i,X_j)\rho(Z_j,\widehat{\theta}_n).
$$
The primary theoretical appeal of this alternative lies in the potential relaxation of regularity conditions; specifically, it reduces the finite fourth-moment requirement on $\rho(Z, \theta_0)$ and $g(Z, \theta_0)$ to finite second moments by eliminating the need to control the diagonal components. However, we retain the unified $V$-statistic framework for the following reasons that outweigh this marginal gain.

As discussed in Remark \ref{rem:generalized_sargan}, the $V$-statistic admits a natural interpretation as a Generalized Sargan Test (or $J$-test) by representing the squared RKHS norm $\|\widehat{\mu}_n(\widehat{\theta}_n)\|_{\mathcal{H}_K}^2$, guaranteeing a non-negative measure of misspecification. In contrast, the $U$-type test statistic corresponds to an indefinite quadratic form that deviates from the distance-minimization framework. As shown below, the limiting distribution of $\widetilde{T}_n$ under $\mathbb{H}_0$ involves negative nuisance shifts:
$$
\widetilde{T}_n\xrightarrow{d}\sum_{k=1}^{\infty}\lambda_k\left(W_k^2-1\right)-\delta,
$$
where $\delta\coloneq\mathbb{E}[\rho(Z,\theta_0)^\prime G(X,\theta_0)\Delta(\theta_0)^{-1}G(X,\theta_0)^\prime\rho(Z,\theta_0)]$ captures the asymptotic bias arising from the diagonal components inherent to the $V$-statistic estimation. These shift terms allow $\widetilde{T}_n$ to take negative values, rendering its interpretation counter-intuitive as a measure of model fit. Analogously, under the local alternative $\mathbb{H}_{1n}$, the distribution exhibits an identical shift structure:
$$
\widetilde{T}_n\xrightarrow{d}\sum_{k=1}^{\infty}\lambda_k\left[\left(W_k+c_k\right)^2-1\right]-\delta,
$$
which deviates from the unified spectral representation of \cite{bierens1997asymptotic}.

Finally, it is worth noting that the validity of the preliminary estimator $\widehat{\theta}_n$ inherently relies on the $V$-statistic formulation, which requires the finiteness of $\mathbb{E}[\|g(Z, \theta_0)'\rho(Z, \theta_0)\|^2]$. Given this unavoidable requirement, we consider the marginal relaxation of regularity conditions offered by the $U$-type test statistic insufficient to justify sacrificing the geometric coherence and analytical interpretability of our unified framework.

\paragraph{Detailed Derivation for the Asymptotics of $U$-type Test Statistic}
\begin{proof}
We adopt the notation from the proofs of Lemmas \ref{lemma:theta_lr} and \ref{lemma:repre_H0}. Let the $U$-type test statistic $\widetilde T_n\coloneq n\widetilde Q_n(\widehat\theta_n)$, with
$$
\widetilde Q_n(\theta)\coloneq\frac{1}{n(n-1)}\sum_{i\neq j}\rho(Z_i,\theta)^\prime K(X_i,X_j)\rho(Z_j,\theta).
$$
Under $\mathbb{H}_0$, by the mean value expansion along with equation \ref{eq:mv-T}, we have
$$
\widetilde T_n=n\widehat{Q}_n(\theta_0)+\left(\sqrt{n}\nabla_\theta\widetilde{Q}_n(\theta_0)\right)'\left[\sqrt{n}\left(\widehat{\theta}_n-\theta_0\right)\right]+\frac{1}{2}\left[\sqrt{n}\left(\widehat{\theta}_n-\theta_0\right)\right]'\widetilde{H}_n(\bar{\theta}_n)\left[\sqrt{n}\left(\widehat{\theta}_n-\theta_0\right)\right].
$$
Noting that $U$-statistics and $V$-statistics share asymptotically equivalent gradients and Hessians, we have
  \begin{align*}
        \widetilde{H}_n(\bar{\theta}_n)&=2\Delta(\theta_0)+o_p(1),\\
        \sqrt{n}\nabla_\theta\widetilde{Q}_n(\theta_0)&=2S_n(\theta_0)+o_p(1),\\
        \sqrt{n}\left(\widehat{\theta}_n-\theta_0\right)&=-\Delta(\theta_0)^{-1}S_n(\theta_0)+o_p(1).
    \end{align*}
    Thus,
    \begin{align*}
       \widetilde T_n&=n\widetilde Q_n(\theta_0)-S_n(\theta_0)\Delta(\theta_0)^{-1}S_n(\theta_0)+o_p(1)\\
       &=\frac{1}{n-1}\sum_{i\neq j}\rho(Z_i,\theta_0)^\prime K(X_i,X_j)\rho(Z_j,\theta_0)\\
       &\quad -\frac{1}{n}\sum_i\sum_j\rho(Z_i,\theta_0)^\prime G(X_i,\theta_0)\Delta(\theta_0)^{-1}G(X_j,\theta_0)^\prime\rho(Z_j,\theta_0)+o_p(1)\\
       &=\frac{1}{n-1}\sum_{i\neq j}\rho(Z_i,\theta_0)^\prime\left[K(X_i,X_j)-G(X_i,\theta_0)\Delta(\theta_0)^{-1}G(X_j,\theta_0)^\prime\right]\rho(Z_j,\theta_0)\\
       &\quad +\frac{1}{n(n-1)}\sum_{i\neq j}\rho(Z_i,\theta_0)^\prime G(X_i,\theta_0)\Delta(\theta_0)^{-1}G(X_j,\theta_0)^\prime \rho(Z_j,\theta_0)\\
       &\quad-\frac{1}{n}\sum_{i} \rho(Z_i,\theta_0)^\prime G(X_i,\theta_0)\Delta(\theta_0)^{-1}G(X_i,\theta_0)^\prime \rho(Z_i,\theta_0)+o_p(1).
    \end{align*}
    Furthermore, under $\mathbb{H}_0$, 
    $$
    \frac{1}{n(n-1)}\sum_{i\neq j}\rho(Z_i,\theta_0)^\prime\left[K(X_i,X_j)-G(X_i,\theta_0)\Delta(\theta_0)^{-1}G(X_j,\theta_0)^\prime\right]\rho(Z_j,\theta_0)\xrightarrow{p}0,
    $$
    $$
    \frac{1}{n}\sum_{i} \rho(Z_i,\theta_0)^\prime G(X_i,\theta_0)\Delta(\theta_0)^{-1}G(X_i,\theta_0)^\prime \rho(Z_i,\theta_0)\xrightarrow{p}\mathbb{E}[\rho(Z,\theta_0)^\prime G(X,\theta_0)\Delta(\theta_0)^{-1}G(X,\theta_0)^\prime\rho(Z,\theta_0)]\coloneq\delta.
    $$
    Thus, under $\mathbb{H}_0$,
    $$
    \widetilde T_n=\frac{1}{n-1}\sum_{i\neq j}\rho(Z_i,\theta_0)^\prime K^p(X_i,X_j)\rho(Z_j,\theta_0)-\delta+o_p(1),
    $$
    where the first term is an $n$-scaled degenerate $U$-statistic. Invoking the Theorem in Section 5.5.2 of \citet{serfling1980approximation}, we obtain
$$
\widetilde T_n\xrightarrow{d}\sum_{k=1}^{\infty}\lambda_k\left(W_k^2-1\right)-\delta,
$$
where $\lambda_k$ and $W_k$ are defined as in Theorem \ref{thm:H0}.
\end{proof}

\section{Additional Simulation Results}
\label{sec:add-simu}

This section provides additional simulation details to verify the robustness of the proposed framework. We first present supplementary estimation results for the Box--Cox model with $\lambda_0 = 1.0$ in Section \ref{subsec:boxcox-lambda1}, followed by a detailed description of the Double Machine Learning (DML) implementation in Section \ref{subsec:dml-details}. Section \ref{subsec:coverage} validates the proposed inference theory by reporting the pointwise coverage probabilities of the asymptotic confidence intervals across different model specifications. Lastly, Section \ref{subsec:dl-test} provides comparative simulation results for the specification test of \cite{dominguez2015simple}.

\subsection{The Linear Regression Model}
\label{app:lr_simulation}

We set the standard Linear Regression Model:
\begin{equation*}
    Y_{i} = X_{i}^{\prime}\beta_{0} + \epsilon_{i}, \quad i=1,\dots,n,
\end{equation*}
where the true parameter vector is set to $\beta_{0}=(1,\dots,1,-1,\dots,-1)^{\prime}$, with the first $\lceil p/2\rceil$ elements equal to 1. The regressors $X_{i}$ are generated from a multivariate normal distribution $\mathcal{N}(0,\Sigma_{X})$ with a Toeplitz covariance structure, $\Sigma_{X,jk} = 0.5^{|j-k|}$, to induce moderate correlation among covariates. We examine two error specifications to assess robustness: (i) \textit{Homoskedasticity}, with $\epsilon_{i} \sim \mathcal{N}(0,1)$; and (ii) \textit{Heteroskedasticity}, with $\epsilon_{i} = \exp(0.25 X_{i,1}) \cdot \nu_{i}$, where $\nu_{i} \sim \mathcal{N}(0,1)$.

For estimation, we benchmark the KMD estimator against Ordinary Least Squares (OLS), Feasible Weighted Least Squares (FWLS), and the infeasible Oracle GMM estimator (GMM$^*$). GMM$^*$ employs the true conditional variance weights to attain the semiparametric efficiency bound, whereas FWLS relies on a feasible log-linear variance approximation.

\begin{table}[htbp]
  \centering
  \caption{Estimation Performance in Linear Regression (Bias and RMSE)}
  \label{tab:lr_estimation}
  
  \footnotesize
  \renewcommand{\arraystretch}{1.15}
  
  % 定义等宽居中列 Y
  \newcolumntype{Y}{>{\centering\arraybackslash}X}
  
  % 使用 tabularx，前两列 c，后八列 Y (Bias 4列 + RMSE 4列)
  \begin{tabularx}{\textwidth}{cc YYYY YYYY}
    \toprule
    & & \multicolumn{4}{c}{Bias} & \multicolumn{4}{c}{RMSE} \\
    \cmidrule(lr){3-6} \cmidrule(lr){7-10}
    $n$ & $p$ & OLS & FWLS & GMM$^*$ & KMD & OLS & FWLS & GMM$^*$ & KMD \\
    \midrule
    
    \multicolumn{10}{l}{\textit{Panel A: Homoskedastic Errors}} \\
    \addlinespace[0.2em]
    \multirow{3}{*}{$100$} 
      & $3$ & $0.005$ & $0.005$ & $0.005$ & $0.009$ & $0.121$ & $0.130$ & $0.121$ & $0.141$ \\
      & $5$ & $0.005$ & $0.009$ & $0.005$ & $0.007$ & $0.125$ & $0.138$ & $0.125$ & $0.157$ \\
      & $10$ & $0.013$ & $0.018$ & $0.013$ & $0.019$ & $0.135$ & $0.154$ & $0.135$ & $0.179$ \\
    \addlinespace[0.5em]
    \multirow{3}{*}{$200$} 
      & $3$ & $0.002$ & $0.003$ & $0.002$ & $0.003$ & $0.086$ & $0.091$ & $0.086$ & $0.097$ \\
      & $5$ & $0.003$ & $0.005$ & $0.003$ & $0.005$ & $0.088$ & $0.094$ & $0.088$ & $0.106$ \\
      & $10$ & $0.010$ & $0.009$ & $0.010$ & $0.013$ & $0.092$ & $0.101$ & $0.092$ & $0.121$ \\

    \midrule
    
    \multicolumn{10}{l}{\textit{Panel B: Heteroskedastic Errors}} \\
    \addlinespace[0.2em]
    \multirow{3}{*}{$100$} 
      & $3$ & $0.007$ & $0.004$ & $0.004$ & $0.010$ & $0.132$ & $0.122$ & $0.113$ & $0.149$ \\
      & $5$ & $0.006$ & $0.010$ & $0.006$ & $0.007$ & $0.135$ & $0.131$ & $0.117$ & $0.165$ \\
      & $10$ & $0.016$ & $0.015$ & $0.012$ & $0.021$ & $0.144$ & $0.150$ & $0.129$ & $0.189$ \\
    \addlinespace[0.5em]
    \multirow{3}{*}{$200$} 
      & $3$ & $0.002$ & $0.003$ & $0.003$ & $0.001$ & $0.094$ & $0.084$ & $0.079$ & $0.102$ \\
      & $5$ & $0.004$ & $0.004$ & $0.003$ & $0.006$ & $0.095$ & $0.088$ & $0.082$ & $0.111$ \\
      & $10$ & $0.011$ & $0.008$ & $0.009$ & $0.014$ & $0.099$ & $0.096$ & $0.087$ & $0.127$ \\
      
    \bottomrule
  \end{tabularx}
  \par
  \vspace{0.5em}
  \begin{minipage}{\textwidth}
    \footnotesize
    \textit{Note:} The table reports the mean bias norm ($||\widehat{\beta} - \beta_0||_2$) and average element-wise RMSE for linear regression estimators over $1000$ replications.
  \end{minipage}
\end{table}

Table \ref{tab:lr_estimation} summarizes the finite-sample estimation performance under linear regression settings. In terms of consistency, the KMD estimator exhibits negligible bias across all scenarios and matches parametric benchmarks. Regarding efficiency, the KMD estimator incurs a moderate efficiency loss relative to OLS (Panel A; e.g., approximately 15--20\% increase in RMSE for $n=100$). Under asymmetric heteroskedasticity (Panel B), while Oracle GMM and FWLS naturally outperform unweighted methods, the KMD estimator remains stable, with RMSEs comparable to OLS. These findings are unsurprising and acceptable, given that the standard KMD estimator relies on a global kernel metric and does not explicitly exploit the conditional variance structure for weighting. 

We next evaluate the specification test for the linear regression model, where the null hypothesis asserts that $\mathbb{E}[Y - X^{\prime}\beta_0 \mid X] = 0$ a.s. Under this specification, the score function $g(Z, \beta) = -X$ is independent of $\beta$, yielding a key computational advantage as the correction terms in the projected kernel $\widehat{K}_n^p$ depend solely on the observed regressors.

To evaluate power, we introduce a perturbation $f(X_i)$ under the alternative $\mathbb{H}_{1}: Y_i = X_i^{\prime}\beta_{0} + f(X_i) + \epsilon_i$. Setting the magnitude to $\delta=0.5$, we consider three distinct misspecification classes, where $X_{i,k}$ denotes the $k$-th component of the regressor vector $X_i$:
\begin{itemize}
    \setlength{\itemsep}{0pt}
    \setlength{\parskip}{0pt}
    \setlength{\parsep}{0pt}
    \item $\mathbb{H}_{1a}$ (Quadratic): $f(X_i) = \delta X_{i,1}^{2}$, representing smooth, global nonlinearity.
    \item $\mathbb{H}_{1b}$ (Discontinuous): $f(X_i) = \delta \cdot \mathbb{I}(|X_{i,1}| \le 0.5)$, representing a local structural break or ``bump'' often difficult for global polynomial tests to detect.
    \item $\mathbb{H}_{1c}$ (Interaction): $f(X_i) = \delta X_{i,1}X_{i,2}$, capturing omitted cross-variable dependencies.
\end{itemize}

\begin{table}[htbp]
  \centering
  \caption{Empirical Rejection Rates for Linear Regression Specification Tests ($\alpha=0.05$)}
  \label{tab:lr_test}
  
  \small 
  \renewcommand{\arraystretch}{1.15} 
  
  \newcolumntype{Y}{>{\centering\arraybackslash}X}
  
  \begin{tabularx}{\textwidth}{cc YYY YYY}
    \toprule
    & & \multicolumn{3}{c}{Homoskedastic Errors} & \multicolumn{3}{c}{Heteroskedastic Errors} \\
    \cmidrule(lr){3-5} \cmidrule(lr){6-8}
    $n$ & DGP & $p=3$ & $p=5$ & $p=10$ & $p=3$ & $p=5$ & $p=10$ \\
    \midrule
    
    % --- n=100 ---
    \multirow{4}{*}{$100$} 
      & Size ($\mathbb{H}_0$)             & $0.046$ & $0.050$ & $0.031$ & $0.040$ & $0.047$ & $0.030$ \\
      & Power ($\mathbb{H}_{1a}$: Quad.)  & $0.956$ & $0.949$ & $0.915$ & $0.946$ & $0.920$ & $0.894$ \\
      & Power ($\mathbb{H}_{1b}$: Disc.)  & $0.502$ & $0.455$ & $0.344$ & $0.476$ & $0.425$ & $0.315$ \\
      & Power ($\mathbb{H}_{1c}$: Inter.) & $0.556$ & $0.499$ & $0.400$ & $0.514$ & $0.451$ & $0.375$ \\
      
    \addlinespace[0.5em] 

    \multirow{4}{*}{$200$} 
      & Size ($\mathbb{H}_0$)             & $0.058$ & $0.047$ & $0.040$ & $0.061$ & $0.047$ & $0.047$ \\
      & Power ($\mathbb{H}_{1a}$: Quad.)  & $1.000$ & $1.000$ & $1.000$ & $1.000$ & $1.000$ & $1.000$ \\
      & Power ($\mathbb{H}_{1b}$: Disc.)  & $0.846$ & $0.790$ & $0.724$ & $0.823$ & $0.741$ & $0.685$ \\
      & Power ($\mathbb{H}_{1c}$: Inter.) & $0.946$ & $0.844$ & $0.812$ & $0.917$ & $0.796$ & $0.766$ \\

    \addlinespace[0.5em]
    
    % --- n=400 ---
    \multirow{4}{*}{$400$} 
      & Size ($\mathbb{H}_0$)             & $0.052$ & $0.052$ & $0.049$ & $0.053$ & $0.049$ & $0.049$ \\
      & Power ($\mathbb{H}_{1a}$: Quad.)  & $1.000$ & $1.000$ & $1.000$ & $1.000$ & $1.000$ & $1.000$ \\
      & Power ($\mathbb{H}_{1b}$: Disc.)  & $0.983$ & $0.976$ & $0.962$ & $0.978$ & $0.964$ & $0.940$ \\
      & Power ($\mathbb{H}_{1c}$: Inter.) & $1.000$ & $0.999$ & $0.997$ & $0.995$ & $0.987$ & $0.979$ \\
      
    \bottomrule
  \end{tabularx}
\end{table}

Table \ref{tab:lr_test} reports the empirical rejection rates for the specification test. Under the null hypothesis ($\mathbb{H}_0$), the proposed test maintains accurate size control across all designs. The rejection rates align closely with the nominal $5\%$ level and are robust to both homoskedastic and heteroskedastic errors. While slight undersizing is observed in high-dimensional small samples (e.g., $n=100, p=10$), this distortion diminishes as the sample size increases to $n=200$.

Regarding power performance, the test displays consistent power against all three classes of misspecification, with rejection rates increasing monotonically toward unity as the sample size grows. For the smooth quadratic alternative ($\mathbb{H}_{1a}$), the test is particularly powerful, achieving near-unity rejection rates even at $n=100$. For the discontinuous ($\mathbb{H}_{1b}$) and interaction ($\mathbb{H}_{1c}$) alternatives, the power is substantial but naturally lower than that for the global quadratic alternative, reflecting the challenge of detecting local irregularities or multivariate dependencies. Furthermore, the results illustrate the expected trade-off with respect to dimensionality: for a fixed sample size, power decreases as the dimension $p$ increases from $3$ to $10$. Nevertheless, the test remains omnibus, with power exceeding $0.94$ in all scenarios when $n=400$, confirming the consistency of the KMD test against a broad range of alternatives. 

\subsection{Additional Simulation Results for Box--Cox Estimation}
\label{subsec:boxcox-lambda1}

Table \ref{tab:boxcox_estimation_3} reports the estimation performance under the linear specification ($\lambda_0=1.0$). While estimation errors increase across all methods relative to the $\lambda_0=0$ and $\lambda_0=0.5$ cases, likely due to the weaker identification of the transformation parameter as the model approaches linearity, the KMD estimator consistently maintains superior performance in terms of both Bias and RMSE.
\begin{table}[htbp]
  \centering
  \caption{Estimation Performance of Box--Cox Model Parameters ($\lambda=1.0$)}
  \label{tab:boxcox_estimation_3}
  
  \footnotesize 
  \renewcommand{\arraystretch}{1.15}
  
  % --- 表格定义 ---
  % 1. \setlength{\tabcolsep}{0pt} 让 extrolsep 完全控制内部间距
  % 2. 在列定义首尾添加 @{\hspace{1em}} 为左右两侧留出呼吸空间
  \setlength{\tabcolsep}{0pt} 
  \begin{tabular*}{\textwidth}{
    @{\hspace{1em}\extracolsep{\fill}} % 左侧留白 + 自动填充
    c c c 
    *{3}{S[table-format=-1.3]} % Bias 三列
    *{3}{S[table-format=1.3]}  % RMSE 三列
    @{\hspace{1em}}            % 右侧留白
  }
    \toprule
    & & & \multicolumn{3}{c}{Bias} & \multicolumn{3}{c}{RMSE} \\
    \cmidrule(lr){4-6} \cmidrule(lr){7-9}
    
    % S列表头需用 {} 包裹
    {$n$} & {$p$} & {Param.} & {NL2S} & {Shin-MD} & {KMD} & {NL2S} & {Shin-MD} & {KMD} \\
    \midrule

    \multicolumn{9}{l}{\textit{Panel A: Homoskedastic Errors}} \\
    \addlinespace[0.2em]
    \multirow{9}{*}{$100$} & \multirow{3}{*}{$3$} 
         & $\alpha$  & -0.045 &  0.054 & -0.058 & 0.313 & 0.315 & 0.185 \\
       & & $\beta$   &  0.042 &  0.455 &  0.016 & 0.174 & 0.666 & 0.135 \\
       & & $\lambda$ & -0.028 & -0.297 & -0.050 & 0.232 & 0.650 & 0.130 \\
    \addlinespace[0.2em]
                           & \multirow{3}{*}{$5$} 
         & $\alpha$  & -0.130 & -0.057 & -0.102 & 0.297 & 0.341 & 0.191 \\
       & & $\beta$   &  0.038 &  0.266 &  0.053 & 0.145 & 0.432 & 0.133 \\
       & & $\lambda$ & -0.078 & -0.194 & -0.067 & 0.171 & 0.446 & 0.104 \\
    \addlinespace[0.2em]
                           & \multirow{3}{*}{$10$} 
         & $\alpha$  & -0.256 & -0.288 & -0.187 & 0.356 & 0.373 & 0.248 \\
       & & $\beta$   &  0.250 &  0.272 &  0.198 & 0.167 & 0.221 & 0.151 \\
       & & $\lambda$ & -0.113 & -0.126 & -0.085 & 0.151 & 0.153 & 0.102 \\
    \addlinespace[0.5em]
    \multirow{9}{*}{$200$} & \multirow{3}{*}{$3$} 
         & $\alpha$  & -0.007 &  0.112 & -0.024 & 0.200 & 0.425 & 0.133 \\
       & & $\beta$   &  0.029 &  0.453 &  0.004 & 0.106 & 0.740 & 0.093 \\
       & & $\lambda$ & -0.006 & -0.283 & -0.025 & 0.164 & 0.659 & 0.099 \\
    \addlinespace[0.2em]
                           & \multirow{3}{*}{$5$} 
         & $\alpha$  & -0.083 &  0.012 & -0.058 & 0.241 & 0.366 & 0.141 \\
       & & $\beta$   &  0.024 &  0.400 &  0.032 & 0.106 & 0.514 & 0.096 \\
       & & $\lambda$ & -0.051 & -0.240 & -0.039 & 0.139 & 0.530 & 0.075 \\
    \addlinespace[0.2em]
                           & \multirow{3}{*}{$10$} 
         & $\alpha$  & -0.201 & -0.245 & -0.132 & 0.301 & 0.343 & 0.183 \\
       & & $\beta$   &  0.197 &  0.238 &  0.138 & 0.128 & 0.196 & 0.107 \\
       & & $\lambda$ & -0.088 & -0.115 & -0.059 & 0.127 & 0.141 & 0.073 \\

    \addlinespace[0.5em]
    
    \multicolumn{9}{l}{\textit{Panel B: Heteroskedastic Errors}} \\
    \addlinespace[0.2em]
    \multirow{9}{*}{$100$} & \multirow{3}{*}{$3$} 
         & $\alpha$  & -0.033 &  0.041 & -0.078 & 0.312 & 0.331 & 0.199 \\
       & & $\beta$   &  0.048 &  0.478 &  0.010 & 0.168 & 0.706 & 0.138 \\
       & & $\lambda$ & -0.017 & -0.316 & -0.065 & 0.238 & 0.656 & 0.141 \\
    \addlinespace[0.2em]
                           & \multirow{3}{*}{$5$} 
         & $\alpha$  & -0.162 & -0.072 & -0.126 & 0.344 & 0.332 & 0.218 \\
       & & $\beta$   &  0.038 &  0.301 &  0.056 & 0.149 & 0.451 & 0.140 \\
       & & $\lambda$ & -0.096 & -0.231 & -0.083 & 0.189 & 0.492 & 0.120 \\
    \addlinespace[0.2em]
                           & \multirow{3}{*}{$10$} 
         & $\alpha$  & -0.286 & -0.306 & -0.213 & 0.390 & 0.382 & 0.278 \\
       & & $\beta$   &  0.276 &  0.294 &  0.224 & 0.177 & 0.213 & 0.161 \\
       & & $\lambda$ & -0.126 & -0.136 & -0.098 & 0.163 & 0.154 & 0.115 \\
    \addlinespace[0.5em]
    \multirow{9}{*}{$200$} & \multirow{3}{*}{$3$} 
         & $\alpha$  & -0.004 &  0.073 & -0.044 & 0.229 & 0.355 & 0.149 \\
       & & $\beta$   &  0.029 &  0.480 &  0.012 & 0.112 & 0.721 & 0.095 \\
       & & $\lambda$ &  0.004 & -0.296 & -0.035 & 0.180 & 0.640 & 0.107 \\
    \addlinespace[0.2em]
                           & \multirow{3}{*}{$5$} 
         & $\alpha$  & -0.087 & -0.019 & -0.076 & 0.242 & 0.326 & 0.151 \\
       & & $\beta$   &  0.025 &  0.308 &  0.039 & 0.104 & 0.458 & 0.096 \\
       & & $\lambda$ & -0.054 & -0.200 & -0.053 & 0.144 & 0.457 & 0.086 \\
    \addlinespace[0.2em]
                           & \multirow{3}{*}{$10$} 
         & $\alpha$  & -0.234 & -0.260 & -0.151 & 0.331 & 0.338 & 0.203 \\
       & & $\beta$   &  0.230 &  0.235 &  0.161 & 0.136 & 0.176 & 0.113 \\
       & & $\lambda$ & -0.103 & -0.114 & -0.070 & 0.140 & 0.133 & 0.084 \\
    \bottomrule
  \end{tabular*}
  
  \par
  \vspace{0.5em}
  \begin{minipage}{\textwidth}
    \footnotesize
    \textit{Note:} The bias for $\alpha$ and $\lambda$ is the raw bias, while the bias for $\beta$ is the $L_2$ norm of the bias vector.
  \end{minipage}
\end{table}
\FloatBarrier

\subsection{Additional Simulation Results for IV Estimation}
\label{subsec:iv-estimation}

Table \ref{tab:iv_estimation_hetero} reports the estimation performance under heteroskedastic errors. The results are quite similar to those under homoskedasticity: the relative performance of the estimators remains largely unchanged, and KMD continues to perform competitively across the three IV designs.
\begin{table}[htbp]
  \centering
  \vspace{-2em}
  \caption{Estimation Performance of IV Estimators (Bias and RMSE)}
  \label{tab:iv_estimation_hetero}
  
  \footnotesize 
  \renewcommand{\arraystretch}{1.15} 
  
  % --- 表格定义 ---
  % 10列：3个标签列 + 7个数据列
  % S[table-format=-2.3]: 预留两位整数位，适应 Panel B 可能出现的较大数值
  \setlength{\tabcolsep}{0pt}
  \begin{tabular*}{\textwidth}{
    @{\hspace{1em}\extracolsep{\fill}}
    c c l                  
    *{7}{S[table-format=-2.3]} 
    @{\hspace{1em}}
  }
    \toprule
    & & & \multicolumn{7}{c}{Estimators} \\
    \cmidrule(lr){4-10} 
    
    % 表头
    {$n$} & {$p$} & {Metric$^{\dagger}$} & {2SLS$^*$} & {GMM$^*$} & {Sieve-GMM} & {SMD-h1} & {SMD-hdim} & {DML-OS} & {KMD} \\
    \midrule
    
    % =======================================================
    % Panel A: Linear Structural Equation (已填充数据)
    % =======================================================
    \multicolumn{10}{l}{\textit{Panel A: Linear Structural Equation}} \\
    \addlinespace[0.2em]
   \multirow{6}{*}{$100$}
  & \multirow{2}{*}{$3$} & Bias & 0.024 & 0.103 & 0.389 & -0.251 & -0.220 & -0.171 & -0.037 \\
  &                      & RMSE & 3.529 & 3.379 & 3.553 & 3.557 & 3.725 & 3.493 & 3.466 \\
  \addlinespace[0.3em]
  & \multirow{2}{*}{$5$} & Bias & -0.064 & 0.066 & 0.614 & -0.249 & -0.329 & -0.234 & 0.014 \\
  &                      & RMSE & 3.305 & 3.254 & 3.649 & 3.601 & 4.074 & 3.344 & 3.379 \\
  \addlinespace[0.3em]
  & \multirow{2}{*}{$10$} & Bias & 0.075 & 0.195 & 1.333 & -0.200 & -0.677 & -0.105 & 0.148 \\
  &                      & RMSE & 3.471 & 3.324 & 4.424 & 4.309 & 6.705 & 3.415 & 3.522 \\
\addlinespace[0.5em]
\multirow{6}{*}{$200$}
  & \multirow{2}{*}{$3$} & Bias & 0.047 & 0.061 & 0.215 & -0.037 & -0.031 & -0.055 & 0.045 \\
  &                      & RMSE & 2.413 & 2.290 & 2.360 & 2.505 & 2.554 & 2.431 & 2.478 \\
  \addlinespace[0.3em]
  & \multirow{2}{*}{$5$} & Bias & 0.051 & 0.065 & 0.328 & -0.022 & -0.008 & -0.039 & 0.072 \\
  &                      & RMSE & 2.422 & 2.311 & 2.540 & 2.603 & 2.801 & 2.419 & 2.503 \\
  \addlinespace[0.3em]
  & \multirow{2}{*}{$10$} & Bias & 0.114 & 0.087 & 0.722 & -0.076 & -0.290 & 0.016 & 0.083 \\
  &                      & RMSE & 2.486 & 2.406 & 2.821 & 2.881 & 4.274 & 2.472 & 2.532 \\

    \midrule
    
    % =======================================================
\multicolumn{10}{l}{\textit{Panel B: Nonlinear (Exponential) Structural Equation}} \\
\addlinespace[0.2em]
\multirow{6}{*}{$100$}
  & \multirow{2}{*}{$3$} & Bias & 9.565 & -0.079 & 0.547 & -0.605 & -0.740 & -0.624 & 0.063 \\
      &                      & RMSE & 10.387 & 1.706 & 1.923 & 2.485 & 2.722 & 2.434 & 2.035 \\
      \addlinespace[0.3em]
      & \multirow{2}{*}{$5$} & Bias & 9.676 & -0.002 & 1.061 & -0.574 & -1.345 & -0.520 & 0.138 \\
      &                      & RMSE & 10.454 & 1.720 & 2.071 & 2.877 & 9.282 & 2.497 & 2.209 \\
      \addlinespace[0.3em]
      & \multirow{2}{*}{$10$} & Bias & 9.551 & -0.040 & 1.824 & -2.495 & -21.486 & -0.478 & 0.108 \\
      &                      & RMSE & 10.340 & 1.676 & 2.549 & 14.134 & 71.551 & 2.397 & 1.985 \\
\addlinespace[0.5em]
\multirow{6}{*}{$200$}
  & \multirow{2}{*}{$3$} & Bias & 9.669 & -0.092 & 0.221 & -0.349 & -0.387 & -0.383 & -0.032 \\
      &                      & RMSE & 10.053 & 1.165 & 1.283 & 1.612 & 1.601 & 1.573 & 1.477 \\
      \addlinespace[0.3em]
      & \multirow{2}{*}{$5$} & Bias & 9.876 & 0.072 & 0.662 & -0.184 & -0.248 & -0.151 & 0.138 \\
      &                      & RMSE & 10.255 & 1.159 & 1.425 & 1.599 & 1.939 & 1.513 & 1.424 \\
      \addlinespace[0.3em]
      & \multirow{2}{*}{$10$} & Bias & 9.668 & -0.043 & 1.069 & -0.427 & -4.688 & -0.284 & 0.039 \\
      &                      & RMSE & 10.058 & 1.191 & 1.662 & 2.268 & 25.804 & 1.655 & 1.481 \\
  \midrule
      \multicolumn{10}{l}{\textit{Panel C: Nonlinear (Exponential) Structural Equation with Mixed Covariates}} \\
\addlinespace[0.2em]
\multirow{6}{*}{$100$}
 & \multirow{2}{*}{$3$} & Bias & 7.163 & 0.083 & 0.369 & -0.655 & -1.030 & -0.673 & 0.159 \\
      &                      & RMSE & 7.869 & 1.587 & 1.874 & 2.477 & 7.613 & 4.321 & 1.966 \\
      \addlinespace[0.3em]
      & \multirow{2}{*}{$5$} & Bias & 9.339 & 0.194 & 0.851 & -1.075 & -2.370 & -1.116 & 0.438 \\
      &                      & RMSE & 10.442 & 2.091 & 2.339 & 5.672 & 13.971 & 7.582 & 2.358 \\
      \addlinespace[0.3em]
      & \multirow{2}{*}{$10$} & Bias & 8.165 & 0.172 & 1.634 & -5.669 & -29.150 & -1.400 & 1.228 \\
      &                      & RMSE & 9.921 & 1.898 & 2.563 & 26.265 & 90.265 & 9.247 & 2.093 \\
\addlinespace[0.5em]
\multirow{6}{*}{$200$}
  & \multirow{2}{*}{$3$} & Bias & 7.309 & 0.025 & 0.131 & -0.291 & -0.326 & -0.283 & 0.050 \\
      &                      & RMSE & 7.682 & 1.060 & 1.286 & 1.574 & 1.538 & 1.444 & 1.440 \\
      \addlinespace[0.3em]
      & \multirow{2}{*}{$5$} & Bias & 9.477 & 0.127 & 0.519 & -0.281 & -0.384 & -0.217 & 0.296 \\
      &                      & RMSE & 10.048 & 1.338 & 1.508 & 1.861 & 2.239 & 1.746 & 1.555 \\
      \addlinespace[0.3em]
      & \multirow{2}{*}{$10$} & Bias & 8.437 & 0.057 & 0.870 & -0.575 & -5.822 & -0.345 & 0.726 \\
      &                      & RMSE & 9.397 & 1.272 & 1.552 & 2.854 & 28.098 & 1.871 & 1.526 \\
\bottomrule
  \end{tabular*}
  
  \par
  \vspace{0.5em}
  \begin{minipage}{\textwidth}
    \footnotesize
    \textit{Note:} $^{\dagger}$ All values are multiplied by $100$ for readability. 
  \end{minipage}
\end{table}
\FloatBarrier
\subsection{Details of the DML-OS Benchmark}
\label{subsec:dml-details}

This section describes the implementation of the DML-OS benchmark used in the simulation study. The benchmark is motivated by the orthogonal score and cross-fitting ideas of \citet{chernozhukov2018double} but is adapted to the IV designs considered in this paper. In both designs, nuisance functions are learned strictly out-of-fold using $K=5$ cross-fitting with random forests, and the structural parameter is then estimated from the pooled cross-fitted score.

\subsubsection{Design I: Linear Structural Equation}

\paragraph{Model specification.}
The first design is
\begin{align}
    Y &= \theta_0 X + \epsilon, \qquad \E[\epsilon \mid Z] = 0, \\
    X &= f_0(Z) + u, \qquad \E[u \mid Z] = 0,
\end{align}
where $X$ is endogenous, and $Z$ is a possibly high-dimensional vector of instruments.

\paragraph{Score construction.}
For this design, we use the score
$$
\psi(W;\theta,\eta)=(Y-\theta X)\eta(Z),
$$
with nuisance function $\eta_0(Z)=\E[X\mid Z]$. Orthogonality follows from
$$
\partial_{\eta}\E_P[\psi(W;\theta_0,\eta_0)][\eta-\eta_0]
=
\E_P[\epsilon\{\eta(Z)-\eta_0(Z)\}]
=
0,
$$
which holds by iterated expectations and the condition $\E[\epsilon\mid Z]=0$.

\paragraph{Estimation algorithm.}
The implementation proceeds as follows.
\begin{enumerate}
    \item The sample of size $N$ is randomly partitioned into $K=5$ folds, denoted by $\{I_k\}_{k=1}^K$.
    
    \item For each fold $k$, the nuisance function $\eta_0(Z)=\E[X\mid Z]$ is estimated using only the auxiliary sample $I_k^c=\{1,\dots,N\}\setminus I_k$. Specifically, we regress $X$ on $Z$ using a random forest with $1000$ trees, minimum node size $5$, and $mtry=\max\{\lfloor p/3\rfloor,1\}$.
    
    \item The fitted random forest is then evaluated on the held-out fold $I_k$ to obtain cross-fitted predictions $\widehat{\eta}_i$ for $i\in I_k$. Repeating this step over all folds yields a full set of cross-fitted nuisance estimates $\{\widehat{\eta}_i\}_{i=1}^N$.
    
    \item The structural parameter is estimated from the pooled score equation
    $$
    \frac{1}{N}\sum_{i=1}^N (Y_i-\theta X_i)\widehat{\eta}_i=0.
    $$
    In this linear case, the solution is available in closed form:
    $$
    \widehat{\theta}
    =
    \left(\sum_{i=1}^N \widehat{\eta}_i X_i\right)^{-1}
    \sum_{i=1}^N \widehat{\eta}_i Y_i.
    $$
\end{enumerate}

\subsubsection{Design II: Nonlinear Structural Equation}

\paragraph{Model specification.}
The second design is
$$
Y=\exp(1+\theta_0 X)+\epsilon, \qquad \E[\epsilon\mid Z]=0,
$$
so that
$$
m(W,\theta)=Y-\exp(1+\theta X),
\qquad
\E[m(W,\theta_0)\mid Z]=0.
$$

\paragraph{Score construction.}
We use the score
$$
\psi(W;\theta,\mu)=m(W,\theta)\mu(Z),
$$
where $\mu(\cdot)$ is a nuisance function learned from the data. Motivated by the score-gradient construction, we approximate $\mu_0(Z)$ by the conditional expectation of $X\exp(1+\theta_0X)$, up to an irrelevant sign normalization. Since this object depends on the unknown parameter value, we estimate it using a fold-specific preliminary estimator.

\paragraph{Estimation algorithm.}
The implementation is fully cross-fitted and proceeds as follows.
\begin{enumerate}
    \item As in Design I, the sample is randomly partitioned into $K=5$ folds $\{I_k\}_{k=1}^K$.
    
    \item For each fold $k$, we use only the auxiliary sample $I_k^c$ to compute a preliminary estimator $\bar{\theta}^{(-k)}$. Our primary choice is a sieve-GMM estimator based on polynomial sieve moments. If this step fails numerically, we use a nonlinear least squares estimator computed on the same auxiliary sample; if that also fails, we revert to the initial value.
    
    \item Using the same auxiliary sample $I_k^c$, we construct the fold-specific pseudo-outcome
    $$
    D_i^{*,(-k)} = X_i \exp\!\bigl(1+\bar{\theta}^{(-k)}X_i\bigr),
    \qquad i\in I_k^c.
    $$
    We then regress $D_i^{*,(-k)}$ on $Z_i$ over $I_k^c$ using a random forest with $1000$ trees, minimum node size $5$, and $mtry=\max\{\lfloor p/3\rfloor,1\}$.
    
    \item The fitted random forest is evaluated on the held-out fold $I_k$ to obtain cross-fitted nuisance predictions $\widehat{\mu}_i$ for $i\in I_k$. Repeating this step across all folds yields a full set of cross-fitted predictions $\{\widehat{\mu}_i\}_{i=1}^N$.
    
    \item The structural parameter is then estimated from the pooled score equation
    $$
    \frac{1}{N}\sum_{i=1}^N \Bigl\{Y_i-\exp(1+\theta X_i)\Bigr\}\widehat{\mu}_i = 0.
    $$
    Numerically, we first search for a root over a fixed interval. If a sign change is detected, we solve the equation using one-dimensional root-finding. Otherwise, we use the minimizer of the squared pooled score over the same interval as a numerical fallback.
\end{enumerate}

\paragraph{Implementation note.}
In both designs, nuisance functions are estimated strictly out-of-fold, so that the final score is formed entirely from cross-fitted predictions. The benchmark is included as a DML-style orthogonal score comparison, motivated by \citet{chernozhukov2018double}. We do not attempt to verify the high-level rate conditions of that paper for the present simulation designs, and the random forest tuning parameters are chosen for numerical stability rather than to establish formal rate guarantees.

\subsection{Simulation Results for Finite Sample Coverage of KMD Estimation}
\label{subsec:coverage}
In this section, we evaluate the finite sample inference performance of the KMD estimator across all model specifications discussed in the main text: \textbf{Linear Regression (LR)}, the \textbf{Box--Cox Transformation Model}, \textbf{Linear IV (LIV)}, and \textbf{Nonlinear IV (NLIV)}.

For each specification, we compute the pointwise coverage probabilities (CP) of the 95\% asymptotic confidence intervals. These intervals are constructed using the sandwich variance estimators derived from the asymptotic normality results in Theorem \ref{thm:theta_an} and Proposition \ref{prop:variance_consistency}. The following tables present the empirical coverage rates under various sample sizes ($n$), dimensions ($p$), and error structures (Homoskedastic and Heteroskedastic). The results demonstrate that the empirical coverage rates converge to the nominal 95\% level as the sample size increases, confirming the validity of the KMD inference framework.

\begin{table}[htbp]
\centering
\caption{Pointwise Coverage Percentages for Linear Regression Model}
\label{tab:cov-lr}
% 1. 进一步减小行高 (从 1.05 -> 0.95)
\renewcommand{\arraystretch}{1.05}
% 2. 减小列间距 (默认约 6pt -> 3pt)，这会让表格变窄，从而允许 resizebox 缩放时字体保留得稍微大一点，或者更紧凑
%\setlength{\tabcolsep}{3pt}
% 3. 设置基础字体为极小
\scriptsize
\resizebox{\textwidth}{!}{
\begin{tabular}{ccccccccccccc} % 修正了列数，确保为 13 列 (N, P, Error + 10个Theta)
\toprule
$n$ & $p$ & Error & $\theta_{1}$ & $\theta_{2}$ & $\theta_{3}$ & $\theta_{4}$ & $\theta_{5}$ & $\theta_{6}$ & $\theta_{7}$ & $\theta_{8}$ & $\theta_{9}$ & $\theta_{10}$ \\
\midrule
\multirow{6}{*}{100} & \multirow{2}{*}{3} & Hetero & 0.943 & 0.940 & 0.919 &  &  &  &  &  &  &  \\
 &  & Homo & 0.942 & 0.942 & 0.924 &  &  &  &  &  &  &  \\
\cmidrule{2-13}
 & \multirow{2}{*}{5} & Hetero & 0.950 & 0.935 & 0.938 & 0.944 & 0.932 &  &  &  &  &  \\
 &  & Homo & 0.944 & 0.940 & 0.939 & 0.948 & 0.935 &  &  &  &  &  \\
\cmidrule{2-13}
 & \multirow{2}{*}{10} & Hetero & 0.945 & 0.936 & 0.942 & 0.936 & 0.934 & 0.928 & 0.944 & 0.937 & 0.945 & 0.941 \\
 &  & Homo & 0.939 & 0.930 & 0.945 & 0.939 & 0.935 & 0.928 & 0.941 & 0.940 & 0.941 & 0.946 \\
\midrule
\multirow{6}{*}{200} & \multirow{2}{*}{3} & Hetero & 0.937 & 0.934 & 0.953 &  &  &  &  &  &  &  \\
 &  & Homo & 0.936 & 0.938 & 0.956 &  &  &  &  &  &  &  \\
\cmidrule{2-13}
 & \multirow{2}{*}{5} & Hetero & 0.945 & 0.951 & 0.939 & 0.942 & 0.942 &  &  &  &  &  \\
 &  & Homo & 0.947 & 0.943 & 0.941 & 0.944 & 0.936 &  &  &  &  &  \\
\cmidrule{2-13}
 & \multirow{2}{*}{10} & Hetero & 0.945 & 0.941 & 0.944 & 0.961 & 0.961 & 0.952 & 0.932 & 0.948 & 0.954 & 0.953 \\
 &  & Homo & 0.940 & 0.949 & 0.943 & 0.964 & 0.962 & 0.955 & 0.934 & 0.955 & 0.954 & 0.949 \\
\midrule
\multirow{6}{*}{400} & \multirow{2}{*}{3} & Hetero & 0.952 & 0.955 & 0.943 &  &  &  &  &  &  &  \\
 &  & Homo & 0.959 & 0.946 & 0.937 &  &  &  &  &  &  &  \\
\cmidrule{2-13}
 & \multirow{2}{*}{5} & Hetero & 0.939 & 0.955 & 0.950 & 0.942 & 0.952 &  &  &  &  &  \\
 &  & Homo & 0.932 & 0.951 & 0.948 & 0.941 & 0.951 &  &  &  &  &  \\
\cmidrule{2-13}
 & \multirow{2}{*}{10} & Hetero & 0.950 & 0.956 & 0.936 & 0.942 & 0.944 & 0.946 & 0.951 & 0.952 & 0.950 & 0.935 \\
 &  & Homo & 0.955 & 0.952 & 0.937 & 0.939 & 0.947 & 0.956 & 0.949 & 0.951 & 0.951 & 0.936 \\
\bottomrule
\end{tabular}
}
\end{table}

\begin{table}[htbp]
\centering
\caption{Pointwise Coverage Percentages for Box--Cox Model ($\lambda_0=0$)}
\label{tab:cov-bc_lambda0}
\renewcommand{\arraystretch}{1.05}
\resizebox{\textwidth}{!}{
\begin{tabular}{ccccccccccccccc}
\toprule
$n$ & $p$ & Error & $\alpha$ & $\beta_{1}$ & $\beta_{2}$ & $\beta_{3}$ & $\beta_{4}$ & $\beta_{5}$ & $\beta_{6}$ & $\beta_{7}$ & $\beta_{8}$ & $\beta_{9}$ & $\beta_{10}$ & $\lambda$ \\
\midrule
\multirow{6}{*}{100} & \multirow{2}{*}{3} & Hetero & 0.938 & 0.937 & 0.942 & 0.934 &  &  &  &  &  &  &  & 0.915 \\
 &  & Homo & 0.956 & 0.943 & 0.947 & 0.948 &  &  &  &  &  &  &  & 0.965 \\
\cmidrule{2-15}
 & \multirow{2}{*}{5} & Hetero & 0.946 & 0.923 & 0.939 & 0.924 & 0.935 & 0.944 &  &  &  &  &  & 0.914 \\
 &  & Homo & 0.942 & 0.923 & 0.933 & 0.931 & 0.923 & 0.943 &  &  &  &  &  & 0.945 \\
\cmidrule{2-15}
 & \multirow{2}{*}{10} & Hetero & 0.907 & 0.910 & 0.930 & 0.931 & 0.922 & 0.934 & 0.917 & 0.937 & 0.934 & 0.927 & 0.925 & 0.892 \\
 &  & Homo & 0.937 & 0.930 & 0.923 & 0.922 & 0.938 & 0.919 & 0.933 & 0.927 & 0.929 & 0.922 & 0.905 & 0.926 \\
\midrule
\multirow{6}{*}{200} & \multirow{2}{*}{3} & Hetero & 0.931 & 0.947 & 0.943 & 0.945 &  &  &  &  &  &  &  & 0.940 \\
 &  & Homo & 0.962 & 0.939 & 0.944 & 0.948 &  &  &  &  &  &  &  & 0.967 \\
\cmidrule{2-15}
 & \multirow{2}{*}{5} & Hetero & 0.942 & 0.940 & 0.943 & 0.950 & 0.949 & 0.933 &  &  &  &  &  & 0.931 \\
 &  & Homo & 0.962 & 0.955 & 0.945 & 0.942 & 0.956 & 0.954 &  &  &  &  &  & 0.954 \\
\cmidrule{2-15}
 & \multirow{2}{*}{10} & Hetero & 0.939 & 0.940 & 0.942 & 0.938 & 0.952 & 0.940 & 0.951 & 0.933 & 0.946 & 0.941 & 0.939 & 0.939 \\
 &  & Homo & 0.945 & 0.937 & 0.933 & 0.940 & 0.930 & 0.935 & 0.950 & 0.943 & 0.947 & 0.935 & 0.943 & 0.942 \\
\midrule
\multirow{6}{*}{400} & \multirow{2}{*}{3} & Hetero & 0.946 & 0.952 & 0.941 & 0.949 &  &  &  &  &  &  &  & 0.935 \\
 &  & Homo & 0.952 & 0.934 & 0.927 & 0.940 &  &  &  &  &  &  &  & 0.966 \\
\cmidrule{2-15}
 & \multirow{2}{*}{5} & Hetero & 0.935 & 0.950 & 0.941 & 0.952 & 0.943 & 0.936 &  &  &  &  &  & 0.930 \\
 &  & Homo & 0.950 & 0.939 & 0.943 & 0.950 & 0.950 & 0.944 &  &  &  &  &  & 0.949 \\
\cmidrule{2-15}
 & \multirow{2}{*}{10} & Hetero & 0.938 & 0.942 & 0.940 & 0.945 & 0.941 & 0.936 & 0.953 & 0.948 & 0.949 & 0.959 & 0.943 & 0.944 \\
 &  & Homo & 0.950 & 0.935 & 0.943 & 0.946 & 0.949 & 0.943 & 0.946 & 0.952 & 0.956 & 0.952 & 0.943 & 0.943 \\
\bottomrule
\end{tabular}
}
\end{table}

\begin{table}[htbp]
\centering
\caption{Pointwise Coverage Percentages for Box--Cox Model ($\lambda_0=0.5$)}
\label{tab:cov-bc_lambda05}
\renewcommand{\arraystretch}{1.05}
\resizebox{\textwidth}{!}{
\begin{tabular}{ccccccccccccccc}
\toprule
$N$ & $P$ & Error & $\alpha$ & $\beta_{1}$ & $\beta_{2}$ & $\beta_{3}$ & $\beta_{4}$ & $\beta_{5}$ & $\beta_{6}$ & $\beta_{7}$ & $\beta_{8}$ & $\beta_{9}$ & $\beta_{10}$ & $\lambda$ \\
\midrule
\multirow{6}{*}{100} & \multirow{2}{*}{3} & Hetero & 0.952 & 0.944 & 0.943 & 0.940 &  &  &  &  &  &  &  & 0.931 \\
 &  & Homo & 0.969 & 0.946 & 0.952 & 0.943 &  &  &  &  &  &  &  & 0.973 \\
\cmidrule{2-15}
 & \multirow{2}{*}{5} & Hetero & 0.960 & 0.926 & 0.941 & 0.927 & 0.938 & 0.952 &  &  &  &  &  & 0.930 \\
 &  & Homo & 0.949 & 0.926 & 0.937 & 0.932 & 0.924 & 0.949 &  &  &  &  &  & 0.954 \\
\cmidrule{2-15}
 & \multirow{2}{*}{10} & Hetero & 0.909 & 0.912 & 0.932 & 0.934 & 0.922 & 0.934 & 0.917 & 0.938 & 0.935 & 0.928 & 0.924 & 0.864 \\
 &  & Homo & 0.931 & 0.936 & 0.932 & 0.921 & 0.938 & 0.913 & 0.939 & 0.928 & 0.932 & 0.921 & 0.912 & 0.914 \\
\midrule
\multirow{6}{*}{200} & \multirow{2}{*}{3} & Hetero & 0.950 & 0.946 & 0.944 & 0.948 &  &  &  &  &  &  &  & 0.949 \\
 &  & Homo & 0.969 & 0.941 & 0.942 & 0.955 &  &  &  &  &  &  &  & 0.967 \\
\cmidrule{2-15}
 & \multirow{2}{*}{5} & Hetero & 0.947 & 0.936 & 0.947 & 0.951 & 0.944 & 0.935 &  &  &  &  &  & 0.937 \\
 &  & Homo & 0.962 & 0.956 & 0.945 & 0.942 & 0.958 & 0.956 &  &  &  &  &  & 0.951 \\
\cmidrule{2-15}
 & \multirow{2}{*}{10} & Hetero & 0.935 & 0.940 & 0.936 & 0.941 & 0.953 & 0.936 & 0.951 & 0.934 & 0.942 & 0.940 & 0.941 & 0.894 \\
 &  & Homo & 0.939 & 0.938 & 0.933 & 0.941 & 0.934 & 0.930 & 0.949 & 0.944 & 0.950 & 0.938 & 0.941 & 0.930 \\
\midrule
\multirow{6}{*}{400} & \multirow{2}{*}{3} & Hetero & 0.959 & 0.948 & 0.938 & 0.950 &  &  &  &  &  &  &  & 0.942 \\
 &  & Homo & 0.953 & 0.939 & 0.933 & 0.940 &  &  &  &  &  &  &  & 0.965 \\
\cmidrule{2-15}
 & \multirow{2}{*}{5} & Hetero & 0.937 & 0.949 & 0.943 & 0.949 & 0.943 & 0.934 &  &  &  &  &  & 0.936 \\
 &  & Homo & 0.954 & 0.944 & 0.943 & 0.953 & 0.948 & 0.944 &  &  &  &  &  & 0.947 \\
\cmidrule{2-15}
 & \multirow{2}{*}{10} & Hetero & 0.931 & 0.943 & 0.941 & 0.943 & 0.944 & 0.941 & 0.951 & 0.953 & 0.947 & 0.960 & 0.941 & 0.937 \\
 &  & Homo & 0.952 & 0.933 & 0.942 & 0.943 & 0.946 & 0.942 & 0.942 & 0.951 & 0.955 & 0.949 & 0.943 & 0.941 \\
\bottomrule
\end{tabular}
}
\end{table}

\begin{table}[htbp]
\centering
\caption{Pointwise Coverage Probabilities for Linear, Nonlinear, and Mixed IV Models}
\label{tab:cov-iv-comparison}
\small
\renewcommand{\arraystretch}{1.08}
\setlength{\tabcolsep}{5.5pt}
\begin{tabular}{ccccc ccc}
\toprule
\multirow{2}{*}{$n$} 
& \multirow{2}{*}{Dim.} 
& \multicolumn{3}{c}{Homoskedastic Errors} 
& \multicolumn{3}{c}{Heteroskedastic Errors} \\
\cmidrule(lr){3-5} \cmidrule(lr){6-8}
& 
& Linear IV 
& Nonlinear IV 
& Mixed NLIV
& Linear IV 
& Nonlinear IV 
& Mixed NLIV \\
\midrule
\multirow{3}{*}{100} 
& 3  & 0.950 & 0.938 & 0.937 & 0.954 & 0.943 & 0.946 \\
& 5  & 0.953 & 0.931 & 0.929 & 0.951 & 0.933 & 0.937 \\
& 10 & 0.953 & 0.909 & 0.920 & 0.948 & 0.929 & 0.931 \\
\midrule
\multirow{3}{*}{200} 
& 3  & 0.956 & 0.940 & 0.937 & 0.934 & 0.954 & 0.950 \\
& 5  & 0.948 & 0.940 & 0.934 & 0.948 & 0.940 & 0.937 \\
& 10 & 0.947 & 0.929 & 0.936 & 0.956 & 0.941 & 0.934 \\
\midrule
\multirow{3}{*}{400} 
& 3  & 0.955 & 0.956 & 0.960 & 0.963 & 0.935 & 0.948 \\
& 5  & 0.942 & 0.949 & 0.952 & 0.947 & 0.957 & 0.960 \\
& 10 & 0.946 & 0.951 & 0.953 & 0.952 & 0.950 & 0.949 \\
\bottomrule
\end{tabular}
\end{table}
\FloatBarrier

\subsection{Simulation Results for the Test of \cite{dominguez2015simple}}
\label{subsec:dl-test}
To benchmark the finite-sample performance of the proposed KMD test, particularly its robustness to the dimensionality of the conditioning variables, this section presents a comparative analysis with the omnibus specification test of \cite{dominguez2015simple}. We implement the DL test following their exact protocol, utilizing the Consistent Method of Moments (CMM) estimator and the projection-based wild bootstrap procedure. While the DL test is consistent against fixed alternatives, its construction using coordinate-wise indicator functions makes it inherently susceptible to data sparsity as the dimension of the covariate space increases, a limitation that the reproducing-kernel-based KMD aims to mitigate.
\begin{table}[htbp]
  \centering
  \caption{Empirical Rejection Rates for Linear Regression Specification Tests ($\alpha=0.05$)}
  \label{tab:DL_lr_test}
  
  \small % 使用小号字体
  \renewcommand{\arraystretch}{1.15} 
  
  % 定义等宽居中列 Y
  \newcolumntype{Y}{>{\centering\arraybackslash}X}
  
  % 使用 tabularx，前两列 c (自然宽度居中)，后六列 Y (等宽居中)
  \begin{tabularx}{\textwidth}{cc YYY YYY}
    \toprule
    & & \multicolumn{3}{c}{Homoskedastic Errors} & \multicolumn{3}{c}{Heteroskedastic Errors} \\
    \cmidrule(lr){3-5} \cmidrule(lr){6-8}
    $n$ & DGP Model & $p=3$ & $p=5$ & $p=10$ & $p=3$ & $p=5$ & $p=10$ \\
    \midrule
    
    % --- n=100 ---
    \multirow{4}{*}{$100$} 
      & Size ($\mathbb{H}_0$)             & $0.050$ & $0.033$ & $0.014$ & $0.042$ & $0.031$ & $0.014$ \\
      & Power ($\mathbb{H}_{1a}$: Quad.)  & $0.747$ & $0.483$ & $0.012$ & $0.743$ & $0.489$ & $0.012$ \\
      & Power ($\mathbb{H}_{1b}$: Disc.)  & $0.488$ & $0.268$ & $0.015$ & $0.484$ & $0.267$ & $0.018$ \\
      & Power ($\mathbb{H}_{1c}$: Inter.) & $0.157$ & $0.101$ & $0.006$ & $0.154$ & $0.093$ & $0.008$ \\
      
    \addlinespace[0.5em] % 样本量之间增加间距
    
    % --- n=200 ---
    \multirow{4}{*}{$200$} 
      & Size ($\mathbb{H}_0$)             & $0.039$ & $0.043$ & $0.025$ & $0.037$ & $0.044$ & $0.021$ \\
      & Power ($\mathbb{H}_{1a}$: Quad.)  & $0.990$ & $0.926$ & $0.061$ & $0.988$ & $0.933$ & $0.062$ \\
      & Power ($\mathbb{H}_{1b}$: Disc.)  & $0.852$ & $0.681$ & $0.047$ & $0.853$ & $0.700$ & $0.043$ \\
      & Power ($\mathbb{H}_{1c}$: Inter.) & $0.401$ & $0.328$ & $0.030$ & $0.384$ & $0.348$ & $0.030$ \\

    \addlinespace[0.5em]
    
    % --- n=400 ---
    \multirow{4}{*}{$400$} 
      & Size ($\mathbb{H}_0$)             & $0.039$ & $0.044$ & $0.028$ & $0.041$ & $0.051$ & $0.025$ \\
      & Power ($\mathbb{H}_{1a}$: Quad.)  & $1.000$ & $1.000$ & $0.460$ & $1.000$ & $1.000$ & $0.478$ \\
      & Power ($\mathbb{H}_{1b}$: Disc.)  & $0.990$ & $0.955$ & $0.249$ & $0.989$ & $0.965$ & $0.287$ \\
      & Power ($\mathbb{H}_{1c}$: Inter.) & $0.836$ & $0.740$ & $0.135$ & $0.836$ & $0.770$ & $0.160$ \\
    \bottomrule
  \end{tabularx}
\end{table}

\begin{table}[htbp]
  \centering
  \caption{Empirical Rejection Rates for Box--Cox Specification Tests ($\alpha=0.05$)}
  \label{tab:DL_boxcox_test}
  
  \small % 使用小号字体
  \renewcommand{\arraystretch}{1.15} 
  
  % 定义等宽居中列 Y
  \newcolumntype{Y}{>{\centering\arraybackslash}X}
  
  % 使用 tabularx，前两列 c (自然宽度居中)，后六列 Y (等宽居中)
  \begin{tabularx}{\textwidth}{cc YYY YYY}
    \toprule
    & & \multicolumn{3}{c}{Homoskedastic Errors} & \multicolumn{3}{c}{Heteroskedastic Errors} \\
    \cmidrule(lr){3-5} \cmidrule(lr){6-8}
    $n$ & DGP Model & $p=3$ & $p=5$ & $p=10$ & $p=3$ & $p=5$ & $p=10$ \\
    \midrule
    
    % --- n=100 ---
    \multirow{4}{*}{$100$} 
      & Size ($\mathbb{H}_0$)             & $0.044$ & $0.026$ & $0.016$ & $0.042$ & $0.041$ & $0.005$ \\
      & Power ($\mathbb{H}_{1a}$: Quad.)  & $0.339$ & $0.179$ & $0.015$ & $0.388$ & $0.225$ & $0.019$ \\
      & Power ($\mathbb{H}_{1b}$: Disc.)  & $0.945$ & $0.495$ & $0.033$ & $0.948$ & $0.526$ & $0.030$ \\
      & Power ($\mathbb{H}_{1c}$: Inter.) & $0.241$ & $0.062$ & $0.008$ & $0.270$ & $0.088$ & $0.008$ \\
      
    \addlinespace[0.5em]
    
    % --- n=200 ---
    \multirow{4}{*}{$200$} 
      & Size ($\mathbb{H}_0$)             & $0.039$ & $0.029$ & $0.021$ & $0.044$ & $0.037$ & $0.016$ \\
      & Power ($\mathbb{H}_{1a}$: Quad.)  & $0.680$ & $0.569$ & $0.071$ & $0.763$ & $0.684$ & $0.078$ \\
      & Power ($\mathbb{H}_{1b}$: Disc.)  & $1.000$ & $0.969$ & $0.169$ & $1.000$ & $0.977$ & $0.168$ \\
      & Power ($\mathbb{H}_{1c}$: Inter.) & $0.561$ & $0.243$ & $0.027$ & $0.668$ & $0.300$ & $0.025$ \\

    \addlinespace[0.5em]
    
    % --- n=400 ---
    \multirow{4}{*}{$400$} 
      & Size ($\mathbb{H}_0$)             & $0.037$ & $0.045$ & $0.038$ & $0.045$ & $0.036$ & $0.028$ \\
      & Power ($\mathbb{H}_{1a}$: Quad.)  & $0.941$ & $0.959$ & $0.405$ & $0.986$ & $0.988$ & $0.463$ \\
      & Power ($\mathbb{H}_{1b}$: Disc.)  & $1.000$ & $1.000$ & $0.726$ & $1.000$ & $1.000$ & $0.738$ \\
      & Power ($\mathbb{H}_{1c}$: Inter.) & $0.909$ & $0.671$ & $0.131$ & $0.979$ & $0.793$ & $0.188$ \\
      
    \bottomrule
  \end{tabularx}
\end{table}

\begin{table}[htbp]
  \centering
  \caption{Empirical Rejection Rates for IV Specification Tests ($\alpha=0.05$)}
  \label{tab:DL_iv_test}
  
  \small
  \renewcommand{\arraystretch}{1.15}
  \newcolumntype{Y}{>{\centering\arraybackslash}X}
  
  \begin{tabularx}{\textwidth}{cc YYY YYY}
    \toprule
    & & \multicolumn{3}{c}{Homoskedastic Errors} 
      & \multicolumn{3}{c}{Heteroskedastic Errors} \\
    \cmidrule(lr){3-5} \cmidrule(lr){6-8}
    $n$ & DGP 
      & $d=3$ & $d=5$ & $d=10$
      & $d=3$ & $d=5$ & $d=10$ \\
    \midrule
    
    \multicolumn{8}{l}{\textit{Panel A: Linear structural model with continuous instruments}} \\

    \multirow{2}{*}{$100$}
      & Size ($\mathbb H_0$)  
        & $0.027$ & $0.025$ & $0.000$
        & $0.038$ & $0.019$ & $0.001$ \\
      & Power ($\mathbb H_1$) 
        & $0.201$ & $0.068$ & $0.001$
        & $0.236$ & $0.069$ & $0.002$ \\
    \addlinespace[0.3em]
    
    \multirow{2}{*}{$200$}
      & Size ($\mathbb H_0$)  
        & $0.043$ & $0.021$ & $0.002$
        & $0.052$ & $0.027$ & $0.003$ \\
      & Power ($\mathbb H_1$) 
        & $0.448$ & $0.159$ & $0.007$
        & $0.519$ & $0.169$ & $0.001$ \\
    \addlinespace[0.3em]
    
    \multirow{2}{*}{$400$}
      & Size ($\mathbb H_0$)  
        & $0.050$ & $0.035$ & $0.009$
        & $0.039$ & $0.028$ & $0.007$ \\
      & Power ($\mathbb H_1$) 
        & $0.785$ & $0.434$ & $0.040$
        & $0.869$ & $0.483$ & $0.025$ \\
    
    \addlinespace[0.7em]
    \multicolumn{8}{l}{\textit{Panel B: Nonlinear structural model with continuous instruments}} \\
        \multirow{2}{*}{$100$}
      & Size ($\mathbb H_0$)  
        & $0.034$ & $0.018$ & $0.003$
        & $0.029$ & $0.035$ & $0.003$ \\
      & Power ($\mathbb H_1$) 
        & $0.583$ & $0.108$ & $0.001$
        & $0.628$ & $0.143$ & $0.002$ \\
    \addlinespace[0.3em]
    
    \multirow{2}{*}{$200$}
      & Size ($\mathbb H_0$)  
        & $0.045$ & $0.024$ & $0.009$
        & $0.043$ & $0.019$ & $0.013$ \\
      & Power ($\mathbb H_1$) 
        & $0.959$ & $0.513$ & $0.006$
        & $0.965$ & $0.557$ & $0.001$ \\
    \addlinespace[0.3em]
    
    \multirow{2}{*}{$400$}
      & Size ($\mathbb H_0$)  
        & $0.049$ & $0.030$ & $0.041$
        & $0.047$ & $0.021$ & $0.040$ \\
      & Power ($\mathbb H_1$) 
        & $1.000$ & $0.965$ & $0.024$
        & $1.000$ & $0.973$ & $0.014$ \\
    
    \addlinespace[0.7em]
    \multicolumn{8}{l}{\textit{Panel C: Mixed nonlinear structural model with continuous and discrete instruments}} \\
    \multirow{2}{*}{$100$}
      & Size ($\mathbb H_0$)  
        & $0.037$ & $0.040$ & $0.018$
        & $0.031$ & $0.027$ & $0.026$ \\
      & Power ($\mathbb H_1$) 
       & $0.357$ & $0.232$ & $0.051$
        & $0.521$ & $0.318$ & $0.087$ \\
    \addlinespace[0.3em]
    
    \multirow{2}{*}{$200$}
      & Size ($\mathbb H_0$)  
        & $0.056$ & $0.048$ & $0.026$
        & $0.043$ & $0.040$ & $0.036$ \\
      & Power ($\mathbb H_1$) 
        & $0.724$ & $0.606$ & $0.133$
        & $0.900$ & $0.764$ & $0.211$ \\
    \addlinespace[0.3em]
    
    \multirow{2}{*}{$400$}
      & Size ($\mathbb H_0$)  
        & $0.046$ & $0.054$ & $0.024$
        & $0.063$ & $0.038$ & $0.035$ \\
      & Power ($\mathbb H_1$) 
        & $0.975$ & $0.950$ & $0.417$
        & $0.998$ & $0.989$ & $0.609$ \\
    
    \bottomrule
  \end{tabularx}
\end{table}
\FloatBarrier

\section{Additional Empirical Results}
\label{sec:add-empirical}

In the primary empirical analysis presented in Section \ref{sec:real}, the estimated equivalence scale parameter $\hat{\theta}$ was found to be statistically insignificant from zero. To ensure that our main conclusions regarding the functional form of the Engel curves are not driven by the uncertainty associated with this parameter, we conduct a robustness check by imposing the restriction $\theta = 0$. Under this restriction, the effective total expenditure simplifies to $u_i = \ln x_i$.

Regarding specification tests, the rejection of the linear form remains valid by construction, as $\theta$ is unidentified in that setting. For the quadratic specification, the restricted KMD test yields a $p$-value of $0.783$, indicating that the model fits the data well even without the equivalence scale parameter.

Tables \ref{tab:quaids_estimates_restricted} and \ref{tab:wald_tests_restricted}, together with Figure \ref{fig:quaid_results_restricted}, present the results for the restricted model. The structural estimates, Wald test results, and Engel curve shapes are virtually indistinguishable from those in the unrestricted case reported in the main text. These findings confirm that our empirical analysis is robust to excluding the equivalence scale parameter.

\begin{table}[htbp]
  \centering
  \caption{Parameter Estimates for the Restricted Quadratic Specification ($\theta=0$)}
  \label{tab:quaids_estimates_restricted}
  
  \small
  \renewcommand{\arraystretch}{1.25} % 适当增加行高
  
  % --- 表格设置 ---
  \setlength{\tabcolsep}{0pt} 
  
  \begin{tabular*}{\textwidth}{
    @{\extracolsep{\fill}} 
    l 
    c 
    c 
    c 
    c 
    }
    \toprule
    Commodity & $\alpha_l$ & $\beta_l$ & $\lambda_l$ & $\gamma_l$ \\
    \midrule
    Food      & $-0.144$ $(0.632)$ & $\phantom{-}0.193$ $(0.228)$ & $-0.025$ $(0.021)$ & $\phantom{-}0.051$ $(0.004)$ \\
    Catering  & $-0.264$ $(0.441)$ & $\phantom{-}0.120$ $(0.160)$ & $-0.010$ $(0.015)$ & $-0.005$ $(0.003)$ \\
    Alcohol   & $-0.555$ $(0.460)$ & $\phantom{-}0.245$ $(0.166)$ & $-0.024$ $(0.015)$ & $-0.023$ $(0.004)$ \\
    Fuel      & $\phantom{-}0.983$ $(0.262)$ & $-0.303$ $(0.095)$ & $\phantom{-}0.024$ $(0.009)$ & $\phantom{-}0.011$ $(0.002)$ \\
    Motor     & $-2.056$ $(0.732)$ & $\phantom{-}0.843$ $(0.265)$ & $-0.080$ $(0.024)$ & $-0.019$ $(0.005)$ \\
    Fares     & $\phantom{-}0.809$ $(0.378)$ & $-0.303$ $(0.138)$ & $\phantom{-}0.029$ $(0.013)$ & $-0.007$ $(0.003)$ \\
    Leisure   & $\phantom{-}1.174$ $(1.028)$ & $-0.527$ $(0.378)$ & $\phantom{-}0.062$ $(0.035)$ & $-0.017$ $(0.007)$ \\
    \bottomrule
  \end{tabular*}
  
  \par
  \vspace{0.5em}
  \begin{minipage}{\textwidth}
    \footnotesize
    \textit{Notes:} The table reports the commodity-specific coefficients estimated under the restriction $\theta=0$. Standard errors are in parentheses. The parameters correspond to the specification $w_{il} = \alpha_l + \beta_l u_i + \lambda_l u_i^2 + \gamma_l k_i + \epsilon_{il}$, where $u_i = \ln x_i$.
  \end{minipage}
\end{table}

\begin{table}[htbp]
  \centering
  \caption{Wald Tests for Joint Significance of Expenditure Terms under $\theta=0$ ($\mathbb{H}_0: \beta_l = \lambda_l = 0$)}
  \label{tab:wald_tests_restricted}
  \small
  \renewcommand{\arraystretch}{1.2}
  \setlength{\tabcolsep}{0pt}
  
  \begin{tabular*}{\textwidth}{@{\extracolsep{\fill}}l ccccccc}
    \toprule
    & Food & {Catering} & {Alcohol} & {Fuel} & {Motor} & {Fares} & {Leisure} \\
    \midrule
    $p$-value & $0.000^{***}$ & $0.449$ & $0.002^{**}$ & $0.000^{***}$ & $0.000^{***}$ & $0.002^{**}$ & $0.000^{***}$ \\
    \bottomrule
  \end{tabular*}
  \par
  \vspace{0.5em}
  \begin{minipage}{\textwidth}
    \footnotesize
    \textit{Note:} Significance levels: $^{*} p<0.05$, $^{**} p<0.01$, $^{***} p<0.001$.
  \end{minipage}
\end{table}

\begin{figure*}[htbp]
    \centering
    \includegraphics[width=1.0\textwidth]{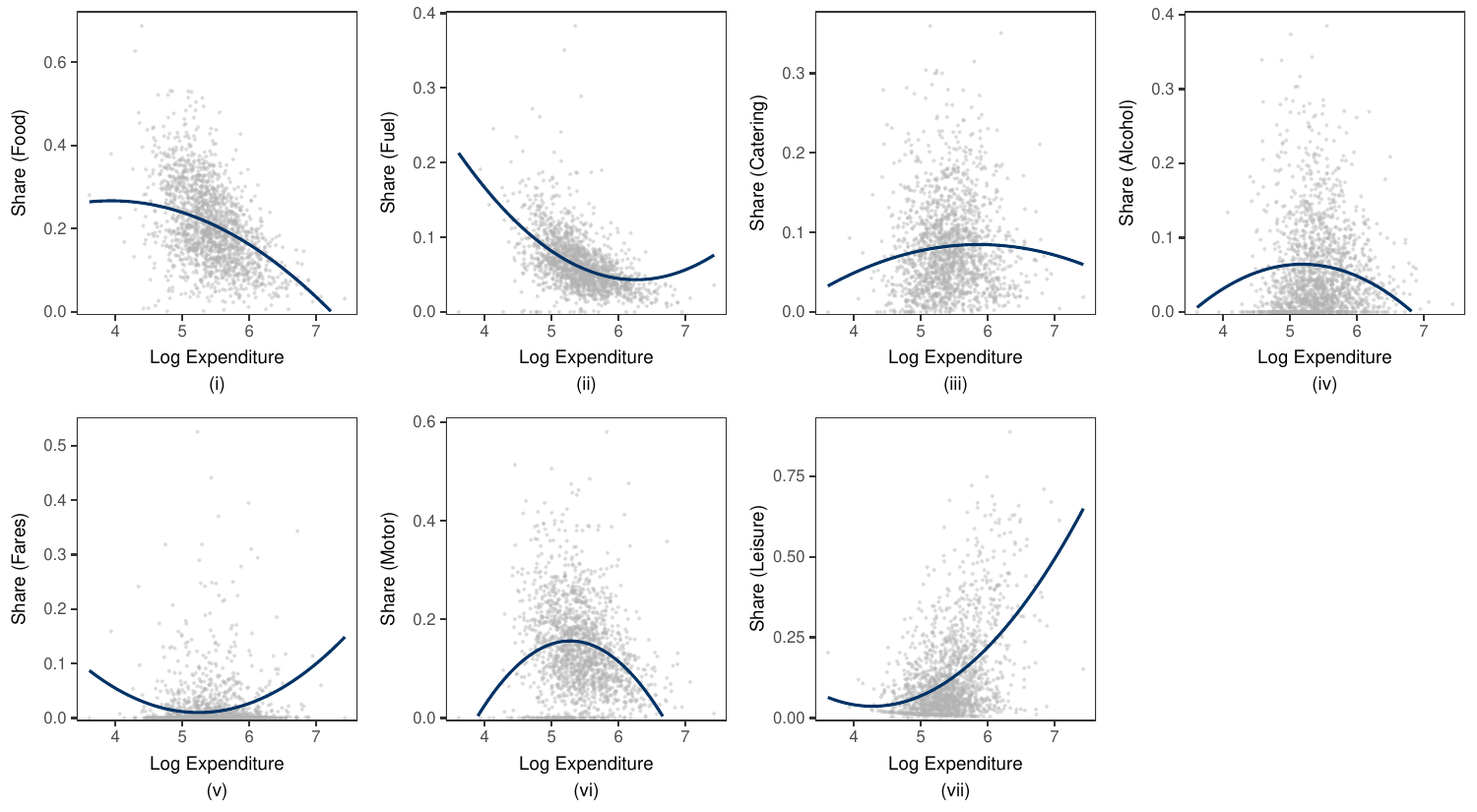}
    \caption{Estimated QUAIDS Engel curves for seven commodity groups under the restriction $\theta=0$. Grey dots represent observed household shares, while solid blue-black lines indicate the fitted quadratic specification. Sub-labels (i) through (vii) refer to food, fuel, catering, alcohol, fares, motor, and leisure, respectively.}
    \label{fig:quaid_results_restricted}
\end{figure*}

\clearpage

\putbib 
\end{bibunit}


\begin{thebibliography}{35}
\providecommand{\natexlab}[1]{#1}
\providecommand{\url}[1]{\texttt{#1}}
\expandafter\ifx\csname urlstyle\endcsname\relax
  \providecommand{\doi}[1]{doi: #1}\else
  \providecommand{\doi}{doi: \begingroup \urlstyle{rm}\Url}\fi

\bibitem[Amemiya and Powell(1981)]{amemiya1981comparison}
Takeshi Amemiya and James~L Powell.
\newblock A comparison of the box-cox maximum likelihood estimator and the non-linear two-stage least squares estimator.
\newblock \emph{Journal of Econometrics}, 17\penalty0 (3):\penalty0 351--381, 1981.

\bibitem[Banks et~al.(1997)Banks, Blundell, and Lewbel]{banks1997quadratic}
James Banks, Richard Blundell, and Arthur Lewbel.
\newblock Quadratic engel curves and consumer demand.
\newblock \emph{Review of Economics and Statistics}, 79\penalty0 (4):\penalty0 527--539, 1997.

\bibitem[Bickel and Doksum(1981)]{bickel1981analysis}
Peter~J Bickel and Kjell~A Doksum.
\newblock An analysis of transformations revisited.
\newblock \emph{Journal of the American Statistical Association}, 76\penalty0 (374):\penalty0 296--311, 1981.

\bibitem[Bierens(1982)]{bierens1982consistent}
Herman~J Bierens.
\newblock Consistent model specification tests.
\newblock \emph{Journal of Econometrics}, 20\penalty0 (1):\penalty0 105--134, 1982.

\bibitem[Bierens(1990)]{bierens1990consistent}
Herman~J Bierens.
\newblock A consistent conditional moment test of functional form.
\newblock \emph{Econometrica}, 58\penalty0 (6):\penalty0 1443--1458, 1990.

\bibitem[Bierens and Ploberger(1997)]{bierens1997asymptotic}
Herman~J Bierens and Werner Ploberger.
\newblock Asymptotic theory of integrated conditional moment tests.
\newblock \emph{Econometrica}, 65\penalty0 (5):\penalty0 1129--1151, 1997.

\bibitem[Blundell et~al.(2007)Blundell, Chen, and Kristensen]{blundell2007semi}
Richard Blundell, Xiaohong Chen, and Dennis Kristensen.
\newblock Semi-nonparametric iv estimation of shape-invariant engel curves.
\newblock \emph{Econometrica}, 75\penalty0 (6):\penalty0 1613--1669, 2007.

\bibitem[Carrasco and Florens(2000)]{carrasco2000generalization}
Marine Carrasco and Jean-Pierre Florens.
\newblock Generalization of gmm to a continuum of moment conditions.
\newblock \emph{Econometric Theory}, 16\penalty0 (6):\penalty0 797--834, 2000.

\bibitem[Chernozhukov et~al.(2018)Chernozhukov, Chetverikov, Demirer, Duflo, Hansen, Newey, and Robins]{chernozhukov2018double}
Victor Chernozhukov, Denis Chetverikov, Mert Demirer, Esther Duflo, Christian Hansen, Whitney Newey, and James Robins.
\newblock Double/debiased machine learning for treatment and structural parameters.
\newblock \emph{The Econometrics Journal}, 21\penalty0 (1):\penalty0 C1--C68, 2018.

\bibitem[Delgado et~al.(2006)Delgado, Dom{\'\i}nguez, and Lavergne]{delgado2006consistent}
Miguel~A Delgado, Manuel~A Dom{\'\i}nguez, and Pascal Lavergne.
\newblock Consistent tests of conditional moment restrictions.
\newblock \emph{Annales d'{\'E}conomie et de Statistique}, 1\penalty0 (81):\penalty0 33--67, 2006.

\bibitem[Dom{\'\i}nguez and Lobato(2004)]{dominguez2004consistent}
Manuel~A Dom{\'\i}nguez and Ignacio~N Lobato.
\newblock Consistent estimation of models defined by conditional moment restrictions.
\newblock \emph{Econometrica}, 72\penalty0 (5):\penalty0 1601--1615, 2004.

\bibitem[Dom{\'\i}nguez and Lobato(2015)]{dominguez2015simple}
Manuel~A Dom{\'\i}nguez and Ignacio~N Lobato.
\newblock A simple omnibus overidentification specification test for time series econometric models.
\newblock \emph{Econometric Theory}, 31\penalty0 (4):\penalty0 891--910, 2015.

\bibitem[Donald et~al.(2003)Donald, Imbens, and Newey]{donald2003empirical}
Stephen~G Donald, Guido~W Imbens, and Whitney~K Newey.
\newblock Empirical likelihood estimation and consistent tests with conditional moment restrictions.
\newblock \emph{Journal of Econometrics}, 117\penalty0 (1):\penalty0 55--93, 2003.

\bibitem[Donald et~al.(2009)Donald, Imbens, and Newey]{donald2009choosing}
Stephen~G Donald, Guido~W Imbens, and Whitney~K Newey.
\newblock Choosing instrumental variables in conditional moment restriction models.
\newblock \emph{Journal of Econometrics}, 152\penalty0 (1):\penalty0 28--36, 2009.

\bibitem[Escanciano(2006)]{escanciano2006consistent}
Juan~Carlos Escanciano.
\newblock A consistent diagnostic test for regression models using projections.
\newblock \emph{Econometric Theory}, 22\penalty0 (6):\penalty0 1030--1051, 2006.

\bibitem[Escanciano(2024)]{escanciano2024gaussian}
Juan~Carlos Escanciano.
\newblock A gaussian process approach to model checks.
\newblock \emph{The Annals of Statistics}, 52\penalty0 (5):\penalty0 2456--2481, 2024.

\bibitem[Gretton et~al.(2006)Gretton, Borgwardt, Rasch, Sch{\"o}lkopf, and Smola]{gretton2006kernel}
Arthur Gretton, Karsten~M. Borgwardt, Malte Rasch, Bernhard Sch{\"o}lkopf, and Alex~J. Smola.
\newblock A kernel method for the two-sample-problem.
\newblock In \emph{Advances in Neural Information Processing Systems}, volume~19, pages 513--520, 2006.

\bibitem[Gretton et~al.(2012)Gretton, Borgwardt, Rasch, Sch{\"o}lkopf, and Smola]{gretton2012kernel}
Arthur Gretton, Karsten~M Borgwardt, Malte~J Rasch, Bernhard Sch{\"o}lkopf, and Alexander Smola.
\newblock A kernel two-sample test.
\newblock \emph{Journal of Machine Learning Research}, 13\penalty0 (1):\penalty0 723--773, 2012.

\bibitem[Hansen(1982)]{hansen1982large}
Lars~Peter Hansen.
\newblock Large sample properties of generalized method of moments estimators.
\newblock \emph{Econometrica}, 50\penalty0 (4):\penalty0 1029--1054, 1982.

\bibitem[Kitamura et~al.(2004)Kitamura, Tripathi, and Ahn]{kitamura2004empirical}
Yuichi Kitamura, Gautam Tripathi, and Hyungtaik Ahn.
\newblock Empirical likelihood-based inference in conditional moment restriction models.
\newblock \emph{Econometrica}, 72\penalty0 (6):\penalty0 1667--1714, 2004.

\bibitem[Lavergne and Patilea(2013)]{lavergne2013smooth}
Pascal Lavergne and Valentin Patilea.
\newblock Smooth minimum distance estimation and testing with conditional estimating equations: uniform in bandwidth theory.
\newblock \emph{Journal of Econometrics}, 177\penalty0 (1):\penalty0 47--59, 2013.

\bibitem[Leser(1963)]{leser1963forms}
Conrad Emanuel~Victor Leser.
\newblock Forms of engel functions.
\newblock \emph{Econometrica}, 31\penalty0 (4):\penalty0 694--703, 1963.

\bibitem[Mammen(1993)]{mammen1993bootstrap}
Enno Mammen.
\newblock Bootstrap and wild bootstrap for high dimensional linear models.
\newblock \emph{The Annals of Statistics}, 21\penalty0 (1):\penalty0 255--285, 1993.

\bibitem[Muandet et~al.(2017)Muandet, Fukumizu, Sriperumbudur, Sch{\"o}lkopf, et~al.]{muandet2017kernel}
Krikamol Muandet, Kenji Fukumizu, Bharath Sriperumbudur, Bernhard Sch{\"o}lkopf, et~al.
\newblock Kernel mean embedding of distributions: A review and beyond.
\newblock \emph{Foundations and Trends{\textregistered} in Machine Learning}, 10\penalty0 (1-2):\penalty0 1--141, 2017.

\bibitem[Muandet et~al.(2020)Muandet, Jitkrittum, and K{\"u}bler]{muandet2020kernel}
Krikamol Muandet, Wittawat Jitkrittum, and Jonas K{\"u}bler.
\newblock Kernel conditional moment test via maximum moment restriction.
\newblock In \emph{Conference on Uncertainty in Artificial Intelligence}, volume 124, pages 41--50. PMLR, 2020.

\bibitem[Newey(1985)]{newey1985generalized}
Whitney~K Newey.
\newblock Generalized method of moments specification testing.
\newblock \emph{Journal of Econometrics}, 29\penalty0 (3):\penalty0 229--256, 1985.

\bibitem[Newey(1990)]{newey1990efficient}
Whitney~K Newey.
\newblock Efficient instrumental variables estimation of nonlinear models.
\newblock \emph{Econometrica}, 58\penalty0 (4):\penalty0 809--837, 1990.

\bibitem[Robinson(1991)]{robinson1991best}
Peter~M Robinson.
\newblock Best nonlinear three-stage least squares estimation of certain econometric models.
\newblock \emph{Econometrica}, 59\penalty0 (3):\penalty0 755--786, 1991.

\bibitem[Sancetta(2022)]{sancetta2022testing}
Alessio Sancetta.
\newblock Testing subspace restrictions in the presence of high dimensional nuisance parameters.
\newblock \emph{Electronic Journal of Statistics}, 16\penalty0 (2):\penalty0 5277--5320, 2022.

\bibitem[Sargan(1958)]{sargan1958estimation}
John~D Sargan.
\newblock The estimation of economic relationships using instrumental variables.
\newblock \emph{Econometrica}, pages 393--415, 1958.

\bibitem[Shin(2008)]{shin2008semiparametric}
Youngki Shin.
\newblock Semiparametric estimation of the box--cox transformation model.
\newblock \emph{The Econometrics Journal}, 11\penalty0 (3):\penalty0 517--537, 2008.

\bibitem[Sriperumbudur et~al.(2011)Sriperumbudur, Fukumizu, and Lanckriet]{sriperumbudur2011universality}
Bharath~K Sriperumbudur, Kenji Fukumizu, and Gert~RG Lanckriet.
\newblock Universality, characteristic kernels and rkhs embedding of measures.
\newblock \emph{Journal of Machine Learning Research}, 12\penalty0 (7):\penalty0 2389--2410, 2011.

\bibitem[Wooldridge(2010)]{wooldridge2010econometric}
Jeffrey~M Wooldridge.
\newblock \emph{Econometric analysis of cross section and panel data}.
\newblock MIT press, 2010.

\bibitem[Working(1943)]{working1943statistical}
Holbrook Working.
\newblock Statistical laws of family expenditure.
\newblock \emph{Journal of the American Statistical Association}, 38\penalty0 (221):\penalty0 43--56, 1943.

\bibitem[Zhang et~al.(2023)Zhang, Imaizumi, Sch{\"o}lkopf, and Muandet]{zhang2023instrumental}
Rui Zhang, Masaaki Imaizumi, Bernhard Sch{\"o}lkopf, and Krikamol Muandet.
\newblock Instrumental variable regression via kernel maximum moment loss.
\newblock \emph{Journal of Causal Inference}, 11\penalty0 (1):\penalty0 20220073, 2023.

\end{thebibliography}


\begin{thebibliography}{16}
\providecommand{\natexlab}[1]{#1}
\providecommand{\url}[1]{\texttt{#1}}
\expandafter\ifx\csname urlstyle\endcsname\relax
  \providecommand{\doi}[1]{doi: #1}\else
  \providecommand{\doi}{doi: \begingroup \urlstyle{rm}\Url}\fi

\bibitem[Bierens and Ploberger(1997)]{bierens1997asymptotic}
Herman~J Bierens and Werner Ploberger.
\newblock Asymptotic theory of integrated conditional moment tests.
\newblock \emph{Econometrica}, 65\penalty0 (5):\penalty0 1129--1151, 1997.

\bibitem[Billingsley(1999)]{billingsley1999convergence}
Patrick Billingsley.
\newblock \emph{Convergence of probability measures}.
\newblock John Wiley \& Sons, 1999.

\bibitem[Carrasco and Florens(2000)]{carrasco2000generalization}
Marine Carrasco and Jean-Pierre Florens.
\newblock Generalization of gmm to a continuum of moment conditions.
\newblock \emph{Econometric Theory}, 16\penalty0 (6):\penalty0 797--834, 2000.

\bibitem[Chamberlain(1987)]{chamberlain1987asymptotic}
Gary Chamberlain.
\newblock Asymptotic efficiency in estimation with conditional moment restrictions.
\newblock \emph{Journal of Econometrics}, 34\penalty0 (3):\penalty0 305--334, 1987.

\bibitem[Chernozhukov et~al.(2018)Chernozhukov, Chetverikov, Demirer, Duflo, Hansen, Newey, and Robins]{chernozhukov2018double}
Victor Chernozhukov, Denis Chetverikov, Mert Demirer, Esther Duflo, Christian Hansen, Whitney Newey, and James Robins.
\newblock Double/debiased machine learning for treatment and structural parameters.
\newblock \emph{The Econometrics Journal}, 21\penalty0 (1):\penalty0 C1--C68, 2018.

\bibitem[Dom{\'\i}nguez and Lobato(2004)]{dominguez2004consistent}
Manuel~A Dom{\'\i}nguez and Ignacio~N Lobato.
\newblock Consistent estimation of models defined by conditional moment restrictions.
\newblock \emph{Econometrica}, 72\penalty0 (5):\penalty0 1601--1615, 2004.

\bibitem[Dom{\'\i}nguez and Lobato(2015)]{dominguez2015simple}
Manuel~A Dom{\'\i}nguez and Ignacio~N Lobato.
\newblock A simple omnibus overidentification specification test for time series econometric models.
\newblock \emph{Econometric Theory}, 31\penalty0 (4):\penalty0 891--910, 2015.

\bibitem[Hsing and Eubank(2015)]{hsing2015theoretical}
Tailen Hsing and Randall~L Eubank.
\newblock \emph{Theoretical foundations of functional data analysis, with an introduction to linear operators}, volume 997.
\newblock Wiley Online Library, 2015.

\bibitem[Lavergne and Patilea(2013)]{lavergne2013smooth}
Pascal Lavergne and Valentin Patilea.
\newblock Smooth minimum distance estimation and testing with conditional estimating equations: uniform in bandwidth theory.
\newblock \emph{Journal of Econometrics}, 177\penalty0 (1):\penalty0 47--59, 2013.

\bibitem[Newey(1990)]{newey1990efficient}
Whitney~K Newey.
\newblock Efficient instrumental variables estimation of nonlinear models.
\newblock \emph{Econometrica}, 58\penalty0 (4):\penalty0 809--837, 1990.

\bibitem[Newey and McFadden(1994)]{newey1994large}
Whitney~K Newey and Daniel McFadden.
\newblock Large sample estimation and hypothesis testing.
\newblock \emph{Handbook of Econometrics}, 4:\penalty0 2111--2245, 1994.

\bibitem[Robinson(1988)]{robinson1988stochastic}
Peter~M Robinson.
\newblock The stochastic difference between econometric statistics.
\newblock \emph{Econometrica}, 56\penalty0 (3):\penalty0 531--548, 1988.

\bibitem[Robinson(1991)]{robinson1991best}
Peter~M Robinson.
\newblock Best nonlinear three-stage least squares estimation of certain econometric models.
\newblock \emph{Econometrica}, 59\penalty0 (3):\penalty0 755--786, 1991.

\bibitem[Serfling(1980)]{serfling1980approximation}
Robert~J Serfling.
\newblock \emph{Approximation theorems of mathematical statistics}.
\newblock John Wiley \& Sons, 1980.

\bibitem[Shao(1999)]{shao1999mathematical}
Jun Shao.
\newblock \emph{Mathematical Statistics}.
\newblock Springer, 1999.

\bibitem[Van~der Vaart(2000)]{van2000asymptotic}
Aad~W Van~der Vaart.
\newblock \emph{Asymptotic statistics}, volume~3.
\newblock Cambridge university press, 2000.

\end{thebibliography}
\end{document}